%% file: paper_current.tex
\documentclass[a4paper,10pt]{article}

\usepackage{amsthm,amsmath,amssymb,amsfonts,xcolor,mathtools}
\usepackage[normalem]{ulem}
\usepackage{cancel}
\usepackage{color}
\usepackage{tikz}
\usepackage{appendix}

\input{mac.tex}

\title{ 
Global Results on the Classification of Two-Component Integrable Evolutionary Systems}
\author{}
\author{Alexander V. Mikhailov$^{\star}$, Vladimir Novikov$^{\ddagger}$ and Jing Ping Wang$^\dagger$ 
\\
$\dagger $ School of Mathematics and Statistics, Ningbo University, China
\\
$\ddagger$ Department of Mathematical Sciences, Loughborough University,  UK
\\
$\star$ School of Mathematics, University of Leeds, UK
}
\date{}  

\begin{document}

\maketitle

\begin{abstract}

We derive necessary and sufficient integrability conditions for two-component polynomial evolutionary systems of odd order in $(1+1)$ dimensions. Integrable systems are members of infinite hierarchies of commuting symmetries, which are characterised by their spectral invariants. We prove that there are precisely $24$ possible spectral classes of integrable hierarchies. As an application, we obtain a complete classification of integrable homogeneous hierarchies whose lowest-order equations are of order 3 and 5. The resulting classification naturally splits into two classes. The C--integrable systems are reduced, by means of differential substitutions, to linear--triangular form, while the S--integrable systems are related, through linear changes of variables and differential substitutions, to canonical Drinfeld--Sokolov KdV-type systems associated with affine Lie algebras of rank two.

\end{abstract}


\section{Introduction}

One of the central problems in the theory of integrable systems is to obtain a comprehensive description of integrable equations. This includes establishing strong, effectively verifiable necessary and sufficient criteria for integrability and, ultimately, achieving a complete classification of integrable systems. By integrability of a partial differential equation we mean the existence of an infinite hierarchy of symmetries.

The classification of integrable partial differential systems remains a challenging problem, even for low-order equations. Nevertheless, substantial progress has been made for systems of a fixed order (see, for example, the survey papers \cite{mr86i:58070,mr89e:58062,mr93b:58070,mr95j:35010,sokolovbook}). By \emph{global} classification, we refer to the derivation of integrability conditions that hold uniformly for equations of arbitrary order, enabling the classification of entire integrable hierarchies rather than individual equations of a fixed order. The symmetry approach, based on symbolic representation, was developed precisely for this purpose and has proven to be an effective tool in the global classification of integrable evolutionary hierarchies \cite{mr99g:35058, mr2001h:37147, mr99i:35005, mr1829636, wang98, kamp02a}.

The first global classification result was obtained for scalar homogeneous evolutionary equations
\[
u_t=u_n+F(u_{n-1},\ldots,u),\qquad u_k=\partial_x^k(u)
\]
with nonnegative weight of the dependent variable $u$ \cite{mr99g:35058,mr2001h:37147}. It can be formulated as follows: {\sl  Any integrable equation of this type is a member of an infinite hierarchy of symmetries whose lowest-order equation is of order 2, 3, or 5}. Consequently, the classification problem reduces to the classification of integrable equations of orders 2, 3, and 5. Since there are only finitely many such equations (namely, ten), there are also only finitely many associated hierarchies of symmetries.

The symbolic approach has proved to be an effective tool for studying the integrability of non-evolutionary and integro-differential equations \cite{mr1908645, mnw07, nw07, mn2, mnw2022}, multi-component systems \cite{mr99i:35005,  mr1829636, kamp02a, MR2070382, Peter09}, and multi-dimensional  equations \cite{wang21} (see also review papers \cite{mnw09,sw2009}). 
More recently, the symmetry approach based on symbolic representation has been extended to address the integrability problem for evolutionary differential--difference equations of arbitrary order in both the Abelian \cite{mnw2022} and non-Abelian \cite{nw2024} settings.

Integrable systems and their associated hierarchies can be naturally divided into two categories, corresponding to the S-- and C--integrable systems in the terminology of F.~Calogero \cite{Calogero1991}. C--integrable systems can be linearised via differential substitutions, allowing their integration through direct methods. In contrast, the integration of S--integrable systems relies on the existence of a Lax representation and the spectral transform method. This latter class exhibits a far richer structure and is of greater interest from the perspective of integrable systems theory. S--integrable systems possess infinitely many local conservation laws, multi-Hamiltonian structures, recursion operators, as well as Darboux and Bäcklund transformations. These transformations not only generate integrable differential–difference and fully discrete systems, but also provide effective tools for constructing explicit solutions, including multi-soliton solutions.

In this paper, we investigate the integrability problem for two-component evolutionary systems 
\begin{equation}\label{eveqn0}
 \left\{\begin{aligned}
u_t &= \lambda_1  u_n
      + F_1\!\left( u_{n-1},v_{n-1} ,\ldots,  u,v\right), \\
v_t &= \lambda_2 v_n
      + F_2\!\left( u_{n-1},v_{n-1} ,\ldots,  u,v\right),
\end{aligned}\right.
\qquad  u_{k}=\partial_x^k(u),\, v_{k}=\partial_x^k(v) ,\quad 
\lambda_1,\lambda_2 \in \mathbb{C}, \quad n>1.
\end{equation}
While several classification results exist for specific low-order values of n, our aim is to achieve global classification results.
 
A complete classification of second-order ($ n=2$) systems~\eqref{eveqn0} 
admitting infinitely many symmetries and local conservation laws was obtained in \cite{mr87h:35312, mr87h:35313, mr89g:58092}. 
In this case, it was shown that the functions $F_1$ and $F_2$  must be polynomial in variables $ u_1$ and $v_1$. 
Moreover, it was proved that the existence of local conservation laws forces $\lambda_2=-\lambda_1$ 
for all even-order systems of the form~\eqref{eveqn0}.
The condition $\lambda_2= -\lambda_1$ is no longer necessary if one allows for C--integrable systems. Indeed, a second-order system~\eqref{eveqn0} may possess an infinite hierarchy of symmetries even if $\lambda_2\ne -\lambda_1$. 
The case $\lambda_2 =\lambda_1$ was classified in \cite{mr90a:35212}, while a classification of homogeneous polynomial integrable systems~\eqref{eveqn0} with  $n=2,\, \lambda_2\ne -\lambda_1$ was obtained in \cite{MR2070382}.

For third-order systems~\eqref{eveqn0}, the classification problem was investigated primarily by Meshkov and Balakhnev in \cite{meshkov94,KuleminMeshkov1997,Mesh08,MeshB08,Balak23}. In these works, the authors did not assume the polynomiality of the functions $F_1$ and $F_2$, but restricted their analysis to several specific cases, namely
\[
\frac{\lambda_1}{\lambda_2}=-\frac12;
\qquad \frac{\lambda_1}{\lambda_2}=\frac{\pm3\sqrt5-7}{2}; \qquad
 \lambda_1=1,\ \lambda_2=0.
\]
In the latter two  cases, several integrable  systems were missed in their classification lists.
A detailed comparison of these results with our classification will be given in Section~\ref{secclass}.
An attempt to classify homogeneous polynomial third-order systems~\eqref{eveqn0} with $W(u)=W(v)=2$ admitting a fixed-order generalised symmetry was undertaken by Foursov \cite{Foursov03}. However, this integrability criterion employed there lacks a rigorous justification.

Currently, no classification results are known for integrable systems~\eqref{eveqn0} with $n\geq 5$.  Only a few isolated examples have been reported in the literature \cite{mnw09, Talati15, TW17, Gerdjikov20}.

In contrast to previous studies, which focused on equations of a fixed order, we consider the classification problem at the level of \emph{integrable hierarchies} within the class of systems~\eqref{eveqn0} with polynomial nonlinearities $F_1, F_2$ and odd order~$n$. Systems related by invertible transformations, which in this case consist of scaling of dependent and independent variables and the involution $u\leftrightarrow v$, are considered equivalent. The generalised symmetries, members of the hierarchy  associated with system~\eqref{eveqn0} are evolutionary systems of the form
\begin{equation}
\label{sym_intr}
\left\{
\begin{array}{l}
u_{t_m}=\mu_{1}(m) u_m+G_1(u_{m-1},v_{m-1},\ldots,u,v),\\
v_{t_m}=\mu_{2}(m) v_m+G_2(u_{m-1},v_{m-1},\ldots,u,v),
\end{array}\right. 
\end{equation} 
The sequence 
$$ \left\{\frac{\mu_{2}(m_k)}{\mu_{1}(m_k)}\,;\, m_{k+1}>m_k,\, m_k\in\N  \right\}$$ 
is referred to as the \emph{spectral invariant of the hierarchy}; in particular, it contains the ratio $\lambda_2/\lambda_1$. 

The natural degree grading of polynomials makes it possible to analyse integrability in a systematic manner, starting from quadratic terms and continuing successively with higher-degree terms. This leads to the notions of approximate symmetries and approximate integrability \cite{mr1908645, mnw07}. Within this setting, van der Kamp \cite{Peter09} attempted a classification of two-component evolutionary equations admitting infinitely many approximate symmetries at the quadratic level. 
 However, this result does not solve the full classification problem, as the author noted that ``an equation
may have infinitely many approximate symmetries of degree $2$, but fail to have any
symmetries. This problem involves conditions of higher grading and is left open.''

We derive integrability conditions that are necessary and sufficient for the existence of a hierarchy of symmetries of system~\eqref{eveqn0}. These conditions make it possible to determine the spectral invariant of the hierarchy and to construct its members~\eqref{sym_intr} explicitly. We show that the spectral invariants of integrable hierarchies fall into exactly 24 distinct classes, listed in Theorem~\ref{thmmn} and summarised in Table~\ref{table2}. Furthermore, we prove that, for an integrable system~\eqref{eveqn0}, every nontrivial approximate symmetry of degree $2$ extends uniquely to a full generalised symmetry of the system. When the quadratic terms of $F_1$ and $F_2$ vanish, the analysis of cubic terms is required.
We also prove that if system~\eqref{eveqn0} admits a nontrivial generalised symmetry, then it possesses infinitely many symmetries of the form~\eqref{sym_intr}, and is therefore integrable. These results provide explicit, effectively verifiable, necessary and sufficient conditions for integrability.

The integrability conditions obtained in this work are suitable for addressing two global problems: (I) determining whether a given odd-order system~\eqref{eveqn0} is integrable and predicting the orders of the symmetries that must belong to its hierarchy; and (II) classifying odd-order integrable hierarchies.

The first problem is relatively straightforward. For a given system~\eqref{eveqn0}, once its order and the spectral invariant are known, the corresponding case in Table~\ref{table2} can be identified, thus determining the orders of the symmetries that the system must possess if it is integrable. Then one verifies the existence of a symmetry of sufficiently low order by directly computing its symbol (Theorem~\ref{symsys}), or by other means.

The second problem is considerably more challenging, as it requires the classification of integrable hierarchies corresponding to all $24$ cases listed in Theorem~\ref{thmmn}. For Cases~$1-15$, the smallest guaranteed orders of systems in the corresponding hierarchies are drawn from the set $\{3, 5, 7, 11, 13, 31\}$. By contrast, the hierarchies associated with Cases~$16-24$ may begin with systems of 
arbitrarily large odd order. 

In this paper, we present a complete classification of integrable hierarchies of systems starting from equations of order $3$ and $5$, under natural technical assumptions. These correspond precisely to the Cases $1$, $4$, $10$, $11$, $13$, $14$, $16$, $18$, $22$ and $23$ in Table~\ref{table2}. We conjecture that any odd order integrable system  \eqref{eveqn0} either admits symmetries of order $3$ or $5$, or is C--integrable. For scalar integrable equations, a similar statement has been rigorously proved in \cite{mr99g:35058}. 

Our approach is based on the symbolic representation of differential algebras, introduced into the theory of integrable systems by Gelfand and Dickey \cite{mr58:22746}. The symbolic representation allows us to derive the explicit conditions for the existence of hierarchy of (approximate) symmetries. Within this framework, the integrability conditions can be reformulated as factorisation and divisibility properties for certain multivariable polynomials. The analysis of these properties is based on the results obtained by F. Beukers. Certain special cases of these polynomials were investigated by Cauchy and Liouville \cite{cauchy-liouville}, and later by Mirimanoff \cite{Mirimanoff}. 

The hierarchies obtained in our classification naturally split into two classes: S--integrable hierarchies and linearisable C--integrable hierarchies. We show that all genuinely non-triangular S--integrable hierarchies can be explicitly related to the Drinfeld--Sokolov hierarchies associated with affine Lie algebras of rank~2.
The hierarchies whose lowest-order equations are of order $3$ are connected, through linear changes of variables or differential substitutions, to the canonical Drinfeld--Sokolov hierarchies corresponding to the affine Lie algebras $A_3^{(2)}, A_4^{(2)}$ and $B_2^{(1)}$ . Similarly, hierarchies starting from fifth-order systems are related to the canonical hierarchies associated with the affine Lie algebras $G_2^{(1)}$ and $D_4^{(3)}$. 
In addition, there are two hierarchies that can be reduced by simple differential substitutions to triangular form. In these cases, one equation is the KdV equation, while the other is a linear equation arising from the associated Lax representation.
 
The paper is organised as follows. We begin by introducing the algebraic framework used to study the symmetry structures of two-component evolutionary systems. In Section~2, we define the notions of symmetries and approximate symmetries, and introduce the concept of approximate integrability.

In Section~\ref{symb}, we develop the symbolic representation and reformulate the concepts introduced in the previous section within this framework. We then derive criteria for approximate integrability in terms of symbolic representation. For a given evolutionary system, we obtain recursive formulas for the coefficients of its symmetries (Theorem~\ref{symsys}), demonstrating that they are uniquely determined by the linear part of the symmetry. This result provides an effective method for verifying the existence of symmetries of a prescribed order and for deriving integrability conditions when the orders of the symmetries are known.

The main result of Section~\ref{hom} is the derivation of necessary and sufficient integrability conditions for hierarchies of odd-order homogeneous systems of the form~\eqref{eveqn0}.
This section contains two principal results of the paper, Theorems~\ref{thmmn} and~\ref{bbt}, which together provide the foundation for the classification carried out in the subsequent sections.
Theorem~\ref{bbt} establishes that, for a $2$-approximately integrable system, the existence of a single nontrivial symmetry guarantees the existence of infinitely many symmetries. Consequently, it provides necessary and sufficient conditions for the integrability of such systems.
Theorem~\ref{thmmn}, in turn, gives a complete classification of all possible cases in which a system of the form~\eqref{eveqn0} is $2$-approximately integrable. In particular, it identifies $24$ distinct classes listed in Table~\ref{table2}. For each class, we specify the spectral invariant of the corresponding hierarchy, the admissible orders of the approximate symmetries, and, where applicable, the associated values of the ratio $\mu_2(m)/\mu_1(m)$.

Section~\ref{secclass} contains the complete classification of integrable hierarchies of homogeneous systems whose lowest-order equations are of order $3$ or $5$.

In Section~\ref{LaxTrans}, we analyse the systems obtained in the classification. We show that all S--integrable systems can be related to canonical Drinfeld–Sokolov KdV-type systems and provide explicit differential substitutions that link them. For the C--integrable systems, we construct differential substitutions that reduce them to a linear–triangular form.

In the final section, Section~\ref{dissrem}, we discuss possible extensions that may lead to a more complete understanding of the problem. We also present several observations  arising from our study.

We would like to thank Dr. G.~Zhao for carrying out the classification of third- and fifth-order integrable systems in the case of equal weights of the variables $u$ and $v$, presented in Sections~\ref{3order} and~\ref{5thord}. These results were previously obtained as part of Dr. Zhao's PhD thesis \cite{Zhao2020}, completed under the supervision of one of the authors.

\section{Symmetries and approximate symmetries of evolutionary systems}\label{Sec2}
The main objects of our study are two-component evolutionary equations of the form \eqref{eveqn0} and their symmetry algebras. In this section, we introduce the basic definitions and notation needed for the two-component case, including the algebra of differential polynomials and the concepts of symmetries and approximate symmetries. A detailed treatment of these topics can be found in \cite{mnw07}. The setup presented here can be straightforwardly generalised to systems with more than two dependent variables.

\subsection{Algebra of Differential Polynomials}\label{difalg}

In what follows, we adopt the notation that $u_n$ and $v_n$ denote the $n$-th $x$-derivatives $\partial_x^{\,n} u$ and $\partial_x^{\,n} v$ of the dependent variables $u$ and $v$, respectively. In particular, $u_0$ and $v_0$ coincide with the functions $u$ and $v$ themselves. When no confusion arises, we omit the index~$0$ and simply write $u$ and $v$.

A \emph{$uv$-monomial} is a finite product of the form
\begin{equation}\label{umon}
u^\alpha v^\beta := u_0^{\alpha_0} u_1^{\alpha_1} \cdots
u_n^{\alpha_n} \,
v_0^{\beta_0} v_1^{\beta_1} \cdots 
v_m^{\beta_m},\qquad \alpha_k,\beta_k\in\zp
\end{equation}
for some nonnegative integers $n$ and $m$. The total degree of the monomial is
\begin{equation}\label{albe}
|\alpha| + |\beta|
= (\alpha_0 + \cdots + \alpha_n) + (\beta_0 + \cdots + \beta_m).
\end{equation}

A \emph{differential polynomial} is a finite linear combination of such monomials with complex coefficients. All differential polynomials form a \emph{differential algebra}
\[
\mathcal{R} = \big( \mathbb{C};\, u, v;\, D_x \big),
\]
with usual operations of addition, multiplication, and the derivation $D_x: \mathcal{R} \to \mathcal{R}$  defined by
\begin{equation}\label{Dx}
D_x = \sum_{k \ge 0}
\left(
u_{k+1} \frac{\partial}{\partial u_k}
+ v_{k+1} \frac{\partial}{\partial v_k}
\right).
\end{equation}

The monomials $u^{\alpha}v^{\beta}$ are eigenvectors of the following commuting linear operators:
\begin{equation}\label{Opdeg} \begin{array}{lll}
D_u=\sum_{k\ge 0}u_{k}\frac{\partial}{\partial u_k}, \qquad &D_v=\sum_{k\ge 0}v_{k}\frac{\partial}{\partial v_k}, \qquad &X=\sum_{k\ge 1}k\left(u_{k}\frac{\partial}{\partial
u_k}+v_{k}\frac{\partial}{\partial v_k}\right),
\end{array}
\end{equation}
with
\begin{equation*} \begin{array}{lll}
D_u(u^{\alpha}v^{\beta})=|\alpha|u^{\alpha}v^{\beta},\quad& 
D_v(u^{\alpha}v^{\beta})=|\beta|u^{\alpha}v^{\beta},\quad&
X(u^{\alpha}v^{\beta})=\sum_{k>0}k(\alpha_k+\beta_k) u^{\alpha}v^{\beta}.
\end{array}
\end{equation*}

The algebra $\mathcal{R}$ is bi-graded and decomposes into a direct sum of eigenspaces
\[
\mathcal{R}
= \bigoplus_{i,j} \mathcal{R}^{i,j},
\qquad
\mathcal{R}^{i,j}
= \left\{ f \in \mathcal{R} \mid D_u(f) = i\;f\ (i\in \mathbb{Z}_{\ge 0}),\;\ D_v(f) = j\;f\ (j\in \mathbb{Z}_{\ge 0}) \right\}.
\]
Here $\mathcal{R}^{i,j}$ is a linear subspace of homogeneous differential polynomials in which each monomial has degree $i$ in the variable $u$ and its derivatives, and degree $j$ in the variable $v$ and its derivatives. 
The algebra $\mathcal{R}$ also has a natural grading with respect to the total degree: 
\[ \mathcal{R}= \bigoplus_{q\in \mathbb{Z}_{\ge 0}}{\mathcal R}^q \, ,
\quad {\mathcal R}^q=\{ f\in \mathcal{R} \, |\, (D_u+D_v) f= q f \}=\bigoplus_{i+j=q} \mathcal{R}^{i,j}\, .
\]

Consequently, every element $f\in\cR$ can be uniquely decomposed into homogeneous components:
$$
f=\sum_{q\ge 0}f_q=\sum_{i, j}f_{i,j},\quad f_q\in \cR^q, \  f_{i,j}\in\cR^{i,j}.
$$
We denote by $\pr{p,q}: \mathcal{R}\to \mathcal{R}^{p,q} $ the projection operator defined by 
\begin{equation}\label{projr}
\pr{p,q}(f)=\pr{p,q}\left(\sum_{i, j}f_{i,j}\right)=f_{p,q}.
\end{equation}

The eigenvalue of the operator $X$ counts the total number of derivatives in a monomial. The algebra $\mathcal{R}$ can also be graded with respect to this operator:
\[ \ring = \bigoplus_{p\in \mathbb{Z}_{\ge 0} } \cR_p\, ,
\quad {\cal R}_p=\{ f\in \ring \, |\, X(f)= p f \}\, .
\]
Since $D_u$, $D_v$, and $X$ commute, these gradings are compatible with multiplication, and hence 
\[
\ring = \bigoplus_{p, q\in \mathbb{Z}_{\ge 0} }{\cal R}^q_p, \qquad
\mathcal{R}^{q}_p\mathcal{R}^{q'}_{p'}\subset \mathcal{R}^{q+q'}_{p+p'}.
\]
It is convenient to introduce weighted homogeneous
polynomials when we addressing classification problems.
Let $\mu, \nu$ be two positive rational numbers\footnote{In general, the
weights may be negative or zero, but in this paper
we restrict ourselves to positive weights.}, which we call the
weights of $u$ and $v$, respectively, and set $W(u)=\mu,\
W(v)=\nu$. We say that  $f\in\ring$ is {\it a homogeneous
polynomial of weight $p$}  (and write $W(f)=p$) if 
\begin{equation}\label{X}
X_{\mu\nu}(f)=p f,\quad  X_{\mu\nu}=\mu D_u+\nu D_v +X , 
\end{equation}
where $D_u$, $D_v$ and $X$ are defined in \eqref{Opdeg}.
The eigenvalues of the operator $X_{\mu\nu}$ on $\cR$ form a countable set denoted by $\cal W$.
Since both $\mu$ and $\nu$ are positive, the subspaces of homogeneous polynomials of weight $p$
are finite dimensional for $p\in \cal W$.

\subsection{Symmetries and approximate symmetries}
A two-component system of partial differential equations of the form
\begin{equation}\label{ev0}
{\bf u}_t={\bf f},\qquad 
\mathbf{u}
=
\begin{pmatrix}
u \\
v
\end{pmatrix},\quad
\mathbf{f}
=
\begin{pmatrix}
f_1 \\
f_2
\end{pmatrix}
\in \mathcal{R} \times \mathcal{R},
\end{equation}
is called an \emph{evolutionary system}. It corresponds to the \emph{evolutionary derivation} $D_{\mathbf{f}} : \mathcal{R} \to \mathcal{R}$ defined by
\[
D_{\mathbf{f}}
=
\sum_{k \ge 0}
\left(
D_x^{\,k}(f_1)\frac{\partial}{\partial u_k}
+
D_x^{\,k}(f_2)\frac{\partial}{\partial v_k}
\right),
\]
that is, a derivation commuting with $D_x$. Any evolutionary derivation is uniquely determined by an element of the linear $\mathbb{C}$--vector space
$
\mathcal{L} = \mathcal{R} \times \mathcal{R},
$
called the characteristics of this derivation.
For example, the derivations $D_u$ and $D_v$ given by \eqref{Opdeg} are evolutionary and correspond to the vectors in $\mathcal{L}$:
\[{\bf d}_u=
\begin{pmatrix}
u \\
0
\end{pmatrix},\qquad {\bf d}_v=\begin{pmatrix}
0 \\
v
\end{pmatrix},
\]
respectively. In contrast, the derivation $X$ is not evolutionary since
\[
[X, D_x] = D_x .
\]

Let $D_{\mathbf{f}}$ and $D_{\mathbf{g}}$ be two evolutionary derivations. Then their commutator is again an evolutionary derivation:
\[
[D_{\mathbf{f}},  D_{\mathbf{g}}]=D_{\mathbf{f}} D_{\mathbf{g}} - D_{\mathbf{g}} D_{\mathbf{f}} = D_{\mathbf{h}},
\]
where $\mathbf{h} = [\mathbf{f}, \mathbf{g}]$ is the Lie bracket defined by
\begin{equation}\label{liesys}
[\mathbf{f}, \mathbf{g}]
=
\sum_{k \ge 0}
\begin{pmatrix}
\dfrac{\partial g_1}{\partial u_k} D_x^{\,k}(f_1)
+ \dfrac{\partial g_1}{\partial v_k} D_x^{\,k}(f_2)
-\dfrac{\partial f_1}{\partial u_k} D_x^{\,k}(g_1)
- \dfrac{\partial f_1}{\partial v_k} D_x^{\,k}(g_2)
\\[1.2ex]
\dfrac{\partial g_2}{\partial u_k} D_x^{\,k}(f_1)
+ \dfrac{\partial g_2}{\partial v_k} D_x^{\,k}(f_2)
-\dfrac{\partial f_2}{\partial u_k} D_x^{\,k}(g_1)
- \dfrac{\partial f_2}{\partial v_k} D_x^{\,k}(g_2)
\end{pmatrix}.
\end{equation}

The Lie bracket $[\cdot,\cdot] : \mathcal{L} \times \mathcal{L} \to \mathcal{L}$ equips $\mathcal{L}$ with the structure of an infinite-dimensional Lie algebra.
The Lie algebra $\mathcal{L}$ is bi-graded and decomposes into a direct sum of eigenspaces of ${\bf d}_u$ and ${\bf d}_v$:
\[
\mathcal{L}
= \bigoplus \mathcal{L}^{n,m},
\qquad
\mathcal{L}^{n,m}
= \{{\bf f}  \in \mathcal{L} \mid [{\bf d}_u, {\bf f}] = n {\bf f},\; [{\bf d}_v, {\bf f}] = m {\bf f}, \ n,m \in \mathbb{Z}_{\ge -1}\}.
\]
The gradings of $\cL$ can also be defined by inheriting the 
bi-graded structure of differential algebra $\cR$, that is,
\[  \cL^{n,m}=\cR^{n+1,m}\times\cR^{n,m+1}, \qquad [\cL^{n,m},\cL^{i,j}]\subseteq \cL^{n+i,m+j}\, .
\]
Similarly to algebra $\ring$, the Lie algebra $\mathcal{L}$ also has a natural grading with respect to the eigenspaces of ${\bf d}_u+{\bf d}_v$: 
\begin{equation}\label{gradLie}
\mathcal{L}= \bigoplus_{n\in \bbbz_{\ge -1}}{\mathcal L}^n \, ,
\quad {\mathcal L}^n=\{ {\bf f}\in \mathcal{L} \, |\, [{\bf d}_u+{\bf d}_v, {\bf f}] = n {\bf f}\}=\bigoplus_{i+j=n} \mathcal{L}^{i,j}\, .
\end{equation}
For example, $\cL^{-1,-1}={\bf 0}$, where ${\bf 0}$ is the zero column vector, and
\[\cL^{-1}=\cL^{-1,0}\oplus\cL^{0,-1}\, , \quad \cL^0=\cL^{-1,1}\oplus\cL^{0,0}\oplus\cL^{1,-1}\, ,\quad
\cL^1=\cL^{-1,2}\oplus\cL^{0,1}\oplus\cL^{1,0}\oplus\cL^{2,-1} .\] 
For $p \in \bbbz_{\ge 0}$,
\[ \cL^{-1,1}=\spc\left\{\left(\begin{array}{l}v_p\\0\end{array}\right)\right\},\quad
\cL^{0,0}=\spc\left\{\left(\begin{array}{l}u_p\\0\end{array}\right),\left(\begin{array}{l}0\\v_p\end{array}\right)\right\},\quad
\cL^{1,-1}=\spc\left\{\left(\begin{array}{l}0\\u_p\end{array}\right)\right\},
\]
and, for $i, j \in \bbbz_{\ge 0}$,
\[ \cL^{-1,2}=\spc\left\{\left(\begin{array}{l}v_i v_j\\0\end{array}\right)\right\},\quad
\cL^{0,1}=\spc\left\{\left(\begin{array}{l}u_i v_j\\0\end{array}\right)\, ,\,
\left(\begin{array}{l}0\\v_i v_j\end{array}\right)\right\},
\ \mbox{etc}.\]
We denote by $\pl{k}$ and $\pl{{p,q}}$ the projection operators 
$$
\pl{k}\left(\sum_{i\ge -1}\bff_i\right)=\bff_k,\quad \bff_i\in\cL^i,
$$
and
$$
\pl{p,q}\left(\sum_{i,j\ge -1}\bff_{i,j}\right)=\bff_{p,q},\quad \bff_{p,q}\in\cL^{p,q}.
$$
On homogeneous components, the projection operators on $\cL$ can be expressed in terms of those on $\cR$. Specifically,
\begin{equation}\label{projLR}
 \pl{k} (\bff)=\begin{pmatrix} \pr{k+1}(f_1)\\ \pr{k+1}(f_2) \end{pmatrix} \quad \mbox{and} \quad   \pl{p,q}(\bff)=\begin{pmatrix} \pr{p+1,q}(f_1)\\ \pr{p,q+1}(f_2) \end{pmatrix} .
\end{equation}

There is a canonical way to define a filtration for the grading given in  \eqref{gradLie}.
For each $l\in\mathbb{Z}_{\ge 0}$, let 
$F^l \mathcal{L}= \bigoplus_{q\geq l}{\mathcal L}^q $. This leads to
\begin{equation}\label{filterlie}
{\mathcal L}\supset F^0 \mathcal{L}\supset F^1 \mathcal{L}\cdots \supset F^{l-1} \mathcal{L} \supset F^l \mathcal{L} \supset F^{l+1} \mathcal{L} \cdots
\end{equation}
satisfying $\mathcal{L}^{l}=F^l \mathcal{L}/F^{l+1} \mathcal{L}$. We define the quotient map $\pi_l: \mathcal{L} \to \mathcal{L}/F^{l} \mathcal{L}$ by
\begin{equation}\label{projf}
    \pi_l({\bf f})={\bf f}+F^{l} \mathcal{L}, \qquad {\bf f} \in \mathcal{L} .
\end{equation}
It is clear that $\ker(\pi_l)=F^{l} \mathcal{L}$, that is, $\pi_l({\bf f})=0$ if and only if the degrees of both its components of $\bf f$ are greater than $l+1$.

Notice that the subspaces $\cL^{i,j}$ are infinite dimensional. We define the  weighted homogeneous vectors following the definition of weighted homogeneous
polynomials in previous section.
Let $W(u)=\mu$ and $W(v)=\nu$. We say that  $\bff\in\cL$ is homogeneous
 of weight $p$  (and write $W(\bff)=p$) if $[X_{\mu\nu}, D_{\bff}]=p D_{\bff}$, where $X_{\mu\nu}$ is defined by \eqref{X}. This is equivalent to
 $$
 W(f_1)=p+\mu \quad \mbox{and} \quad W(f_2)=p+\nu \quad \mbox{for} \quad \bff=(f_1, f_2)^T\in \cal L.
 $$
We denote all possible eigenvalues of  ${\rm ad}_{X_{\mu\nu}}$ by $\cal W$.
The {\sl weighted} grading of $\cL$ is
\[ \cL = \bigoplus_{p\in {\cal W}}{\cal L}_p \, ,
\quad {\cal L}_p=\{ f\in \cL\, |\, [X_{\mu\nu}, D_{\bff}]=p D_{\bff} \}\, .
\]
The algebra $\cL$ admits a degree-weighted grading
\[
\cL= \bigoplus_{p\in {\cal W}, q\in \mathbb{Z}_{\ge -1}}{\cal L}^q_p, \qquad
[\mathcal{L}^{q}_p, \mathcal{L}^{q'}_{p'}]\subset \mathcal{L}^{q+q'}_{p+p'}.
\]
The subspaces $\cL_p$ are finite dimensional for $p\in \cal W$ when both $\mu$ and $\nu$ are positive. For example, when $p=3$, $\mu=2$ and $\nu=\frac{5}{2}$, we have
\[ \cL_3=\spc\left\{\left(\begin{array}{l}u_3\\0\end{array}\right), \left(\begin{array}{l}0\\v_3\end{array}\right),
\left(\begin{array}{l}u u_1\\0 \end{array}\right), \left(\begin{array}{l}v^2\\0 \end{array}\right), \left(\begin{array}{l}0\\u_1 v\end{array}\right),
\left(\begin{array}{l}0\\uv_1 \end{array}\right)\right\}\]
\begin{Def}
An evolutionary equation \eqref{ev0} is said to be homogeneous of weight $p$ if there exist $W(u)=\mu$ and $W(v)=\nu$ such that $\bff\in \cL_p$.
\end{Def}

For every element $f\in\cR$ we define its Fr\'echet derivative by
\begin{eqnarray}\label{Fder}
f_*=\left(f_{*u},f_{*v}\right),\quad f_{*u}=\sum_{i\ge 0}\frac{\partial f}{\partial u_i}D_x^i,\quad f_{*v}=\sum_{i\ge 0}\frac{\partial f}{\partial v_i}D_x^i,   
\end{eqnarray}
and set $\ord\,f_*=\max\{\ord\,f_{*u}, \ord\,f_{*v}\}$.

The Fr\'echet derivative $f_*$ defines a linear map $\cL\to\cR$ with action given by
$$
f_*(\bg)=\left(f_{*u},f_{*v}\right)\left(
\begin{array}{l}g_1\\ g_2
\end{array}
\right)=f_{*u}(g_1)+f_{*v}(g_2)=D_{\bg}(f),\qquad \bg=(g_1, g_2)^T\in\cL.
$$
Thus, in terms of Fr\'echet derivatives, the Lie bracket~\eqref{liesys} takes the form
\begin{eqnarray}\label{liesys1}
[\bff,\bg]=\left(\begin{array}{l}g_{1,*u}(f_1)+g_{1,*v}(f_2)-f_{1,*u}(g_1)-f_{1,*v}(g_2)\\ g_{2,*u}(f_1)+g_{2,*v}(f_2)-f_{2,*u}(g_1)-f_{2,*v}(g_2)\end{array}\right). 
\end{eqnarray}

\begin{Def}
An element $\mathbf{g} \in \mathcal{L}$ is called a \emph{symmetry} (or a \emph{generator of an infinitesimal symmetry}) of the evolutionary system~\eqref{ev0} if
\[
[\mathbf{g}, \mathbf{f}] = 0 .
\]
\end{Def}
The set of all symmetries of the system~\eqref{ev0}, denoted by
\[
\mathcal{C}_{\mathcal{L}}(\mathbf{f})
=
\{ \mathbf{g} \in \mathcal{L} \mid [\mathbf{g}, \mathbf{f}] = 0 \},
\]
is the \emph{centraliser} of $\mathbf{f}$ and a Lie subalgebra of $\mathcal{L}$.

For any $\mathbf{f}$ that does not explicitly depend on the variable $x$, the system~\eqref{ev0} admits a symmetry
\[
\begin{pmatrix}
u_1 \\
v_1
\end{pmatrix}\in \mathcal{C}_{\mathcal{L}}(\mathbf{f}).
\]
This generator corresponds to the classical point symmetry given by translations in the independent variable $x$. Symmetries that do not arise from point transformations are referred to as \emph{generalised symmetries}.

A common approach is to associate with each element $\bg\in\cL$ an evolutionary derivation $D_{\bg}$.
An element $\bg\in\cL$ is a symmetry of the evolutionary system \eqref{ev0} if and only if the corresponding evolutionary derivations $D_{\mathbf{f}}$ and $D_{\bg}$ commute,
that is, $[D_{\mathbf{f}},D_{\bg}]=0$.

The \emph{order} of an element $F\in\cR$ is, by definition, the order of the Fr\'echet derivative $F_*$ as a differential operator on $\cL$: $\ord\,F=\ord\,F_*$. The order of an element $\bff=(f_1, f_2)^T\in\cL$ is defined as $\ord\,\bff=\max\{\ord\,f_1, \ord\,f_2\}$.
By \emph{order} of the system $\bu_t=\bff$ we mean $\ord\,\bff$.

\begin{Def}\label{defint}
System \eqref{ev0} is said to be {\sl integrable} if its algebra of symmetries
$\mathcal{C}_{\mathcal{L}}(\mathbf{f})$ is infinite dimensional and contains symmetries of arbitrary high order, i.e. for every $N\in\N$ the system possesses a symmetry of order greater than $N$.
\end{Def}

In addition to the concept of symmetries, we define the notion of approximate symmetries in a manner similar to that in \cite{mnw07}.
\begin{Def}\label{defaps}
An element $\bg\in\cL$ is called an \emph{approximate symmetry of degree $k$} (or $k$-approximate symmetry) of system $\bu_t=\bff$ if
$$
\pi_{k-1} \left([\bff,\bg]\right)=0,
$$
where projection $\pi_k$ is defined by \eqref{projf}. 
\end{Def}
We denote the set of $k$-approximate symmetries of system $\bu_t=\bff$ by
$$
\mathcal{C}^k_{\mathcal{L}}(\mathbf{f})=\{\bg\in\cL\,|\, \pi_{k-1} \left([\bff,\bg]\right)=0\}.
$$
\begin{Def}
System \eqref{ev0} is said to be {\sl $k$-approximate integrable} if its algebra $\mathcal{C}^k_{\mathcal{L}}(\mathbf{f})$ is infinite dimensional and contains $k$-approximate symmetries of arbitrary high order.
\end{Def}

Every symmetry is an approximate symmetry of any degree. It is clear that a $k$-approximate symmetry is also a $p$-approximate symmetry for any $1\le p<k$, so
$$
\mathcal{C}^1_{\mathcal{L}}(\mathbf{f})\supset \mathcal{C}^2_{\mathcal{L}}(\mathbf{f})\supset \cdots\supset \mathcal{C}^k_{\mathcal{L}}(\mathbf{f})\supset\cdots,\quad \mathcal{C}^\infty_{\mathcal{L}}(\mathbf{f})=\mathcal{C}_{\mathcal{L}}(\mathbf{f}).  
$$
Moreover, if $\bg_1, \bg_2\in\mathcal{C}^k_{\mathcal{L}}(\mathbf{f})$ and $\alpha, \beta\in\C$, then $\alpha \bg_1+\beta\bg_2\in\mathcal{C}^k_{\mathcal{L}}(\mathbf{f})$, and $[\bg_1,\bg_2]\in\mathcal{C}^{k}_{\mathcal{L}}(\mathbf{f})$;
hence $\mathcal{C}^k_{\mathcal{L}}(\mathbf{f})$ is a Lie subalgebra of $\cL$.

\section{Symbolic representation}\label{symb}
In this section, we reformulate the basic definitions and notation introduced in the previous section using symbolic representation, which can be regarded as a simplified notation of a Fourier
transform. In this framework, differential polynomials are mapped into symmetric multivariate polynomials, and the action of derivation $D_x$ is transformed into multiplication by a linear factor. 
Symbolic representation will serve as the main tool throughout the paper.
\subsection{Symbolic representations of differential algebra $\cR$ and Lie algebra $\cL$}

To define the symbolic representation $\hR$ of the graded differential algebra $\cR$,
we first establish an isomorphism of $\C$-linear spaces $\phi:\,\cR^{m,n}\mapsto\hR^{m,n}$ and then extend it to the differential algebra isomorphism by defining the multiplication and the action of the derivation $D_x$ on $\cR$. This construction induces a Lie algebra isomorphism $\tilde{\phi}:\cL\to\hL$, where $\hL=\hR\times\hR$. 
The isomorphism $\phi$ is uniquely determined by its action on monomials.

\begin{Def}\label{symbmon} 
The symbolic representation of a $uv$-monomial is defined as
$$
\phi:\quad u_{i_1}\cdots u_{i_m}v_{j_1}\cdots v_{j_n}\mapsto \hu^m\hv^n\langle\xi_1^{i_1}\cdots\xi_m^{i_m}\rangle_{S_m^{\xi}}\langle\zeta_1^{j_1}\cdots\zeta_n^{j_n}\rangle_{S_n^{\zeta}}\in\hR^{m,n},
$$
where $\langle\cdot\rangle_{S_m^{\xi}}$ and $\langle\cdot\rangle_{S_n^{\zeta}}$ 
denote the symmetrisation (average) over the permutation groups of the $m$ symbolic variables $\xi_1, \ldots, \xi_m$ and the $n$ variables $\zeta_1, \ldots, \zeta_n$, respectively:
$$
\langle a(\xi_1,\ldots,\xi_m)\rangle_{{S_m^{\xi}}}=\frac{1}{m!}\sum_{\sigma\in S_m^{\xi}}a(\xi_{\sigma(1)},\ldots,\xi_{\sigma(m)}),\quad \langle b(\zeta_1,\ldots,\zeta_n)\rangle_{{S_n^{\zeta}}}=\frac{1}{n!}\sum_{\sigma\in S_n^{\zeta}}b(\zeta_{\sigma(1)},\ldots,\zeta_{\sigma(n)}).
$$
\end{Def}
Note that in the above definition  we do not assume the indices $i_1, \ldots, i_m$ and $j_1, \ldots, j_n$ to be distinct. We reserve $\xi_1, \ldots, \xi_m, \zeta_1, \ldots, \zeta_n$ for the symbolic variables in the symbolic representation, while $\hu^m\hv^n$ indicates the appropriate linear space to which the resulting symbol belongs. 

By definition, to the linear combination of monomials corresponds the linear combination of their symbols
$$
\phi(\alpha f+\beta g)=\alpha\phi(f)+\beta\phi(g)=\alpha {\hat f} +\beta {\hat g},\quad f\in\cR^{m,n},\,g\in\cR^{m',n'},\quad \alpha,\beta\in\C.
$$

\begin{Ex} For example
\begin{eqnarray*}
&&u_k\raph\hu\xi_1^k,\quad v_k\raph\hv\zeta_1^k,\quad u_pu_q\raph\frac{1}{2}\hu^2\left(\xi_1^p\xi_2^q+\xi_1^q\xi_2^p\right),\\
&&\alpha u_1u_2v_3+\beta u_1v_2v_3\raph \frac{\alpha}{2}\hu^2\hv\left(\xi_1\xi_2^2+\xi_2\xi_1^2\right)\zeta_1^3+\frac{\beta}{2}\hu\hv^2\xi_1\left(
\zeta_1^2\zeta_2^3+\zeta_2^2\zeta_1^3\right)\,\,\alpha,\beta\in\C.
\end{eqnarray*}
\end{Ex}

Under the isomorphism $\phi$, a homogeneous polynomial $f\in\cR^{m,n}$ is mapped to
\begin{equation}\label{symbf}
 \phi(f)=\hu^m\hv^n a(\xi_1,\ldots,\xi_m,\zeta_1,\ldots,\zeta_n) \in \hR^{m,n},   
\end{equation}
where the polynomial $a$ is invariant under the action of the direct product group $S_m^{\xi}\times S_n^{\zeta}$, i.e. the polynomial $a$ is symmetric separately in the variables $\xi_1, \ldots, \xi_m$ and 
$\zeta_1, \ldots, \zeta_n$.

Recall that the projection operator $\pr{i,j}$ extracts the homogeneous component of a differential polynomial $f\in\cR$ lying in $\cR^{i,j}$. 
In the symbolic representation, we  define the corresponding projection operator $\hat{\Pi}_{\hat{\cR}}^{i,j}$ induced by the commutativity condition $\phi\,\pr{i,j}=\hat{\Pi}_{\hat{\cR}}^{i,j}\,\phi$. 
For $f\in \cR$ we have, on the one hand
$$
f=\sum_{i+j>0}f_{i,j},\quad \pr{i,j}(f)=f_{i,j},\quad \phi(f_{i,j})=\hu^i\hv^j a_{i,j}(\xi_1,\ldots,\xi_i,\zeta_1,\ldots,\zeta_j), 
$$
and, on the other hand
$$
\hat{f}=\phi(f)=\sum_{i+j>0}\hu^i\hv^j a_{i,j}(\xi_1,\ldots,\xi_i,\zeta_1,\ldots,\zeta_j),\quad \hat{\Pi}_{\hat{\cR}}^{i,j}(\hat{f})=\hu^i\hv^j a_{i,j}(\xi_1,\ldots,\xi_i,\zeta_1,\ldots,\zeta_j). 
$$ 

We define multiplication in the symbolic representation as follows:
Let $f\in\cR^{m,n}$ with its symbol given by \eqref{symbf}, and let $g\in\cR^{p,q}$ with the symbol
\begin{equation}\label{symbg}
 \phi(g)=\hu^p\hv^q b(\xi_1,\ldots,\xi_p,\zeta_1,\ldots,\zeta_q).  
\end{equation}
We set the product of their symbols to be the symbol of their product: $\phi(f)*\phi(g)=\phi(fg)$. This gives
\begin{equation}\label{symbmulti} 
 \phi(f)*\phi(g)=\hu^{m+p}\hv^{n+q}\langle\langle a(\xi_1,\ldots,\xi_m,\zeta_1,\ldots,\zeta_n)b(\xi_{m+1},\ldots,\xi_{m+p},\zeta_{n+1},\ldots,
\zeta_{n+q})\rangle_{S^{\xi}_{m+p}}\rangle_{S^{\zeta}_{n+q}}.   
\end{equation}
The inner symmetrisation is taken over permutations of the $\xi$-variables, and the outer symmetrisation is taken over permutations of the $\zeta$-variables. 
This multiplication extends to arbitrary elements in $\cR$ by linearity. In particular, the symbolic representation of monomials (Definition \ref{symbmon}) follows from $\phi(u_k)=\hu\xi_1^n,\,\phi(v_p)=\hv\zeta_1^p$ together with the multiplication rule \eqref{symbmulti}.

Let $\eta$ denote the symbol of the derivative $D_x$, acting by
$$
\eta\bigg(\hu^m\hv^na(\xi_1,\ldots,\xi_m,\zeta_1,\ldots,\zeta_n)\bigg)=\left(\sum_{i=1}^m \xi_i+\sum_{j=1}^n \zeta_j\right) a(\xi_1,\ldots,\xi_m,\zeta_1,\ldots,\zeta_n).
$$
One verifies directly that $\phi(D_x(f))=\eta(\phi(f))$ for all $f\in\cR^{m,n}$. Extension of this rule by linearity defines the action of  $\eta$ on $\hat{\cR}$. 
The algebra $\hat{\cR}$ equipped with the multiplication rule $*$ and action of $\eta$ is isomorphic to the differential algebra $\cR$.

We now describe the symbolic representation of the Fr\'echet derivative.  For any $f\in\cR^{m,n}$ with the symbol given by \eqref{symbf}, note that
$$
\phi\left(\frac{\partial f}{\partial u_i}\right)=\frac{\partial^{i+1} \phi(f)}{\partial u \partial\xi_{m}^i}(\xi_1, \ldots, \xi_{m-1}, 0, \zeta_1, \ldots, \zeta_n),
\quad
\phi\left(\frac{\partial f}{\partial v_j}\right)=\frac{\partial^{j+1} \phi(f)}{\partial v \partial\zeta_{n}^j}(\xi_1, \ldots, \xi_{m}, \zeta_1, \ldots, \zeta_{n-1}, 0) .
$$
Thus, the symbol of the Fr\'echet derivative
$
\phi(f_*)=\hat{f}_*=\left(\hat{f}_{*u},\hat{f}_{*v}\right)
$
 is given by
$$
\hat{f}_{*u}=m\hu^{m-1}\hv^na(\xi_1,\ldots,\xi_{m-1},\eta,\zeta_1,\ldots,\zeta_n),\quad \hat{f}_{*v}=n\hu^{m}\hv^{n-1}a(\xi_1,\ldots,\xi_{m},\zeta_1,\ldots,\zeta_{n-1},\eta).
$$
The symbol $\hat{f}_{*u}$ is symmetric in the variables $\xi_1, \ldots, \xi_{m-1}, \eta$ and separately in $\zeta_1, \ldots, \zeta_n$, while $\hat{f}_{*v}$ is symmetric in $\xi_1, \ldots, \xi_{m}$ and separately in $\zeta_1, \ldots, \zeta_{n-1}, \eta$.

For any monomial $g\in\cR^{p,q}$ with symbol given by \eqref{symbg}, the action of the operators $f_{*u}$ and $f_{*v}$ on $g$ is represented as 
\begin{eqnarray*}
&&\phi\left(f_{*u}(g)\right)=\hat{f}_{*u}(\hat{g})=\hu^{m+p-1}\hv^{n+q}\\ &&\quad\langle\langle m a(\xi_1, \ldots, \xi_{m-1}, \sum_{i=1}^p\xi_{m+i-1}+\sum_{j=1}^q\zeta_{n+j}, \zeta_1, \ldots, \zeta_n)
b(\xi_{m},\ldots,\xi_{m+p-1},\zeta_{n+1},\ldots,\zeta_{n+q})\rangle_{S^{\xi}_{m+p-1}}\rangle_{S^{\zeta}_{n+q}};\\
&&\phi(f_{*v}(g))=\hat{f}_{*v}(\hat{g})=\hu^{m+p}\hv^{n+q-1}\\ &&\quad\langle\langle n a(\xi_1, \ldots, \xi_{m}, \zeta_1, \ldots, \zeta_{n-1}, \sum_{i=1}^p\xi_{m+i}+\sum_{j=1}^q\zeta_{n+j-1})
b(\xi_{m+1},\ldots,\xi_{m+p},\zeta_{n},\ldots,\zeta_{n+q-1})\rangle_{S^{\xi}_{m+p}}\rangle_{S^{\zeta}_{n+q-1}} .
\end{eqnarray*}
By linearity, these formulas immediately extend to arbitrary elements of $\cR$.

We next extend the symbolic representation to the Lie algebra $\cL$ by defining $\tilde{\phi}:\,\cL\to\hat{\cL}=\hat{\cR}\times\hat{\cR}$ by
$$
\tilde{\phi}(\bff)=\hat{\bff}=(\phi(f_1),\phi(f_2))^T=(\hat{f}_1,\hat{f}_2)^T,\quad \bff=(f_1,f_2)^T\in\cL.
$$
The induced Lie bracket between $\hat{\bff}$ and $\hat{\bg}=(\hat{g}_1,\hat{g}_2)^T$ on $\hat{\cL}$ obtained from \eqref{liesys1} takes the form 
$$
[\hat{\bff},\hat{\bg}]=\left(\begin{array}{l}\hat{g}_{1,*u}(\hat{f}_1)+\hat{g}_{1,*v}(\hat{f}_2)-
\hat{f}_{1,*u}(\hat{g}_1)-\hat{f}_{1,*v}(\hat{g}_2)\\
\hat{g}_{2,*u}(\hat{f}_1)+\hat{g}_{2,*v}(\hat{f}_2)-
\hat{f}_{2,*u}(\hat{g}_1)-\hat{f}_{2,*v}(\hat{g}_2)
\end{array}\right).
$$
By construction, $\tilde{\phi}([\bff,\bg])=[\tilde{\phi}(\bff), \tilde{\phi}(\bg)]$, and hence $\tilde{\phi}:\,\cL\to\hat{\cL}$ is a Lie algebra isomorphism. Clearly, $\hat{\cL}$ inherits the multigraded structure of $\cL$. We denote the corresponding projection operators in $\hat{\cL}$ by adding hats to the notation used in $\cL$, e.g., $\hat{\Pi}_{\hat{\cL}}^k$ and $\hat{\Pi}_{\hat{\cL}}^{p,q}$.

\subsection{Symmetries and approximate symmetries in the symbolic representation}
Consider a two-component evolutionary system of the form
\begin{eqnarray}
\label{sys1} {\bf u}_t={\bff}=\bff^0+\bff^{(1)}, \quad {\bf u}=\begin{pmatrix} u, v   
\end{pmatrix}^T,\quad \bff^0=\begin{pmatrix}
    L_1(u), L_2(v)
\end{pmatrix}^T\in \cL^{0,0}, \quad \bff^{(1)}\in F^1\cL, 
\end{eqnarray}
where $L_s, s=1,2$ are linear differential operators with constant coefficients.  In the special case
 $L_1=\lambda_1 D_x^n$ and $ L_2=\lambda_2 D_x^n$,  this  system reduces to (\ref{eveqn0}).
In the symbolic representation, system \eqref{sys1} takes the form
\begin{eqnarray}
\label{sys1sys}
\hat{\bff}=\begin{pmatrix}
    \hat{f}_1\\\hat{f}_2
\end{pmatrix}
=\begin{pmatrix} \hat{u}\omega_1(\xi_1)+\sum_{i+j\ge 2}\hat{u}^i\hat{v}^{j}a^1_{i,j}(\xi_1,\ldots,\xi_i,\zeta_1,\ldots,\zeta_{j})\\
\hat{v}\omega_2(\zeta_1)+\sum_{i+j\ge 2}\hat{u}^i\hat{v}^{j}a^2_{i,j}(\xi_1,\ldots,\xi_i,\zeta_1,\ldots,\zeta_{j})
\end{pmatrix}, \quad i, j\in \mathbb{Z}_{\ge 0},
\end{eqnarray}
where $\hu\omega_1(\xi_1)=\phi(L_1(u))$, $\hv\omega_2(\zeta_1)= \phi(L_2(v))$, and $\hu^i\hv^ja_{i,j}^s(\xi_1,\ldots,\xi_i,\zeta_1,\ldots,\zeta_{j})=\phi(\Pi_{\cR}^{i,j}(f_s))$ for $i+j\ge 2$ and $s=1,2$ are symbolic representations of nonlinear terms.

When studying generalised symmetries $\bu_{\tau}=\bg$ of \eqref{sys1}, we assume $\bg\in F^0\cL$, thus excluding constant terms. The latter with order $-\infty$ requires separate treatment.

\begin{Pro}\label{prolinpart} Assume that $L_1\neq L_2$ and that both operators are of order greater than $1$. If $\bu_{\tau}=\bg\in F^0\cL$ is a generalised symmetry of \eqref{sys1}, then the symmetry $\bg$ possesses a nonzero linear term $\bg^0\in {\cal L}^{0,0}$.
\end{Pro}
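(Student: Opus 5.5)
The plan is to look at the lowest-degree homogeneous component of $\bg$ in the grading \eqref{gradLie} and use the fact that, in that degree, the symmetry condition reduces to commutation with the linear part $\bff^0$. Write $\bg=\sum_{q\ge 0}\bg_q$ with $\bg_q\in\cL^q$, and let $q_0$ be the smallest index with $\bg_{q_0}\neq 0$; such an index exists because a generalised symmetry is nonzero. Since $\bff^{(1)}\in F^1\cL$ and $[\cL^{q},\cL^{q'}]\subseteq\cL^{q+q'}$, applying $\pl{q_0}$ to $[\bff,\bg]=0$ gives
\[
[\bff^0,\bg_{q_0}]=0 .
\]
In the symbolic representation this condition is diagonal. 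If the $s$-th component of $\hat\bg_{q_0}$ contains a term $\hu^i\hv^j b(\xi_1,\ldots,\xi_i,\zeta_1,\ldots,\zeta_j)$, then the corresponding term of the bracket is $\pm b$ times
\[
G_s^{i,j}=\sum_{k=1}^i\omega_1(\xi_k)+\sum_{l=1}^j\omega_2(\zeta_l)-\omega_s\Big(\sum_k\xi_k+\sum_l\zeta_l\Big).
\]
Because $\C[\xi,\zeta]$ is an integral domain, each such $b$ must vanish unless $G_s^{i,j}\equiv 0$.

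\textbf{Step 1: no nonlinear lowest term.} I will show that $G_s^{i,j}\not\equiv 0$ whenever $i+j\ge 2$; this is the main point. The key fact is that $\omega_1,\omega_2$ are polynomials of degree at least $2$, and a polynomial $\omega$ with $\omega(x)+\omega(y)-\omega(x+y)$ constant must be affine.
\begin{itemize}
\item If the two variables can be taken of the type matching $s$ (for instance $s=1$ and $i\ge 2$), set $\xi_1=x$, $\xi_2=y$ and all other variables to $0$. Then $G_1^{i,j}$ becomes $\omega_1(x)+\omega_1(y)-\omega_1(x+y)+\text{const}$, which is nonzero by the fact above.
\item In the mixed case, say $s=1$, $i=1$, $j\ge1$, suppose $\omega_1(x)+\omega_2(y)-\omega_1(x+y)+c\equiv 0$. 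Setting $x=0$ gives $\omega_2(y)=\omega_1(y)+\text{const}$. Substituting back makes $\omega_1$ additive up to a constant, so $\omega_1$ is affine, a contradiction.
\item The case $i=0$ and all cases with $s=2$ are symmetric.
\end{itemize}
Hence $\bg_{q_0}=0$ if $q_0\ge 1$, and therefore $q_0=0$. So $\bg$ has a nonzero linear part $\bg_0\in\cL^0=\cL^{-1,1}\oplus\cL^{0,0}\oplus\cL^{1,-1}$.

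\textbf{Step 2: the linear part lies in $\cL^{0,0}$.} A component in $\cL^{-1,1}$ has symbol $(\hv\, b(\zeta_1),0)^T$. For it, the bracket with $\bff^0$ gives $b(\zeta_1)\big(\omega_2(\zeta_1)-\omega_1(\zeta_1)\big)=0$. Since $L_1\ne L_2$, we have $\omega_1\ne\omega_2$, so $b=0$; the same argument kills the $\cL^{1,-1}$ part. The $\cL^{0,0}$ part commutes with $\bff^0$ automatically, because constant-coefficient operators commute. Hence $\bg_0=\bg^0\in\cL^{0,0}$ and it is nonzero.

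The main obstacle is Step 1, specifically checking that no combination of degree-$\ge2$ polynomials $\omega_1,\omega_2$ makes $G_s^{i,j}$ vanish identically in the mixed cases. The reduction to the Cauchy-type additivity argument above is what I would try first. Degree $-1$ terms (constants) are excluded by the hypothesis $\bg\in F^0\cL$.
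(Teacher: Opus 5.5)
Your proposal is correct and follows essentially the paper's argument: the lowest-degree part of $\bg$ must commute with $\bff^0$, and the symbols $G^{\omega_s}_{j,k}$ cannot vanish identically for $j+k\ge 2$, nor for the off-diagonal linear terms since $L_1\neq L_2$. The paper only asserts this ``by the assumptions on $\omega_1$ and $\omega_2$'', whereas you prove it by the additivity argument and also track degrees through the filtration of $\cL$ more carefully than the paper's componentwise lowest-degree terms. The one looseness is that the case $s=1$, $i=0$, $j\ge 2$ is not literally a mirror image of the cases you list, but the same trick disposes of it (set one variable to $0$ to get $\omega_1=\omega_2+\mathrm{const}$, then substitute back).
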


\begin{proof} Suppose that $\bg=(g_1,g_2)^T$ is a symmetry of \eqref{sys1}, and assume that neither component contains a linear term. Let 
\[
\hat g_1=\hat u^j\hat v^k A^1_{j,k}+\cdots, \quad j+k\ge2;
\qquad\quad
\hat g_2=\hat u^{j'}\hat v^{k'} A^2_{j',k'}+\cdots, \quad j'+k'\ge2,
\]
where at least one of the lowest-degree coefficients $A^1_{j,k}$ and $A^2_{j',k'}$ is nonzero.
The symmetry condition implies
\begin{equation}\label{gfun}
\begin{pmatrix}0\\ 0\end{pmatrix}= \Big{[}\begin{pmatrix}\hu\omega_1(\xi_1)\\ \hv\omega_2(\zeta_1)\end{pmatrix}, \begin{pmatrix} \hat{u}^j\hat{v}^{k}A^1_{j,k}\\ \hat{u}^{j'}\hat{v}^{k'} A^2_{j',k'} \end{pmatrix} \Big{]}
=-\begin{pmatrix} G^{\omega_1}_{j,k} \ \hat{u}^j\hat{v}^{k} A^1_{j,k}\\G^{\omega_2}_{j',k'}  \hat{u}^{j'}\hat{v}^{k'}A^2_{j',k'}\end{pmatrix},   
\end{equation}
where
\begin{eqnarray}\label{gomega}
G^{\omega_i}_{j,k}(\xi_1,\ldots,\xi_j,\zeta_1,\ldots,\zeta_k)=\omega_i(\sum_{s=1}^j
\xi_s+\sum_{r=1}^k \zeta_r)-\sum_{s=1}^j \omega_1(\xi_s)-\sum_{r=1}^k \omega_2(\zeta_r),\quad
i=1,2 .
\end{eqnarray}
Since $A^1_{j,k}$ and/or $A^2_{j',k'}$ are nonzero, \eqref{gfun} implies $G^{\omega_1}_{j,k}=0$ and/or $G^{\omega_2}_{j',k'}=0$.
By the assumptions on $\omega_1$ and $\omega_2$, the identities $G^{\omega_1}_{j,k}=0$ and $G^{\omega_2}_{j',k'}=0$ hold if and only if $(j,k)=(1,0)$ and $(j',k')=(0,1)$, respectively. This contradicts the assumption that the lowest-degree terms of $\hat g_1$ and $\hat g_2$ have degree at least two.
Therefore, the generalised symmetry $\bg$ must contain a nonzero linear term $\bg^0\in {\cal L}^{0,0}$.
\end{proof}
Following Proposition \ref{prolinpart}, the symbolic representation of a generalised symmetry of \eqref{sys1} is of the form
\begin{eqnarray}
\label{sys1sym}
\hat{\bg}=\begin{pmatrix}
 \hat{u}\Omega_1(\xi_1)+\sum_{i+j\ge 2}\hat{u}^i\hat{v}^{j}A^1_{i,j}(\xi_1,\ldots,\xi_i,\zeta_1,\ldots,\zeta_{j})\\  
 \hat{v}\Omega_2(\zeta_1)+\sum_{i+j\ge 2}\hat{u}^i\hat{v}^{j}A^2_{i,j}(\xi_1,\ldots,\xi_i,\zeta_1,\ldots,\zeta_{j})
\end{pmatrix}, \quad \Omega_1\neq 0,\ \Omega_2\neq 0,\quad  i, j\in \mathbb{Z}_{\ge 0}.
\end{eqnarray}
Before constructing the nonlinear part of a symmetry, we introduce the notation for the nonlinear components of the equation and its symmetry:
 \begin{eqnarray}\label{nonlinear}
 \hat{F}_s=\sum_{i+j\ge 2}\hat{u}^i\hat{v}^{j}a^s_{i,j}(\xi_1,\ldots,\xi_i,\zeta_1,\ldots,\zeta_{j}),
 \quad \hat{G}_s=\sum_{i+j\ge 2}\hat{u}^i\hat{v}^{j}A^s_{i,j}(\xi_1,\ldots,\xi_i,\zeta_1,\ldots,\zeta_{j}), \quad s=1,2.
 \end{eqnarray}
\begin{The}\label{symsys} 
Assume that $L_1\neq L_2$ and that both orders are greater than $1$.  Let $\bu_{\tau}=\bg$ with symbolic representation \eqref{sys1sym} be a generalised symmetry of system \eqref{sys1} with symbolic representation \eqref{sys1sys}. Then the coefficients $A^i_{j,k}$ are determined by
\begin{eqnarray}
\label{Aijkg}
&&A^i_{j,k}(\xi_1,\ldots,\xi_j,\zeta_1,\ldots,\zeta_k)=\frac{G^{\Omega_i}_{j,k}(\xi_1,\ldots,\xi_j,\zeta_1,\ldots,\zeta_k)}
{G^{\omega_i}_{j,k}(\xi_1,\ldots,\xi_j,\zeta_1,\ldots,\zeta_k)}a^i_{j,k}(\xi_1,\ldots,\xi_j,\zeta_1,\ldots,\zeta_k)+\\
\nonumber && \qquad \qquad+\frac{R^i_{j,k}}{G^{\omega_i}_{j,k}(\xi_1,\ldots,\xi_j,\zeta_1,\ldots,\zeta_k)},
\qquad i=1,2,\,\,\, j, k \in \mathbb{Z}_{\ge0}, \,\, j+k=2, 3,\ldots ,
\end{eqnarray}
where 
\begin{eqnarray}\label{GOmega}
G^{\Omega_i}_{j,k}(\xi_1,\ldots,\xi_j,\zeta_1,\ldots,\zeta_k)=\Omega_i(\sum_{s=1}^j
\xi_s+\sum_{r=1}^k \zeta_r)-\sum_{s=1}^j \Omega_1(\xi_s)-\sum_{r=1}^k \Omega_2(\zeta_r),\quad
i=1,2
\end{eqnarray}
and $G^{\omega_i}_{j,k}$ is defined by \eqref{gomega}.
The quantities $R^i_{j,k}$ are given recursively through lower degree terms $a^i_{s,p}$ and $ A^i_{s,p} $ with $
s+p<j+k$ by
$$
R^i_{j,k}=\left\{\begin{array}{ll} 0, & j+k=2,\\
\hat{\Pi}_{\hat{\cR}}^{j,k}\left(\hat{G}_{i,*u}(\hat{F}_1)+\hat{G}_{i,*v}(\hat{F}_2)-\hat{F}_{i,*u}(\hat{G}_1)-\hat{F}_{i,*v}(\hat{G}_2)\right), & j+k>2.\end{array}\right.
$$
\end{The}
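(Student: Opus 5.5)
The plan is to expand the symmetry condition $[\hat{\bff},\hat{\bg}]=0$ by total degree and solve it degree by degree, projecting onto each bi-graded component $\hat{\cR}^{j,k}$. I write $\hat{\bff}=\hat{\bff}^0+\hat{\bF}$ and $\hat{\bg}=\hat{\bg}^0+\hat{\bG}$, where $\hat{\bff}^0=(\hu\omega_1,\hv\omega_2)^T$, $\hat{\bg}^0=(\hu\Omega_1,\hv\Omega_2)^T$, and $\hat{\bF},\hat{\bG}$ collect the nonlinear terms~\eqref{nonlinear}. Bilinearity of the bracket splits the condition into four pieces:
\[
0=[\hat{\bff}^0,\hat{\bg}^0]+[\hat{\bff}^0,\hat{\bG}]+[\hat{\bF},\hat{\bg}^0]+[\hat{\bF},\hat{\bG}].
\]
The first piece vanishes, because both entries are linear with constant coefficients, so their Fréchet derivatives are constant-coefficient operators that commute.

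Next I compute the two mixed pieces on a single component $\hu^j\hv^k$. The computation behind~\eqref{gfun} in the proof of Proposition~\ref{prolinpart} gives
\[
[\hat{\bff}^0,\hat{\bG}]_i=-\sum G^{\omega_i}_{j,k}\,\hu^j\hv^kA^i_{j,k}.
\]
The same computation with $\omega$ replaced by $\Omega$ and the roles of $\bff,\bg$ swapped (the bracket is antisymmetric) gives
\[
[\hat{\bF},\hat{\bg}^0]_i=+\sum G^{\Omega_i}_{j,k}\,\hu^j\hv^k a^i_{j,k}.
\]
Here I check the source of each term. The term $\Omega_i(\sum\xi_s+\sum\zeta_r)$ comes from $\hat g^0_{i,*}$ acting on the nonlinear $\hat F_i$: the symbol of the linear operator is evaluated at $\eta$, the total sum. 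The terms $-\Omega_1(\xi_s)$ and $-\Omega_2(\zeta_r)$ come from $\hat F_{i,*u}(\hu\Omega_1)$ and $\hat F_{i,*v}(\hv\Omega_2)$. In those, the factor $j$ or $k$ from the Fréchet derivative cancels against the symmetrisation, leaving $\sum_s\Omega_1(\xi_s)\,a^i_{j,k}$. This is exactly~\eqref{GOmega}.

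The quadratic piece $[\hat{\bF},\hat{\bG}]$ has $i$-th component $\hat{G}_{i,*u}(\hat{F}_1)+\hat{G}_{i,*v}(\hat{F}_2)-\hat{F}_{i,*u}(\hat{G}_1)-\hat{F}_{i,*v}(\hat{G}_2)$. Every term is at least quadratic in nonlinear data. Since the Fréchet derivative lowers the degree by one, a term of degree $p$ composed with one of degree $q$ has degree $p+q-1$. The component in $\hat{\cR}^{j,k}$ therefore involves only $a^i_{s,p},A^i_{s,p}$ with $2\le s+p<j+k$. It is absent when $j+k=2$, since the lowest possible degree is $2+2-1=3$. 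This component is $R^i_{j,k}$.

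Projecting the full identity with $\hat{\Pi}_{\hat{\cR}}^{j,k}$ gives
\[
G^{\omega_i}_{j,k}A^i_{j,k}=G^{\Omega_i}_{j,k}a^i_{j,k}+R^i_{j,k}.
\]
It remains to divide by $G^{\omega_i}_{j,k}$, which requires this polynomial to be nonzero for $j+k\ge2$. This was already established in the proof of Proposition~\ref{prolinpart}: under $L_1\neq L_2$ and orders $>1$, $G^{\omega_i}_{j,k}\equiv0$ only for the linear index. Hence division in the field of rational functions is legitimate and yields~\eqref{Aijkg}, determining the $A^i_{j,k}$ recursively in $j+k$.

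I expect the main obstacle to be the sign and normalisation bookkeeping. Specifically, I must confirm that the multiplicity factors $m$, $n$ in the symbolic Fréchet derivative combine with the averaging $\langle\cdot\rangle$ to give exactly $\sum_s\omega_1(\xi_s)$, with no leftover combinatorial constants. I must also check the overall sign convention of the bracket against~\eqref{gfun}, so that the $G^{\Omega}$ and $R$ terms enter with the signs stated. I would verify this first on the case $j+k=2$ before trusting the general pattern.
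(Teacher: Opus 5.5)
Your proposal is correct and follows essentially the same route as the paper: you split $[\hat{\bff},\hat{\bg}]$ by degree and project onto each bi-graded component, use the computation behind \eqref{gfun} for the linear--nonlinear brackets, and divide by $G^{\omega_i}_{j,k}$, which is nonzero for $j+k\ge 2$ by the argument of Proposition~\ref{prolinpart}. Your bookkeeping worries resolve as you expect: with the bracket \eqref{liesys}, the $i$-th component is $g_{i,*}(\bff)-f_{i,*}(\bg)$, which yields $-G^{\omega_i}_{j,k}A^i_{j,k}+G^{\Omega_i}_{j,k}a^i_{j,k}+R^i_{j,k}=0$ with no leftover combinatorial factors, since $a^i_{j,k}$ and $A^i_{j,k}$ are symmetric.
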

We do not display explicit formula for $R_{j,k}^i$ since they become rather
cumbersome. Nevertheless, they can be computed recursively without difficulty from the above relation. For example, explicit expressions for $R_{3,0}^1$ and $R_{1,2}^1$ are given in Appendix \ref{appc}.
In the scalar case analogous formulae were given, e.g., in the review paper \cite{mnw09}.

\begin{proof} 
The symmetry condition $[\hat{\bff},\hat{\bg}]=0$ is equivalent to
\begin{eqnarray}\label{splitting}
\hat{\pi}_l ([\hat{\bff},\hat{\bg}]))=\sum_{s=-1}^{l+1}\hat{\Pi}_{\hat{\cL}}^{s,l-s}([\hat{\bff},\hat{\bg}])={\bf 0} 
\Longleftrightarrow \hat{\Pi}_{\hat{\cL}}^{s,l-s}([\hat{\bff},\hat{\bg}])={\bf 0}, \quad s=-1, \cdots, l+1, ,\quad l=0, 1, 2, \cdots
\end{eqnarray}
For $l=0$ this condition is satisfied by Proposition \ref{prolinpart}. For $l=1$, applying formula \eqref{gfun}, the vanishing of projections on $\hat{\cL}^{-1,2},\hat{\cL}^{0,1},\hat{\cL}^{1,0}$ and $\hat{\cL}^{2,-1}$ leads to 
\eqref{Aijkg} for $j+k=2$. In particular, the coefficients $A_{1,1}^1$ and $A_{0,2}^2$ are obtained from the case $s=0$ and $l=1$ in \eqref{splitting}, namely,
$$
 \Big{[}\begin{pmatrix}\hu\omega_1(\xi_1)\\ \hv\omega_2(\zeta_1)\end{pmatrix}, \begin{pmatrix} \hu \hv A^1_{1,1}(\xi_1, \zeta_1)\\ \hv^2 A^2_{0,2}(\zeta_1,\zeta_2) \end{pmatrix} \Big{]}
 +\Big{[}\begin{pmatrix} \hu \hv a^1_{1,1}(\xi_1, \zeta_1)\\ \hv^2 a^2_{0,2}(\zeta_1,\zeta_2) \end{pmatrix}, \begin{pmatrix}\hu\Omega_1(\xi_1)\\ \hv\Omega_2(\zeta_1)\end{pmatrix} \Big{]}=0 .
$$
The projection conditions for $l\ge 2$ lead to the relations \eqref{Aijkg} for all $j+k\geq 3$.
\end{proof}
\begin{Cor} Under the assumptions of Theorem \ref{symsys}, the algebra of symmetries of (\ref{sys1}) is commutative.
\end{Cor}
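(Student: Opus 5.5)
The plan is to take two symmetries $\bg,\bh\in\mathcal{C}_{\mathcal{L}}(\bff)$ and show that $[\bg,\bh]=0$. By the Jacobi identity, $[\bg,\bh]$ is again a symmetry of $\bff$, so it suffices to show that any symmetry $\mathbf{k}=[\bg,\bh]$ arising this way has zero linear part. Once that is established, Theorem~\ref{symsys} forces all of $\mathbf{k}$ to vanish: the nonlinear coefficients $A^i_{j,k}$ of a symmetry are determined uniquely and recursively from its linear part $(\Omega_1,\Omega_2)$. If $\Omega_1=\Omega_2=0$, then every $G^{\Omega_i}_{j,k}$ vanishes and $R^i_{2}=0$, so the quadratic coefficients vanish. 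Inductively, each $R^i_{j,k}$ is bilinear in the lower-degree nonlinear terms of the symmetry and of $\bff$, so it also vanishes, and hence every $A^i_{j,k}=0$. Alternatively, I would just invoke Proposition~\ref{prolinpart} in the contrapositive form: a symmetry in $F^0\cL$ with no linear term must be zero. That proposition's argument shows the lowest-degree component would have to satisfy $G^{\omega}A=0$ with $G^{\omega}\neq0$.

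The main step is computing the linear part of $[\bg,\bh]$. The bracket respects the degree grading, $[\cL^0,\cL^0]\subset\cL^0$, and contributions from higher-degree pieces only land in higher degrees. So $\pl{0}([\bg,\bh])=[\bg^0,\bh^0]$, where $\bg^0,\bh^0\in\cL^{0,0}$ are the linear parts. Here I must check that $\bg,\bh$ have no component in $\cL^{-1,1}\oplus\cL^{1,-1}$. This holds because the symmetry has the form \eqref{sys1sym} (diagonal linear part) by Proposition~\ref{prolinpart} and the argument of Theorem~\ref{symsys}. An off-diagonal term such as $(v_p,0)^T$ would need $\omega_1(\zeta)=\omega_2(\zeta)$ to commute with $\bff^0$, which is excluded by $L_1\neq L_2$. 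I would also need that neither $\bg$ nor $\bh$ has a constant part, which is the standing assumption $\bg\in F^0\cL$.

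The linear parts $\bg^0=(\Omega_1(D_x)u,\Omega_2(D_x)v)^T$ and $\bh^0=(\tilde\Omega_1(D_x)u,\tilde\Omega_2(D_x)v)^T$ are given by constant-coefficient operators acting diagonally. Their bracket is therefore componentwise a commutator of constant-coefficient differential operators, which is zero; symbolically, $\Omega_i(\eta)\tilde\Omega_i(\xi)-\tilde\Omega_i(\eta)\Omega_i(\xi)$ evaluated at $\eta=\xi$ vanishes. Hence $[\bg,\bh]$ is a symmetry whose linear part vanishes, and by the previous paragraph $[\bg,\bh]=0$.

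I expect the main obstacle to be the bookkeeping around the possible off-diagonal linear terms and the degree $-1$ (constant) components. This is the place where the hypothesis $L_1\neq L_2$ with orders greater than $1$ is genuinely used. The point to verify is that $G^{\omega_i}_{j,k}$ does not vanish identically for $(j,k)\neq(1,0),(0,1)$ in the appropriate component, including the mixed linear cases $(0,1)$ in the first component and $(1,0)$ in the second. For the mixed cases this reduces to $\omega_1\neq\omega_2$.
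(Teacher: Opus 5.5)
Your proposal is correct and follows the paper's argument. The bracket of two symmetries is again a symmetry, its linear part vanishes, and the recursion of Theorem~\ref{symsys} with $\Omega_1=\Omega_2=0$ forces it to be zero. You also justify details the paper leaves implicit: the linear parts commute because they are diagonal constant-coefficient operators, and off-diagonal or constant components are excluded by $L_1\neq L_2$ and by the standing assumption $\bg\in F^0\cL$.
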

\begin{proof} We first show that any generalised symmetry $\bg$ with vanishing linear part is trivial. Suppose, in the symbolic representation \eqref{sys1sym}, that $\Omega_1(\xi_1)=\Omega_2(\zeta_1)=0$. Then formula \eqref{Aijkg} implies that
$A^i_{j,k}=0$ for $i=1,2$ and $j+k=2$.
By induction on the total degree, the same relation shows that $A^i_{j,k}=0$ for all $j+k\ge3$. Hence $\bg=0$.
Since the commutator of two symmetries is again a symmetry and its linear part vanishes identically, it follows that the commutator itself must vanish. Therefore, the symmetry algebra is commutative.
\end{proof}

For a given system, Theorem~\ref{symsys} shows that any symmetry, if it exists, is uniquely determined by its linear part, that is, by the polynomials $\Omega_1(\xi_1)$ and $\Omega_2(\zeta_1)$. All higher-degree coefficients $A^i_{j,k}$, with $i=1,2$ and $j+k\ge2$, are then determined recursively by~\eqref{Aijkg}.
The expression $\hat{u}^j\hat{v}^kA^i_{j,k}$ represents the symbol of an element of the differential algebra $\cR$ if and only if $A^i_{j,k}$ is a symmetric polynomial in the variables $\xi_1,\ldots,\xi_j$ and $\zeta_1,\ldots,\zeta_k$. In general, however, formula~\eqref{Aijkg} produces a symmetric rational function. Requiring $A^i_{j,k}$ to be polynomial, equivalently, requiring all denominators to cancel against appropriate factors in the numerators, provides the \emph{necessary integrability conditions} for the system. These conditions impose algebraic constraints both on $\Omega_1$ and $\Omega_2$, as well as on the coefficients $a^i_{j,k}$ of the system itself.

To establish the integrability of a given system, one must show that there exist infinitely many pairs of polynomials $\Omega_1(\xi_1)$ and $\Omega_2(\zeta_1)$ such that all coefficients $A^i_{j,k}$, determined by~\eqref{Aijkg}, are symmetric polynomials. In that case, the corresponding symmetries are well defined, and the system admits an infinite hierarchy of commuting symmetries.
Thus, the classification of integrable hierarchies can be reduced to the problem of determining all possible infinite sequences of pairs of polynomials $\Omega_1(\xi_1)$ and $\Omega_2(\zeta_1)$  that give rise to approximate symmetries, and consequently to the determination of the equations comprising the hierarchy. In the next section, we develop this approach further and classify all possible odd-order homogeneous hierarchies satisfying $\Omega_1(\xi)\neq \Omega_2(\xi)$.

\section{Integrability conditions for homogeneous systems}\label{hom}
In what follows, we focus on {\it homogeneous} systems of evolutionary equations of the form
\begin{equation}
\label{sys1hom}
\left\{\begin{array}{l} 
u_t=f_1=\lambda_1 u_n+F_1(u_{n-1},v_{n-1},\ldots,u,v),\\
v_t=f_2=\lambda_2 v_n+F_2(u_{n-1},v_{n-1},\ldots,u,v),
\end{array}\right.\quad n>1,\quad \lambda_1,\lambda_2\in\C, \quad \lambda_1\,\lambda_2\ne 0, \quad \lambda_1\ne\lambda_2,
\end{equation}
which satisfy $\bff=\begin{pmatrix} f_1, f_2   \end{pmatrix}^T\in\cL_n, n\in \mathbb{N}$ with respect to the weights $W(u)=\mu>0,\,W(v)=\nu>0$.

Our goal is to derive the necessary and sufficient integrability conditions and to obtain a complete description of integrable systems of form \eqref{sys1hom}.
We first prove that any integrable system must contain non-vanishing quadratic or cubic terms. Using the factorisation properties of symmetric polynomials, we then determine 
the spectral invariant of the hierarchy when the system is approximate integrable up to degree $2$ or $3$. 
Finally, we prove the statement that, under explicit and verifiable conditions, the existence of an approximate symmetry of degree $2$ or $3$ implies the existence of a genuine symmetry.
The assumption of homogeneity is primarily technical. With suitable modifications, the results presented below can also be extended to non-homogeneous systems.

By Proposition \ref{prolinpart} and the homogeneity of system \eqref{sys1hom}, the symmetry algebra of the system is generated by homogeneous symmetries of the form
\begin{equation}
\label{sym1hom}
\left\{\begin{array}{l} 
u_{\tau}=g_1=\mu_1 u_m+G_1(u_{m-1},v_{m-1},\ldots,u,v),\\
v_{\tau}=g_2=\mu_2 v_m+G_2(u_{m-1},v_{m-1},\ldots,u,v),
\end{array}\right.\quad \mu_1,\mu_2\in\C,
\end{equation}
where $\bg=\begin{pmatrix} g_1, g_2   \end{pmatrix}^T\in\cL_m,\,  m\in \mathbb{N}$, and $\mu_1 \mu_2\neq 0$. 
We represent both system \eqref{sys1hom} and its generalised symmetry \eqref{sym1hom} ($m>1$ and $m\ne n$) in their symbolic representations:
\begin{equation}
\label{sys1sysh}
\left\{\begin{array}{l} \hat{f}_1= \hat{u} \lambda_1 \xi_1^n+\sum_{i+j\ge 2}\hat{u}^i\hat{v}^{j}a^1_{i,j}(\xi_1,\ldots,\xi_i,\zeta_1,\ldots,\zeta_{j}),\\
 \hat{f}_2=\hat{v} \lambda_2 \zeta_1^n+\sum_{i+j\ge 2}\hat{u}^i\hat{v}^{j}a^2_{i,j}(\xi_1,\ldots,\xi_i,\zeta_1,\ldots,\zeta_{j}),
\end{array}\right.\quad  \lambda_1, \lambda_2\in \mathbb{C}^*
\end{equation}
and
\begin{equation}
\label{sys1symh}
\left\{\begin{array}{l} \hat{g}_1=
 \hat{u} \mu_1\xi_1^m+\sum_{i+j\ge 2}\hat{u}^i\hat{v}^{j}A^1_{i,j}(\xi_1,\ldots,\xi_i,\zeta_1,\ldots,\zeta_{j}),\\  
\hat{g}_2= \hat{v} \mu_2 \zeta_1^m+\sum_{i+j\ge 2}\hat{u}^i\hat{v}^{j}A^2_{i,j}(\xi_1,\ldots,\xi_i,\zeta_1,\ldots,\zeta_{j}),
\end{array}\right.\quad \mu_1, \mu_2\in \mathbb{C}^*
\end{equation}
respectively, where $i, j\in \mathbb{Z}_{\ge 0}$ and $i+j\ge 2$.
\subsection{Systems with neither quadratic nor cubic terms}
For scalar evolutionary equations, it has been shown in \cite{mnw09} that an equation of the form
$$
u_t=\sum_{k=0}^n \alpha_k u_k+f(u_{n-1}, \cdots, u), \quad n\ge 2, \quad \alpha_n\neq 0
$$
cannot be integrable if $f$ is a differential polynomial whose every monomial has a total degree
greater than $3$ (i.e., without quadratic and cubic terms in $u$ and its derivatives) and a total number of derivatives strictly less than $n$. For homogeneous equations with positive weight $W(u)=\mu>0$, the latter condition is automatically satisfied.
We investigate whether an analogous result holds for two-component systems.
\begin{Def}\label{triangular} A system of the form (\ref{sys1hom}) is called {\it triangular} if $\frac{\partial F_1}{\partial
v_k}=0$ or $\frac{\partial F_2}{\partial u_k}=0,\,\,\forall k\ge 0$.
\end{Def}
A particularly important class of triangular systems is given by
\begin{equation}\label{Bakirov}
\left\{\begin{array}{l}
  u_t=\lambda_1 u_n+f(v_{n-1}, \cdots, v),\\ v_t=\lambda_2 v_n,   
\end{array}\right. \quad \lambda_1,\lambda_2\in\C ,
\end{equation}
which we refer to as \emph{linear--triangular} systems. 
Such systems are integrable by standard methods for arbitrary polynomial functions $f$. Their symmetry algebras and the spectral invariants of the symmetry hierarchies have been studied extensively in \cite{mr99i:35005, mr1829636, ba91, ph02}.
\begin{The}\label{noqct}
Assume that system~\eqref{sys1hom} contains neither quadratic nor cubic terms, that is,
$
\Pi_{\cL}^i (\bff)
=0, i=1,2.
$
If system~\eqref{sys1hom} admits a genertalised symmetry, then, up to the involution $u\leftrightarrow v$, it is a linear--triangular system of the form~\eqref{Bakirov}.
\end{The}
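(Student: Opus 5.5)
Let $\bg$ be a generalised symmetry of \eqref{sys1hom}, taken homogeneous of the form \eqref{sym1hom} with order $m\neq n$, $m>1$. By Proposition~\ref{prolinpart} its symbol has the form \eqref{sys1symh} with $\mu_1\mu_2\neq0$. The plan is to run the recursion of Theorem~\ref{symsys} degree by degree, using that $a^i_{j,k}=0$ for $j+k=2,3$.

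\emph{Low degrees.} For $j+k=2$ the remainder $R^i_{j,k}$ vanishes, so \eqref{Aijkg} gives $A^i_{j,k}=0$ for $j+k=2$. For $j+k=3$, $R^i_{j,k}$ is built only from products of quadratic terms of $\bff$ and $\bg$, all of which vanish. Hence $A^i_{j,k}=0$ for $j+k=3$ as well.

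\emph{Lowest nonlinear degree.} Let $q\ge 4$ be the lowest degree at which $\bff$ has a nonzero nonlinear term $a^i_{j,k}$; if there is none, the system is linear, which is a degenerate linear--triangular system. For $4\le j+k\le 5$ the remainder $R^i_{j,k}$ involves only products of terms whose degrees add up to $j+k+1$, with each factor of degree at least $2$ and both factors nonlinear. Since the nonlinear terms of $\bff$ and of $\bg$ both start in degree at least $4$, such products begin in degree $7$. Therefore, up to degree $q+2\ge 6$, the coefficient is simply
\[
A^i_{j,k}=\frac{G^{\Omega_i}_{j,k}}{G^{\omega_i}_{j,k}}\,a^i_{j,k},\qquad \omega_s=\lambda_s\xi^n,\quad \Omega_s=\mu_s\xi^m,
\]
and in degree $q$ this expression must be a polynomial.

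\emph{Main step: polynomiality forces triangularity.} I expect the obstacle to lie here, since a single degree-$q$ polynomiality condition may be satisfiable. For instance, $G^{\omega}$ could divide $a^i_{j,k}$ itself. The scalar argument of \cite{mnw09} works around this by following the first nonlinear degree to the next degrees where cross terms appear.

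Concretely, take the lowest-degree term $a^i_{j,k}$ and examine the degree $2q-1$ component of $R^{i}$. This is the first degree where the quadratic interaction of the degree-$q$ parts of $\bff$ and $\bg$ appears. Compare it with the component coming from $G^{\Omega}/G^{\omega}$, using a restriction to a suitable hyperplane. The idea is to set all but two symbolic variables to zero, or to specialise $\xi_s=\zeta_r$, so that $G^{\omega_i}_{j,k}$ reduces to one-variable polynomials such as $\lambda_i(x+y)^n-\lambda_1x^n-\lambda_2y^n$. Their divisibility and factorisation properties are controlled by the Beukers-type results the paper cites.

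What I want to show is that a pure $v$-nonlinearity in the $u$-equation survives, namely $a^1_{0,q}$ with $G^{\omega_1}_{0,q}=\lambda_1(\sum\zeta)^n-\lambda_2\sum\zeta^n$. All mixed terms, and all terms $a^1_{j,k}$ with $j\ge1$ (and symmetrically for the second equation), should instead produce nonremovable poles at some degree. In particular, two nontrivial couplings in opposite directions should create, in degree $2q-1$, a term in $R$ not divisible by $G^{\omega}$. The same argument should also show that $\lambda_2\zeta^n$ admits no nonlinearity in $v$, so $v_t=\lambda_2 v_n$.

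Finally, once $F_2=0$, an induction on degree shows that $F_1$ cannot depend on $u$. This uses the fact that, in a system of the form $u_t=\lambda_1u_n+F_1(u,v)$ with $v_t=\lambda_2v_n$, the $u$-dependent part behaves like a scalar equation without quadratic and cubic terms. The scalar theorem of \cite{mnw09} then rules such a part out. This yields the form \eqref{Bakirov}, up to the involution $u\leftrightarrow v$.
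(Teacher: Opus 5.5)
Your reduction is correct up to the formula $A^i_{j,k}=G^{\Omega_i}_{j,k}a^i_{j,k}/G^{\omega_i}_{j,k}$ at the lowest nonlinear degree $q\ge4$. The gap is the main step. You leave it as a list of expectations (``should produce nonremovable poles'', ``what I want to show''). The mechanism you propose for it (cross terms in degree $2q-1$, restrictions to hyperplanes, two-variable factorisations) is never carried out, and it is not what settles the question.

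The missing idea is a weight count combined with Lemma~\ref{FB1}. Since $\bff\in\cL_n$ with $W(u)=\mu>0$ and $W(v)=\nu>0$, the coefficient $a^1_{j,k}$ is homogeneous of degree $n+\mu-j\mu-k\nu$, and $a^2_{j,k}$ is homogeneous of degree $n+\nu-j\mu-k\nu$. Hence $\deg a^i_{j,k}<n$ for all $j+k\ge2$, except possibly $a^1_{0,k}$ (which needs $\mu\ge k\nu$) and $a^2_{k,0}$ (which needs $\nu\ge k\mu$). For $j+k\ge4$, Lemma~\ref{FB1} makes $G^{\omega_i}_{j,k}$ (degree $n$) and $G^{\Omega_i}_{j,k}$ (degree $m\ne n$) irreducible, hence coprime. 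So polynomiality of $A^i_{j,k}$ forces $G^{\omega_i}_{j,k}\mid a^i_{j,k}$, and this kills every coefficient of degree $<n$. The two exceptional coefficients cannot both survive, because $\mu\ge q\nu$ and $\nu\ge q\mu$ are incompatible.

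This also resolves your worry that $G^{\omega_i}_{j,k}$ might divide $a^i_{j,k}$. By the degree count this can happen only for the pure-$v$ coefficient of the $u$-equation (after the involution). There it genuinely occurs: these are the linear--triangular systems, which are integrable. For every other coefficient it is excluded already in its own degree, so no analysis of interactions in degree $2q-1$ is needed.

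The passage to higher degrees is likewise a direct induction, not an interaction analysis. Suppose that below degree $p$ the only nonlinear terms of $\bff$ and $\bg$ are $v$-only terms in the first components.
\begin{itemize}
\item Every summand of $R^2_{j,k}$ contains $\hat F_2$ or $\hat G_2$.
\item Every summand of $R^1_{j,k}$ contains $\hat F_2$, $\hat G_2$, $\hat F_{1,*u}$ or $\hat G_{1,*u}$.
\end{itemize}
All of these vanish in the degrees involved, so $R^i_{j,k}=0$ in degree $p$ and the formula $A^i_{j,k}=G^{\Omega_i}_{j,k}a^i_{j,k}/G^{\omega_i}_{j,k}$ holds again. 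The same weight argument then gives $a^2_{j,k}=0$, since $\mu\ge q\nu>\nu$ rules out $\nu\ge p\mu$. It also gives $a^1_{j,k}=0$ for $j\ge1$, together with the corresponding $A^i_{j,k}=0$, which closes the induction. This disposes of the mixed terms $u^jv^k$ with $j,k\ge1$ in $F_1$. Your final appeal to the scalar theorem of \cite{mnw09} does not cover these, because that theorem only sees the scalar reduction $v=0$.
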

\begin{proof} According to Proposition \ref{prolinpart}, any generalised symmetry of the system \eqref{sys1hom} is of the form \eqref{sym1hom} with symbolic representation \eqref{sys1symh}. Since the system contains neither quadratic nor cubic terms, we have $a^s_{i,j}=0$ for $2\le i+j\le3$ and $s=1,2$.
Suppose that there exist indices $i, j$ and $k$ such that $\Pi_{\cR}^{l,m}(f_i)=0$ for all $i=1,2$ and $l+m<j+k$, 
but $\Pi_{\cR}^{j,k}(f_i)\ne 0$ for some $i$ and $j+k\ge 4$. 
Then by Theorem \ref{symsys}, the symbolic representation of the coefficient $A^i_{j,k}$ is given by
$$
A^i_{j,k}(\xi_1, \ldots, \xi_{j}, \zeta_1, \cdots, \zeta_k)=\frac{G^{\Omega^i}_{j,k}(\xi_1, \ldots, \xi_{j}, \zeta_1, \cdots, \zeta_k)}{G^{\omega_i}_{j,k}(\xi_1, \ldots, \xi_{j}, \zeta_1, \cdots, \zeta_k)}a^i_{j,k}(\xi_1, \ldots, \xi_{j}, \zeta_1, \cdots, \zeta_k),
$$
where the polynomials $G^{\omega_i}_{j,k}$ and $G^{\Omega_i}_{j,k}$ are given by \eqref{gomega} and \eqref{GOmega}, namely,
\begin{eqnarray}
G^{\omega_i}_{j,k}&=&\lambda_i(\sum_{s=1}^j\xi_s+\sum_{r=1}^k\zeta_r)^n-\lambda_1\sum_{s=1}^j\xi_s^n-\lambda_2\sum_{r=1}^k\zeta_r^n,\label{Go}\\
G^{\Omega_i}_{j,k}&=&\mu_i(\sum_{s=1}^j\xi_s+\sum_{r=1}^k\zeta_r)^m-\mu_1\sum_{s=1}^j\xi_s^m-\mu_2\sum_{r=1}^k\zeta_r^m.\label{GO}
\end{eqnarray}
By Lemma \ref{FB1} in Appendix \ref{FBlemmas}, both polynomials $G^{\omega_i}_{j,k}$ and $G^{\Omega_i}_{j,k}$ are irreducible over $\C$ whenever $j+k\ge 4$ and $\lambda_i,\mu_i\in\C^*$ for $i=1,2$. Consequently, $G^{\omega_i}_{j,k}$ cannot divide $G^{\Omega_i}_{j,k}$.
Since the system \eqref{sys1hom} is homogeneous and $W(u)=\mu>0$, $W(v)=\nu>0$, the total degree of every polynomial $a^i_{j,k}$ with $j+k\ge 4$ is strictly smaller than $n$, except possibly for the coefficients $a^1_{0,k}$ and $a^2_{k,0}$.
 Therefore, for all remaining indices, the coefficients $A^i_{j,k}$ are polynomials if and only if  $a^i_{j,k}=0$. 

Thus the only potentially nonzero terms are $a^1_{0,k}$ and $a^2_{k,0}$. Their degrees satisfy $$
0\le\deg(a^1_{0,k})\le n+\mu-k\nu, \quad 0\le\deg(a^2_{k,0})\le n+\nu-k\mu. $$
Hence $\deg(a^1_{0,k})\ge n \iff \mu\ge k\nu$ and $\deg(a^2_{k,0})\ge n\iff \nu\ge k\mu$. Since $k\ge 4$, these inequalities cannot hold simultaneously. Therefore at least one of the coefficients must vanish. Without loss of generality (we may change $u\leftrightarrow v$), we may set $a^2_{k,0}=0$ and assume $\mu\ge k\nu$.  The symbolic representation of the system \eqref{sys1hom}  then takes the form
\begin{equation*}
\left\{\begin{array}{l} \hat{f}_1= \hat{u} \lambda_1 \xi_1^n+\hv^ka^1_{0,k}(\zeta_1,\ldots,\zeta_k)+\sum_{i+j\ge k+1}\hat{u}^i\hat{v}^{j}a^1_{i,j}(\xi_1,\ldots,\xi_i,\zeta_1,\ldots,\zeta_{j}) ,\\
 \hat{f}_2=\hat{v} \lambda_2 \zeta_1^n+\sum_{i+j\ge k+1}\hat{u}^i\hat{v}^{j}a^2_{i,j}(\xi_1,\ldots,\xi_i,\zeta_1,\ldots,\zeta_{j}) .
\end{array}\right.
\end{equation*}
Applying the same argument to the terms of degree $k+1$, we conclude that all coefficients $a^2_{i,j}$ with $i+j=k+1$ vanish. Repeating the argument inductively shows that every higher-degree term in $\hat f_2$ is zero. Consequently,
$
\hat f_2=\hat v\lambda_2\zeta_1^n,
$
and the system reduces to the linear--triangular form \eqref{Bakirov}.
\end{proof}

The proof reveals that the homogeneity assumption is not necessary. Indeed, if one assumes that the total numbers of derivatives appearing in $F_1$ and $F_2$ are strictly less than $n$, i.e., $F_1, F_2\in  \bigoplus_{0\le p< n} \cR_p$, then 
the same argument produces the scalar analogue: a system of the form \eqref{sys1hom} without quadratic and cubic terms admits a generalised symmetry if and only if it is linear. In contrast to the scalar case, this result does not follow solely from homogeneity with positive weights $W(u)>0$ and $W(v)>0$.

The following statement can be proved by an argument analogous to that of Theorem~\ref{noqct}, and we therefore omit the proof. It will be used in the next section to identify and eliminate triangular systems arising in the classification procedure.

\begin{Cor}\label{triang}
Assume that system~\eqref{sys1hom} admits a generalised symmetry, that the total numbers of derivatives of every monomial appearing in $F_1$ and $F_2$ are strictly less than $n$, and that the system is triangular up to cubic terms, i.e., $ 
\Pi_\cR^{i,j}(F_2)=0,
\  2\le i+j\le 3,\  i>0$, or
 $
\Pi_\cR^{i,j}(F_1)=0,
\  2\le i+j\le 3,\quad j>0.
$ 
Then it is triangular.
\end{Cor}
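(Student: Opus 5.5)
By the involution $u\leftrightarrow v$ it suffices to treat the case $\Pi_\cR^{i,j}(F_2)=0$ for $2\le i+j\le 3$, $i>0$, and to prove that $\partial F_2/\partial u_k=0$ for all $k$. I would follow the argument of Theorem~\ref{noqct}, with the induction restricted to the second component and to the $u$-dependent coefficients $a^2_{i,j}$, $i>0$. Let $\bg$ be a generalised symmetry. By Proposition~\ref{prolinpart} it has the form \eqref{sys1symh} with $\mu_1\mu_2\ne0$ and order $m\ne n$. Its coefficients are then fixed by the recursion \eqref{Aijkg} of Theorem~\ref{symsys}.

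Suppose $F_2$ depends on $u$. Choose the smallest total degree $d=j+k$ for which some $a^2_{j,k}$ with $j>0$ is nonzero; the hypothesis gives $d\ge4$. The key step is to show that the remainder $R^2_{j,k}$ in \eqref{Aijkg} vanishes for every $(j,k)$ with $j>0$ and $j+k=d$. Then
\[
A^2_{j,k}=\frac{G^{\Omega_2}_{j,k}}{G^{\omega_2}_{j,k}}\,a^2_{j,k},
\]
and we can argue exactly as in Theorem~\ref{noqct}. Lemma~\ref{FB1} makes $G^{\omega_2}_{j,k}$ irreducible of degree $n$, and it does not divide $G^{\Omega_2}_{j,k}$ since $m\ne n$. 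The hypothesis on derivative counts gives $\deg a^2_{j,k}<n$. So $A^2_{j,k}$ can be polynomial only if $a^2_{j,k}=0$, a contradiction.

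To prove $R^2_{j,k}=0$, I would prove simultaneously, by induction on the degree $s+p<d$, that $A^2_{s,p}=0$ whenever $s>0$. At degree $2$ this holds because $R=0$ and $a^2_{s,p}=0$. For the induction step, look at the bracket terms projected to $\cR^{j,k}$ with $j>0$:
\[
\hat G_{2,*u}(\hat F_1)+\hat G_{2,*v}(\hat F_2)-\hat F_{2,*u}(\hat G_1)-\hat F_{2,*v}(\hat G_2).
\]
\begin{itemize}
\item In $\hat G_{2,*v}(\hat F_2)$, $\hat F_{2,*v}(\hat G_2)$ and $\hat F_{2,*u}(\hat G_1)$, every contributing lower-degree piece of $F_2$ or $G_2$ must carry a $u$ (or feed $\hat F_{2,*u}$), so it vanishes by the induction hypothesis and minimality of $d$. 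The $u$-free parts $a^2_{0,p}$ and $A^2_{0,p}$ act only through $*v$ on $u$-free inputs.
\item The term $\hat G_{2,*u}(\hat F_1)$ needs $G_2$ to depend on $u$. Its nonlinear part does not, by induction, and its linear part $\mu_2\zeta^m$ has no $u$-derivative.
\end{itemize}
Hence every contribution has $u$-degree $0$, and $R^2_{j,k}=0$ for $j>0$. The same computation at later degrees also gives $A^2_{j,k}=0$ for $j>0$ up to and including degree $d$.

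Having ruled out a first $u$-dependent coefficient of degree $d$, the induction on $d$ shows that all $a^2_{j,k}$ with $j>0$ vanish, so $F_2$ is $u$-free and the system is triangular.

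The main obstacle is the bookkeeping in the vanishing of $R^2_{j,k}$. One must check carefully that $u$-free components of $\bg$ and $\bff$ cannot produce $u$-dependent terms in the second component through the bracket. The structure of \eqref{liesys1} makes this plausible: $u$ enters the second component only via $\partial/\partial u$ of second-component quantities. A second point needing care is the degree estimate $\deg a^2_{j,k}<n$ for $j>0$, including the exceptional terms $a^2_{k,0}$. Here the assumption that each monomial has fewer than $n$ derivatives replaces the weight argument used in Theorem~\ref{noqct}.
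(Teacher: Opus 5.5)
Your proposal is correct and is the argument ``analogous to Theorem~\ref{noqct}'' that the paper has in mind. Your check that $F_1$ and the $u$-free parts of $F_2$ and $G_2$ cannot produce $u$-dependent terms in $R^2_{j,k}$ is exactly what the adaptation needs, since it lets the first $u$-dependent coefficient $a^2_{j,k}$ enter \eqref{Aijkg} with zero remainder. The derivative-count hypothesis also correctly gives $\deg a^2_{j,k}<n$ for every $j>0$.

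One point to tighten: $G^{\omega_2}_{j,k}\nmid G^{\Omega_2}_{j,k}$ does not follow from $m\ne n$ alone. It follows because $G^{\Omega_2}_{j,k}$ is also irreducible by Lemma~\ref{FB1} (using $\mu_1\mu_2\neq0$), so divisibility would force the two polynomials to be associates, and hence $m=n$.
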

As an immediate consequence of Theorem~\ref{noqct}, any non-triangular integrable system of the form \eqref{sys1hom} must contain non-vanishing quadratic and/or cubic terms. 

\subsection{Odd-order systems with zero quadratic and nonzero cubic terms}\label{seccub}
In this section, we consider systems of the form \eqref{sys1hom} with odd order $n$ that contain no quadratic terms but possess non-vanishing cubic terms, that is, that is, 
$\Pi_{\cL}^1(\bff)=\bf{0}$ and $\Pi_{\cL}^2(\bff)\ne\bf{0}$.  We show that the complete classification in this case can be reduced to examining such system of order $3$ possessing a symmetry of order $5$. 

By Theorem~\ref{symsys}, if \eqref{sym1hom} is an approximate symmetry of degree $3$ for the system \eqref{sys1hom}, then the coefficients appearing in its symbolic representation \eqref{sys1symh} up to cubic order are uniquely determined, and they must be polynomials in their respective arguments:
\begin{eqnarray}
\label{Aredcub}
A_{j,k}^i(x,y,z)=\frac{G^{\Omega_i}_{j,k}(x,y,z)}{G^{\omega_i}_{j,k}(x,y,z)}a_{j,k}^i(x,y,z),\quad i=1,2, \quad j+k=3,
\quad j,k\in\mathbb{Z}_{\ge0},
\end{eqnarray}
where $\omega_1(\xi_1)=\lambda_i \xi_1^n$, $\omega_2(\zeta_1)=\lambda_2 \zeta_1^n$, $\Omega_1(\xi_1)=\mu_1 \xi_1^m$ and $\Omega_2(\zeta_1)=\mu_2 \zeta_1^m$.

If the polynomials $G^{\Omega_i}_{j,k}$ and $G^{\omega_i}_{j,k}$ are relatively prime, then the divisibility condition in \eqref{Aredcub} requires $G^{\omega_i}_{j,k}$ to divide $a^i_{j,k}$. In this case, no restriction is imposed on  $\Omega_i$, and the system admits approximate symmetries of every order. However, if the system is homogeneous with positive weights $W(u)>0$ and $W(v)>0$, then, as shown in the proof of Theorem~\ref{noqct}, one has $\deg a^i_{j,k}<n$ for all cubic coefficients except possibly $a^1_{0,3}$ and $a^2_{3,0}$, with at least one of these necessarily vanishing.
Since $\deg G^{\omega_i}_{j,k}=n$ while $\deg a^i_{j,k}<n$ for all $i=1,2$ and $j,k>0$ with $j+k=3$, it follows that the coefficients $A^i_{j,k}$ can be polynomial only if $G^{\Omega_i}_{j,k}$ and $G^{\omega_i}_{j,k}$ possess a nontrivial common factor. The relevant factorisation results are collected in Lemmas~\ref{Gcubdiag}, \ref{Gcubirred}, and \ref{Gcubred} of Appendix~\ref{FBlemmas}. Using these factorisation properties, we obtain the following statement:

\begin{The}\label{cubthm}
Let \eqref{sys1hom} be a non-triangular system of odd order $n$ satisfying
$
\Pi_{\cL}^1(\bff)=\mathbf{0},\
\Pi_{\cL}^2(\bff)\neq\mathbf{0},
$
and let \eqref{sys1sysh} be its symbolic representation. If the system admits an infinite hierarchy of approximate symmetries of degree $3$ whose orders belong to a set $S\subset\mathbb{N}$, then
$
S=2\mathbb{N}+1.
$
\end{The}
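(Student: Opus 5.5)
We plan to argue through the cubic coefficients \eqref{Aredcub}, which must be polynomial for every approximate symmetry in the hierarchy. By homogeneity with positive weights, as in the proof of Theorem~\ref{noqct}, all cubic coefficients $a^i_{j,k}$ with $j,k>0$ have degree strictly less than $n=\deg G^{\omega_i}_{j,k}$. So for each such nonzero coefficient, $A^i_{j,k}$ can be polynomial only if $G^{\omega_i}_{j,k}$ and $G^{\Omega_i}_{j,k}$ share a nontrivial common factor, and the part of $G^{\omega_i}_{j,k}$ not cancelled by this factor must divide $a^i_{j,k}$.

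The system is non-triangular, and by Corollary~\ref{triang} its triangularity is already decided at cubic level. Hence some mixed cubic coefficient is nonzero: at least one of $a^1_{j,k}$ with $k>0$ (other than possibly $a^1_{0,3}$), and one of $a^2_{j,k}$ with $j>0$.

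The first step is to record the factorisations from Lemmas~\ref{Gcubdiag}, \ref{Gcubirred} and \ref{Gcubred}. For odd $n$, the three-variable polynomial $G^{\omega_i}_{j,k}(x,y,z)$ has the following common factors with $G^{\Omega_i}_{j,k}$:
\begin{itemize}
\item[] (a) the universal linear factors $(x+y)$, $(y+z)$, $(x+z)$, which appear in the cases where the coefficients $\lambda$ match appropriately (for instance $G^{\omega_1}_{2,1}$ is divisible by $x+y$ when $n$ is odd, since $\omega_1(x+y+z)-\omega_1(z)$ vanishes at $x=-y$ up to the term $\omega_1(x)+\omega_1(y)$, which vanishes for odd $n$), together with the analogous $\zeta$-pairs;
\item[] (b) possibly a further common factor, which exists only for special ratios $\lambda_2/\lambda_1$ and $\mu_2/\mu_1$.
\end{itemize}
Since $\deg a^i_{j,k}<n$, the cofactor left after the linear factors has degree at least $n-2$ or $n-3$. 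We would then use the irreducibility statement of Lemma~\ref{Gcubirred}: the cofactor of $G^{\omega_i}_{j,k}$, after removing the forced linear factors, is irreducible or has only controlled factors. This forces $G^{\Omega_i}_{j,k}$ to contain that whole cofactor, up to the small residual allowed by $\deg a^i_{j,k}$.

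The key step is to show that this divisibility holds for every odd $m$ and fails for every even $m$. For odd $m$ we expect that the relevant cofactor condition reduces to the same linear factors, because $\Omega$ is also odd. The conditions on $\mu_1,\mu_2$ (typically $\mu_2/\mu_1$ fixed by $\lambda_2/\lambda_1$ via the vanishing of the constant in front of the linear factors) would then be satisfiable for each odd $m$, giving a symmetry of each odd order.

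For even $m$, the factors $x+y$ etc.\ do not divide $G^{\Omega}$ unless $\mu$-relations hold that contradict $\mu_1\mu_2\ne0$. Here we would evaluate $G^{\Omega_i}_{j,k}$ on $x=-y$ and obtain $2\mu\,x^m\ne0$. Combined with the irreducible cofactor, this shows that a nonzero mixed coefficient cannot yield a polynomial $A$, so only finitely many even orders survive; in fact none do once we invoke the infinitude of the hierarchy and the spectral constraint.

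We expect the main obstacle to be the exhaustive case analysis over which mixed coefficients are nonzero, and over special values of $\lambda_2/\lambda_1$ where Lemma~\ref{Gcubred} gives extra factors. In those cases one must check that the extra factor cannot be shared with $G^{\Omega}$ for infinitely many $m$ unless $m$ is odd. Here we would use the Beukers-type lemma that a common non-linear factor of $G^{\omega}$ and $G^{\Omega}$ occurs only for finitely many exponents. Such factors therefore cannot support an infinite hierarchy, and the infinite part of $S$ must come from the linear factors, which gives $S=2\mathbb{N}+1$. Finally we would add the converse: each odd $m$ indeed yields polynomial $A^i_{j,k}$, so $S$ is all odd numbers rather than a subset.
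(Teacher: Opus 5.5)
Your outline has the right overall shape: polynomiality of the cubic coefficients in \eqref{Aredcub}, degree bounds from positive weights, Lemmas~\ref{Gcubdiag}--\ref{Gcubred}, even orders excluded and all odd orders admitted. However, the factorisation facts you build on are wrong, and this breaks both halves of the argument. Since $\lambda_1\neq\lambda_2$, setting $x=-y$ gives $G^{\omega_1}_{2,1}=(\lambda_1-\lambda_2)z^n\neq0$, so $x+y$ is not a factor. In fact $G^{\omega_1}_{2,1}$, $G^{\omega_2}_{1,2}$, $G^{\omega_1}_{0,3}$ and $G^{\omega_2}_{3,0}$ are irreducible for odd $n$ (Lemma~\ref{Gcubirred}). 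Linear factors occur only in $G^{\bullet_1}_{3,0}$ and $G^{\bullet_2}_{0,3}$ (the product $(x+y)(x+z)(y+z)$), and in $G^{\bullet_1}_{1,2}$ and $G^{\bullet_2}_{2,1}$ (a single factor $y+z$, resp.\ $x+y$). Your odd-$m$ mechanism is then backwards. For odd $m\neq n$ the irreducible cofactor of $G^{\omega_i}_{j,k}$ is coprime to $G^{\Omega_i}_{j,k}$ (e.g.\ $h_n$ versus $h_m$ of Lemma~\ref{Gcubdiag}), so it can never be ``contained'' in $G^{\Omega_i}_{j,k}$. It has to divide $a^i_{j,k}$ instead (e.g.\ $h_n\mid a^1_{3,0}$, which the weights allow when $\mu\le 3/2$), while $a^1_{2,1}=a^2_{1,2}=0$ is forced. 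These are conditions on the system alone. The shared linear factors are present for every odd $m$ and every $\mu_1,\mu_2$, so $\mu_2/\mu_1$ is \emph{not} tied to $\lambda_2/\lambda_1$; it stays arbitrary. This independence of $m$ is exactly why one odd order yields all odd orders. No ``finitely many exponents'' lemma is needed: for $m\neq n$ there are no nonlinear common factors at all.

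For even $m$, checking at $x=-y$ that the linear factors of $G^{\omega_i}_{j,k}$ do not divide $E^{\Omega_i}_{j,k}$ is not enough; you need $\gcd(G^{\omega_i}_{j,k},E^{\Omega_i}_{j,k})=1$ in full. The paper gets this in two ways. First, it uses the irreducibility of $E^{\Omega_1}_{3,0}$ and $E^{\Omega_1}_{1,2}$ for even $m$ (Lemmas~\ref{Gcubdiag}, \ref{Gcubred}). Second, for the irreducible $G^{\omega_1}_{2,1}$, which has no linear factor to test, it restricts to $z=0$: $G^{\omega_1}_{2,1}(x,y,0)=\lambda_1((x+y)^n-x^n-y^n)$ is divisible by $x+y$, whereas $E^{\Omega_1}_{2,1}(x,y,0)=\mu_1((x+y)^m-x^m-y^m)$ is not. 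The degree bounds then kill all cubic coefficients except $a^1_{0,3}$ and $a^2_{3,0}$. Their degrees $n+\mu-3\nu$ and $n+\nu-3\mu$ may reach $n=\deg G^{\omega_1}_{0,3}=\deg G^{\omega_2}_{3,0}$. Since $\mu\ge3\nu$ and $\nu\ge3\mu$ are incompatible, at most one of them survives, so the system is triangular, contradicting the hypothesis. Your proposal never isolates these two coefficients. Your assertion that non-triangularity gives a nonzero $a^1_{j,k}$ with $k>0$ ``other than possibly $a^1_{0,3}$'' is also unjustified. This off-diagonal case is exactly where non-triangularity must be used.
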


\begin{proof} Under the assumption for system \eqref{sys1hom}, we have $\deg G^{\omega_i}_{j,k}=n$, where $n$ is odd.
We first prove that the set $S$ contains no even integers. Suppose, to the contrary, that $m\in S$ is even. For this value of $m$, denote the corresponding polynomials $G^{\Omega_i}_{j,k}$ by $E^{\Omega_i}_{j,k}$.
 
By Lemma \ref{Gcubdiag}, the polynomials $E^{\Omega_1}_{3,0}$ and $E^{\Omega_2}_{0,3}$ are irreducible, and
$$\gcd (G^{\omega_1}_{3,0},\ E^{\Omega_1}_{3,0})=\gcd(G^{\omega_2}_{0,3},\ E^{\Omega_2}_{0,3})=1.$$
Moreover, $\deg a^1_{3,0}\le n-2 \mu < n$ and $\deg a^2_{0,3}\le n-2\nu<n$ since $\mu>0$ and $\nu>0$. According to \eqref{Aredcub}, the coefficients
$A^1_{3,0}$ and $A^2_{0,3}$ are polynomials if and only if $a^1_{3,0}=a^2_{0,3}=0$. 

Lemma \ref{Gcubirred} in Appendix \ref{FBlemmas} implies that for odd $n$, the polynomials $G^{\omega_1}_{2,1}$ and $G^{\omega_2}_{1,2}$ are irreducible. 
Assume that $1\neq \gcd (G^{\omega_1}_{2,1},\ E^{\Omega_1}_{2,1})=G^{\omega_1}_{2,1}, (m>n)$ . Note that 
$$G^{\omega_1}_{2,1}(x,y,0)=\lambda_1 (x+y)^n-\lambda_1(x^n+y^n), \qquad E^{\Omega_1}_{2,1}(x,y,0)=\mu_1 (x+y)^m-\mu_1(x^m+y^m).$$
Since $m$ is even,  the factor $x+y$ does not divide  $E^{\Omega_1}_{2,1}(x,y,0)$, which leads to the contradiction with the assumption. Thus, we have $\gcd (G^{\omega_1}_{2,1},\ E^{\Omega_1}_{2,1})=1$. Similarly, we also obtain $\gcd(G^{\omega_2}_{1,2},\ E^{\Omega_2}_{1,2})=1$.
For the degrees we know $\deg a^1_{2,1}\le n-\mu-\nu<n$ and $\deg a^2_{1,2}\le n-\mu-\nu<n$. Hence $a^1_{2,1}=a^2_{1,2}=0$ is required for $A^1_{2,1},A^2_{1,2}$ to be polynomials.
Lemma \ref{Gcubred} implies that 
$\gcd (G^{\omega_1}_{1,2},\ E^{\Omega_1}_{1,2})=\gcd(G^{\omega_2}_{2,1},\ E^{\Omega_2}_{2,1})=1,$
 which leads to $a^1_{1,2}=a^2_{2,1}=0$ by a similar argument of degrees using  \eqref{Aredcub}.

Finally, the polynomials $G^{\omega_1}_{0,3}$ and $G^{\omega_2}_{3,0}$ are irreducible for odd $n$ by Lemma \ref{Gcubirred}.
Consequently, $A^1_{0,3},A^2_{3,0}$ are polynomials in \eqref{Aredcub} if and only if $G^{\omega_1}_{0,3}\,|\,a^1_{0,3}$ and $G^{\omega_2}_{3,0}\,|\,a^2_{3,0}$. 
However, for degrees of $a^1_{0,3},\,a^2_{3,0}$ we have $$\deg a^1_{0,3}\le n+\mu-3\nu, \quad
 \deg a^2_{3,0}\le n+\nu-3\mu.$$ 
This leads to either $\mu\ge 3\nu$ with $a^2_{3,0}=0$ or $\nu\ge 3\mu$ with $a^1_{0,3}=0$. In either case applying the same arguments used in the proof of Theorem \ref{noqct}, we show that the system is triangular, contradicting the assumption.
Therefore $S$ contains no even integers.

Thus, the set $S$ contains only odd integers. Repeating the above proof by taking $m$ as odd, 
it follows that once the system admits one approximate symmetry of degree $3$ of odd order, it possesses infinitely many approximate symmetry of degree $3$
at all odd orders and we have $a^1_{2,1}=a^2_{1,2}=0$. Moreover, there are no further restrictions on the factorisation properties of the polynomials $G^{\omega_i}_{j,k}$
and $G^{\Omega_i}_{j,k}$, so $S=\{m\,|\,m\in2\bbbn+1\}$, and the ratio $\mu_2(m)/\mu_1(m)$ remains arbitrary.
\end{proof}

In the present setting, Lemma~\ref{FB1} of Appendix~\ref{FBlemmas} implies that the linear part of any odd-order symmetry is relatively $3$-prime to the linear part of the equation \eqref{sys1hom}. Combining Theorem~\ref{cubthm} with Theorem~\ref{themapp} of Appendix~\ref{appc}, we conclude that the existence of a single generalised symmetry \eqref{sym1hom} implies the existence of infinitely many and hence integrability. Therefore, every non-triangular integrable hierarchy with vanishing quadratic terms begins at order $3$ and contains symmetries of all odd orders. In particular, to classify the integrable systems in this case, we only need to examine this type of systems of order $3$ that possess a symmetry of order $5$.

\subsection{Odd-order systems with nonzero quadratic terms}
In this section, we focus exclusively on the case in which both the system \eqref{sys1hom} and its symmetry \eqref{sym1hom} are of odd order. This restriction is not essential since we will show that whenever an odd-order system possesses an infinite-dimensional algebra of approximate symmetries of degree $2$, it also admits a subalgebra of degree-$2$ approximate symmetries whose orders are all odd. In this setting, the polynomials $G^{\Omega_i}_{j,k}$ (see Lemma \ref{gsym}) satisfy particular relations that significantly simplify the analysis.

We aim to derive the necessary and sufficient integrability conditions for all odd-order systems of the form \eqref{sys1hom}. 
We first investigate the conditions under which such systems are approximately integrable of degree $2$, i.e., admitting infinitely many approximate symmetries of degree $2$.
A related study including even-order systems was carried out in \cite{Peter09}, where minimal lists of approximately integrable systems were obtained up to maximal degree divisors. 
Here, we systematically enumerate all possible cases,  $24$ cases in total as stated in Theorem \ref{thmmn}, in order to classify integrable systems in the next section. 
Most of the lemmas used in this analysis are discussed in \cite{kamp02a, Peter09}, while some also appear in \cite{mr99i:35005, mr1829636, wang98}. For the sake of readability and to emphasise the overall strategy, we state these results without proof.

We then show that the existence of a generalised symmetry implying integrability remains valid in the present context. More precisely, we prove that the existence of an approximate symmetry of degree $2$ ensures the existence of a symmetry. 
This result plays a crucial role in the classification carried out in the following section. 

Assume that system \eqref{sys1hom} contains quadratic terms, i.e., $\Pi_{\cL}^1(\bff)\neq{\bf 0}$ with the symbolic representation given by \eqref{sys1sysh}. Let \eqref{sym1hom} be an approximate symmetry of degree $2$ with the symbolic representation \eqref{sys1symh}. From Theorem \ref{symsys} it follows that if \eqref{sym1hom} is an approximate symmetry of degree $2$ of system \eqref{sys1hom}, then the following six functions must be polynomial in their arguments:
\begin{eqnarray}
\label{A1}
&&A^1_{2,0}(\xi_1, \xi_2)=\frac{G^{\Omega_1}_{2,0}(\xi_1, \xi_2)}{G^{\omega_1}_{2,0}(\xi_1, \xi_2)}a^1_{2,0}(\xi_1, \xi_2),\quad A^2_{0,2}(\zeta_1, \zeta_2)=\frac{G^{\Omega_2}_{0,2}(\zeta_1, \zeta_2)}{G^{\omega_2}_{0,2}(\zeta_1, \zeta_2)}a^2_{0,2}(\zeta_1, \zeta_2),\\
\label{A2}
&&A^1_{0,2}(\zeta_1, \zeta_2)=\frac{G^{\Omega_1}_{0,2}(\zeta_1, \zeta_2)}{G^{\omega_1}_{0,2}(\zeta_1, \zeta_2)}a^1_{0,2}(\zeta_1, \zeta_2),\quad A^2_{1,1}(\xi_1, \zeta_1)=\frac{G^{\Omega_2}_{1,1}(\xi_1, \zeta_1)}{G^{\omega_2}_{1,1}(\xi_1, \zeta_1)}a^2_{1,1}(\xi_1, \zeta_1),\\
\label{A3}
&&A^1_{1,1}(\xi_1, \zeta_1)=\frac{G^{\Omega_1}_{1,1}(\xi_1, \zeta_1)}{G^{\omega_1}_{1,1}(\xi_1,  \zeta_1)}a^1_{1,1}(\xi_1,  \zeta_1),\quad
A^2_{2,0}(\xi_1, \xi_2)=\frac{G^{\Omega_2}_{2,0}(\xi_1, \xi_2)}{G^{\omega_2}_{2,0}(\xi_1, \xi_2)}a^2_{2,0}(\xi_1, \xi_2),
\end{eqnarray}
where $\omega_1(\xi_1)=\lambda_i \xi_1^n$, $\omega_2(\zeta_1)=\lambda_2 \zeta_1^n$, $\Omega_1(\xi_1)=\mu_1 \xi_1^m$ and $\Omega_2(\zeta_1)=\mu_2 \zeta_1^m$.

For simplicity, when discussing the properties of the above polynomials, we denote their arguments by
 $x$ and $y$ without further clarification.
Following the approach of Section \ref{seccub} where systems with non-vanishing cubic terms are considered, we now investigate the factorisation properties 
of the polynomials $G^{\Omega_i}_{j,k}$ with $i=1, 2$, $k, j\ge 0$ and $j+k=2$.
Suppose that the polynomials $G^{\Omega_i}_{j,k}(x,y)$ and
$G^{\omega_i}_{j,k}(x,y)$ admit the factorisations
\begin{eqnarray*}
G^{\Omega_i}_{j,k}(x,y)=g^{i}_{j,k}(x,y)h^{\Omega_i}_{j,k}(x,y),\quad
G^{\omega_i}_{j,k}(x,y)=g^{i}_{j,k}(x,y)h^{\omega_i}_{j,k}(x,y),\quad
i=1,2,\,\,j,k\ge 0,\,\,j+k=2.
\end{eqnarray*}
Then the polynomials $h^{\omega_i}_{j,k}(x,y)$ must divide $a^i_{j,k}(x,y)$.
Consequently, the expressions in \eqref{A1}--\eqref{A3} define polynomial functions $A^i_{j,k}$.

For a given system (\ref{sys1hom}), knowing $a^i_{j,k}$ and $G^{\omega_i}_{j,k}$, to
construct approximate symmetries of degree $2$, we compute the greatest common divisor:
\begin{equation}\label{commonf}
h^{\omega_i}_{j,k}(x,y)=\gcd(a^i_{j,k}(x,y),G^{\omega_i}_{j,k}(x,y))
\end{equation}
and then define
\begin{equation}\label{commona}
g^i_{j,k}(x,y)=\frac{G^{\omega_i}_{j,k}(x,y)}{h^{\omega_i}_{j,k}(x,y)}.
\end{equation}
The next step is to determine 
an integer $m>1$ and the corresponding $\mu_1(m),\mu_2(m)\in\bbbc$ such that
$g^i_{j,k}(x,y)|G^{\Omega_i}_{j,k}(x,y).$ 
However, in general this procedure is not algorithmic and is not suitable for integrability testing or the systematic classification of integrable systems.

We resolve the classification problem by imposing the condition of existence of
{\it infinitely} many approximate symmetries. Specifically, suppose that the subalgebra of approximate symmetries of degree $2$ of system (\ref{sys1hom}) is infinite dimensional. Then there exists an infinite  sequence of positive integers $S=\{m_k\,|\, m_k\in \bbbn,\,m_{k+1}>m_k\}$, together with an infinite  sequence of parameters $\mu_1(m)$ and $\mu_2(m)$ with $ m\in S$ such that for 
$$\Omega_1(\xi_1)=\mu_1(m)\xi_1^m,\quad\Omega_2(\zeta_1)=\mu_2(m) \zeta_1^m,\,\, m\in S,$$ the functions
$A^i_{j,k},\,i=1,2,\,j,k\ge 0,\,j+k=2$ determined by (\ref{A1})-(\ref{A3}) are
polynomials in their arguments. Note that the polynomials
$h^{\omega_i}_{j,k}$ defined by \eqref{commonf} depend only on the right-hand side of 
the system (\ref{sys1hom}). Therefore, for all $m\in S$ the polynomials $g^i_{j,k}$ given by \eqref{commona} must divide
$G^{\Omega_i}_{j,k}(x,y)$. 
This implies that for each $i=1,2,\,\,j,k\ge 0,\,\,j+k=2$, all polynomials in an infinite sequence
\begin{equation}\label{seq}
\big{\{}G^{\Omega_i}_{j,k}(x,y)\,|\,\Omega_1(x)=\mu_1(m)x^m,\,\Omega_2(x)=\mu_2(m)x^m,\,m\in
S\big{\}}
\end{equation}
share a nontrivial common factor $g^i_{j,k}(x,y)$. 

The parameters involved in these sequences consist of the infinite sequence of orders $S$ together with constants $\mu_1(m)$ and $\mu_2(m)$ with $m\in S$. Once all possible sequences \eqref{seq}
and their common factors $g^i_{j,k}$ are classified, these parameters and hence the spectral invariants of hierarchies are determined. 
Since each common factor $g^i_{j,k}$ divides $G^{\omega_i}_{j,k}$,  it follows that the polynomials  $G^{\omega_i}_{j,k}$ themselves  belong to the sequences \eqref{seq}, that is , $n\in S$. From now on, when no confusion can arise, we shall write simply $\mu_1,\mu_2$, omitting their dependence on the order $m$.

\underline{Factorisation properties of the polynomials
$G^{\Omega_1}_{2,0}(x,y)$ and $G^{\Omega_2}_{0,2}(x,y)$}.

The factorisation properties of polynomials $G^{\Omega_1}_{2,0}(x,y)$ and $G^{\Omega_2}_{0,2}(x,y)$ are independent of the parameters $\mu_1,\mu_2$ since
$$
G^{\Omega_1}_{2,0}(x,y)=\mu_1\left((x+y)^m-x^m-y^m\right)=\mu_1g_m(x,y),\quad G^{\Omega_2}_{0,2}(x,y)=\mu_2\left((x+y)^m-x^m-y^m\right)=\mu_2g_m(x,y),
$$
where
\begin{equation}\label{hm}
g_m(x,y)=(x+y)^m-x^m-y^m.    
\end{equation}
\begin{Lem}\label{gdiag} Let $S=\{m_k\,|\,m_k=1\!\!\mod 2,\,3\le m_1<m_2<\cdots<m_k<\cdots\}$ be an infinite strictly increasing sequence of odd positive integers.
Consider the corresponding sequence of polynomials $\{g_m(x,y)\,|\,m\in S\}$ with $g_m$ defined by \eqref{hm}. Let
\[
h(x,y)=\gcd\left(g_{m_1}(x,y),g_{m_2}(x,y),\ldots,g_{m_k}(x,y),\ldots\right)
\]
denote the greatest common divisor of all polynomials in this sequence. Then one of the following cases occurs:
\begin{itemize}
\item[{\rm i.}] $h(x,y)=xy(x+y)(x^2+xy+y^2)^2$,\ \ $S=\{m_k\,|\,m_k=1\mod 6\}$;
\item[{\rm ii.}] $h(x,y)=xy(x+y)(x^2+xy+y^2)$,\ \ $S=\{m_k\,|\,m_k=1,5\mod 6\}$;
\item[{\rm iii.}] $h(x,y)=xy(x+y)$,\ \ $S=\{m_k\,|\,m_k=1\mod  2\}$.
\end{itemize}
\end{Lem}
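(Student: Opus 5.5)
The plan is to use the classical Cauchy--Mirimanoff factorisation of $g_m(x,y)=(x+y)^m-x^m-y^m$ for odd $m\ge3$. This reduces the computation of the gcd to a finite amount of root-of-unity bookkeeping. Since $g_m$ is homogeneous, we dehomogenise by setting $y=1$ and work with $p_m(x)=(x+1)^m-x^m-1$. A common factor of all $g_m$, $m\in S$, is then a product of linear forms $x-\alpha y$, together with possibly $y$, with multiplicities. For odd $m$ one checks directly that $x$, $y$ and $x+y$ divide $g_m$: $x$ and $y$ because $p_m(0)=0$ and $\deg p_m=m-1$, and $x+y$ because $p_m(-1)=-(-1)^m-1=0$. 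Each of these factors is simple, as follows from a derivative check at the corresponding root. Hence $xy(x+y)$ always divides $h$, which gives case iii as the generic situation.

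The second step is to prove that the only other possible common roots are the primitive cube roots of unity $\rho,\rho^2$, i.e.\ the roots of $x^2+x+1$. For these, $p_m(\rho)=(-\rho^2)^m-\rho^m-1$ vanishes exactly when $m\equiv 1,5 \pmod 6$. Moreover $p_m'(\rho)=m\big((\rho+1)^{m-1}-\rho^{m-1}\big)$ also vanishes exactly when $m\equiv1\pmod 6$, while the second derivative does not vanish. This is the Cauchy--Liouville statement $g_m=m\,xy(x+y)(x^2+xy+y^2)^{e}E_m$ with $e=1$ for $m\equiv5$ and $e=2$ for $m\equiv1 \pmod 6$.

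The main obstacle is to exclude every other common root $\alpha$ shared by infinitely many, or all, $p_m$ with $m\in S$. The approach is as follows. Suppose $\alpha\notin\{0,-1,\rho,\rho^2\}$ and $(\alpha+1)^m=\alpha^m+1$ for infinitely many $m$. If $|\alpha+1|$ and $|\alpha|$ are not both $\le1$ with equality patterns forced, growth comparison as $m\to\infty$ forces $|\alpha+1|=\max(|\alpha|,1)$. Iterating the relation at two values $m<m'$ and comparing then yields that $\alpha$, $\alpha+1$ are roots of unity up to a common modulus. The only solutions are the triangles with vertices $0,-1,\alpha$ equilateral, i.e.\ $\alpha\in\{\rho,\rho^2\}$. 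Alternatively, I would cite Beukers' result (as used for Lemma~\ref{FB1}) that $E_m$ has no repeated factor and that $\gcd(E_m,E_{m'})=1$ for distinct odd $m,m'$; the gcd of infinitely many members, in fact of any two, then reduces to the Cauchy--Liouville factors. Since this coprimality is the genuinely hard number-theoretic input, I would rely on the appendix lemmas rather than reprove it.

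Finally, I would assemble the cases according to the residues of $S$ modulo $6$. If every $m\in S$ is $\equiv1 \pmod 6$, then $h=xy(x+y)(x^2+xy+y^2)^2$. If all $m\in S$ are $\equiv1,5 \pmod 6$ with some $m\equiv5$, the square drops to the first power. If some $m\equiv3 \pmod 6$ occurs, the factor $x^2+xy+y^2$ disappears entirely. This is the trichotomy i--iii, where $S$ is read as the residue class constraint that characterises each case.
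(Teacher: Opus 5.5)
Your proposal is correct and follows the same route as the paper. The paper gives no self-contained proof of Lemma~\ref{gdiag}; it rests it exactly on the Cauchy--Liouville--Mirimanoff factorisation $g_m=m\,xy(x+y)(x^2+xy+y^2)^{e}E_m$, whose multiplicities you check correctly, together with Beukers' theorem \cite{mr98e:11029} that two distinct $g_m$, $g_n$ have no further common factors. Two small corrections: the coprimality result comes from Beukers' separate paper, not from the lemmas of Appendix~\ref{FBlemmas} (those concern polynomials in three or more variables), and your growth-comparison sketch cannot replace it, because the step from comparable moduli of $\alpha$, $\alpha+1$ to roots of unity is not elementary and is exactly where Diophantine approximation is needed.
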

In affine coordinates, the polynomials $g_m$ defined by \eqref{hm} reduce to Cauchy-Liouville-Mirimanoff
polynomials \cite{Mirimanoff}.
Using the Diophantine approximation theory, F.~Beukers has proved 
that, apart from the common divisors listed in Lemma~\ref{gdiag}, any two distinct polynomials $g_m$ and $g_n$ have no additional nontrivial common divisors \cite{mr98e:11029}. This result is crucial for the global classification of
scalar evolutionary equations \cite{mr99g:35058, wang98}.

\underline{Factorisation properties of polynomials
$G^{\Omega_1}_{1,1}(x,y),G^{\Omega_1}_{0,2}(x,y),G^{\Omega_2}_{2,0}(x,y)$ and $G^{\Omega_2}_{1,1}(x,y)$}.

We begin by establishing several relations among these polynomials, which follow directly from the definition of polynomials $G^{\Omega_i}_{j,k}$ given by \eqref{GO}.

\begin{Lem} \label{gsym} Let $m\in 2\bbbn+1$ and $\Omega_i(x)=\mu_i x^m$ for $i=1,2$,  where $\mu_i\in\bbbc$. Then the polynomials defined by \eqref{GO} satisfy
$$
G^{\Omega_1}_{1,1}(x,y)=G^{\Omega_2}_{2,0}(x,-x-y),\qquad G^{\Omega_2}_{1,1}(x,y)=G^{\Omega_1}_{0,2}(-x-y,y) .
$$
\end{Lem}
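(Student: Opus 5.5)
This is a direct substitution into the definition \eqref{GO}, using only that $m$ is odd, so $(-t)^m=-t^m$. I would write out both sides explicitly and compare them term by term.

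For the first identity, take $(i,j,k)=(1,1,1)$ in \eqref{GO}:
\[
G^{\Omega_1}_{1,1}(x,y)=\mu_1(x+y)^m-\mu_1x^m-\mu_2y^m .
\]
Next take $(i,j,k)=(2,2,0)$, with arguments $(x,-x-y)$; the sum of the arguments is $-y$. This gives
\[
G^{\Omega_2}_{2,0}(x,-x-y)=\mu_2(-y)^m-\mu_1x^m-\mu_1(-x-y)^m .
\]
Because $m$ is odd, $(-y)^m=-y^m$ and $(-x-y)^m=-(x+y)^m$. The right-hand side therefore becomes $-\mu_2y^m-\mu_1x^m+\mu_1(x+y)^m$, which is exactly $G^{\Omega_1}_{1,1}(x,y)$.

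For the second identity, take $(i,j,k)=(2,1,1)$:
\[
G^{\Omega_2}_{1,1}(x,y)=\mu_2(x+y)^m-\mu_1x^m-\mu_2y^m .
\]
Then take $(i,j,k)=(1,0,2)$, with $\zeta$-arguments $(-x-y,y)$; their sum is $-x$. This gives
\[
G^{\Omega_1}_{0,2}(-x-y,y)=\mu_1(-x)^m-\mu_2(-x-y)^m-\mu_2y^m .
\]
Again by oddness of $m$, this equals $-\mu_1x^m+\mu_2(x+y)^m-\mu_2y^m$, which is $G^{\Omega_2}_{1,1}(x,y)$.

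There is no real obstacle here. The one point needing care is reading \eqref{GO} correctly: the outer coefficient $\mu_i$ multiplies the $m$-th power of the sum of all arguments, while each $\xi$-slot contributes $-\Omega_1$ and each $\zeta$-slot contributes $-\Omega_2$. The substitutions are chosen so that the three arguments $x$, $y$ and $-x-y$ sum to zero, which lets the linear sum and the individual slots trade roles. I would also note in passing that the same computation with $\omega_i$ in place of $\Omega_i$ gives the analogous identities for $G^{\omega_i}_{j,k}$, since $n$ is odd as well.
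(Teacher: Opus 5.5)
Your proof is correct and is the argument the paper intends: the paper gives no separate proof and says these relations follow directly from the definition \eqref{GO}. Your substitution, using $(-t)^m=-t^m$ for odd $m$, carries that out, and both identities check out exactly as you wrote them.
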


Lemma~\ref{gsym} implies that, when the order $n$ of the system is odd and the orders of all symmetries (that is, the elements of the sequence $S$) are also odd, it suffices to study the factorisation properties of one sequence of polynomials, for example,
$$\{G^{\Omega_1}_{0,2}(x,y)\,|\,\Omega_1(x)=\mu_1(m)x^m,\,\Omega_2(x)=\mu_2(m)x^m,\,m\in S\subset 2\bbbn+1\}.$$ 
For the same values of $\mu_1(m),\mu_2(m)$ and the same sequence $S$, the factorisation properties of the remaining polynomials $G_{j,k}^{\Omega_i}$ can then be recovered either from Lemma \ref{gsym} or after interchanging $\mu_1\leftrightarrow\mu_2$, which corresponds to exchanging $G^{\Omega_1}_{0,2}$ and $G^{\Omega_2}_{2,0}$. 
If either the order $n$ of the system or the orders of its symmetries are even, a more delicate analysis of the factorisation properties is required; see \cite{Peter09}.

The factorisation properties of $G^{\Omega_1}_{0,2}$ were investigated in \cite{mr99i:35005,mr1829636,ph02} in connection with the classification of linear--triangular systems of the form \eqref{Bakirov}. The corresponding result was stated as Proposition~13 in \cite{Peter09}. For completeness, we reproduce the statement below in notation consistent with the present paper, while omitting the proof.

\begin{Lem}\label{gsym2}  Let $S=\{m_k\,|\, m_k=1\!\!\mod2,\,3\le m_1<m_2<\cdots<m_k<\cdots\}$ be an infinite strictly increasing sequence of odd positive integers.
Consider the corresponding sequence of polynomials $\{\hat{g}_m(x,y)\,|\,m\in S\}$, where
$$ \hat{g}_m(x,y)=\mu_1(m)(x+y)^m-\mu_2(m)(x^m+y^m),\quad\mu_1(m),\mu_2(m)\in\bbbc .
$$
Suppose that  $\mu_1(m_1)\ne \mu_2(m_1),\,\,\mu_1(m_1)\mu_2(m_1)\ne 0$.
Let
\[
h(x,y)=\gcd\left(\hat{g}_{m_1}(x,y), \hat{g}_{m_2}(x,y),\ldots,\hat{g}_{m_k}(x,y),\ldots,\right)
\]
denote the greatest common divisor of all polynomials in this sequence. Then one of the following cases occurs:
\begin{itemize}
    \item[{\rm i.}]
\[
h(x,y)=(x+y)(x-qy)(y-qx)(x-{\bar q} y)(y-{\bar q} x),\quad q=\alpha\frac{\beta-1}{\alpha-1},
\]
where $\alpha^l=\beta^l=1,\,\,l\in2\bbbn+1,\,\,l\ge 5$, and $\alpha\notin \{1, \beta,\beta^{-1}\},\,\,\beta\ne 1$. In this case,
$$S=\{m_k\,|\,m_k=l\mod 2l\}.$$ 
\item[{\rm ii.}] $$h(x,y)=(x+y)(x-qy)^2(y-qx)^2,$$ where $q$ is a primitive root of unity of even order $l\ge 4$. In this case, $S=\{m_k\,|\,m_k=1\mod l\}.$
\item[{\rm iii.}] 
\[
h(x,y)=(x+y)(x-qy)(y-qx), \quad q\in\bbbc\setminus\{0,-1\}. \qquad \mbox{In this case,}\ S=\{m_k\,|\,m_k\in 2\bbbn+1\}.
\]
\item[{\rm iv.}]
\[
h(x,y)=x+y,\quad S=\{m_k\,|\,m_k\in 2\bbbn+1\}
\]
\end{itemize}
In all cases except {\rm(iv)}, the parameters $\mu_1(m)$ and $\mu_2(m)$ satisfy 
\begin{equation}\label{symspec}
\frac{\mu_2(m)}{\mu_1(m)}=\frac{(1+q)^m}{1+q^m},\quad \forall\ m\in S.
\end{equation}
\end{Lem}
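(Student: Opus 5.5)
We plan to work in affine coordinates. Since $\hat g_m$ is homogeneous of degree $m$ in $(x,y)$, a common factor $h$ corresponds to a finite set of common roots $t=x/y$ of the one-variable polynomials $p_m(t)=\mu_1(m)(t+1)^m-\mu_2(m)(t^m+1)$, counted with multiplicity. One must also track $y=0$, but $\hat g_m(x,0)=(\mu_1-\mu_2)x^m\neq0$, so $y\nmid h$.

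The factor $x+y$ is always present because $m$ is odd: $p_m(-1)=-\mu_2((-1)^m+1)=0$. This gives case (iv).

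Now suppose $t=q\neq -1,0$ is a further common root. For $q\ne0$ the equation $p_m(q)=0$ forces
\[
\frac{\mu_2(m)}{\mu_1(m)}=\frac{(1+q)^m}{1+q^m}\quad\text{whenever } 1+q^m\neq0 .
\]
This gives \eqref{symspec}. The symmetry $t\mapsto 1/t$ of $p_m$ (it is self-reciprocal of odd degree) shows that $1/q$ is a root too, which produces the factor $(x-qy)(y-qx)$ of case (iii). We exclude $q=0$ since $p_m(0)=\mu_1-\mu_2\neq0$. The degenerate situation $1+q^m=0$ forces $(1+q)^m=0$, i.e. $q=-1$, which is excluded.

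So the ratio $\mu_2/\mu_1$ for each $m$ is pinned down by one root $q$. The remaining task is to classify when a second, independent pair of common roots $\{r,1/r\}$, with $r\notin\{q,1/q\}$, or a double root at $q$, can occur for infinitely many $m$. This is the Beukers-type step and the main obstacle.

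For a second root $r$, compatibility requires
\[
\frac{(1+q)^m}{1+q^m}=\frac{(1+r)^m}{1+r^m}
\]
for infinitely many odd $m$. This is an exponential (unit) equation in the multiplicative group generated by $1+q$, $1+r$, $q$, $r$, namely $(1+q)^m(1+r^m)-(1+r)^m(1+q^m)=0$, an $S$-unit equation with four terms. We would invoke the theory of such exponential polynomial identities (Skolem–Mahler–Lech / Laurent's theorem on linear recurrences) to conclude that infinitely many solutions force the identity to hold along an arithmetic progression, via a vanishing subsum decomposition. Analysing the possible subsum cancellations should show two things. First, $q$ and $r$ must be roots of unity up to the ratio $(1+q)/(1+r)$ being a root of unity. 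Second, writing things in terms of $l$-th roots of unity leads to $r=\bar q$ and $q=\alpha(\beta-1)/(\alpha-1)$ with $m\equiv l \bmod 2l$. This is case (i).

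For a double root we add the derivative condition $p_m'(q)=0$, i.e.
\[
\mu_1(1+q)^{m-1}=\mu_2q^{m-1},
\]
which combined with the ratio gives $q^{m-1}(1+q)=1+q^m$, i.e. $q^{m-1}=1$. Hence $q$ is a root of unity whose order $l$ divides $m-1$, which gives $m\equiv1 \bmod l$. We need $l$ even and at least $4$ so that $q\neq\pm1$ and $1+q^m\ne0$. This gives case (ii). We would finally check that higher multiplicities and three independent pairs are impossible by degree and the same unit-equation argument. The hard part is the unit-equation case analysis; as a shortcut we would try to cite the result as Proposition~13 of \cite{Peter09} and \cite{mr1829636}, and verify only the direct computations above.
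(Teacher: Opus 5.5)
The paper does not prove this lemma. It reproduces Proposition~13 of \cite{Peter09} (resting on \cite{mr99i:35005,mr1829636,ph02}) and omits the proof, so your fallback of citing that result is exactly the paper's route. Several of your direct computations are fine: $y\nmid h$ via $\hat g_{m_1}(x,0)\neq 0$, $x+y\mid h$ for odd $m$, the self-reciprocity of $p_m$, the ratio \eqref{symspec} when $\mu_1(m)\neq 0$, and $q^{m-1}=1$ at a double root.

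As a self-contained proof, however, the sketch leaves the core of the lemma undone, and the one concrete claim you make there is false. In case (i), $q$ is not a root of unity, since $|q|=|\beta-1|/|\alpha-1|\neq 1$; for example $q=e^{2\pi i/5}+e^{4\pi i/5}$ has $|q|=2\cos(\pi/5)$. The vanishing-subsum analysis of $E_m=(1+q)^m+(1+q)^m r^m-(1+r)^m-(1+r)^m q^m$ along a Skolem--Mahler--Lech progression actually yields one of three splittings.
(a) All four terms in one class. Then $q$, $r$ and $(1+q)/(1+r)$ are roots of unity, so $|1+q|=|1+r|$ on the unit circle and $r\in\{q,1/q\}$.
(b) Classes $\{(1+q)^m,(1+r)^m\}$ and $\{(1+q)^m r^m,(1+r)^m q^m\}$, or the same after $r\mapsto 1/r$. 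Then $\omega=(1+r)/(1+q)$ and $\eta=r/q$ are roots of unity with $\omega^m=\eta^m=1$. Solving $1+\eta q=\omega(1+q)$ and putting $\beta=\omega^{-1}$, $\alpha=\omega\eta^{-1}$ gives $q=\alpha(\beta-1)/(\alpha-1)$, $r=q/(\alpha\beta)=\bar q$, and $\alpha^m=\beta^m=1$. Hence $m\equiv l \bmod 2l$ with $l$ odd. The requirements $q\notin\{0,-1\}$, $r\notin\{q,1/q\}$ translate into $\alpha\notin\{1,\beta,\beta^{-1}\}$, $\beta\neq 1$, which rule out $l=3$.
(c) Classes $\{(1+q)^m,(1+q)^m r^m\}$ and $\{(1+r)^m,(1+r)^m q^m\}$. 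Then $q^m=r^m=-1$, which forces $\mu_1(m)=0$.
You dismissed $1+q^m=0$ by tacitly assuming $\mu_1(m)\neq 0$. But the hypothesis only controls $m=m_1$, and $\mu_1(m)=0$ does occur in the paper (the case $q^5=-1$ in Example~\ref{example22}). You therefore need $\mu_1(m)\neq 0$ for infinitely many $m\in S$, and must run the argument on those $m$.

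Two further steps fail as written.

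First, in (ii), $q^{m-1}=1$ already gives $1+q^m=1+q\neq 0$, so your reason for $l$ being even is vacuous. In fact the order of $q$ may be odd: $q=e^{2\pi i/5}$ with $m\equiv 1\bmod 10$ is Case~III of Table~\ref{table1}. What one actually gets is $q^l=1$, $q\neq\pm 1$, for the even modulus $l=\lcm(2,\mathrm{ord}\,q)$, which is how the lemma is used in Tables~\ref{table1} and~\ref{table2}. Order $3$ must be excluded through $\mu_1(m_1)\neq\mu_2(m_1)$, because then $(1+q)^{m-1}=1$. The value $q=1$ belongs to (iii), since $(x-y)(y-x)=-(x-y)^2$.

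Second, the final check cannot be done ``by degree'', because $\deg\hat g_m=m$ is unbounded. What is needed instead:
\begin{itemize}
\item $x+y$ is a simple factor because $\mu_2(m_1)\neq 0$.
\item In (ii), $p_m''(q)=-m(m-1)\mu_1(1+q)^{m-2}q^{-1}\neq 0$, so there are no triple roots.
\item A double pair cannot coexist with a partner of type (b), since $|q|=1$ in (ii) but $|q|\neq 1$ in (i).
\item $q$ has at most one partner pair. The relation $\alpha\beta+q=\alpha(1+q)$ places $\alpha\beta$ on both circles $|z|=1$ and $|z+q|=|1+q|$, whose only other common point $z=1$ gives $\alpha=1$.
\end{itemize}
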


From Lemmas~\ref{gdiag}--\ref{gsym2}, one can determine the factorisation properties of the polynomials $G_{j,k}^{\Omega_i}$ for fixed $i$, $j$, and $k$ when the orders of all symmetries are odd. These six families of polynomials naturally group into the following three cases:
\begin{itemize}
    \item [{\rm (a)}] $G^{\Omega_1}_{2,0}$ and $G^{\Omega_2}_{0,2}$;
\item [{\rm (b)}] $G^{\Omega_1}_{1,1}$ and $G^{\Omega_2}_{2,0}$;
\item [{\rm (c)}] $G^{\Omega_2}_{1,1}$ and $G^{\Omega_1}_{0,2}$.
\end{itemize}
To obtain a complete classification, we combine the factorisation patterns arising in each of these cases. We first analyse cases {\rm(b)} and {\rm(c)} simultaneously and then incorporate case {\rm(a)}. The resulting classification is summarised in the theorem that follows.
We begin with a proposition describing the result of combining cases {\rm(b)} and {\rm(c)}. Since its proof is identical to that of Proposition~18 in \cite{Peter09} for odd-order systems, it is omitted.
\begin{Pro}\label{GG}  
Let $S=\{m_k\,|\,k\in \bbbn,\ m_k=1\!\!\mod2,\,3\le m_1<m_2<\cdots<m_k<\cdots\}$ be an infinite strictly increasing sequence of odd positive integers.
Consider two sequences of polynomials: 
\begin{eqnarray*}
\begin{array}{l}
 g^1_m(x,y)=\mu_1(m)(x+y)^m-\mu_2(m)\left(x^m+y^m\right),\\[4pt]
 g^2_m(x,y)=\mu_2(m)(x+y)^m-\mu_1(m)\left(x^m+y^m\right),
\end{array}
\end{eqnarray*}
where $\mu_1(m),\mu_2(m)\in\bbbc$ and $m\in S$.
Assume that $\mu_1(m_1)\ne\mu_2(m_1),\,\,\,\mu_1(m_1)\mu_2(m_1)\ne 0$.
Let
\begin{eqnarray*}
&&g^1(x,y)=\gcd\left(g^1_{m_1}(x,y),g^1_{m_2}(x,y),\ldots,g^1_{m_k}(x,y),\ldots   \right),\\
&&g^2(x,y)=\gcd\left(g^2_{m_1}(x,y),g^2_{m_2}(x,y),\ldots,g^2_{m_k}(x,y),\ldots   \right)
\end{eqnarray*}
denote the greatest common divisors of the corresponding polynomial sequences.
Then if both $g^1(x,y)$ and $g^2(x,y)$ are nontrivial, one of the cases in  Table  \ref{table1} occurs. Moreover, in every case
the parameters $\mu_1(m)$ and $\mu_2(m)$ satisfy
\eqref{symspec}.

\begin{table}[ht]
\centering
\begin{tabular}{|c|c|c|c|c|}
\hline
Case& $S=\{m_k\,|\,k\in\bbbn\}$&$q$&$g^1(x,y)$& $g^2(x,y)$\\ \hline\hline
 I.& $m_k=1,3,7,9\mod 10$&$e^{\frac{2\pi \i}{5}}$ &$(x+y)(x-qy)(qx-y)$& $(x+y)(x-q^2y)(q^2x-y)$ \\ \hline
 II. &$m_k=1,5\mod 6$& $e^{\frac{\pi \i}{6}}$ &$(x+y)(x-qy)(qx-y)$& $(x+y)(x-q^5y)(q^5x-y)$ \\
 \hline
 III. & $m_k=1\mod 10$& $e^{\frac{2\pi \i}{5}}$& $(x+y)(x-qy)^2(qx-y)^2$& $(x+y)(x-q^2y)^2(q^2x-y)^2$\\ \hline
IV. & $m_k=1\mod 12$& $e^{\frac{\pi \i}{6}}$ &$(x+y)(x-qy)^2(qx-y)^2$& $(x+y)(x-q^5y)^2(q^5x-y)^2$\\ \hline
\end{tabular}
\caption{Factorisation properties of both $G_{0,2}^{\Omega_1}(x, y)$ and $G_{2,0}^{\Omega_2}(x, y)$: $g^1(x,y)$ and
$g^2(x,y)$ are the greatest\\ common divisors of the corresponding polynomial sequences with degrees belonging to $S$.}
\label{table1}
\end{table}
\end{Pro}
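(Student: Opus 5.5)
The plan is to apply Lemma~\ref{gsym2} to each of the two sequences separately, and then use the coupling between them to cut down the possibilities.

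\textbf{Step 1 (apply Lemma~\ref{gsym2} to each sequence).} The sequence $g^1_m$ is exactly the sequence $\hat g_m$ of Lemma~\ref{gsym2}. The sequence $g^2_m$ is the same object with $\mu_1\leftrightarrow\mu_2$, and the hypotheses $\mu_1(m_1)\ne\mu_2(m_1)$, $\mu_1\mu_2\ne0$ are symmetric under this swap, so the lemma applies to both. Each of $g^1$ and $g^2$ therefore falls into one of cases (i)--(iv), with parameters $q_1$ and $q_2$ respectively.

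\textbf{Step 2 (rule out case (iv) for either sequence).} The hypothesis that both gcds are nontrivial should be read as: each contains a factor beyond the universal factor $x+y$. So case (iv) is excluded for both sequences.

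\textbf{Step 3 (the spectral compatibility relation).} Relation \eqref{symspec} for $g^1$ gives $\mu_2/\mu_1=(1+q_1)^m/(1+q_1^m)$. For $g^2$ the roles of $\mu_1,\mu_2$ are reversed, so $\mu_1/\mu_2=(1+q_2)^m/(1+q_2^m)$. Multiplying, we obtain for all $m\in S$ (an infinite set):
\[
(1+q_1)^m(1+q_2)^m=(1+q_1^m)(1+q_2^m).
\]

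\textbf{Step 4 (solve the relation).} This is the main obstacle. I would first rule out $|(1+q_1)(1+q_2)|>1$ and the case where one of $q_1,q_2$ lies off the unit circle. For this I would compare growth rates as $m\to\infty$ along $S$: the left side grows like $|(1+q_1)(1+q_2)|^m$, while the right side grows like $\max(1,|q_1|)^m\max(1,|q_2|)^m$ up to bounded factors. A dominant-term, Skolem-type argument, using that both sides are exponential polynomials in $m$ agreeing on an infinite arithmetic progression, forces the multisets of bases to coincide. That is, $\{(1+q_1)(1+q_2)\}$ must match $\{1,q_1,q_2,q_1q_2\}$ after cancellation, which is impossible unless cancellations occur. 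Those cancellations require $1+q_1^m$ or $1+q_2^m$ to vanish in a structured way, or the bases to be roots of unity.

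I would then reduce to $q_1,q_2$ roots of unity and carry out a finite cyclotomic computation. Write $1+q=2\cos(\theta/2)e^{i\theta/2}$, where $q=e^{i\theta}$. The relation then becomes a trigonometric identity in $m$, valid on residues modulo the order. Checking the small orders allowed by cases (i)--(iii) of Lemma~\ref{gsym2} should leave only two solutions: $q_1=e^{2\pi\i/5}$ with $q_2=q_1^2$, and $q_1=e^{\pi\i/6}$ with $q_2=q_1^5$. I would also confirm that case (i) of Lemma~\ref{gsym2} (the $l\ge5$ families with $S=l\bmod 2l$) is incompatible with the relation. This can be done by showing its $q$ is not a root of unity of the required form, or that the relation would fail on the progression.

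\textbf{Step 5 (assemble the table).} In each surviving pair, the set $S$ is the intersection of the progressions allowed for $g^1$ and $g^2$, using cases (ii) and (iii) of Lemma~\ref{gsym2}. Multiplicity two occurs precisely in case (ii), so the square factors appear exactly when $S\subset 1+l\mathbb{Z}$ with $l$ the (even) order of $-q$. This gives rows III and IV; the simple-factor cases give rows I and II. Finally, I would verify directly that for $m$ in the stated residue classes, $(x-qy)(qx-y)$ does divide $g^1_m$ with $\mu_2/\mu_1$ given by \eqref{symspec}, and similarly for $g^2_m$ with $q^2$ (respectively $q^5$). This amounts to checking that \eqref{symspec} holds consistently for both $q_1$ and $q_2$ on those classes.
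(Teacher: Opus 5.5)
Steps 1--3 are fine. Reading ``nontrivial'' as ``strictly larger than $x+y$'', applying Lemma~\ref{gsym2} to both sequences, and multiplying the two instances of \eqref{symspec} to get $(1+q_1)^m(1+q_2)^m=(1+q_1^m)(1+q_2^m)$ on $S$ is the right reduction. Step~4, however, is not yet a proof.

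First, $S$ is an arbitrary infinite set; in case (iii) of Lemma~\ref{gsym2} nothing more is known. So you may not assume the identity holds on an arithmetic progression. Growth rates alone cannot settle it for arbitrary complex $q_j$, because a unimodular ratio $\rho$ can have $|\rho^m-1|$ exponentially small along a sparse subsequence. You need the Skolem--Mahler--Lech theorem to get an infinite progression inside the zero set of $A^m-1-q_1^m-q_2^m-(q_1q_2)^m$, where $A=(1+q_1)(1+q_2)$. You then need an explicit analysis of how these five exponentials can split into classes with equal bases and vanishing coefficient sums. Done carefully, every admissible splitting forces $q_1$, $q_2$ and $A$ to be roots of unity. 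This also disposes of case (i) of Lemma~\ref{gsym2}, where $|q|=|\beta-1|/|\alpha-1|\neq 1$.

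The essential missing idea comes next. Cases (ii) and (iii) of Lemma~\ref{gsym2} put no bound on the order of $q_j$, so ``checking the small orders allowed'' is not a finite computation. You must classify all roots of unity with $|1+q_1|\,|1+q_2|=1$. Equivalently, solve $4\cos a\cos b=\pm1$ with $a,b\in\pi\mathbb{Q}$, i.e.\ $\cos(a+b)+\cos(a-b)=\pm\tfrac12$, or the vanishing sum $1+q_1+q_2+q_1q_2-\epsilon=0$ of five roots of unity. This needs a genuine arithmetic input that bounds the orders. Two options are the Conway--Jones theorem on rational combinations of cosines of rational angles, or Mann's theorem: in a minimal vanishing sum of length at most $5$, all ratios are $30$th roots of unity.

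Up to $q\mapsto q^{-1}$, the outcome is three pairs: $\{e^{2\pi\mathrm{i}/5},e^{4\pi\mathrm{i}/5}\}$, $\{e^{\pi\mathrm{i}/6},e^{5\pi\mathrm{i}/6}\}$, and $q_1=q_2=e^{2\pi\mathrm{i}/3}$. The last pair satisfies the relation exactly when $3\nmid m$, and there it gives $\mu_1(m)=\mu_2(m)$. It must therefore be excluded using $\mu_1(m_1)\neq\mu_2(m_1)$; your claim that only two solutions remain misses it.

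Also, the residue classes $1,3,7,9 \bmod 10$ and $1,5 \bmod 6$ in rows I--II are exactly where the relation holds ($5\nmid m$, resp.\ $3\nmid m$). They are not intersections of the progressions in Lemma~\ref{gsym2}, which in case (iii) are all odd integers. Only rows III--IV use case (ii), through the double-root condition $q^{m-1}=1$.
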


\begin{The}\label{thmmn} Let $S$ be an infinite sequence of odd integers greater than $1$. Suppose that system \eqref{sys1hom} with nonzero quadratic terms admits an infinite dimensional algebra of $2$-approximate symmetries of the form \eqref{sym1hom}, whose orders belong to $S$. Then, up to the involution $u\leftrightarrow v$, one of the cases in Table \ref{table2} occurs. Moreover, the quadratic coefficients in the symbolic representation \eqref{sys1sysh} satisfy the following divisibility conditions:
\begin{eqnarray}
&&G^{\omega_1}_{0,2}(x,y)\mid (x+y)g_1(x,y)a^1_{0,2}(x,y),\qquad
G^{\omega_2}_{2,0}(x,y)\mid (x+y)g_2(x,y)a^2_{2,0}(x,y), \nonumber\\
&&G^{\omega_2}_{1,1}(x,y)\mid xg_1(-x-y,y)a^2_{1,1}(x,y),\qquad
G^{\omega_1}_{1,1}(x,y)\mid yg_2(x,-x-y)a^1_{1,1}(x,y), \label{Gfactor}\\
&& G^{\Omega_1}_{2,0}(x,y)\mid x y (x+y) g(x,y) a^1_{2,0}(x,y), \quad \
G^{\Omega_2}_{0,2}(x,y)\mid x y (x+y) g(x,y) a^2_{0,2}(x,y).\nonumber
\end{eqnarray}
Furthermore, in all cases except cases 10, 11 and 12 in Table \ref{table2},  the spectral invariant of the hierarchy is determined by  relations:
\[
\frac{\lambda_2}{\lambda_1}=\frac{(1+q)^n}{1+q^n},\quad
\frac{\mu_2(m)}{\mu_1(m)}=\frac{(1+q)^m}{1+q^m} .
\]
\begin{table}[ht]
\centering
\begin{tabular}{|c|c|c|c|c|c|}
\hline Case& $S: n\in S, m\in S$& $q$& $g_1(x,y)$ & $g_2(x,y)$ & $g(x,y)$\\ \hline\hline 
1.& $1,3,7,9\mod 10$ & $e^{\frac{2\pi \i}{5}}$ & $(x-qy)(y-qx)$ & $(x-q^2y)(y-q^2x)$& $1$\\ \hline 
2.& $1,7,11,13,17,$ & $e^{\frac{2\pi \i}{5}}$ & $(x-qy)(y-qx)$ & $(x-q^2y)(y-q^2x)$& $x^2+xy+y^2$ \\
& $19,23,29\mod 30$ & & & &\\ \hline 
3.& $1,7,13,19\mod 30$ & $e^{\frac{2\pi \i}{5}}$ & $(x-qy)(y-qx)$ & $(x-q^2y)(y-q^2x)$ & $(x^2+xy+y^2)^2$ \\ \hline 
4.&  $1,5\mod 6$& $e^{\frac{\pi \i}{6}}$ & $(x-qy)(y-qx)$ & $(x-q^5y)(y-q^5x)$ & $x^2+xy+y^2$\\ \hline 
5.&  $1\mod 6$& $e^{\frac{\pi \i}{6}}$ & $(x-qy)(y-qx)$ & $(x-q^5y)(y-q^5x)$ & $(x^2+xy+y^2)^2$\\ \hline 
6.& $1\mod 10$ & $e^{\frac{2\pi \i}{5}}$ & $(x-qy)^2(y-qx)^2$ & $(x-q^2y)^2(y-q^2x)^2$ &$1$ \\ \hline 
7.& $1,11\mod 30$ & $e^{\frac{2\pi \i}{5}}$ & $(x-qy)^2(y-qx)^2$ & $(x-q^2y)^2(y-q^2x)^2$ & $x^2+xy+y^2$\\ \hline 
8.& $1\mod 30$ & $e^{\frac{2\pi \i}{5}}$ & $(x-qy)^2(y-qx)^2$ & $(x-q^2y)^2(y-q^2x)^2$ & $(x^2+xy+y^2)^2$\\ \hline 
9.& $1\mod 12$ & $e^{\frac{\pi \i}{6}}$ & $(x-qy)^2(y-qx)^2$ & $(x-q^5y)^2(y-q^5x)^2$ & $(x^2+xy+y^2)^2$ \\ \hline
10.& $1\mod 2$ &  & $1$ & $1$& $1$\\ \hline
11.& $1,5\mod 6$ &  & $1$ & $1$& $x^2+xy+y^2$\\ \hline
12.& $1\mod 6$ &  & $1$ & $1$& $(x^2+xy+y^2)^2$\\ \hline
13.& $1 \mod 2$ & $q\in\bbbc\setminus\{ 0,-1\}$ &  $(x-qy)(y-qx)$ & $1$ & $1$\\ \hline
14.& $1,5 \mod 6$ & $q\in\bbbc\setminus\{ 0,-1\}$ &  $(x-qy)(y-qx)$ & $1$ & $x^2+xy+y^2$\\ \hline 
15.& $1 \mod 6$ & $q\in\bbbc\setminus\{ 0,-1\}$ &  $(x-qy)(y-qx)$ & $1$ & $(x^2+xy+y^2)^2$\\ \hline 
16.& $1\mod  l$ &$q^l=1,\,q\ne\pm1$ & $(x-qy)^2(y-qx)^2$ & $1$&$1$\\
&$3\nmid l\, , 4\le l\in 2 \bbbn$ & & & &\\\hline 
17.& $1, 1+2l\mod 3l$&  $q^l=1,\,q\ne\pm1$ &$(x-qy)^2(y-qx)^2$ & $1$ & $x^2+xy+y^2$\\ 
& $2\neq l=2 \mod 6$& & & &\\\hline
18.& $1, 1+l\mod 3l$&  $q^l=1,\,q\ne\pm1$ &$(x-qy)^2(y-qx)^2$ & $1$ & $x^2+xy+y^2$\\ 
& $l=4 \mod 6$& & & &\\\hline 
19.& $1\mod  3l$ &$q^l=1,\,q\ne\pm1$ & $(x-qy)^2(y-qx)^2$ & $1$&$(x^2+xy+y^2)^2$\\
&$3\nmid l\, , 4\le l\in 2 \bbbn$ & & & &\\\hline 
20.& $1\mod l,\ \ 6\mid l$&  $q^l=1,\,q\ne\pm1$ &$(x-qy)^2(y-qx)^2$ & $1$ &$(x^2+xy+y^2)^2$\\ 
\hline
21.& $l\mod 6l $ &  $\frac{\alpha(\beta-1)}{\alpha-1}\notin \{0,\alpha,-1\}$& $(x-qy)(y-qx)$ & $1$& $(x^2+xy+y^2)^2$\\ 
  &$l =1 \mod 6$ &$\alpha^l=\beta^l=1$  &$\times(x-{\bar q}y)(y-{\bar q}x)$& & \\ \hline
22. & $l\mod 2l\ $ &  $\frac{\alpha(\beta-1)}{\alpha-1}\notin \{0,\alpha,-1\}$& $(x-qy)(y-qx)$ & $1$&$1$\\ 
  & $5\le l \in2\bbbn+1$ &$\alpha^l=\beta^l=1$ &$\times(x-{\bar q} y)(y-{\bar q} x)$& & \\ \hline
23.& $l,5l\mod 6l,\, 3\nmid l\, $ &  $\frac{\alpha(\beta-1)}{\alpha-1}\notin \{0,\alpha,-1\}$& $(x-qy)(y-qx)$ & $1$& $x^2+xy+y^2$\\ 
  &$5\le l \in2\bbbn+1 $ &$\alpha^l=\beta^l=1$ &$\times(x-{\bar q}y)(y-{\bar q}x)$& & \\ \hline
24. & $5 l\mod 6l $ &  $\frac{\alpha(\beta-1)}{\alpha-1}\notin \{0,\alpha,-1\}$& $(x-qy)(y-qx)$ & $1$& $(x^2+xy+y^2)^2$\\ 
  &$l =5 \mod 6$ &$\alpha^l=\beta^l=1$  &$\times(x-{\bar q}y)(y-{\bar q}x)$& & \\ \hline
\end{tabular}
\caption{Spectral invariants and orders of hierarchies of odd-order $2$-approximately integrable systems}
\label{table2}
\end{table}
\end{The}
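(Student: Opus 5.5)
The plan is to combine the three groups (a), (b), (c) of quadratic divisibility conditions into a common case analysis, with the orders $m\in S$ odd and $n\in S$. Each nonvanishing quadratic coefficient $a^i_{j,k}$ forces the common factor $g^i_{j,k}=G^{\omega_i}_{j,k}/\gcd(a^i_{j,k},G^{\omega_i}_{j,k})$ to divide $G^{\Omega_i}_{j,k}$ for all $m\in S$. Hence $g^i_{j,k}$ divides the gcd of the whole sequence \eqref{seq}. Homogeneity with $\mu,\nu>0$ gives $\deg a^i_{j,k}<n$, except possibly for $a^1_{0,2}$ and $a^2_{2,0}$. So every nonzero $a^i_{j,k}$ with $\deg a^i_{j,k}<n$ forces a nontrivial such gcd. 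For the two exceptional coefficients I would use the weight inequalities $\mu\ge 2\nu$ versus $\nu\ge 2\mu$, exactly as in the proof of Theorem~\ref{noqct}: at most one of them can reach degree $n$, and the involution $u\leftrightarrow v$ fixes which one.

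\textbf{Group (a).} Lemma~\ref{gdiag} applies directly, since $G^{\Omega_1}_{2,0}=\mu_1 g_m$ and $G^{\Omega_2}_{0,2}=\mu_2 g_m$. It produces the factor $g(x,y)\in\{1,\;x^2+xy+y^2,\;(x^2+xy+y^2)^2\}$ beyond $xy(x+y)$, and restricts $S$ to the odd integers, $1,5 \bmod 6$, or $1\bmod 6$ respectively. This yields the last line of \eqref{Gfactor}.

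\textbf{Groups (b) and (c).} By Lemma~\ref{gsym} these reduce to the sequences $\hat g_m$ of Lemma~\ref{gsym2}: $G^{\Omega_1}_{0,2}$ gives $g^1_m$, and $G^{\Omega_2}_{2,0}$ gives the swapped sequence $g^2_m$. I would then split into cases.
\begin{itemize}
\item If both gcds are nontrivial beyond $x+y$, Proposition~\ref{GG} gives Cases I--IV of Table~\ref{table1}.
\item If exactly one is nontrivial, after $u\leftrightarrow v$ it is $g^1$, with type (i), (ii) or (iii) of Lemma~\ref{gsym2}.
\item If neither is nontrivial, $g_1=g_2=1$.
\end{itemize}
The relations $G^{\Omega_2}_{1,1}(x,y)=G^{\Omega_1}_{0,2}(-x-y,y)$ and its partner transport the factors $(x+y)g_1$ and $(x+y)g_2$ into the factors $xg_1(-x-y,y)$ and $yg_2(x,-x-y)$ appearing in \eqref{Gfactor}. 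Each case carries its own restriction on $S$ together with the spectral relation \eqref{symspec}. Since $n\in S$, the same relation holds with $m=n$ and $(\mu_1,\mu_2)$ replaced by $(\lambda_1,\lambda_2)$, which gives $\lambda_2/\lambda_1=(1+q)^n/(1+q^n)$. In the cases with $g_1=g_2=1$ (Cases 10--12) no $q$ is determined, which explains why they are excluded from that statement.

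\textbf{Combining.} The final step intersects the congruence conditions from (a) with those from (b,c), and discards empty or degenerate intersections. For example:
\begin{itemize}
\item Case I ($1,3,7,9\bmod 10$) intersected with $1,5\bmod 6$ gives the residues mod 30 of Case 2, and with $1\bmod 6$ gives Case 3.
\item Case II already forces $1,5\bmod 6$, so only $g=x^2+xy+y^2$ or its square survive (Cases 4, 5).
\item Type (ii), $1\bmod l$, intersected with the classes mod 6 produces Cases 16--20, splitting according to $l \bmod 6$.
\item Type (i), $l\bmod 2l$, intersected with the classes mod 6 produces Cases 21--24, splitting according to $l\bmod 6$.
\end{itemize}
The main obstacle is this bookkeeping. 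One must check that each intersection is an infinite set, and compute the resulting congruence class correctly, e.g.\ $1\bmod l$ with $1,5\bmod 6$ when $l\equiv 2$ or $4\bmod 6$. One must also check that some intersections, such as Case II with $g=1$, collapse onto others. Where an intersection would contradict the requirement that the gcd be exactly the listed one, I would verify it directly: evaluate $g_m$ at the primitive cube roots of unity ($x=1$, $y=e^{2\pi\i/3}$), and test $\hat g_m$ at $y=qx$ for $m$ in each residue class.
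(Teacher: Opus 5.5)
Your proposal is correct and follows the paper's proof. Like the paper, you combine the $\gcd$ classification of Lemma~\ref{gdiag} with Proposition~\ref{GG} and Lemma~\ref{gsym2}, allowing $g^1$ and/or $g^2$ to be trivial, use $n\in S$ to obtain the relation for $\lambda_2/\lambda_1$, and intersect the congruence conditions on $S$; your intersections reproduce Cases 1--24. The one thing to drop is the weight/degree normalisation in your first paragraph: the theorem never uses it (Case 10 has all extra factors trivial), and it would use up the $u\leftrightarrow v$ freedom you need later to put the nontrivial $\gcd$ on $g^1$.
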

\begin{proof}
For any fixed system \eqref{sys1hom}, it is clear that $n\in S$. The statement follows by combining the factorisation cases from Lemma~\ref{gdiag} with those listed in Table~\ref{table1}, including the cases where at least one of the common factors $g^1(x,y)$ and $g^2(x,y)$ from Proposition~\ref{GG} is trivial. In the latter situation, we set
 $g^1(x,y)=1$ or  $g^2(x,y)=1$ as appropriate.
For example, combining Case~I of Table~\ref{table1} with Cases~(iii), (ii) and (i) of Lemma~\ref{gdiag} yields Cases~1, 2, and 3 of Table~\ref{table2}, respectively. Continuing this procedure through all admissible combinations gives all remaining cases listed in Table~\ref{table2}.
\end{proof}

\subsection{Necessary and sufficient integrability conditions and global classification}
We define integrability through the existence of an infinite hierarchy of symmetries (see Definition~\ref{defint}). A natural question is whether $2$-approximate integrability already implies full integrability. The following theorem provides an affirmative answer.

\begin{The}\label{bbt}
Let $n, m$ and $k$ be odd integers greater than $1$.
Assume that system (\ref{sys1hom}) of order $n$ has a
generalised symmetry \eqref{sym1hom} of order $m$. Then for any $2$-approximate symmetry of order $k$ of the form $\bh^0+\bh^1$, $\bh^i\in \cL^i$
there exists a unique extension 
$$ \bh=\bh^0+\bh^1+ \bh^{(2)},\quad \bh^{(2)} \in F^2\cL$$ 
such that $\bh$ is a symmetry commuting with \eqref{sym1hom}.
\end{The}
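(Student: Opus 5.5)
The plan is to build the extension $\bh$ degree by degree and to use two denominators at once. The symmetry $\bg$ of order $m$ gives one denominator, $G^{\Omega_i}_{j,k}$, and the system $\bff$ gives another, $G^{\omega_i}_{j,k}$. Here $\Omega_i(x)=\mu_ix^m$ and $\omega_i(x)=\lambda_ix^n$, and I write $K_i(x)=\kappa_ix^k$ for the linear part $\bh^0$.

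\emph{Uniqueness.} Suppose $\bh,\bh'$ both extend $\bh^0+\bh^1$ and both commute with $\bg$. Then $\bh-\bh'\in F^2\cL$ commutes with $\bg$, and its lowest-degree component $\bd$ satisfies $[\bg^0,\bd]=0$. This means $G^{\Omega_i}_{j,k}\,\bd_{j,k}=0$ componentwise. The polynomial $G^{\Omega_i}_{j,k}$ is nonzero for $j+k\ge 3$, by the argument of Proposition~\ref{prolinpart}, since $\mu_1\ne\mu_2$ and $m>1$. Hence $\bd=0$, and by induction on degree $\bh=\bh'$. This is exactly the recursion of Theorem~\ref{symsys}, applied with $\bg$ playing the role of the equation.

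\emph{Existence.} Assume $\bh^{\le p-1}=\bh^0+\cdots+\bh^{p-1}$ has already been constructed with polynomial symbols, such that $[\bg,\bh^{\le p-1}]\in F^{p}\cL$ and $[\bff,\bh^{\le p-1}]\in F^{p}\cL$; for $p=2$ this is the hypothesis. Commuting with $\bg$ at degree $p$ forces the rational candidate
\[
A^i_{j,k}=\frac{G^{K_i}_{j,k}\,b^i_{j,k}+R^i_{j,k}}{G^{\Omega_i}_{j,k}},\qquad j+k=p+1,
\]
where $b^i_{j,k}$ are the symbols of $\bg$ and $R^i_{j,k}$ is built from lower-degree data. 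Commuting with $\bff$ instead gives an analogous candidate with denominator $G^{\omega_i}_{j,k}$ and numerator built from $a^i_{j,k}$ and the corresponding remainders. I will show that these two candidates coincide as rational functions. The tool is the Jacobi identity together with $[\bff,\bg]=0$:
\[
[\bg,[\bff,\bh]]=[\bff,[\bg,\bh]].
\]
Apply this to the truncation $\tilde\bh=\bh^{\le p-1}+\bh^p_{\bg}$, where $\bh^p_{\bg}$ is the rational solution of the $\bg$-equation, so that $[\bg,\tilde\bh]\in F^{p+1}$. The degree-$p$ part of $[\bff,\tilde\bh]$, call it $\bK_p$, then satisfies $[\bg^0,\bK_p]=0$. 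Since $G^{\Omega_i}_{j,k}\neq0$, this gives $\bK_p=0$. So the same $\bh^p$ solves the $\bff$-equation, and the two expressions agree.

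The symbol of $\bh^p$ is therefore a rational function whose reduced denominator divides both $G^{\Omega_i}_{j,k}$ and $G^{\omega_i}_{j,k}$. Polynomiality then follows from $\gcd(G^{\Omega_i}_{j,k},G^{\omega_i}_{j,k})=1$ for all $j+k\ge3$.
\begin{itemize}
\item For $j+k\ge4$, Lemma~\ref{FB1} says both polynomials are irreducible. They have different degrees when $m\ne n$, so they are coprime. If $m=n$, then $\bg$ differs from a multiple of $\bff$ by a lower-order element, and I would replace $\bg$ by $\bff$ itself; in that case the statement is just Theorem~\ref{symsys} plus polynomiality from the $\bff$ side.
\item For $j+k=3$ and odd $m,n$, I would use Lemmas~\ref{Gcubdiag}, \ref{Gcubirred} and \ref{Gcubred}. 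These give irreducibility of the relevant cubic $G$'s, or factorisations whose factors depend on $\lambda_2/\lambda_1$ and $\mu_2/\mu_1$ in an incompatible way. This is the ``relatively 3-prime'' property referred to after Theorem~\ref{cubthm}.
\end{itemize}

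Finally, the construction yields $[\bff,\bh]=0$ as well. I expect the cubic step $p=2$ to be the main obstacle. At that step the $G$'s can genuinely share factors such as $x+y$ or $y+z$, because of the odd-order structure. There I would check directly that any such common factor already divides the numerator $R^i_{j,k}$. To do this I would use the symmetry $x\leftrightarrow -x-y$ from Lemma~\ref{gsym} and the vanishing of the numerator on the hyperplane where the common factor vanishes, which follows from the $2$-approximate commutation of $\bh^0+\bh^1$ with both $\bff$ and $\bg$.
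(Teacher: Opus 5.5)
Your overall scheme is the right one, and it is in substance the paper's. The paper invokes the Sanders--Wang existence theorem (Theorem~\ref{themapp}) with $K=\bff$, $S=\bg$, and your Jacobi-identity step is exactly the mechanism inside that theorem. For every component with $\gcd(G^{\omega_i}_{j,k},G^{\Omega_i}_{j,k})=1$ your argument closes the induction. These are all $j+k\ge4$ (Lemma~\ref{FB1}) and the cubic components covered by Lemma~\ref{Gcubirred}.

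The gap is the cubic step in the components $(3,0)$ and $(1,2)$ of the first equation (and their mirror images). You flag this correctly in your last paragraph, but the justification you give there is wrong. In these components $G^{\omega_1}_{3,0}$ and $G^{\Omega_1}_{3,0}$ share $(\xi_1+\xi_2)(\xi_1+\xi_3)(\xi_2+\xi_3)$, and $G^{\omega_1}_{1,2}$, $G^{\Omega_1}_{1,2}$ share $\zeta_1+\zeta_2$, whatever the ratios $\lambda_2/\lambda_1$, $\mu_2/\mu_1$ are. So your bullet about factors depending on the ratios ``in an incompatible way'' is false. Also, relative $3$-primeness only concerns degrees $\ge4$, not cubic terms.

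Write $N_{\bff}=G^{K_1}_{3,0}a^1_{3,0}+R^1_{3,0}$ for the $\bff$-side numerator built from $(\bff,\bh)$, and $N_{\bg}$ for the $\bg$-side one. After cancelling the common factor, the Jacobi identity only says that $h_mN_{\bff}$ and $h_nN_{\bg}$ agree up to constants. This gives $h_n\mid N_{\bff}$ and nothing about the linear factors. Since $k$ is odd, the term $G^{K_1}_{3,0}a^1_{3,0}$ is harmless. What you actually need is that $R^1_{3,0}$ for $(\bff,\bh)$ vanishes on $\xi_1+\xi_2=0$, and likewise that $R^1_{1,2}$ vanishes on $\zeta_1+\zeta_2=0$.

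This does \emph{not} follow from $2$-approximate commutation. Commutation of $\bh^0+\bh^1$ with $\bg$ modulo $F^2\cL$ is automatic, since both quadratic parts are $G^{\bullet}a/G^{\omega}$ with the same $a$. By Lemma~\ref{quadraTh}, the hyperplane vanishing is equivalent to the additional identities \eqref{quadrac1} and \eqref{quadrac2} on the reduced quadratic coefficients $f^i_{j,k}$. These fail for many $2$-approximately integrable systems. For example, $u_t=u_3+uu_2$, $v_t=\lambda_2 v_3$ has $f^1_{2,0}=\tfrac16(x^2+y^2)$ and $f^2_{2,0}=0$, and \eqref{quadrac1} fails.

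The missing idea is where these identities come from and how they pass to $\bh$.
\begin{enumerate}
\item Because $\bg$ is a genuine symmetry, its cubic coefficients $A^1_{3,0}$ and $A^1_{1,2}$ are polynomial. Hence the hyperplane divisibility holds for the remainders built from $(\bff,\bg)$.
\item The leading Laurent coefficient at $\xi_2=-\xi_1$ factorises as the left-hand side of \eqref{quadrac1} times $\omega_1'(\xi_1)\Omega_1'(\xi_3)-\omega_1'(\xi_3)\Omega_1'(\xi_1)$. This second factor is nonzero for $m\ne n$. The same holds for \eqref{quadrac2}, with the factor $\omega_1'(\xi_1)\Omega_2'(\zeta_1)-\Omega_1'(\xi_1)\omega_2'(\zeta_1)$. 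So \eqref{quadrac1} and \eqref{quadrac2} hold.
\item These identities involve only the $f^i_{j,k}$, which $\bff$, $\bg$ and $\bh$ share. The same expansion with $K_i$ in place of $\Omega_i$ therefore gives the hyperplane divisibility for $(\bff,\bh)$.
\item Combine this with the Jacobi relation \eqref{r30} and the irreducibility of $h_m$ (respectively of the cofactor in Lemma~\ref{Gcubred}). This yields $G^{\omega_1}_{3,0}\mid N_{\bff}$ and the corresponding statement for $G^{\omega_1}_{1,2}$. This is exactly the paper's argument.
\end{enumerate}

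Separately, drop your remark about $m=n$. Replacing $\bg$ by $\bff$ leaves a single denominator and gives no polynomiality, and the paper assumes $m\ne n$ throughout anyway.
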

\begin{proof} 
The statement follows directly from Theorem~\ref{themapp} in Appendix~\ref{appc} after identifying the Lie module with the Lie algebra itself, namely by setting
$\cV=\cF=\cL,$
where $\cL$ is defined by \eqref{filterlie}. Since $\cL$ is bi-graded, we decompose the system (\ref{sys1hom}) and its symmetry (\ref{sym1hom}) as
$\bff=\sum_{i,j} \bff^{i,j}$ and $\bg=\sum_{i,j} \bg^{i,j}$. Under the assumptions of the theorem, it therefore remains only to verify that 
$\bg^{0,0}$ is relatively $l$-prime with respect to $\bff^{0,0}$ for $l=2$.

The case $l=2$ corresponds to cubic terms. It suffices to analyse the first component, since the second is obtained by interchanging $u$ and $v$. By Lemmas~\ref{Gcubdiag}, \ref{Gcubirred}, and \ref{Gcubred} in Appendix~\ref{FBlemmas}, the polynomials $G_{j,k}^{\bullet_1}$ with $j+k=3$ are irreducible, except when $j=3$ or $j=1$, where $\bullet$ stands for $\omega$ or $\Omega$.

We first consider the case $j=3$. Since $\bg$ is a symmetry, condition \eqref{quadrac1} from Appendix~\ref{appc} holds. Applying the Jacobi identity to $\bff$, $\bg$, and $\bh$, we obtain
\begin{eqnarray}
&&h_m\, \phi\left(\Pi_{\cR}^{3,0}\left\{ [\bff^{0,1}, \bh^{2,-1}]+[\bff^{1,0}, \bh^{1,0}]+[\bff^{2,-1}, \bh^{0,1}]+[\bff^{2,0}, \bh^{0,0}] \right\}_1\right)=\nonumber\\
&&\qquad=h_n\,\phi\left( \Pi_{\cR}^{3,0} \left\{ [\bg^{0,1}, \bh^{2,-1}]+[\bg^{1,0}, \bh^{1,0}]+[\bg^{2,-1}, \bh^{0,1}]+[\bg^{2,0}, \bh^{0,0}]  \right\}_1\right) , \label{r30} 
\end{eqnarray}
where $\{\ \cdot \}_1$ denotes the first component and $h_m$ is defined in Lemma \ref{Gcubdiag}. Since condition \eqref{quadrac1} is satisfied, both sides of \eqref{r30} are divisible by the factor $(\xi_1+\xi_2)(\xi_1+\xi_3)(\xi_2+\xi_3)$.
Lemma~\ref{Gcubdiag} implies that $h_m$ is irreducible. Thus $G_{3,0}^{\omega_1}$ divides the left-hand side of \eqref{r30}, and therefore there exists a polynomial $h_1^{3,0}$ such that
$$
h_1^{3,0}=\frac{\phi\left(\Pi_{\cR}^{3,0}\left\{ [\bff^{0,1}, \bh^{2,-1}]+[\bff^{1,0}, \bh^{1,0}]+[\bff^{2,-1}, \bh^{0,1}]+[\bff^{2,0}, \bh^{0,0}] \right\}_1\right)}{G_{3,0}^{\omega_1}} .
$$
The case $j=1$ is treated analogously using the condition \eqref{quadrac2}. Finally, since $G_{j,k}^{\omega_i}$ are irreducible for all $j+k>3$ and $i=1,2$, 
it follows that \(\bg^{0,0}\) is relatively \(2\)-prime with respect to \(\bff^{0,0}\). This completes the proof.
\end{proof}

Theorem~\ref{bbt} shows that the existence of a single generalised symmetry is sufficient to ensure that every $2$-approximate symmetry extends uniquely to a genuine symmetry. Hence $2$-approximate integrability is equivalent to integrability.
Since relative $2$-primality of $\bg^{0,0}$ with respect to $\bff^{0,0}$ implies relative $3$-primality, the theorem also covers odd-order systems with vanishing quadratic terms and nonzero cubic terms discussed in Section~\ref{seccub}.

Theorems~\ref{thmmn} and~\ref{bbt} provide the \emph{necessary and sufficient conditions} for integrability. Consider a non-triangular system \eqref{sys1hom} of odd order,  represented in its symbolic form \eqref{sys1sysh}. Its quadratic and cubic terms cannot both vanish if the system is integrable. 
\begin{itemize}
    \item If the quadratic terms are zero, the system is integrable \emph{if and only if} it possesses a symmetry of order $3$, or of order $5$ when $n=3$.
\item If the quadratic terms are nonzero, integrability can be determined through the following procedure:
\begin{enumerate}
    \item Check for $2$-approximate integrability: First, use $\frac{\lambda_2}{\lambda_1}=\frac{(1+q)^n}{1+q^n}$ to obtain the admissible values of $q$. Next, for each $i=1,2$ and $j+k=2$, compute the ratios of the quadratic coefficients $A^i_{jk}$ and the corresponding polynomials $G^{\omega_i}_{jk}$. Compare these with the factorisation formula~\eqref{Gfactor} to identify the factors $g_1(x,y)$, $g_2(x,y)$, and $g(x,y)$. By Theorem \ref{thmmn} , the system is $2$-approximately integrable \emph{if and only if} it corresponds to one of the $24$ cases listed in Table~\ref{table2}.
    \item Verify the existence of a symmetry: For a $2$-approximately integrable system, check whether it possesses a symmetry whose order is the smallest element of $S$, or the second smallest element of $S$ when $n$ is the smallest element. This can be verified using Theorem \ref{symsys}. 
    By Theorem \ref{bbt}, a $2$-approximately integrable system is integrable \emph{if and only if} such a symmetry exists.
\end{enumerate}
\end{itemize}

The preceding results provide a complete description of two-component integrable systems of odd order of the form \eqref{sys1hom}. In particular, the following conclusions hold:
\begin{itemize}
    \item[I.] A non-triangular system of the form \eqref{sys1hom} with neither quadratic nor cubic terms cannot be integrable. Indeed, in the absence of these lower-degree nonlinearities, the system fails to satisfy the necessary conditions for the existence of an infinite hierarchy of symmetries.
\item[II.] 
Consider a non-triangular system \eqref{sys1hom} with $\lambda_1\neq \lambda_2$,  $\lambda_1 \lambda_2\neq 0$, and vanishing quadratic terms but nonzero cubic terms. If such a system is integrable, then it necessarily belongs to a hierarchy of symmetries generated by a third-order system of the same type. The presence of cubic terms in the absence of quadratic ones forces the existence of symmetries at every odd order. Consequently, the classification problem in this setting reduces to identifying third-order systems that admit a symmetry of order $5$; this suffices to recover all integrable systems of this type at arbitrary odd orders.

\item[III.] 
The classification of integrable systems of the form \eqref{sys1hom} therefore reduces to the case of non-vanishing quadratic terms, which is governed by Theorem~\ref{thmmn}. In total, $24$ cases are listed in Table~\ref{table2}, each of which must be analysed individually. By Theorem~\ref{bbt}, the existence of a single symmetry for a $2$-approximately integrable guarantees its integrability. Consequently, for each case, one takes the first element of the corresponding sequence $S$ as the order of the system and the second element as the order of a symmetry with the prescribed linear part, and then applies Theorem~\ref{symsys} to determine all such systems.
For instance, in Case~1 of Table~\ref{table2}, one must determine all third-order systems admitting a seventh-order symmetry with spectral invariant
$$ \frac{\lambda_2}{\lambda_1}=\frac{(1+q)^3}{1+q^3}, \qquad \frac{\mu_2}{\mu_1}=\frac{(1+q)^7}{1+q^7}, \quad q=e^{\frac{2\pi \i}{5}}.$$

In principle, a complete classification can be obtained for Cases $1-15$ in Table~\ref{table2}, since the first two elements of the corresponding sequence $S$ are fixed. However, in several instances, these orders become prohibitively large. For example, in Case~$8$ the hierarchy starts with order $31$, whose first generalised symmetry has order $61$. The computations required in such cases are beyond the scope of this paper.

\item[IV.] In Cases $16-24$, the first two elements of the associated sequence $S$ depend on an integer parameter $l$, allowing the possibility of new integrable systems of arbitrarily high order. Nevertheless, the classification results for fifth-order systems presented in Section~\ref{5thord} suggest that all integrable systems that arise in these cases can be reduced, through Miura transformations, to triangular systems.
\end{itemize}

We illustrate the classification procedure for homogeneous third order systems for a particular choice of  positive weights $W(u)$ and $W(v)$. As a starting point, we determine the most general homogeneous system with undetermined constant coefficients. Using Table~\ref{table2} from Theorem~\ref{thmmn}, we identify the possible spectral invariants of the system and its lowest-order generalised symmetry. For each case, we apply the necessary integrability conditions of Theorem~\ref{symsys} to derive constraints on the parameters and construct the corresponding generalised symmetry. Triangular systems are then discarded. Finally, Theorem~\ref{bbt} implies that all remaining systems obtained in this way are integrable.

\begin{Ex}\label{example22}
Consider the class of third-order homogeneous two-component systems with
$W(u)=W(v)=2.$
The most general system in this class is
\begin{equation}
\label{sysex}
\left\{
\begin{array}{l}
u_t=\lambda_1 u_3+c_1uu_1+c_2vu_1+c_3uv_1+c_4vv_1,\\
v_t=\lambda_2 v_3+d_1uu_1+d_2vu_1+d_3uv_1+d_4vv_1,
\end{array}\right.
\qquad c_i,d_i\in\C,\quad i=1,\ldots,4.
\end{equation}
Our goal is to determine all values of the parameters $\lambda_i$, $c_i$, and $d_i$ for which system \eqref{sysex} is non-triangular and integrable; that is, it admits infinitely many odd-order generalised symmetries of the form \eqref{sym1hom}.
\end{Ex}
The symbolic representation of system \eqref{sysex} is of the form \eqref{sys1sysh}, with nonzero coefficients
\begin{eqnarray*}
&&a^1_{20}(\xi_1, \xi_2)=\frac{c_1}{2}(\xi_1+\xi_2),\quad a^1_{11}(\xi_1, \zeta_1)=c_2 \xi_1+c_3 \zeta_1,\quad a^1_{02}(\zeta_1, \zeta_2)=\frac{c_4}{2}(\zeta_1 +\zeta_2),\\
&&a^2_{20}(\xi_1, \xi_2)=\frac{d_1}{2}(\xi_1+\xi_2),\quad a^2_{11}(\xi_1, \zeta_1)=d_2\xi_1+d_3 \zeta_1,\quad a^2_{02}(\zeta_1, \zeta_2)=\frac{d_4}{2}(\zeta_1 +\zeta_2),
\end{eqnarray*}
while all remaining coefficients $a^i_{jk}$ vanish for $i=1,2$ and $j+k>2$.

Assume that system \eqref{sysex} is integrable. Then, by Theorem~\ref{thmmn}, and up to the interchange $u\leftrightarrow v$, only the following three cases in Table \ref{table2} are possible:
\begin{itemize}
    \item Case 1: The ratio $\frac{\lambda_2}{\lambda_1}=\frac{(1+q)^3}{1+q^3}$, where $q=e^{\frac{2\pi{\i}}{5}}$, and the system admits a hierarchy of symmetries of orders $m=1,3,7,9\mod 10$ with $\frac{\mu_2}{\mu_1}=\frac{(1+q)^m}{1+q^m}$. To classify all integrable systems in this case it is sufficient to impose the existence of a symmetry of order $7$.
    \item Case 13: The ratio $\frac{\lambda_2}{\lambda_1}=\frac{(1+q)^3}{1+q^3}$, where $q\in\C\setminus\{-1\}$, and the system admits a hierarchy of symmetries of every odd order, $m=1\mod 2$ with $\frac{\mu_2}{\mu_1}=\frac{(1+q)^m}{1+q^m}$. It is sufficient to require the existence of a symmetry of order $5$. In the special case $q=e^{\frac{\pi{\i}}{3}}$ one has $\lambda_1=0$.
    \item Case 10: The ratios $\frac{\lambda_2}{\lambda_1}$ and $\frac{\mu_2}{\mu_1}$ are arbitrary, and the system possesses a hierarchy of symmetries of all odd orders $m=1\mod 2$. Again, it suffices to impose the existence of a symmetry of order $5$.
\end{itemize}
We analyse the three possible cases separately. Since Case~13 also contains the case $\lambda_1=0$, we treat this case in details. The existence of a fifth-order symmetry imposes strong polynomiality conditions on the coefficients of the symmetry. These conditions allow us to determine the admissible values of the parameters and reduce the family \eqref{sysex} to a finite number of integrable systems. The remaining cases are then treated in a similar way.

\underline{Case 13}: Let $\frac{\lambda_2}{\lambda_1}=\frac{(1+q)^3}{1+q^3}$ and $\frac{\mu_2}{\mu_1}=\frac{(1+q)^5}{1+q^5}$. 
By Theorem~\ref{symsys}, the quadratic coefficients of the symmetry are given by
\begin{eqnarray*}
&&A^1_{20}=\frac{5\mu_1}{3\lambda_1}(\xi_1^2+\xi_1 \xi_2+\xi_2^2)a^1_{20},
\qquad \qquad
A^2_{02}=\frac{5\mu_2}{3\lambda_2} (\zeta_1^2+\zeta_1 \zeta_2+\zeta_2^2)a^2_{02},\\
&&
A^1_{11}=\frac{\mu_1(\xi_1+\zeta_1)^5-\mu_1\xi_1^5-\mu_2\zeta_1^5}
{\lambda_1(\xi_1+\zeta_1)^3-\lambda_1\xi_1^3-\lambda_2\zeta_1^3}a^1_{11},
\quad
A^2_{20}=\frac{\mu_2(\xi_1+\xi_2)^5-\mu_1(\xi_1^5+\xi_2^5)}{\lambda_2(\xi_1+\xi_2)^3-\lambda_1(\xi_1^3+\xi_2^3)}a^2_{20},\\
&&
A^1_{02}=\frac{\mu_1(\zeta_1+\zeta_2)^5-\mu_2(\zeta_1^5+\zeta_2^5)}{\lambda_1(\zeta_1+\zeta_2)^3-\lambda_2(\zeta_1^3+\zeta_2^3)}a^1_{02},
\quad
A^2_{11}=\frac{\mu_2(\xi_1+\zeta_1)^5-\mu_1\xi_1^5-\mu_2\zeta_1^5}{\lambda_2(\xi_1+\zeta_1)^3-\lambda_1\xi_1^3-\lambda_2\zeta_1^3}a^2_{11} .
\end{eqnarray*}

Assume first that $q\ne e^{\frac{\pi{\i}}{3}}$. We distinguish two cases depending on whether $\mu_1$ is zero:
\begin{itemize}
    \item $q^5=-1$, i.e., $\mu_1=0$: from the expressions of $A^1_{jk},\,i=1,2,\,j+k=2$, it follows that they are polynomials if and only if $c_2=c_3=c_4=0$, i.e., the system is triangular and is discarded.
    \item $q^5\ne -1$. Then coefficients $A^i_{jk},\,i=1,2,\,j+k=2$ are polynomials in their arguments if and only if $c_2=c_3=d_1=0$, and
 \begin{eqnarray*}
&&A^1_{11}=A^2_{20}=0;\\
&&A^1_{02}=\frac{5}{3}((1+q+q^2)(\zeta_1^2+\zeta_1 \zeta_2+\zeta_2^2)-q \zeta_1 \zeta_2)a^1_{02};\\
&&A^2_{11}=\frac{5}{3}((1+q+q^2)(\zeta_1^2+\zeta_1 \xi_1+\xi_1^2)+q \zeta_1(\xi_1+\zeta_1))a^2_{11} .
\end{eqnarray*}
We may then assume that $c_4\ne 0$, as otherwise the system will be triangular. 

Using the expressions of quadratic terms, we first compute the coefficients of nonzero cubic terms in the $u$-component, namely $A^1_{30}, A^1_{03}$ and $A^1_{12}$:
\[
A^1_{30}=\frac{2}{\lambda_1}\frac{\bigl\langle A^1_{20}(\xi_1,\xi_2+\xi_3) a^1_{20}(\xi_2,\xi_3)-a^1_{20}(\xi_1,\xi_2+\xi_3) A^1_{20}(\xi_2,\xi_3)\bigr\rangle_{S_3^{\xi}}}{(\xi_1+\xi_2+\xi_3)^3-\xi_1^3-\xi_2^3-\xi_3^3} =\frac{5 \mu_1}{18 \lambda_1^2}  c_1^2 (\xi_1+\xi_2+\xi_3),
\]
which is a polynomial in its arguments.  For $A^1_{03}$, we obtain
\[
A^1_{03}=\frac{2 \bigl\langle A^1_{02}(\zeta_1,\zeta_2+\zeta_3) a^2_{02}(\zeta_2,\zeta_3)-a^1_{02}(\zeta_1,\zeta_2+\zeta_3) A^2_{02}(\zeta_2,\zeta_3)\bigr\rangle_{S_3^{\zeta}}}{\lambda_1(\zeta_1+\zeta_2+\zeta_3)^3-\lambda_2(\zeta_1^3+\zeta_2^3+\zeta_3^3)} .
\]
The requirement that \(A^1_{03}\) be a polynomial leads to \(d_4=0\) (assuming \(c_4\neq 0\)), implying 
\[
a^2_{02}=A^2_{02}=A^1_{03}=0.
\]
Considering $A^1_{12}$, we have
\begin{eqnarray*}
&&A^1_{12}=\frac{2 A^1_{20}(\xi_1,\zeta_1+\zeta_2) a^1_{02}(\zeta_1, \zeta_2) + 2 \bigl\langle A^1_{02}(\zeta_1, \xi_1+\zeta_2) a^2_{11}(\xi_1, \zeta_2)\bigr\rangle_{S_2^{\zeta}}}
{\lambda_1(\xi_1+\zeta_1+\zeta_2)^3-\lambda_1 \xi_1^3-\lambda_2 (\zeta_1^3+\zeta_2^3)}\\
&&\qquad -\frac{2 a^1_{20}(\xi_1,\zeta_1+\zeta_2) A^1_{02}(\zeta_1, \zeta_2) + 2 \bigl\langle a^1_{02}(\zeta_1, \xi_1+\zeta_2) A^2_{11}(\xi_1, \zeta_2)\bigr\rangle_{S_2^{\zeta}}}
{\lambda_1(\xi_1+\zeta_1+\zeta_2)^3-\lambda_1 \xi_1^3-\lambda_2 (\zeta_1^3+\zeta_2^3)} .
\end{eqnarray*}
The requirement that \(A^1_{12}\) be a polynomial, together with the system being non‑triangular and the assumption $q\notin\{0, e^{\frac{\pi\i}{3}}\}$ leads to $c_1=-d_3$ and $q=\pm \i$ and
\begin{eqnarray*}
A^1_{12}=\frac{5(1-\i)}{36}c_4 d_3 (\xi_1+\zeta_1+\zeta_2).
\end{eqnarray*}

We now compute the coefficients of the nonzero cubic terms in the \(v\)-component, namely \(A^2_{03}\) and \(A^2_{21}\):
\[
A^2_{03}=\frac{\bigl\langle A^2_{11}(\zeta_1+\zeta_2,\zeta_3) a^1_{02}(\zeta_1, \zeta_2)-a^2_{11}(\zeta_1+\zeta_2,\zeta_3) A^1_{02}(\zeta_1, \zeta_2)\bigr\rangle_{S_3^{\zeta}}}{\lambda_2 ((\zeta_1+\zeta_2+\zeta_3)^3-\zeta_1^3-\zeta_2^3-\zeta_3^3)}
=\frac{5(1-\i)}{108}c_4 (2 d_2+ d_3) (\zeta_1+\zeta_2+\zeta_3),
\]
which is a polynomial in its arguments. For $A^2_{21}$, we get
\begin{eqnarray*}
&&A^2_{21}=\frac{A^2_{11}(\xi_1+\xi_2,\zeta_1) a^1_{20}(\xi_1, \xi_2) + \bigl\langle A^2_{11}(\xi_1, \xi_2+\zeta_1) a^2_{11}(\xi_2, \zeta_1)\bigr\rangle_{S_2^{\zeta}}}{\lambda_2(\xi_1+\xi_2+\zeta_1)^3-\lambda_1 (\xi_1^3+\xi_2^3)-\lambda_2 \zeta_1^3}\\
&&\qquad -\frac{a^2_{11}(\xi_1+\xi_2,\zeta_1) A^1_{20}(\xi_1, \xi_2) + \bigl\langle a^2_{11}(\xi_1, \xi_2+\zeta_1) A^2_{11}(\xi_2, \zeta_1)\bigr\rangle_{S_2^{\zeta}}}{\lambda_2(\xi_1+\xi_2+\zeta_1)^3-\lambda_1 (\xi_1^3+\xi_2^3)-\lambda_2 \zeta_1^3} .
\end{eqnarray*}

The requirement that \(A^2_{21}\) be a polynomial leads to either \(d_2=0\) or \(d_2=d_3\), and
\[
A^2_{21}=
\begin{cases}
\dfrac{5(1-\i)}{36}d_3^2 (\xi_1+\xi_2+\zeta_1), & d_2=d_3,\\[4pt]
\dfrac{5(1-\i)}{36}d_3^2 \zeta_1, & d_2=0.
\end{cases}
\]
In addition to computing the Lie bracket between the quadratic terms of the system and the cubic terms of the symmetry, we obtain zero when \(d_2=0\). This leads, up to rescaling, to the Hirota‑Satsuma system \eqref{B_1} in the next section. In the case \(d_2=d_3\), the requirement that \(A^i_{jk}\) for \(i=1,2\) and \(j+k=4\) be polynomials leads to \(d_3=0\), implying that the system is triangular and we discard it.

\end{itemize}
Consider now the special case $q=e^{\frac{\pi{\i}}{3}}$. In this case $\lambda_1=0,\,\lambda_2=1$ and $\frac{\mu_2}{\mu_1}=-9$.
Take $\mu_1=1,\,\mu_2=-9$. Then
\begin{eqnarray*}
&&A^1_{20}=a^1_{20}=0,
\quad
A^2_{02}=-15 (\zeta_1^2+\zeta_1 \zeta_2+\zeta_2^2)a^2_{02},\\
&&A^1_{11}=a^1_{11}=0=A^2_{20}=a^2_{20},\\
&&A^1_{02}=5(2 \zeta_1^2+\zeta_1 \zeta_2 +2 \zeta_2^2) a^1_{02},
\quad
A^2_{11}=-5 (2 \xi_1^2+3 \xi_1 \zeta_1 +3 \zeta_1^2) a^2_{11} .
\end{eqnarray*}
Analysing further higher-degree terms as we did above, we find that if the system is not triangular then, up to rescaling, it is system \eqref{A_1p} in the next section.

\underline{Case 1}: Without loss of generality, we may choose $\lambda_1=5+3\sqrt{5},\,\,\lambda_2=5-3\sqrt{5}$. We impose the existence of a symmetry of order $7$ with $\frac{\mu_2}{\mu_1}=-\frac{1}{2}\left(47+21\sqrt{5}\right)$. 
In this case all quadratic coefficients $A^i_{jk}$ are automatically polynomial, so the system possesses an infinite-dimensional algebra of $2$-approximate symmetries. Requiring polynomiality of all cubic coefficients $A^i_{jk}$ leads, up to rescaling, to the system \eqref{A_1} in the next section.

\underline{Case 10}: In this case it is sufficient to impose the existence of a symmetry of order $5$ with arbitrary ratio $\frac{\mu_2}{\mu_1}$. There is no relation between ratios $\frac{\lambda_2}{\lambda_1}$ and $\frac{\mu_2}{\mu_1}$. Polynomiality of the quadratic coefficients immediately gives $c_2=c_3=c_4=d_1=d_2=d_3=0$, which implies that the system is linear. Hence Case~10 yields no non-triangular integrable systems.

This classification problem reduces to Cases~1, 10, and~13 of Theorem~\ref{thmmn}. Case~10 produces only triangular systems. Case~1 yields, up to rescaling, the system~\eqref{A_1}. Case~13 gives rise to two non-triangular systems: the exceptional system~\eqref{A_1p}, corresponding to $q=e^{\frac{\pi\i}{3}}$, and the Hirota--Satsuma system~\eqref{B_1}, corresponding to $q=\pm\i$. Thus, up to rescaling and the interchange $u\leftrightarrow v$, the family~\eqref{sysex} contains precisely these three non-triangular systems. Their integrability is guaranteed by Theorem~\ref{bbt}.

\section{Classification results of integrable homogeneous systems of orders 3 and 5}\label{secclass}
In this section, we present a complete list of homogeneous polynomial two-component systems of order  $n=3$ and $n=5$ of the form
\begin{equation}
\label{sysg}
\left\{
\begin{array}{l}
u_t=\lambda_1 u_n+F_1(u_{n-1},v_{n-1},\ldots,u,v),\\
v_t=\lambda_2 v_n+F_2(u_{n-1},v_{n-1},\ldots,u,v),
\end{array}\right. 
\end{equation}
which admit an infinite hierarchy of symmetries 
\begin{equation}
\label{symg}
\left\{
\begin{array}{l}
u_{t_m}=\mu_{1}(m) u_m+G_1(u_{m-1},v_{m-1},\ldots,u,v),\\
v_{t_m}=\mu_{2}(m) v_m+G_2(u_{m-1},v_{m-1},\ldots,u,v),
\end{array}\right. 
\end{equation} 
of odd order $m$, subject to the following assumptions:
\begin{itemize}
\item[(I).] The system (\ref{sysg}) is homogeneous with \emph{positive} weights $W(u),\,W(v)>0$;
\item[(II).] Polynomials $F_1, F_2\in \bigoplus_{k\ge 2}\cR^k$, i.e., they do not contain linear terms; 
\item[(III).] The system is non-triangular (see Definition \ref{triangular}); 
\item[(IV).] The system does not admit non-point symmetries of the form (\ref{symg}) of order $m<n$;
\item[(V).] The parameters $\lambda_1\lambda_2\ne 0$ and $\lambda_1 \ne \lambda_2$.
\end{itemize}

Condition~(IV) excludes systems that are members of a commuting hierarchy of symmetries generated by a system of lower order. In Section~\ref{eigzero}, we consider the degenerate case in which one of parameters vanishes by replacing Condition~(V) with the following assumption:

\medskip
\noindent
(${\rm V'}$). $\lambda_1=0,\ \lambda_2\ne 0$, and there exists an infinite subsequence of symmetries (\ref{symg}) with $\mu_1(m)\mu_2(m)\ne 0.$

The group of invertible transformations that preserves the form of equations~(\ref{sysg}) and Conditions~(I)-(II) is generated by scalings of the dependent and independent variables together with the involution
\begin{equation}\label{scal}
u\mapsto\alpha_1 u,\quad v\mapsto\alpha_2 v,\quad x\mapsto \alpha_3 x,\quad t\mapsto\alpha_4 t,\qquad u\leftrightarrow v,
\end{equation}
where $\alpha_i$ are arbitrary nonzero complex constants. Systems related by these transformations are considered equivalent. The classification results presented in Sections~\ref{3order} and~\ref{5thord} are given modulo this equivalence relation.

Conditions~(I) and~(II) imply that $F_1$ and $F_2$ are homogeneous differential polynomials whose admissible forms are determined by the choice of weights $W(u)$ and $W(v)$. For instance, in Example~\ref{example22}, where $W(u)=W(v)=2$, the most general homogeneous third-order system
\[
u_t=\lambda_1 u_3+F_1,\qquad v_t=\lambda_2 v_3+F_2,
\]
contains eight free parameters once the values of $\lambda_1$ and $\lambda_2$ are fixed according to Table~\ref{table2}.

The possible choices of weights that lead to non-triangular homogeneous polynomial systems are finite. They are obtained by balancing the linear terms with the quadratic and, when necessary, cubic terms in the nonlinearities. For each admissible set of weights, we construct the most general homogeneous polynomials $F_1$ and $F_2$, while excluding triangular systems as required by Condition~(III).
For every remaining family, the integrability conditions are analysed following the procedure illustrated in Example~\ref{example22}. During this process, cases corresponding to triangular systems are discarded. Finally, from the resulting list of integrable equations, we remove those admitting lower-order symmetries, as required by Condition~(IV). The complete classification results for integrable homogeneous systems of orders $3$ and $5$ are presented in the following sections.

\subsection{Integrable systems of form \eqref{sysg} with $n=3$}\label{3order}
 By Theorem \ref{thmmn}, the classification of all third-order integrable systems of the form \eqref{sysg} is limited to Cases 1, 10, and 13 as shown in Table \ref{table2}. In each of these cases, it follows from Theorem \ref{bbt} that it is sufficient to verify the existence of a symmetry of order $m_2$, the second member of the corresponding sequence $S$.

{{\bf Case 1}}:
We have symmetries of order $m=1,3,7,9\mod 10$ with spectral invariant 
$$
\frac{\lambda_2}{\lambda_1}=\frac{(1+q)^3}{1+q^3},\quad\frac{\mu_2(m)}{\mu_1(m)}=\frac{(1+q)^m}{1+q^m},\quad q=e^{\frac{2\pi \i}{5}}.
$$

Without loss of generality, we choose $\lambda_1=5+3\sqrt{5}$ and $\lambda_2=\overline{\lambda}=5-3\sqrt{5}$. To classify integrable systems of this type, it is sufficient to impose the existence of a symmetry of order $7$ as derived from Theorem \ref{bbt}. These integrable systems are labelled by the letter ${\rm A}$, as they possess Lax representations in terms of the affine Lie algebra $A_4^{(2)}$.

{\bf Case 13}:
The symmetries are of order $m=1\mod 2$ with spectral invariant
$$
\frac{\lambda_2}{\lambda_1}=\frac{(1+q)^3}{1+q^3},\quad\frac{\mu_2(m)}{\mu_1(m)}=\frac{(1+q)^m}{1+q^m}.
$$
To classify systems of this type, it suffices to impose the existence of a symmetry of order $5$. The value of $q$ is determined by cubic integrability conditions, leading to three cases: $q=\i$, $q=1$, or an arbitrary $q\in\C\setminus\{0,-1\}$. The integrable
systems in the first case are labelled by the letter ${\rm B}$
as they possess Lax representations in terms of Lie algebra $B_2^{(1)}$. The integrable systems in the second case are labelled by the letter $K$ and are related to a triangular extension of the KdV equation through Miura type transformations. In the third case, the integrable systems are labelled by the letter ${\rm L}$.

{\bf Case 10}: We have symmetries of order $m=1\mod 2$ with arbitrary ratios $\frac{\lambda_2}{\lambda_1}$ and $\frac{\mu_2}{\mu_1}$.
To classify systems of this type, it is sufficient to impose the existence of a symmetry of order $5$. These integrable systems are labelled by the letter ${\rm M}$.
 
\begin{The}\label{thmA}
Let the system~\eqref{sysg} with $n=3$ satisfy Conditions~(I)--(V) and assume that it is integrable. Then, up to the transformations~\eqref{scal}, the system is equivalent to one of the following integrable systems:

\medskip

\noindent
{\bf i. Case 1},
$\qquad
\lambda_1=\lambda=5+3\sqrt{5},
\qquad
\lambda_2=\bar\lambda=5-3\sqrt{5},\qquad q=e^{\frac{2\pi \i}{5}},\qquad
$  type {\rm A}.

\medskip

\underline{\(W(u)=2,\ W(v)=2\)}:
\begin{align}
\left\{
\begin{array}{l}
u_t=\lambda u_3+6uu_1-(\lambda+4)vu_1-6uv_1+(\lambda-8)vv_1,\\[2mm]
v_t=\bar\lambda v_3+6vv_1-(\bar\lambda+4)uv_1-6vu_1+(\bar\lambda-8)uu_1,
\end{array}
\right.
\tag{$\rm A_1$}\label{A_1}
\end{align}

\underline{\(W(u)=1,\ W(v)=1\)}:
\begin{align}
&\left\{
\begin{array}{l}
u_t=\lambda u_3
+\left[
5(\lambda-2)(u+v)v_1
+10(\lambda-5)u^2v
-5(\lambda+10)uv^2
-5\lambda v^3
\right]_x,\\[2mm]
v_t=\bar\lambda v_3
+\left[
5(\bar\lambda-2)(u+v)u_1
+10(\bar\lambda-5)v^2u
-5(\bar\lambda+10)vu^2
-5\bar\lambda u^3
\right]_x,
\end{array}
\right.
\tag{$\rm A_2$}\label{A_2}
\\[4mm]
&\left\{
\begin{array}{l}
u_t=\lambda u_3+(\lambda+1)u_1^2
-6uv_2-6u_1v_1+6vv_2+(17-\lambda)v_1^2\\
\qquad
+\dfrac{2}{5}(\lambda-5)(u-v)^2u_1
-\dfrac{2}{5}(\lambda-20)(u-v)^2v_1
+\dfrac{3}{5}(u-v)^4,\\[2mm]
v_t=\bar\lambda v_3+(\bar\lambda+1)v_1^2
-6u_2v-6u_1v_1+6uu_2+(17-\bar\lambda)u_1^2\\
\qquad
+\dfrac{2}{5}(\bar\lambda-5)(u-v)^2v_1
-\dfrac{2}{5}(\bar\lambda-20)(u-v)^2u_1
+\dfrac{3}{5}(u-v)^4,
\end{array}
\right.
\tag{$\rm A_3$}\label{A_3}
\\[4mm]
&\left\{
\begin{array}{l}
u_t=\lambda u_3
-(\lambda-2)\left(u_1^2-uv_2-u_1v_1+vv_2\right)
-6v_1^2\\
\qquad
-\dfrac{2}{5}(\lambda-5)(u-v)^2u_1
+\dfrac{2}{5}(\lambda+10)(u-v)^2v_1
-\dfrac{6}{5}(u-v)^4,\\[2mm]
v_t=\bar\lambda v_3
-(\bar\lambda-2)\left(v_1^2-u_2v-u_1v_1+uu_2\right)
-6u_1^2\\
\qquad
-\dfrac{2}{5}(\bar\lambda-5)(u-v)^2v_1
+\dfrac{2}{5}(\bar\lambda+10)(u-v)^2u_1
-\dfrac{6}{5}(u-v)^4,
\end{array}
\right.
\tag{$\rm A_4$}\label{A_4}
\end{align}
\begin{align}
&\left\{
\begin{array}{l}
u_t=\lambda u_3+(\lambda-8)u_1^2
+2(\lambda+4)\left(uv_2+u_1v_1-vv_2\right)
-3(\lambda+2)v_1^2\\
\qquad
+\dfrac{8}{5}(\lambda-5)(u-v)^2u_1
-\dfrac{4}{5}(2\lambda+5)(u-v)^2v_1
-\dfrac{6}{5}(u-v)^4,\\[2mm]
v_t=\bar\lambda v_3+(\bar\lambda-8)v_1^2
+2(\bar\lambda+4)\left(u_2v+u_1v_1-uu_2\right)
-3(\bar\lambda+2)u_1^2\\
\qquad
+\dfrac{8}{5}(\bar\lambda-5)(u-v)^2v_1
-\dfrac{4}{5}(2\bar\lambda+5)(u-v)^2u_1
-\dfrac{6}{5}(u-v)^4,
\end{array}
\right.
\tag{$\rm A_5$}\label{A_5}
\end{align}

\medskip

\noindent
{\bf ii. Case 13}, $\qquad 
\lambda_2=-2\lambda_1,
\qquad
q={\i},\qquad
$ type {\rm B}.

\medskip

\underline{\(W(u)=2,\ W(v)=2\)}:
\begin{align}
\left\{
\begin{array}{l}
u_t=u_3+uu_1-vv_1,\\[2mm]
v_t=-2v_3-uv_1,
\end{array}
\right.
\tag{$\rm B_1$}\label{B_1}
\end{align}

\underline{\(W(u)=2,\ W(v)=1\)}:
\begin{align}
\left\{
\begin{array}{l}
u_t=u_3+uu_1+6v_1v_2-v^2u_1,\\[2mm]
v_t=-2v_3-uv_1-vu_1+v^2v_1,
\end{array}
\right.
\tag{$\rm B_2$}\label{B_2}
\end{align}

\underline{\(W(u)=1,\ W(v)=2\)}:
\begin{align}
\left\{
\begin{array}{l}
u_t=u_3+u_1^2-v^2,\\[2mm]
v_t=-2v_3-2u_1v_1,
\end{array}
\right.
\tag{$\rm B_3$}\label{B_3}
\end{align}

\underline{\(W(u)=1,\ W(v)=1\)}:
\begin{align}
&\left\{
\begin{array}{l}
u_t=u_3+u_1^2+3v_1^2+6vv_2+4v^2u_1+2v^4,\\[2mm]
v_t=-2v_3-2u_2v-2u_1v_1-4v^2v_1,
\end{array}
\right.
\tag{$\rm B_4$}\label{B_4}
\\[4mm]
&\left\{
\begin{array}{l}
u_t=u_3-\left[6vv_1+2u(u^2-3v^2)\right]_x,\\[2mm]
v_t=-2v_3+\left[6vu_1+2v(3u^2-v^2)\right]_x,
\end{array}
\right.
\tag{$\rm B_5$}\label{B_5}
\\[4mm]
&\left\{
\begin{array}{l}
u_t=u_3-3u_1^2+6v_1^2-12(u-v)^2v_1+6(u-v)^4,\\[2mm]
v_t=-2v_3+6uu_2+3u_1^2-6u_2v-6u_1v_1
+6v_1^2-12(u-v)^2u_1+6(u-v)^4.
\end{array}
\right.
\tag{$\rm B_6$}\label{B_6}
\end{align}
\medskip

\noindent
{\bf iii. Case 13},
$\qquad
 \lambda_2=4\lambda_1,\qquad  q=1,\qquad
$ type {\rm K}.
 \medskip

\underline{\(W(u)=1,\ W(v)=1\)}:
\begin{align}
&\left\{ \begin{array}{l}
u_t=u_3-3u_1^2-6vv_2+6v_1^2+3v^4,\\
v_t=4v_3-3vu_2-6u_1v_1-12v^2v_1;
\end{array}\right. \tag{$\rm K_1$}\label{K_1}\\[4mm]
&\left\{ \begin{array}{l} u_t=u_3- 6 v_1^2 + 6vv_2-6 u^2 u_1   + 18 v^2u_1 + 12 u v v_1 + 24 v^2 v_1 ,\\
v_t=4 v_3+ 6 vu_2+ 18 u_1 v_1+ 12 u v_2+ 
 12 v v_2+ 24 v_1^2 +12 uv u_1  + 6 u^2 v_1  + 
 24 u v v_1 + 6 v^2 v_1.
\end{array}\right. \tag{$\rm K_2$} \label{K_2}
\end{align}
\medskip

\noindent
{\bf iv. Case 13},
$\qquad
\lambda_1=1,\qquad \lambda_2=\frac{(1+q)^3}{1+q^3}=:\kappa,\qquad q\in\C\setminus\{0,-1\},\qquad
$ type {\rm L}
\medskip

\underline{\(W(u)=1,\ W(v)=\frac{3}{2}\)}:
\begin{align}
&\left\{ \begin{array}{l}
u_t=u_3+(6uu_1+\kappa v^2+4u^3)_x,\\
v_t=\kappa v_3+(\kappa-1) u_2 v+3\kappa uv_2+3 \kappa u_1 v_1+3(\kappa-2)uvu_1+3\kappa u^2v_1-\kappa v^3+(\kappa-4)u^3v;
\end{array}\right.  \tag{$\rm L_1$} \label{L_1}
\end{align}

\underline{\(W(u)=1,\ W(v)=\frac{1}{2}\)}{\rm:}
\begin{align}
&\left\{ \begin{array}{l}
u_t=u_3 + 6 u u_2 + 6 u_1^2 + 6 \kappa  v_1 v_2 + 
 12 u^2 u_1 + 6 \delta v^2 u_2 + 
 2 (3 \kappa  + \delta (\kappa  + 2)) u v_1^2 - 
 2 (2 \delta (\kappa  - 1) - 3 \kappa ) u v v_2 \\ \qquad
 +6 (2 \delta + \kappa ) v u_1  v_1 - 
 2 (2 \delta (\kappa  - 4) - 3 \kappa ) u^2 v v_1 - 
 2 (2 \delta (\kappa  - 7) - 3 \kappa ) v^2 u u_1 + 
 6 \delta \kappa  v^3 v_2 + 
 24 \delta \kappa  v^2 v_1^2\\ \qquad
 -12 \delta (\delta (\kappa  - 2) - 3 \kappa ) u v^3 v_1 + 6 \delta (2 \delta + \kappa ) v^4 u_1 - 
 2 \delta (\kappa  - 4) u^3 v^2 + 
 30 \delta^2 \kappa  v^5 v_1\\ \qquad - 
 2 \delta (-3 \kappa  + 2 \delta (\kappa  - 4)) u^2 v^4 - 2 \delta^2 (-6 \kappa  + \delta (\kappa  - 4)) u v^6 + 6 \delta^3 \kappa  v^8,
\\
v_t=\kappa v_3 + (\kappa - 1) v u_2 + 3 \kappa u v_2 + 
 3 \kappa u_1 v_1 + 3 (\kappa - 2) u v u_1 + 
 3 \kappa u^2 v_1 + 
 3 (3 \delta - 1) \kappa v v_1^2 + 
 3 \delta \kappa v^2 v_2\\ \qquad + 
 2 (-2 \delta - 3 \kappa + 5 \delta \kappa) u v^2 v_1 + (\kappa - 4) u^3 v + \delta (\kappa - 4) v^3 u_1 + (-3 \kappa + 2 \delta (\kappa - 4)) u^2 v^3 \\ \qquad+ 
 3 \delta (\delta - 2) \kappa v^4 v_1 + \delta (-6 \kappa + \delta (\kappa - 4)) u v^5 - 3 \delta^2 \kappa v^7,
\end{array}\right.  \tag{$\rm L_2$} \label{L_2}
\end{align}
where   $\delta\in\C\setminus\{1\}$ is an arbitrary constant.
\medskip

\noindent
\newpage
{\bf  v. Case 10},
$\qquad
\lambda_1=1,\qquad \lambda_2=\kappa\in\C\setminus\{0,1\},\qquad
$ type {\rm M}.
\medskip

\underline{\(W(u)=1,\ W(v)=1\)}:
\begin{align}
&\left\{ \begin{array}{l}
u_t=u_3+(\kappa-1) \, [2(u+v)v_2+6u^2v_1-12uvv_1-10v^2v_1  +(u-v)^2(u+3v)(v+3u)] \, +6uu_2\\
\qquad +9u_1^2 -6 u_2 v -6 u_1 v_1 +(\kappa+2)\left[-v_1^2+4(u^2+v^2)u_1\right]+8(\kappa-4)uvu_1,\\
v_t=\kappa v_3+(\kappa-1) \, [2(u+v)u_2+10u^2u_1+12uvu_1-6v^2u_1  +(u-v)^2(u+3v)(v+3u)] \, \\
\qquad+(2\kappa+1)\left[u_1^2+4(u^2+v^2)v_1\right]  + 3\kappa\left(2uv_2+2 u_1 v_1-2vv_2-3v_1^2\right)-8(4\kappa-1)uvv_1;
 \end{array}\right.  \tag{$\rm M_1$} \label{M_1}
\end{align}

\underline{\(W(u)=\frac{1}{2},\ W(v)=\frac{1}{2}\)}:
\begin{align}
&\left\{ \begin{array}{l}
u_t=u_3 + 3 u^2 u_2 + 9 u u_1^2 + (\kappa+ 2) u v_1^2 - 
 2 (\kappa- 1) u v v_2 + 3 v^2 u_2 + 
 6 u_1 v v_1 + 3 u^4 u_1 - 
 2 (\kappa- 1) u^3 v v_1\\ \qquad - 
 2 (2 \kappa- 5) u^2 v^2 u_1 + 3 v^4 u_1 - 
 6 (\kappa- 1) u v^3 v_1 - (\kappa- 1) u v^2 (u^2 + v^2)^2,\\
v_t=\kappa v_3 + 
 3 \kappa u^2 v_2 + (2 \kappa+ 1) v u_1^2 + 
 2 (\kappa- 1) v u u_2 + 6 \kappa v_1 u u_1 + 
 3 \kappa v^2 v_2 + 9 \kappa v v_1^2 + 
 3 \kappa u^4 v_1 + 3 \kappa v^4 v_1 \\ \qquad+ 6 (\kappa- 1) u^3 v u_1 + 
 2 (\kappa- 1) u v^3 u_1 + 
 2 (5 \kappa- 2) u^2 v^2 v_1+ (\kappa- 1) u^2 v (u^2 + v^2)^2.
 \end{array}\right.  \tag{$\rm M_2$} \label{M_2}
\end{align}
\end{The}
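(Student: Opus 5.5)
The classification follows the procedure of Example~\ref{example22}, organised in three stages.

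\emph{Stage one: admissible weights and the general ansatz.} We first determine all admissible weights. By Theorem~\ref{noqct} and Corollary~\ref{triang}, a non-triangular integrable system must contain quadratic or cubic terms that genuinely couple $u$ and $v$. Homogeneity with $n=3$ and Conditions~(I)--(II) then force the weight of some quadratic or cubic monomial in $F_1$ (respectively $F_2$) to equal $3+W(u)$ (respectively $3+W(v)$), with at most two derivatives. Balancing these gives finitely many pairs $(W(u),W(v))$ with both weights positive. Up to the involution $u\leftrightarrow v$, we expect these to reduce to the values $\{2,1,\tfrac32,\tfrac12\}$ appearing in the statement, together with others that we must show yield only triangular or linear systems. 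For each pair we write the most general homogeneous $F_1,F_2$ with undetermined coefficients, as in \eqref{sysex}.

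\emph{Stage two: the spectral cases.} By Theorem~\ref{thmmn}, a system with nonzero quadratic part must fall into a case of Table~\ref{table2} with $3\in S$. Inspecting the table, only Cases~1, 10 and 13 contain $3$ with a $q$ compatible with $n=3$. In Case~1 we fix $\lambda_1=5+3\sqrt5$, $\lambda_2=5-3\sqrt5$ and impose a symmetry of order $7$. In Cases~10 and~13 we impose a symmetry of order $5$. In Case~13 the ratio $\mu_2/\mu_1$ is tied to $q$ by \eqref{symspec}; in Case~10 it is left free. If the quadratic terms vanish, Section~\ref{seccub} shows that a symmetry of order $5$ with arbitrary $\mu_2/\mu_1$ suffices, so this subcase merges with Case~10. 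In every case, Theorem~\ref{bbt} guarantees that once this single symmetry is found, the system is integrable. Sufficiency therefore reduces to exhibiting that symmetry for each listed system.

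\emph{Stage three: polynomiality conditions.} For each weight family and each case we apply Theorem~\ref{symsys} degree by degree. We require $A^i_{j,k}$ to be polynomial for $j+k=2$; this gives linear conditions that kill many coefficients, as in Example~\ref{example22}. We then require polynomiality for $j+k=3,4,\dots$; these conditions are polynomial in the parameters and also pin down $q$. In Case~13 they produce the values $q=\i$, $q=1$, a generic $q$, and the degenerate $q=e^{\pi\i/3}$, which is set aside for Section~\ref{eigzero}. At each branch we discard triangular solutions and solutions admitting a symmetry of order $<3$, as Conditions~(III) and~(IV) require. We then normalise using \eqref{scal}. The surviving solutions should be exactly $\rm A_1$--$\rm A_5$, $\rm B_1$--$\rm B_6$, $\rm K_1,K_2$, $\rm L_1,L_2$ and $\rm M_1,M_2$.

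The main obstacle is stage three for the low weights $W=1$ and $W=\tfrac12$. There the ansatz contains many monomials up to degree $8$, and the cubic and quartic conditions are nonlinear in the parameters, with free parameters ($\kappa$, $\delta$) surviving. My approach is to solve the quadratic conditions first, then factor the numerators of $A^i_{j,k}$ modulo $G^{\omega_i}_{j,k}$, evaluating on the factors $x-qy$ and $x+y$ from Lemma~\ref{gsym2}. This turns divisibility into polynomial equations in the coefficients, which we then resolve by Gröbner basis elimination, branching on whether key coefficients such as $c_4$ vanish. The final list is then checked by computing the order-$5$ or order-$7$ symmetry explicitly.
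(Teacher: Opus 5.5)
Your proposal is correct and follows the paper's own route. You enumerate the admissible weights and write the general homogeneous ansatz, then use Theorem~\ref{thmmn} to restrict to Cases~1, 10 and~13, with Section~\ref{seccub} handling the cubic-only situation, which the paper likewise labels Case~10 for $\rm M_2$. You impose a single symmetry of order $7$ or $5$ through the recursive polynomiality conditions of Theorem~\ref{symsys}, discard triangular branches and those admitting a second-order symmetry (such as $\delta=1$ in $\rm L_2$), and invoke Theorem~\ref{bbt} for sufficiency. This is exactly the procedure of Example~\ref{example22} and the opening of Section~\ref{secclass}.
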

\begin{Rem}
In the integrable family \eqref{L_2}, the value $\delta=1$ is excluded. Indeed, for this value the system admits a second-order generalised symmetry:
\begin{equation}\label{2ndsym}
 \left\{ \begin{array}{l}  u_{\tau}=(2+\kappa)u_2+4(\kappa+2) \left(u u_1+u_1 v^2+u v v_1 + u v^4\right)+6 \kappa \left(v_1^2+2 v^3 v_1+v^6 \right)+2 (4-\kappa)u^2v^2 ,
 \\
 v_{\tau}=3\kappa v_2+2(\kappa-1) u_1 v +6 \kappa u v_1+6 \kappa v^2 v_1+(\kappa -4) u^2v-2 (\kappa+2) u v^3
 -3 \kappa v^5 .
  \end{array}\right. 
\end{equation}
Therefore, Condition~(IV) is violated since it belongs to a hierarchy generated by a  second-order symmetry.
\end{Rem}

We compare our results with those available in the literature.
Foursov studied two-component homogeneous evolutionary systems with $W(u)=W(v)=2$ of orders $2$, $3$ and $5$ \cite{Foursov03, Foursov} assuming the existence of symmetries of ad-hoc prescribed  orders. In the third-order case, two of the systems obtained satisfy the assumptions imposed in the present paper, and they are equivalent, under linear transformations of the dependent variables, to systems~\eqref{B_1} and~\eqref{A_1}.

Two-component integrable systems of the form
\begin{equation*}\label{a3eq}
 u_t=\lambda u_3+F(u, v, u_1, v_1, u_2, v_2), \qquad v_t=v_3+G(u, v, u_1, v_1, u_2, v_2),\qquad \lambda\in\C,   
\end{equation*}
have been studied by Meshkov and collaborators using the canonical densities approach for specific values of the parameter $\lambda$ \cite{meshkov94}.
The results obtained in \cite{Mesh08, MeshB08, Balak23} are closely related to the A-type and B-type systems appearing in Theorem~\ref{thmA}.
In particular, \cite{MeshB08} provides a list of integrable systems in the case $\lambda=-\tfrac12$, which includes all B-type systems in Theorem~\ref{thmA}. The case $\lambda=\tfrac{\pm 3\sqrt{5}-7}{2}$ was studied in \cite{Balak23}. However, the integrable systems \eqref{A_3}--\eqref{A_5} do not appear in these classifications and were not identified therein.

\subsection{Integrable systems of form \eqref{sysg} with $n=5$}\label{5thord}
By Theorem \ref{thmmn}, the classification of all integrable systems of order $5$ of the form \eqref{sysg} reduces to the analysis of Cases 4, 11, 14, 16, 18, 22, and 23 in Table \ref{table2}. As in the case $n=3$, for each case we verify the existence of a symmetry of order $m_2$, namely the second member of the corresponding sequence $S$.

{\bf Case 4}: In this case, the system admits symmetries of order $m=1,5,7,11\mod 12$, or equivalently, $m=1,5\mod 6$, with spectral invariant
$$
\frac{\lambda_2}{\lambda_1}=\frac{(1+q)^5}{1+q^5},\quad\frac{\mu_2(m)}{\mu_1(m)}=\frac{(1+q)^m}{1+q^m},\quad q=e^{\frac{\pi \i}{6}}.
$$
Without loss of generality, we choose $\lambda_1=9+5\sqrt{3}$, $\lambda_2=\overline{\lambda}=9-5\sqrt{3}$. To classify systems in this case, it is sufficient to impose the existence of a symmetry of order $7$. Integrable systems of this type possess Lax representations in terms of the Lie algebra $D_4^{(3)}$, and we denote them by the letter $D$.

{\bf Case 11}: The ratios $\frac{\lambda_2}{\lambda_1}$ and $\frac{\mu_2(m)}{\mu_1(m)}$ are arbitrary. An integrable system in this case admits symmetries of order $1,5\mod 6$. The classification procedure shows that all integrable systems arising are triangular. Hence, they are excluded from the classification under Condition~(III).

{\bf Case 14}: In this case, the system admits symmetries of order $m=1,5\mod 6$ with spectral invariant
$$
\frac{\lambda_2}{\lambda_1}=\frac{(1+q)^5}{1+q^5},\quad\frac{\mu_2(m)}{\mu_1(m)}=\frac{(1+q)^m}{1+q^m}.
$$
Again, it is sufficient to impose the existence of a symmetry of order $7$. The cubic integrability conditions determine the value of $q$, leading to $q=e^{\frac{\pi \i}{3}}$, and the minimal order of the recursion operator is $6$. Integrable systems of this type possess Lax representations in terms of the Lie algebra $G_2^{(1)}$, and we label them with the letter $G$.

{\bf Case 16}:  In this case, the system admits symmetries of order $m=1\mod 4$ with spectral invariant 
 $$
\frac{\lambda_2}{\lambda_1}=\frac{(1+q)^5}{1+q^5},\quad\frac{\mu_2(m)}{\mu_1(m)}=\frac{(1+q)^m}{1+q^m},\quad q=\i.
$$
We choose  $\lambda_1=1,\,\,\lambda_2=-4$. To classify systems of this type, it is sufficient to impose the existence of a symmetry of order $9$.  Integrable systems of this type can be transformed into triangular systems by differential substitutions.

{\bf Case 18}: Here the system admits symmetries of order $m=1,5\mod 12$ with spectral invariant
 $$
\frac{\lambda_2}{\lambda_1}=\frac{(1+q)^5}{1+q^5},\quad\frac{\mu_2(m)}{\mu_1(m)}=\frac{(1+q)^m}{1+q^m}, \quad q=\i.
$$
Without loss of generality, we choose 
$\lambda_1=1,\,\,\lambda_2=-4$. It is sufficient to impose the existence of a symmetry of order $13$. However, these systems also admit symmetries of order $9$ and therefore are a subcase of Case 16. 

{\bf Case 22}: In this case, the system admits symmetries of order $m=5\mod 10$ with spectral invariant
$$
\frac{\lambda_2}{\lambda_1}=\frac{(1+q)^5}{1+q^5},\quad\frac{\mu_2(m)}{\mu_1(m)}=\frac{(1+q)^m}{1+q^m},
$$
where $q=\alpha\frac{\beta-1}{\alpha-1}$, $\,\,\alpha^5=\beta^5=1,\, \beta\ne1$ and $\,\,\alpha\notin \{1, \beta,\beta^{-1}\}$.
We choose $\lambda_1=1,\,\,\lambda_2=-\frac{13\pm 5\sqrt{5}}{22}$.
The existence of a symmetry of order $15$ is sufficient for the classification. These systems can be reduced to triangular ones by differential substitutions. 

{\bf Case 23}: In this final case, the system admits symmetries of order $m=5,25\mod 30$, with spectral invariant
 $$
\frac{\lambda_2}{\lambda_1}=\frac{(1+q)^5}{1+q^5},\quad\frac{\mu_2(m)}{\mu_1(m)}=\frac{(1+q)^m}{1+q^m},
$$
where the value of $q$ is the same as Case 22. Thus we also choose $\lambda_1=1,\,\,\lambda_2=-\frac{13\pm 5\sqrt{5}}{22}$. 
It is sufficient to impose the existence of a symmetry of order $25$. Since these systems also possess symmetries of order $15$, they are contained in Case~22.

The integrable systems corresponding to Cases~$16, 18, 22$ and $23$ are not included in the classification theorem below. They will be discussed separately in Section~\ref{sec64}.

\begin{The}\label{thmD}
Let system~\eqref{sysg} with $n=5$ be integrable, satisfy conditions~(I)--(V), and not belong to Cases $16, 18, 22$ and $23$. Then, under the transformations~\eqref{scal}, it can be reduced to one of the following equations:
\begin{enumerate}
    \item[{\rm\bf i.}]  ${\rm\bf Case\ 4},\qquad  
    \lambda_1=\lambda=9+5\sqrt{3},\qquad \lambda_2=\bar\lambda=9-5\sqrt{3} ,\qquad q=e^{\frac{\i\pi}{6}},
    \qquad$  type ${\rm D}$.

\underline{$W(u)=2,\ W(v)=2$}:
\begin{align}
&\left\{ \begin{array}{l}
u_t=F(\left[u\right],[v],\lambda):=\lambda u_5 + 3 (\lambda - 4) u u_3 - 15 u_1 u_2 - 
 3 (\lambda + 6) u v_3 - 3 (4 \lambda - 1) u_1 v_2 \\ \qquad- 
 15 (\lambda - 1) u_2 v_1- 3 (3 \lambda - 2) v u_3 + 
 3 (\lambda - 4) v v_3 + 3 (3 \lambda - 7) v_1 v_2 - 
 3 (\lambda - 9) u^2 u_1  \\ \qquad
 - 6 (2 \lambda - 3) (u^2 + v^2) v_1 - 
 6 (2 \lambda - 3) u v u_1 + 
 15 (\lambda + 3) v^2 u_1 + 24 (\lambda + 6) u v v_1,\\
v_t=F([v], [u], \lb);
\end{array}\right. \tag{$\rm D_1$} \label{D_1}
\end{align}

\underline{$W(u)=1,\ W(v)=1$}:
\begin{align}
&\left\{ \begin{array}{l} u_t =F([u],[v],\lambda):=\lambda  u_5 + 
 2 (4 \lambda  + 9) u_1 u_3 + (7 \lambda  + 12) u_2^2 + 
 2 (\lambda  + 6) (u - v) v_4 + 6 (\lambda  + 1) u_1 v_3\\ \qquad + 
 2 (2 \lambda  - 3) (u_3 v_1 + 2 u_2 v_2) - 
 2 (7 \lambda  - 3) v_1 v_3 - (11 \lambda  + 6) v_2^2 + 
 4 (\lambda  - 9) (u - v)^2 u_3\\ \qquad - 
 12 (\lambda  + 1) (u - v)^2 v_3 + 
 20 (\lambda  - 9) (u - v) u_1 u_2 - 
 20 (\lambda  - 3) (u - v) v_1 u_2\\ \qquad - 
 12 (3 \lambda  - 7) (u - v) u_1 v_2 + 
 4 (23 \lambda  + 33) (u - v) v_1 v_2 + 
 4 (\lambda  + 11) u_1^3 - 8 (4 \lambda  - 1) v_1^3 \\ \qquad- 
 8 (\lambda  + 6) u_1^2 v_1 + 36 (\lambda  + 1) u_1 v_1^2 - 
 120 (u - v)^3 u_2 + 
 8 (4 \lambda  + 9) (u - v)^3 v_2 \\ \qquad+ 
 4 (4 \lambda  - 141) (u - v)^2 u_1^2 + 
 24 (4 \lambda  - 1) (u - v)^2 u_1 v_1 - 
 4 (28 \lambda  + 33) (u - v)^2 v_1^2\\ \qquad + 
 48 (\lambda  - 9) (u - v)^4 u_1 - 
 48 (\lambda  + 1) (u - v)^4 v_1 - 80 (u - v)^6, \\
v_t=F([v], [u], \lb);
 \end{array}\right. \tag{$\rm D_2$} \label{D_2}\\[4mm]
&\left\{ \begin{array}{l} u_t =F([u], [v], \l):=\lambda u_5 + (\lambda + 1) (3 u_1 - v_1) u_3 - 
 2 (\lambda + 1) u_2 v_2 + 
 \frac{5 \lambda}{2} u_2^2 - (\lambda - 4) (u - v) v_4\\ \qquad - (2 \lambda - 3) u_1 v_3 + (2 \lambda + 7) v_1 v_3 + 
 \frac{1}{2} (3 \lambda - 2) v_2^2 - 
 \frac{1}{3} (\lambda - 9) (u - v)^2 u_3 + (\lambda + 1) (u - v)^2 v_3\\ \qquad - 
 \frac{1}{3} (4 \lambda - 21) (u - v) u_1 u_2 + 
 \frac{1}{3} (4 \lambda + 9) (u - v) v_1 u_2 - 
 \frac{7}{3} (2 \lambda - 3) (u - v) v_1 v_2\\ \qquad + 
 \frac{1}{3} (8 \lambda - 27) (u - v) u_1 v_2 + 
 5 u_1^3 + (\lambda + 6) v_1^3 + (\lambda - 4) u_1^2 v_1 - (2 \lambda - 3) u_1 v_1^2 + 5 (u - v)^3 u_2\\ \qquad - 
 \frac{1}{3} (2 \lambda + 7) (u - v)^3 v_2 + 
 \frac{1}{6} (2 \lambda + 27) (u - v)^2 u_1^2 + 
 \frac{5}{6} (2 \lambda + 3) (u - v)^2 v_1^2\\ \qquad  - (2 \lambda - 
    3) (u - v)^2 u_1 v_1 - 
 \frac{1}{3} (\lambda - 9) (u - v)^4 u_1 + 
 \frac{1}{3} (\lambda + 1) (u - v)^4 v_1 + \frac{5}{18} (u - v)^6 , \\
v_t=F([v], [u], \lb);
 \end{array}\right. \tag{$\rm D_3$} \label{D_3}\\[4mm]
&\left\{ \begin{array}{l} u_t =F([u], [v], \l):=\lambda  u_5 + (7 \lambda  - 3) u_1 u_3 - 
 \frac{5\lambda}{2}  u_2^2 + (\lambda  - 24) (u - v) v_4 - 
 15 u_1 v_3 \\ \qquad+ (309 - 16 \lambda ) v_1 v_3 - (\lambda  - 9) (2 u_2 v_2 + v_1 u_3) + 
 \frac{1}{2} (234 - 11 \lambda ) v_2^2 + (\lambda  - 9) (u - v)^2 u_3\\ \qquad - 3 (\lambda  - 19) (u - v)^2 v_3  +(2 \lambda  - 33) (u - v) (u_1 u_2 - 
    3 u_1 v_2) - ((2 \lambda  - 123) (u - v)) v_1 u_2\\ \qquad - (4 \lambda  - 51) (u - v) v_1 v_2  - (2 \lambda  + 7) u_1^3+ (7 \lambda  - 118) v_1^3 - 
 5 (\lambda  - 24) u_1^2 v_1 + 15 u_1 v_1^2\\ \qquad + 
 15 (u - v)^3 u_2 - (2 \lambda  - 33) (u - v)^3 v_2 - 
 \frac{1}{2} (14 \lambda  - 51) (u - v)^2 u_1^2 + 
 \frac{1}{2} (26 \lambda  - 489) (u - v)^2 v_1^2\\ \qquad  - 
 3 (2 \lambda  - 43) (u - v)^2 u_1 v_1 - 
 3 (\lambda  - 9) (u - v)^4 u_1 + 
 3 (\lambda  - 19) (u - v)^4 v_1 - \frac{5}{2} (u - v)^6, \\
v_t =F([v], [u], \lb);
 \end{array}\right. \tag{$\rm D_4$} \label{D_4}\\[4mm]
&\left\{ \begin{array}{l} u_t =F([u], [v], \l):=\l u_5 + \bigg(15 \l u_1 u_2 + 30 (u - v) v_3 + 
3 (\l - 4) (u_1 v_2 + u_2 v_1) - 3 (\l + 16) v_1 v_2\\
\qquad - 12 (\l + 6) (u^2 + v^2) u_2 - 
12 (\l - 9) u v u_2 - 18 (\l - 4) u^2 v_2 + 18 (\l - 14) v^2 v_2\\ \qquad - 12 (\l + 6) u u_1^2
 - 6 (\l - 9) v u_1^2 - 12 (\l - 9) u u_1 v_1 - 24 (\l + 6) v u_1 v_1 - 12 (\l - 24) u v_1^2
\\ \qquad + 30 (\l - 15) v v_1^2
 + 36 (u - v) \left((2 \l - 13) u^2 + 20 u v + (23 - 2 \l) v^2\right) v_1 + 270 u^5\\ \qquad - 
   1080 u^4 v + 540 u^3 v^2 + 2160 u^2 v^3 + 270 u v^4 - 216 v^5\bigg)_x, \\
v_t=F([v], [u], \lb).
 \end{array}\right. \tag{$\rm D_5$} \label{D_5}
\end{align}

\item[{\rm\bf ii.}]  ${\rm\bf Case\ 14},\qquad  
    \lambda_1=1,\qquad \lambda_2=-9,\qquad q=e^{\frac{\i\pi}{6}},
    \qquad$  type ${\rm G}$.
\medskip

\underline{$W(u)=2,\ W(v)=2$}:
\begin{align}
&\left\{ \begin{array}{l} u_t=u_5+10 u u_3 +25 u_1 u_2 -10 v v_3-15 v_1 v_2 +20 u^2 u_1-10 u_1 v^2\\
\qquad -20 u v v_1+10 v^2 v_1, \\
v_t=-9v_5-10u_3v-35u_2v_1-45u_1v_2-30uv_3+30v_1v_2-20u u_1 v\\
\qquad -20u^2v_1-10 u_1 v^2+10v^2v_1 ; \end{array}\right.\tag{$\rm G_1$} \label{G_1}
\end{align}

\underline{$W(u)=2,\ W(v)=1$}:
\begin{align}
&\left\{ \begin{array}{l} 
u_t=u_5 + 5 u u_3 + 5 u_1 u_2 + 5 u^2 u_1 + 10 v_1 v_4 + 
 15 v_2 v_3 - \frac{5}{2} v^2 u_3 + 5 v u_1 v_2 + 10 u_1 v_1^2 - 
 \frac{10}{3} v^2 v_4\\
 \qquad + 10 u v v_3 + 25 u v_1 v_2 - 
 \frac{50}{3} v v_1 v_3 - 10 v v_2^2 - 40 v_1^2 v_2 - 
 \frac{10}{3} v^3 u_2 + 5 u v^2 u_1 + \frac{5}{9} v^3 v_3\\ 
 \qquad - 
 \frac{40}{3}  v^2 u_1 v_1 + 10 u^2 v v_1 - 
 \frac{80}{3} u v v_1^2 + \frac{15}{2} v^2 v_1 v_2 + \frac{10}{3} v v_1^3 - 
 \frac{40}{3} u v^2 v_2 + \frac{10}{3} v^4 v_2 + 10 v^3 v_1^2\\
 \qquad - 
 \frac{55}{36} v^4 u_1 - \frac{40}{9} u v^3 v_1 + \frac{25}{54} v^5 v_1,\\
v_t=-9 v_5 + 
 \bigg(-10 v u_2 - 15 u_1 v_1 - 15 u v_2 + 30 v_1 v_2 - 
   10 v^2 u_1 - 5 u^2 v + \frac{15}{2} v^2 v_2 + 
   10 v v_1^2\\
   \qquad - \frac{5}{3} u v^3 + \frac{10}{3} v^3 v_1 + \frac{11}{36} v^5\bigg)_x;
 \end{array}\right. \tag{$\rm G_2$}\label{G_2}
\end{align}

\underline{$W(u)=1,\ W(v)=2$}:
\begin{align}
&\left\{ \begin{array}{l} 
u_t=u_5 + 5 u_1 u_3 + \frac{15}{4} u_2^2 - 20 v v_2 - 5 v_1^2 + 
 \frac{5}{3} u_1^3 - 10 v^2 u_1 + \frac{20}{3} v^3,\\
v_t=-9 v_5 - 5 v u_4 - \frac{35}{2} v_1 u_3 - 15 u_1 v_3 - 
 \frac{45}{2} u_2 v_2 + 30 v_1 v_2 - 5 v^2 u_2 - 5 v u_1 u_2 - 
 5 v_1 u_1^2 + 10 v^2 v_1;
\end{array}\right. \tag{$\rm G_3$} \label{G_3}
\end{align}

\underline{$W(u)=1,\ W(v)=1$}:
\begin{align}
&\left\{ \begin{array}{l} u_t =u_5 + \left(5 u_1 u_2 -5 u^2 u_2-5 u u_1^2+u^5-30 v v_3-45 v_1 v_2+15 u_2 v^2  +60 uvv_2 + 90 v^2 v_2\right)_x\\
\qquad  +  \left(30 v u_1 v_1+15 u v_1^2+180 v v_1^2+30 u^2 v v_1 -180 u v^2 v_1 -90 v^3 v_1-30 u^3 v^2+45 u v^4\right)_x, \\
 v_t = -9 v_5 + \left(10 u_3 v+ 15 u_2 v_1+15 u_1 v_2 +90 v_1 v_2+20 uu_2 v + 30 u_2 v^2 + 15 u_1^2 v + 30 uu_1 v_1 \right)_{x} \\
\qquad  +  \left(15 u^2 v_2+15 v^2 v_2+15 v v_1^2+60 u u_1 v^2 +30 u_1 v^3-10 u^2 u_1 v+30 u^2 v^3 -5 u^4 v-9 v^5\right)_x;  \end{array}\right. \tag{$\rm G_4$} \label{G_4}\\[4mm]
&\left\{ \begin{array}{l} u_t=u_5 + 5 u_1 u_3 + \frac{5}{3} u_1^3 + 10 v v_4 + 5 v_1 v_3 + 
 10 v_2^2 + 10 v^2 u_3 + 10 v v_1 u_2 + 
 30  v u_1 v_2 + 5 u_1 v_1^2\\
 \qquad - \frac{10}{3} v^2 v_3 - 
 40 v v_1 v_2 + \frac{20}{3} v^3 u_2 + 10 v^2 u_1^2 + 
 \frac{140}{9} v^3 v_2 + \frac{50}{3} v^2 v_1^2 - \frac{40}{3} v^2 u_1  v_1\\
 \qquad + 
 \frac{80}{9}  v^4 u_1 + \frac{160}{81} v^6,\\
v_t=-9 v_5 - 
 \bigg(10 v u_3 + 15 v_1 u_2 + 15 u_1 v_2 - 30 v_1 v_2 + 
   10 v^2 u_2 + 5 v u_1^2 + 10 v^2 v_2 + 15 v v_1^2\\
   \qquad + 
   \frac{20}{3} v^3 u_1 + \frac{20}{3} v^3 v_1 + \frac{16}{9} v^5\bigg)_x;
 \end{array}\right. \tag{$\rm G_5$} \label{G_5}\\[4mm]
&\left\{ \begin{array}{l} u_t=u_5 - 10 u_1 u_3 - \frac{15}{2} u_2^2 + 10 v_1 v_3 + \frac{5}{2} v_2^2 - 
 5 (u - v)^2 v_3 - 10 (u - v) v_1 u_2 - 
 5 (u - v) u_1 v_2\\
 \qquad + 15 (u - v) v_1 v_2 + 
 \frac{20}{3} u_1^3 - 10 v_1 u_1^2 + 10 v_1^2 u_1 - \frac{40}{3} v_1^3 + 
 5 (u - v)^3 (u_2 - v_2) + \frac{15}{2} (u - v)^2 u_1^2\\
 \qquad - 
 5 (u - v)^2 u_1 v_1 + \frac{25}{2} (u - v)^2 v_1^2 - 
 \frac{5}{2} (u - v)^4 u_1 - \frac{5}{2} (u - v)^4 v_1 + 
 \frac{5}{12} (u - v)^6,\\
v_t =-9 v_5 + 10 (u - v) u_4 + 35 u_1 u_3 - 25 v_1 u_3 - 
 15 u_1 v_3 + 45 v_1 v_3 + \frac{55}{2} u_2^2 + \frac{75}{2} v_2^2 - 
 30 u_2 v_2\\
 \qquad - 15 (u - v)^2 u_3 - 
 80 (u - v) u_1 u_2 + 30 (u - v) v_1 u_2 + 
 15 (u - v) u_1 v_2 - 5 (u - v) v_1 v_2 - 
 \frac{55}{3} u_1^3\\
 \qquad + 25 u_1^2 v_1 - 5 u_1 v_1^2 - \frac{25}{3} v_1^3 + 
 15 (u - v)^3 u_2 - 5 (u - v)^3 v_2 + 
 \frac{75}{2} (u - v)^2 u_1^2 - 35 (u - v)^2 u_1 v_1\\
 \qquad + 
 \frac{25}{2} (u - v)^2 v_1^2 - \frac{15}{2} (u - v)^4 u_1 + 
 \frac{5}{2} (u - v)^4 v_1 + \frac{5}{12} (u - v)^6;
 \end{array}\right. \tag{$\rm G_6$} \label{G_6}\\[4mm]
&\left\{ \begin{array}{l} u_t =u_5 - 10 u_1 u_3 + 90 v_1 v_3 - \frac{15}{2} u_2^2 + \frac{45}{2} v_2^2 - 
 45 (u - v)^2 v_3 - 90 (u - v) v_1 u_2\\
 \qquad - 
 45 (u - v) u_1 v_2 + 135 (u - v) v_1 v_2 + 
 \frac{20}{3} u_1^3 - 90 u_1^2 v_1 + 90 u_1 v_1^2 + 
 45 (u - v)^3 u_2\\
 \qquad - 45 (u - v)^3 v_2 + 
 \frac{135}{2} (u - v)^2 u_1^2 - 45 (u - v)^2 u_1 v_1 - 
 \frac{135}{2} (u - v)^2 v_1^2 - \frac{45}{2} (u - v)^4 u_1\\
 \qquad + 
 \frac{135}{2} (u - v)^4 v_1 - \frac{45}{4} (u - v)^6, \\
v_t=-9 v_5 + 10 (u - v) u_4 + 35 u_1 u_3 - 25 v_1 u_3 - 
 15 u_1 v_3 + 45 v_1 v_3 + \frac{55}{2} u_2^2 - 30 u_2 v_2 - 
 \frac{45}{2} v_2^2\\
 \qquad - 15 (u - v)^2 u_3 - 80 (u - v) u_1 u_2 + 
 30 (u - v) v_1 u_2 + 135 (u - v) u_1 v_2 - 
 45 (u - v) v_1 v_2\\
 \qquad - \frac{55}{3} u_1^3 + 25 u_1^2 v_1 - 
 45 u_1 v_1^2 + 45 v_1^3 + 15 (u - v)^3 u_2 - 
 45 (u - v)^3 v_2 - \frac{45}{2} (u - v)^2 u_1^2\\ 
 \qquad + 
 45 (u - v)^2 u_1 v_1 - \frac{135}{2} (u - v)^2 v_1^2 + 
 \frac{45}{2} (u - v)^4 u_1 + \frac{45}{2} (u - v)^4 v_1 - 
 \frac{45}{4} (u - v)^6.
 \end{array}\right. \tag{$\rm G_7$} \label{G_7}
\end{align}
\end{enumerate}
\end{The}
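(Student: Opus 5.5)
The plan follows the procedure that Example~\ref{example22} illustrates and that Section~\ref{3order} uses for $n=3$.

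\emph{Step 1: reduce to the relevant cases of Table~\ref{table2}.} Let $n=5$. By Theorem~\ref{noqct} and Condition~(III), the quadratic and cubic terms cannot both vanish. If the quadratic terms vanish, Section~\ref{seccub} shows that the hierarchy contains symmetries of every odd order, so a third-order symmetry exists, contradicting Condition~(IV). We may therefore assume the quadratic terms are nonzero. Theorem~\ref{thmmn}, together with $5\in S$, leaves Cases 4, 11, 14, 16, 18, 22 and 23. Cases 16, 18, 22 and 23 are excluded by hypothesis, so only Cases 4, 11 and 14 remain. In each of these we have $7\in S$. By Theorem~\ref{bbt}, integrability is then equivalent to the existence of a single seventh-order symmetry whose linear part is fixed by \eqref{symspec}.

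\emph{Step 2: finitely many weights.} Condition~(I) and non-triangularity force some quadratic monomial in $F_1$ to involve $v$ and some monomial in $F_2$ to involve $u$; when the quadratic terms are triangular, cubic terms must play this role (Corollary~\ref{triang}). Balancing weights against $W(u_5)=5+\mu$ gives a finite list of linear constraints on $(\mu,\nu)$. Combined with the requirement that the quadratic coefficients $a^i_{j,k}$ have the degrees demanded by the divisibility relations \eqref{Gfactor}, which need nontrivial degree reductions, this leaves finitely many weight pairs, in particular $(2,2)$, $(2,1)$, $(1,2)$ and $(1,1)$. For each pair I will write the most general homogeneous $F_1,F_2$ with undetermined coefficients.

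\emph{Step 3: polynomiality conditions, case by case.} Fix $\lambda_1,\lambda_2$ and $\mu_1,\mu_2$ as prescribed: $\lambda=9\pm5\sqrt3$ with $q=e^{\pi\i/6}$ in Case 4, $\lambda_2/\lambda_1=(1+q)^5/(1+q^5)$ with $q$ free in Case 14, and arbitrary ratios in Case 11. Then impose, using Theorem~\ref{symsys}, that every $A^i_{j,k}$ is polynomial.
\begin{itemize}
\item At degree $2$, the divisibility conditions \eqref{Gfactor} are linear in the coefficients.
\item At degree $3$, in Case 14 they determine $q=e^{\pi\i/3}$, giving $\lambda_2=-9$. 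In Case 11 they should force triangularity, as in Case 10 of Example~\ref{example22}.
\item At degrees $4$ to $6$ they fix the remaining parameters.
\end{itemize}
The procedure ends when the degree bound is reached, since weight homogeneity bounds the degree of $F_i$ and of the symmetry. After that, verify that the resulting $\bg$ satisfies $[\bff,\bg]=0$ exactly.

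\emph{Step 4: discard and normalise.} Discard any triangular solutions and any solution admitting a symmetry of order $3$ (Condition~(IV)). Use \eqref{scal} to normalise the surviving solutions to \eqref{D_1}--\eqref{D_5} and \eqref{G_1}--\eqref{G_7}. In Case 4, the involution $u\leftrightarrow v$ combined with $\lambda\leftrightarrow\bar\lambda$ accounts for the symmetric form $v_t=F([v],[u],\bar\lambda)$.

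\emph{Main obstacle.} The hardest part is the nonlinear algebra at degrees $3$ and $4$, particularly for weights $(1,1)$. There the ansatz has many parameters and the polynomiality conditions produce large polynomial systems, for instance for \eqref{D_2} and \eqref{G_6}. My first attempt would be to solve the linear degree-$2$ constraints completely, then use Gr\"obner bases on the degree-$3$ conditions, splitting branches by whether key coefficients such as $c_4$ vanish. This mirrors the Case~13 analysis in Example~\ref{example22}.
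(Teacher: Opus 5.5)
Your plan follows the paper's own route. Theorem~\ref{thmmn} restricts $n=5$ to Cases 4, 11 and 14 once Cases 16, 18, 22 and 23 are set aside. Theorem~\ref{bbt} reduces integrability to the existence of a single seventh-order symmetry. For each admissible weight pair, you impose polynomiality of the coefficients $A^i_{j,k}$ from Theorem~\ref{symsys} on the general homogeneous ansatz, then discard triangular systems and systems with lower-order symmetries.

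There is one inaccuracy in Step~1. The condition ``$5\in S$'' does not by itself remove Cases 10 and 13, since both contain $5$. They are excluded only by Condition~(IV): there $3\in S$, so Theorem~\ref{bbt} produces a third-order symmetry. This is the same argument you used for vanishing quadratic terms.

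In your procedure these systems are actually picked up by the Case~11 and Case~14 searches, because those impose the same seventh-order linear parts. So the order-$3$ check in Step~4 is essential, not cosmetic. For the same reason you should also check Condition~(IV) for orders $2$ and $4$. Theorem~\ref{thmmn} covers odd orders only, and the paper excludes a system in just this way for having a second-order symmetry, in the remark on \eqref{L_2} with $\delta=1$.
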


For systems~\eqref{D_1}--\eqref{D_5}, we display only the $u_t$-equation,
$ 
u_t=F([u],[v],\lambda).
$ 
The corresponding $v_t$-equation is obtained by interchanging $u$ and $v$ and replacing $\lambda$ with $\bar{\lambda}$.

The integrable systems of the form~\eqref{sysg} corresponding to Cases~$16, 18, 22$ and~$23$ have rather lengthy expressions. Nevertheless, they can be transformed into simple and compact linear triangular systems through almost invertible transformations, consisting of combinations of point transformations and the introduction of potentials. The classification results for these systems are presented in Section~\ref{LaxTrans}.

For fifth-order systems~\eqref{sysg}, no classification results are currently available in the literature, and only scattered examples are known.
In \cite{TW17}, the authors studied fifth-order integrable coupled systems of weights $W(u)=W(v)=0$ and $W(u)=W(v)=1$ that admit a seventh-order symmetry. They identified four integrable systems. Among these, system (9) is, up to a linear transformation, the potential form of \eqref{D_5}. The remaining three systems have linear parts with one zero eigenvalue and therefore fall outside the class of systems considered in our classification.
In \cite{Gerdjikov20}, the authors derived the hierarchies of the mKdV type equations related to $D_4^{(1)} , D_4^{(2)}$ and $D_4^{(3)}$ Kac-Moody algebras based on the construction of the three non-equivalent Coxeter gradings in the Lie algebra $D_4$. The one  related to $D_4^{(3)}$ is the system  (\ref{D_5}) under the scaling transformations of time $t$ and dependent variables.

\subsection{Integrable hierarchies starting from systems with $\lambda_1=0$}\label{eigzero}

In this section, we consider integrable hierarchies starting with a system of the form \eqref{sysg} with $\lambda_1=0$. Without loss of generality, we can set $\lambda_2=1$. Hierarchies of this type can occur only in the following cases:

\begin{description}
\item[Case 13.] Here $q=e^{\frac{\pi \i}{3}}$, and the hierarchy starts with a system of order $3$. To classify these systems it is sufficient to impose the existence of  symmetry of order $5$ with $\mu_2=-9,\,\mu_1=1$. 
\item[Case 14.] Here $q=e^{\frac{\pi \i}{5}}$ or $q=e^{\frac{3\pi \i}{5}}$, and the hierarchy starts with a system of order $5$. In this case, it is sufficient to impose the  existence of a symmetry of order $7$ with $\frac{\mu_2}{\mu_1}=\frac{(1+q)^7}{1+q^7}$.
\item[Case 15.] Here $q=e^{\frac{\pi \i}{7}},\,e^{\frac{3\pi \i}{7}}$ or $e^{\frac{5\pi \i}{7}}$, and the hierarchy starts with a system of order $7$. In this case, it is sufficient to impose the existence of a symmetry of order $13$ with $\frac{\mu_2}{\mu_1}=\frac{(1+q)^{13}}{1+q^{13}}$.
\item[Case 10.] The hierarchy starts with a system of order $3$. To classify these systems, it is sufficient to impose the existence of a symmetry of order $5$ with arbitrary ratio $\frac{\mu_2}{\mu_1}$. 
\item[Case 11.] The hierarchy starts with a system of order $5$. To classify these systems, it is sufficient to impose the existence of a symmetry of order $7$ with arbitrary ratio $\frac{\mu_2}{\mu_1}$. 
\item[Case 12.] The hierarchy starts with a system of order $7$. To classify these systems, it is sufficient to impose the existence of a symmetry of order $13$ with arbitrary ratio $\frac{\mu_2}{\mu_1}$. 
\end{description}

In this section, we present the classification of all such systems of order $3$, that is, we consider only Cases~13 and~10. We conjecture that, in all remaining cases, systems of this type are either triangular or related to triangular systems by differential substitutions.
Integrable systems corresponding to Case~13 are denoted by the labels $_0A$ and $_0L$. The former admit Lax representations in terms of Lie algebra $A_3^{(2)}$, while the latter are related to linear triangular systems by differential substitutions. Integrable systems corresponding to Case~10 are denoted by $_0M$.

\begin{The}\label{thmApA}
Let the system~(\ref{sysg}) with $n=3$ and $\lambda_1=0,\,\lambda_2=1$ satisfy conditions~(I)--(IV)  and be integrable. Then, under the transformations~(\ref{scal}), it can be reduced to one of the following equations:

\begin{enumerate}
    \item[{\rm\bf i.}]  ${\rm\bf Case\ 13},\qquad  q=e^{\frac{\i \pi }{3}}$,  type ${\rm _0A}$.
\medskip

\underline{$W(u)=2,\ W(v)=2$}:
\begin{align}
&\left\{ \begin{array}{l} u_t=v v_1, \\
v_t=v_3 + 2 u v_1 + v u_1; 
\end{array}\right.\tag{$\rm _0A_1$} \label{A_1p}
\end{align}

\underline{$W(u)=2,\ W(v)=1$}:
\begin{align}
&\left\{ \begin{array}{l} u_t=-2 v_1 v_2 - 6 u v v_1 + 2 v^3 v_1, \\
v_t=v_3 + (3 u v -  v^3)_x; 
\end{array}\right.\tag{$\rm _0A_2$} \label{A_2p}
\end{align}

\underline{$W(u)=1,\ W(v)=2$}:
\begin{align}
&\left\{ \begin{array}{l} u_t=v^2, \\
v_t=v_3 + 2 u_1 v_1 + v u_2; 
\end{array}\right.\tag{$\rm _0A_3$} \label{A_3p}
\end{align}

\underline{$W(u)=1,\ W(v)=1$}:
\begin{align}
&\left\{ \begin{array}{l}
u_t=\left(vv_1-uv^2\right)_x,\\
v_t=v_3-\left(vu_1+u^2v\right)_x;
\end{array}\right.\tag{$\rm _0A_4$} \label{A_4p}\\[4mm]
&\left\{ \begin{array}{l} u_t=3 v_1^2 + 6 (u - v)^2 v_1 + 3 (u - v)^4, \\
v_t=v_3 + 2 u u_2 + 2 u_1^2 - 2 v u_2 - 2 u_1 v_1 + 
 3 v_1^2 + 6 (u - v)^2 u_1 + 3 (u - v)^4; 
\end{array}\right.\tag{$\rm _0A_5$} \label{A_5p}\\[4mm]
&\left\{ \begin{array}{l}
u_t=-3 v v_2 -3 v^2 u_1 - 3 v^4,\\
v_t=v_3 + (v u_1 +   v^3)_x;
\end{array}\right.\tag{$\rm _0A_6$} \label{A_7p}
\end{align}
\item[{\rm\bf ii.}]  
 ${\rm\bf Case\ 13},\qquad  q=e^{\frac{\i \pi }{3}}$, type ${\rm _0L}$.
 \medskip

\underline{$W(u)=1,\ W(v)=\frac{3}{2}$}:
\begin{align}
&\left\{ \begin{array}{l} u_t=2 v v_1, \\
v_t=v_3 + v u_2 + 3 u_1 v_1 + 3 u v_2 + 3 u^2 v_1 + 
 3 u v u_1 - v^3 + u^3 v; 
\end{array}\right.\tag{$\rm _0L_1$} \label{L_1pp}
\end{align}

\underline{$W(u)=1,\ W(v)=\frac{1}{2}$}:
\begin{align}
&\left\{ \begin{array}{l}
u_t=6 v_1 v_2 + 2 (3 - 2 \delta) u v v_2 + 6 v u_1 v_1 + 
 2 (\delta+3) u v_1^2  + 6 \delta v^3 v_2 + 
 24 \delta v^2 v_1^2- 2 (2 \delta - 3) u v^2 u_1\\
\qquad + 
 6 \delta v^4 u_1 - 2 (2 \delta -3) u^2 v v_1 - 
 12 (\delta - 3) \delta u v^3 v_1 + 
 30 \delta^2 v^5 v_1 - 
 2 \delta v^2 (u - 3v^2) (u + \delta v^2)^2,\\
v_t=v_3 + v u_2 + 3 u_1 v_1 + 
 3 (u + \delta v^2) v_2 + (3 u + \delta v^2) v u_1 + 3(3 \delta - 1) v v_1^2 + 3 u^2 v_1
 \\ \qquad  + 
 2 (5 \delta - 3) u v^2 v_1 + 3\delta (  \delta - 2) v^4 v_1 + v (u - 3v^2) (u + \delta v^2)^2,
\end{array}\right.\tag{$\rm _0L_2$} \label{L_3pp}
\end{align}
where $\delta\in\C\setminus\{1\}$;

\underline{$W(u)=1,\ W(v)=1$}:
\begin{align}
&\left\{ \begin{array}{l}
u_t=-2 v v_2 + v_1^2 - 2  v^2 u_1 - 2 v^4,\\
v_t=v_3 + v u_2 + 2 u_1 v_1 + 4 v^2 v_1;
\end{array}\right.\tag{$\rm  _0K_1$} \label{A_8p}
\\[4mm]
&\left\{ \begin{array}{l}
u_t=-2 v v_2 + v_1^2 + 4  v^2 u_1 + 2 u v v_1 - 
 6 v^2 v_1,\\
v_t=v_3 - 2 v u_2 - 3 (u - v) v_2 + 6 v_1^2 - 5 u_1 v_1 + 
 4 u v u_1 + 2 u (u - 3 v) v_1;
\end{array}\right.\tag{$\rm _0K_2$} \label{A_6p}
\end{align}
\item[{\rm\bf iii.}] ${\rm\bf Case\ 10}$,   type ${\rm _0M}$.
\medskip

\underline{$W(u)=1,\ W(v)=1$}:
\begin{align}
&\left\{ \begin{array}{l}
u_t=2 (u + v) v_2 - v_1^2 + 4 (u + v)^2 u_1 + 
 2 (3 u^2 - 6 u v - 5 v^2) v_1 + (u - v)^2 (3 u + v) (u + 3 v),\\
v_t=v_3 + 2 (u + v) u_2 + 6 (u - v) v_2 + 2 u_1^2 - 
 9 v_1^2 + 6 u_1 v_1 + 
 2 (5 u^2 + 6 u v - 3 v^2) u_1 \\ \qquad + 
 8 (u^2 - 4 u v + v^2) v_1 + (u - v)^2 (3 u + v) (u + 3 v);
\end{array}\right.\tag{$\rm _0M_1$} \label{0M1}
\end{align}

\underline{$W(u)=\frac{1}{2},\ W(v)=\frac{1}{2}$}:
\begin{align}
&\left\{ \begin{array}{l}
u_t=u v_1^2 - 2 u v v_2 - 4 u^2 v^2 u_1 - 
 2 u v (u^2 + 3 v^2) v_1 - 
 u v^2 (u^2 + v^2)^2,\\
v_t=v_3 + 2 u v u_2 + 3 (u^2 + v^2) v_2 + 
 6 u u_1 v_1 + 9 v v_1^2 + 2 v u_1^2 + 
 2 u v (3 u^2 + v^2) u_1\\ \qquad + (3 u^2 + v^2) (u^2 + 3 v^2) v_1 + 
 u^2 v (u^2 + v^2)^2.
\end{array}\right.\tag{$\rm _0M_2$} \label{L_4pp}
\end{align}
\end{enumerate}
\end{The}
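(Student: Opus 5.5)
The plan follows the procedure of Example~\ref{example22}. The classification is carried out over all admissible weight pairs. With $\lambda_1=0$, $\lambda_2=1$ and $n=3$, the ratio $\lambda_2/\lambda_1$ is infinite. In Table~\ref{table2} this can only occur when $1+q^3=0$ with $(1+q)\ne0$, which gives $q=e^{\frac{\pi\i}{3}}$ in Case~13 (or one of its conjugate choices, equivalent up to rescaling), or in Case~10, where the ratio is unconstrained. The cases with $g\neq1$ (Cases 11, 12, 14, 15) are excluded for $n=3$, because $3\notin S$. For either admissible case, Theorem~\ref{bbt} together with Theorem~\ref{thmmn} reduces integrability to the existence of a single fifth-order symmetry. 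In Case~13 this symmetry has $\mu_1=1$, $\mu_2=-9$. In Case~10 it has $\mu_1=1$ and $\mu_2=\mu$ arbitrary, with $\mu\neq0$ and $\mu\ne1$ as Condition~(${\rm V'}$) and $\Omega_1\neq\Omega_2$ require.

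\textbf{Step 1: admissible weights.} Condition~(III) forces the system to be non-triangular, and Theorem~\ref{noqct} and Corollary~\ref{triang} then force non-triangular quadratic or cubic terms. Balancing $W(u_t)=W(u)+3$ against monomials such as $v v_1$, $v^2$, $u v_1$ and $u_1 v_2$ gives a finite list of positive rational weight pairs:
\[
\Bigl(2,2\Bigr),\ \Bigl(2,1\Bigr),\ \Bigl(1,2\Bigr),\ \Bigl(1,1\Bigr),\ \Bigl(1,\tfrac32\Bigr),\ \Bigl(1,\tfrac12\Bigr),\ \Bigl(\tfrac12,\tfrac12\Bigr),
\]
possibly together with a few others that are eliminated immediately because they yield only triangular families. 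For each pair I will write the most general homogeneous $F_1$, $F_2$ with undetermined coefficients. Here $F_1$ contains no $u_3$ term and is of order at most~$2$.

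\textbf{Step 2: polynomiality conditions.} I will apply Theorem~\ref{symsys} degree by degree. At quadratic order, the conditions \eqref{A1}--\eqref{A3} with $\omega_1=0$ give $G^{\omega_1}_{2,0}=0$. So $a^1_{2,0}$ must vanish, and in general every term of $F_1$ involving only $u$ must vanish, since otherwise $[\bff^0,\cdot]$ cannot cancel it. The remaining ratios must be polynomial, which fixes many coefficients. This is consistent with the explicit $A^i_{j,k}$ already displayed in Example~\ref{example22} for $q=e^{\frac{\pi\i}{3}}$. I will then compute $R^i_{j,k}$ at cubic and quartic degree, and higher as the weights demand, and solve the resulting algebraic equations. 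At each branch I will discard triangular solutions using Corollary~\ref{triang}. I will also discard solutions that admit lower-order symmetries (Condition~(IV)); the $\delta=1$ member of the $_0L_2$ family, which has a second-order symmetry as in \eqref{2ndsym}, is an example. The remaining freedom is then normalised by the scalings \eqref{scal}.

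\textbf{Step 3: sufficiency and the main obstacle.} For each surviving system I must verify that a genuine fifth-order symmetry exists, meaning that all $A^i_{j,k}$ are polynomial. The recursion terminates, because homogeneity with positive weights bounds the total degree. Theorem~\ref{bbt}, combined with the infinite $2$-approximate algebra given by Theorem~\ref{thmmn}, then yields integrability. This step should be adapted to $\lambda_1=0$, and I will check that relative $2$-primality still holds in that setting.

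The main obstacle is the computational case analysis in the low-weight cases $(1,1)$, $(1,\tfrac12)$ and $(\tfrac12,\tfrac12)$. There the number of free coefficients is large, and the cubic and quartic conditions branch heavily. The parameter $\delta$ in $_0L_2$ also survives and must be tracked. To control this, I will first impose the conditions coming from $a^1_{0,k}$ and $a^2_{k,0}$, which carry the highest degrees. Next I will use the Case~13 factor structure $(x-qy)(y-qx)$ to solve linear relations between coefficients. Only after that will I attack the nonlinear cubic and quartic equations, using Gröbner bases where necessary.
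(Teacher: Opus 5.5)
Your overall route is the paper's procedure from Example~\ref{example22}. Only Case~13 with $q=e^{\pi{\rm i}/3}$, $\mu_2/\mu_1=-9$, and Case~10 survive; you then impose a fifth-order symmetry weight by weight and discard triangular systems and systems with lower-order symmetries. However, Step~2 contains a false deduction that would make your list incomplete.

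Because $\omega_1=0$, one has $G^{\omega_1}_{j,0}\equiv 0$ for every $j$. So the $\cR^{j,0}$-slot of the first component of $[\bff,\bg]=0$ does not determine $A^1_{j,0}$. Instead it is the constraint $G^{\Omega_1}_{j,0}\,a^1_{j,0}+R^1_{j,0}=0$ on the system. Since $R^1_{2,0}=0$, this gives $a^1_{2,0}=0$. For $j\ge 3$, however, the remainder is generally nonzero; for example, $R^1_{3,0}$ contains the pairing of $A^1_{1,1}$ with $a^2_{2,0}$ and of $a^1_{1,1}$ with $A^2_{2,0}$. Hence $a^1_{j,0}=-R^1_{j,0}/G^{\Omega_1}_{j,0}$ need not vanish: the nonlinear part of the bracket cancels what $[\bff^0,\cdot]$ cannot.

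The target list confirms this. System \eqref{0M1} has $4u^2u_1$ and $3u^4$ in $u_t$. System \eqref{A_5p} has $3u^4$ in $u_t$, coming from $3(u-v)^4$; the substitution $\hat u=2u_1$, $\hat v=2\sqrt3\,v_1+2\sqrt3(u-v)^2$ from \eqref{A_1p} confirms this term is genuine. Imposing your rule that every pure-$u$ term of $F_1$ vanishes would delete both systems.

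The same degeneracy affects the tools you invoke.

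\begin{itemize}
\item Theorem~\ref{symsys} assumes both $L_i$ have order greater than $1$. Here the coefficients $A^1_{j,0}$ of the symmetry are not fixed by its linear part. They must be carried as unknowns into the higher remainders; for instance, $A^1_{2,0}$ already enters $R^1_{1,2}$ through $a^1_{0,2}$.
\item $\bff^0=(0,v_3)^T$ annihilates every $(P(u,u_1,\dots),0)^T$, so $\bff^0$ is not nonlinear injective. Theorem~\ref{themapp} therefore cannot be applied with $K=\bff$ as in the proof of Theorem~\ref{bbt}. This also undermines the necessity direction you rely on, namely that integrability forces a genuine fifth-order symmetry. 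The argument has to be rerun around members of the hierarchy with $\mu_1\mu_2\neq0$, which is what Condition~(${\rm V'}$) supplies. Re-checking relative $2$-primality alone, as you propose in Step~3, does not repair this.
\item Theorem~\ref{noqct} and Corollary~\ref{triang}, which you use in Step~1 to restrict the weights, rest on Lemma~\ref{FB1} with all coefficients nonzero. For $\lambda_1=0$, for example, $G^{\omega_2}_{j,0}=(\xi_1+\dots+\xi_j)^3$, so these results need to be re-derived before they can be used.
\end{itemize}
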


\begin{Rem} System \eqref{L_3pp} with $\delta=1$ is excluded, since in this case the system admits a symmetry of order $2$:
\begin{equation*}
 \left\{ \begin{array}{l}  u_{\tau}=u_2+6 v_1^2+4 u u_1+4 u_1 v^2+4 u v v_1+12 v^3 v_1-2 u^2v^2 +4 u v^4+6v^6 ,
 \\
 v_{\tau}=3 v_2+2 u_1 v +6 u v_1+6 v^2 v_1+ u^2v-2 u v^3 -3 v^5 .
  \end{array}\right. 
\end{equation*}
Therefore, condition~(IV) is not satisfied.
\end{Rem}
In \cite{KuleminMeshkov1997}, the authors attempted to classify completely integrable systems of the form
\begin{eqnarray*}
 \left\{ \begin{array}{l}
u_t=u_3+f(u, v, u_1, v_1, u_2, v_2),\\
v_t=g(u, v, u_1,v_1).
\end{array}\right.
\end{eqnarray*}
Their classification contains systems~\eqref{A_1p} and~\eqref{A_3p}. However, system~\eqref{A_5p} is absent from their list.

\section{Lax representations and transformations}\label{LaxTrans}

The extensive list of equations presented in Theorems~\ref{thmA}--\ref{thmApA} can be divided into two classes, corresponding to the S-- and C--integrable systems. 
Systems labelled by the letters $A,B,D,G,K$ and $_0A$ in Theorems ~\ref{thmA}--\ref{thmApA} are S--integrable, while all remaining systems belong to the class of C--integrable systems.

Drinfeld and Sokolov established a fundamental connection between affine Lie algebras and integrable systems of KdV type \cite{DS}. They constructed canonical Lax operators associated with affine Lie algebras of rank~$n$, thereby generating integrable hierarchies of $n$ coupled KdV-type equations. In the rank-two case, they presented four examples of $L$-operators, corresponding to the affine Lie algebras $B_2^{(1)}$ and $A_3^{(2)}$.


In Section~\ref{CanLax}, we present the complete list of canonical Lax pairs and the corresponding canonical KdV-type systems associated with all affine Lie algebras of rank two. Section~\ref{Miura} is devoted to the relationship between these canonical systems and the systems appearing in our classification list. For the C--integrable systems, the differential substitutions that reduce them to linear--triangular form are presented in Section~\ref{diffsub}. Finally, in Section~\ref{sec64} we provide classification results and linearising transformations for the integrable systems of order $5$ corresponding to cases $16, 18, 22$ and $23$ in Table \ref{table2}.

\subsection{Canonical Lax representations corresponding to affine Lie algebras of rank $2$}\label{CanLax}

In this section, we present canonical Lax operators $\cL^{\mathrm{can}}$ associated with each affine Lie algebra of rank two and with each choice of a vertex of its Dynkin diagram, constructed according to the Drinfeld--Sokolov procedure \cite{DS}. For each operator $\cL^{\mathrm{can}}$, we also present the corresponding operator $\cA^{\mathrm{can}}$ forming the Lax pair for the KdV-type system of the lowest-order in the associated hierarchy. These systems are determined by the compatibility condition
$ 
[\cL^{\mathrm{can}},\cA^{\mathrm{can}}]=0.
$ 

The operators $\cL^{\mathrm{can}}$ and $ \cA^{\mathrm{can}}$ are expressed in terms of the canonical generators
$ 
\{e_i,f_i,h_i\}_{i=0,1,2}
$ 
of the affine Lie algebras
$ 
A_3^{(2)},\,  A_4^{(2)},\,  B_2^{(1)},\,  G_2^{(1)},\,  D_4^{(3)},
$ 
whose matrix representations are given in Appendix~\ref{RepAlg}. Throughout this section, we omit the superscript ``$\mathrm{can}$'' and write $\cL$ and $\cA$ in place of $\cL^{\mathrm{can}}$ and $\cA^{\mathrm{can}}$, respectively.
\medskip

\underline{Lie algebra $A_3^{(2)}$, vertex $c_2$}.

The Lax pair
\begin{eqnarray*}
\cL&=&D_x+p\,f_0+q\,f_1+e_0+e_1+e_2,\\
\cA&=&D_t+\left(\frac{1}{2} p_2 - \frac{1}{2} q_2 - p^2 + p q\right) f_0 + \left(-\frac{1}{2} p_2 + \frac{1}{2} q_2 + p q - 
    q^2\right) f_1 + (-\frac{1}{2} p_1 + \frac{1}{2} q_1) h_0 + \left(\frac{1}{2} p_1 - \frac{1}{2} q_1\right) h_1\\
    &&+ (q - 
    p) e_0 + (p - q) e_1 - 
 \frac{1}{2} \left([e_0, [e_0, e_2]] - [e_0, [e_1, e_2]] + [e_1, [e_1, e_2]]\right)
\end{eqnarray*}
yields the canonical  system 
\begin{equation}\label{DSA3_2_c2}\tag{$A_3^{(2)};c_2$}
    \left\{\begin{array}{l}
p_t=\frac{1}{2} p_3 - \frac{1}{2} q_3 - 3 p p_1 + q p_1 + 2 p q_1,\\ [4pt]
q_t=-\frac{1}{2} p_3 + \frac{1}{2} q_3 + 2 q p_1 + p q_1 - 3 q q_1.
\end{array}\right.
\end{equation}

\underline{Lie algebra $A_3^{(2)}$, vertex $c_0$}.

The Lax pair
\begin{eqnarray*}
\cL&=&D_x+p\,f_1+q\,[f_1,[f_1,f_2]]+e_0+e_1+e_2,\\
\cA&=&D_t+\left(\frac{1}{4} p_4 - \frac{1}{2} q_2 - \frac{1}{2} p p_2\right) [f_1, [f_1, f_2]] + \left(-\frac{1}{2} p_3 + q_1\right) [
   f_1, f_2] + (p_2 - 2 q - p^2) f_1 + (p_2 - 2 q) f_2 \\
   & +& \frac{1}{2} p_1 h_0- 
 \frac{1}{2} p_1 h_1 + p e_0 - p e_1 - 
 \frac{1}{2} \left([e_0, [e_0, e_2]] - [e_0, [e_1, e_2]] + [e_1, [e_1, e_2]]\right)
\end{eqnarray*}
yields the canonical  system 
\begin{equation}\label{DSA3_2_c0}\tag{$A_3^{(2)};c_0$}
    \left\{\begin{array}{l}
p_t=\frac{3}{2} p_3 - 3 q_1 - 3 p p_1,\\ [4pt]
q_t=\frac{1}{4} p_5 - \frac{1}{2} q_3 - p p_3 - \frac{1}{2} p_1 p_2 - 2 q p_1 + p q_1.
    \end{array}\right.
\end{equation}
The choice of vertex $c_1$ results in the same system \eqref{DSA3_2_c0}.

\underline{Lie algebra $A_4^{(2)}$, vertex $c_1$}. 

The Lax pair
\begin{eqnarray*}
\cL&=&D_x+p\,f_0+q\,f_2+e_0+e_1+e_2,\\
\cA&=&D_t+(2 p_2 - 12 q_2 - 4 p^2 + 24 p q) f_0 + (-3 p_2 + 8 q_2 + 6 p q - 
    16 q^2) f_2 + (-8 p_1 + 28 q_1) h_0 + (-6 p_1 + 16 q_1) h_1\\
    &&+ 4( 
    6q-p) e_0 + 4(   q-p) e_1 + (6 p - 16 q) e_2 - 
 5 \left( [e_1, [e_0, e_1]] - 2[e_2, [e_0, e_1]] - 2[e_2, [e_1, e_2]]\right)
\end{eqnarray*}
yields the canonical system 
\begin{align}\label{DSA4_2_c1}\tag{$A_4^{(2)};c_1$}
&\left\{ \begin{array}{l}
p_t=2 p_3 - 12 q_3 - 12 p p_1 + 24 q p_1 + 48 p q_1,\\ [4pt]
q_t=-3 p_3 + 8 q_3 + 12 q p_1 + 6 p q_1 - 48 q q_1
\end{array}\right.   
\end{align}

\underline{Lie algebra $A_4^{(2)}$, vertex $c_2$}. 

The Lax pair
\begin{eqnarray*}
\cL&=&D_x+p\,f_0+q\,[f_1,[f_0,f_1]]+e_0+e_1+e_2,\\
\cA&=&D_t+(3 p_4 - 10 q_2 - 3 p p_2 + 6 p q)  [f_1, 
    [f_0, f_1]] + (20 q_1 - 6 p_3)  [f_0, 
   f_1] + (8 p_2 - 20 q - 4 p^2) f_0 + (6 p_2 - 20 q) f_1\\
   &&- 8 p_1 h_0 - 
 6 p_1 h_1 - 4 p e_0 - 4 p e_1 + 6 p e_2 - 
5 \left( [e_1, [e_0, e_1]] - 2[e_2, [e_0, e_1]] - 2[e_2, [e_1, e_2]]\right)
\end{eqnarray*}
yields the canonical  system 
\begin{equation}\label{DSA4_2_c2}\tag{$A_4^{(2)};c_2$}
    \left\{\begin{array}{l}
p_t=20 p_3 - 60 q_1 - 12 p p_1,\\ [4pt]
q_t=3 p_5 - 10 q_3 - 3 p p_3 - 3 p_1 p_2 - 6 p_1 q + 
 6 p q_1.
    \end{array}\right.
\end{equation}

\underline{Lie algebra $A_4^{(2)}$, vertex $c_0$}. 

The Lax pair
\begin{eqnarray*}
\cL&=&D_x+p\,f_1+q\,[f_2,[f_1,f_2]]+e_0+e_1+e_2,\\
\cA&=&D_t+\left(-\frac{3}{2} p_4 + 20 q_2 + \frac{3}{2} p p_2 - 18 p q\right) [f_2, 
   [f_1, f_2]] + (3 p_3 - 40 q_1) [f_1, 
   f_2] + (-4 p_2 + 40 q + 2 p^2) f_1\\
   &&+ (-3 p_2 + 40 q) f_2 - 6 p_1 h_0 - 
 2 p_1 h_1 - 8 p e_0 + 2 p e_1 + 2 p e_2 - 
5 \left( [e_1, [e_0, e_1]] - 2[e_2, [e_0, e_1]] - 2[e_2, [e_1, e_2]]\right)
\end{eqnarray*}
yields the canonical  system 
\begin{equation}\label{DSA4_2_c0}\tag{$A_4^{(2)};c_0$}
    \left\{\begin{array}{l}
p_t=-10 p_3 + 120 q_1 + 6 p p_1,\\ [4pt]
q_t=-\frac{3}{2} p_5 + 20 q_3 + \frac{3}{2} p p_3 + \frac{3}{2} p_1 p_2 - 
 12 p_1 q - 18 p q_1.
    \end{array}\right.
\end{equation}

\underline{Lie algebra $B_2^{(1)}$, vertex $c_2$}. 

The Lax pair
\begin{eqnarray*}
\cL&=&D_x+p\,f_0+q\,f_1+e_0+e_1+e_2,\\
\cA&=&D_t+\left(-\frac{1}{2} p_2 + \frac{3}{2} q_2 - 3 p q + p^2\right) f_0 + \left(\frac{3}{2} p_2 - \frac{1}{2} q_2 - 3 p q + 
    q^2\right) f_1 + \left(\frac{1}{2} p_1 - \frac{3}{2} q_1\right) h_0 + \left(-\frac{3}{2} p_1 + \frac{1}{2} q_1\right) h_1\\&& + (p - 
    3 q) e_0 + (q - 3 p) e_1 + (p + q) e_2 -2 \left(2 [e_0, [e_1, e_2]] -
     [e_2, [e_0, e_2]] - [e_2, [e_1, e_2]]\right)
\end{eqnarray*}
yields the canonical  system 
\begin{equation}\label{DSB2_1_c2}\tag{$B_2^{(1)};c_2$}
    \left\{\begin{array}{l}
p_t=-\frac{1}{2} p_3 + \frac{3}{2} q_3 + 3 p p_1 - 3 q p_1 - 6 p q_1,\\ [4pt]
q_t=\frac{3}{2} p_3 - \frac{1}{2} q_3 - 6 q p_1 - 3 p q_1 + 3 q q_1.
    \end{array}\right.
\end{equation}

\underline{Lie algebra $B_2^{(1)}$, vertex $c_0$}.

The Lax pair
\begin{eqnarray*}
\cL&=&D_x+p\,f_1+q\,[f_2,[f_1,f_2]]+e_0+e_1+e_2,\\
\cA&=&D_t+\left(-\frac{3}{4} p_4 + 4 q_2 + \frac{3}{4} p p_2 - 3 p q\right)  [f_2, 
     [f_1, f_2]] + \left(\frac{3}{2} p_3 - 8 q_1\right)  [f_1, 
    f_2] + (-2 p_2 + 8 q + p^2) f_1\\
    &&+ \left(-\frac{3}{2} p_2 + 8 q\right) f_2 - \frac{3}{2} p_1 h_0 + 
  \frac{1}{2} p_1 h_1 - 3 p e_0 + p e_1 + p e_2 -2 \left(2 [e_0, [e_1, e_2]] -
     [e_2, [e_0, e_2]] - [e_2, [e_1, e_2]]\right)
\end{eqnarray*}
yields the canonical $(B_2^{(1)},c_0)$ system 
\begin{equation}\label{DSB2_1_c0}\tag{$B_2^{(1)};c_0$}
    \left\{\begin{array}{l}
p_t=-5 p_3 + 24 q_1 + 3 p p_1,\\ [4pt]
q_t=-\frac{3}{4} p_5 + 4 q_3 + \frac{3}{4} p p_3 + \frac{3}{4} p_1 p_2 - 
 3 p q_1.
    \end{array}\right.
\end{equation}
The choice of vertex $c_1$ results in the same system \eqref{DSB2_1_c0}.

\underline{Lie algebra $D_4^{(3)}$, vertex $c_1$}. 

The Lax pair
\begin{eqnarray*}
\cL&=& D_x+p\,f_0+q\,f_2+e_0+e_1+e_2,\\
\cA&=&D_t+a_1f_0+a_2f_2+a_3h_0+a_4h_1+a_5e_0+a_6e_1+a_7e_2+a_8[e_0,e_1]+a_9[e_1,e_2]+a_{10}[e_2,[e_0,e_1]]\\
&&+a_{11}[e_1,[e_1,e_2]]+6  [e_0,  [e_0,  [e_1,  [e_1, e_2]]]] - 
 2  [e_0,  [e_1,  [e_1,  [e_1, e_2]]]] + 
 2  [e_2,  [e_1,  [e_1, [e_1, e_2]]]],
\end{eqnarray*}
where
\begin{eqnarray*}
a_1&=&\frac{33}{2} p_4 - \frac{5}{2} q_4 - \frac{69}{2} p p_2 - \frac{25}{2} q p_2 - \frac{15}{2} p q_2 + \frac{5}{2} q q_2 - 
 \frac{3}{2} p_1^2 - 25 p_1 q_1 + \frac{5}{2} q_1^2 + \frac{3}{2} p^3 + 25 p^2 q - \frac{5}{2} p q^2,\\
a_2&=&-\frac{15}{2} p_4 + \frac{3}{2} q_4 - \frac{75}{2} p p_2 + \frac{45}{2} q p_2 + \frac{15}{2} p q_2 - \frac{9}{2} q q_2 - 
 \frac{75}{2} p_1^2 + 15 p_1 q_1 - \frac{3}{2} q_1^2 + \frac{75}{2} p^2 q - 15 p q^2 + \frac{3}{2} q^3,\\
a_3&=&-19 p_3 + 3 q_3 - 11 p p_1 + 15 q p_1 + 15 p q_1 - 3 q q_1,\\
a_4&=&-5 p_3 + q_3 - 25 p p_1 + 5 q p_1 + 5 p q_1 - q q_1,\\
a_5&=&-33 p_2 + 5 q_2 + \frac{3}{2} p^2 + 25 p q - \frac{5}{2} q^2,\\
a_6&=&9 p_2 - q_2 + \frac{3}{2} p^2 - 11 p q + \frac{3}{2} q^2,\\
a_7&=&15 p_2 - 3 q_2 + \frac{75}{2} p^2 - 15 p q + \frac{3}{2} q^2,\\
a_8&=&-42 p_1 + 6 q_1,\\
a_9&=&-6 p_1 + 2 q_1,\\
a_{10}&=&36 p - 4 q,\\
a_{11}&=&6 p - 2 q
\end{eqnarray*}
yields the canonical  system 
\begin{equation}\label{DSD4_3_c1}\tag{$D_4^{(3)};c_1$}
    \left\{\begin{array}{l}
p_t=\frac{33}{2} p_5 - \frac{5}{2} q_5 - \frac{135}{2} p p_3 - \frac{25}{2} q p_3 - \frac{5}{2} p q_3 + \frac{5}{2} q q_3 - 
 \frac{75}{2} p_1 p_2 - \frac{75}{2} q_1 p_2 - \frac{65}{2} p_1 q_2 + \frac{15}{2} q_1 q_2\\ [3pt] \qquad + \frac{15}{2} p^2 p_1+ 
 75 p q p_1 - \frac{5}{2} q^2 p_1 + 50 p^2 q_1 - 10 p q q_1,\\ [4pt]
q_t=-\frac{15}{2} p_5 + \frac{3}{2} q_5 - \frac{75}{2} p p_3 + \frac{75}{2} q p_3 + \frac{15}{2} p q_3 - \frac{15}{2} q q_3 - 
 \frac{225}{2} p_1 p_2 + \frac{75}{2} q_1 p_2 + \frac{45}{2} p_1 q_2 - \frac{15}{2} q_1 q_2 \\ [3pt] \qquad+ 150 p q p_1 - 
 30 q^2 p_1 + \frac{75}{2} p^2 q_1 - 45 p q q_1 + \frac{15}{2} q^2 q_1.
    \end{array}\right.
\end{equation}

\underline{Lie algebra $D_4^{(3)}$, vertex $c_2$}.

The Lax pair
\begin{eqnarray*}
\cL&=&D_x+p\,f_0+q\,[f_0,f_1]+e_0+e_1+e_2,\\
\cA&=&D_t+a_1[f_0,f_1]+a_2f_0+a_3f_1+a_4h_0+a_5h_1+a_6e_0+a_7e_1+a_8e_2+a_9[e_0,e_1]+a_{10}[e_1,e_2]\\ &&+a_{11}[e_0,[e_1,e_2]]+a_{12}[e_1,[e_1,e_2]]+6  [e_0,  [e_0,  [e_1,  [e_1, e_2]]]] - 
 2  [e_0,  [e_1,  [e_1,  [e_1, e_2]]]] + 
 2  [e_2,  [e_1,  [e_1, [e_1, e_2]]]],
\end{eqnarray*}
where
\begin{eqnarray*}
a_1&=&-5 p_5 - 6 q_4 - 25 p p_3 - 36 p q_2 - 75 p_1 p_2 - 111 q p_2 - 114 p_1 q_1 - 
 150 q q_1 + \frac{3}{2} p^2 q,\\
a_2&=&19 p_4 + 24 q_3 - 22 p p_2 + 11 p_1^2 - 24 p q_1 + 60 q p_1 + 54 q^2 + 
 \frac{3}{2} p^3,\\
a_3&=&5 p_4 + 6 q_3 + 25 p p_2 + 25 p_1^2 + 36 p q_1 + 78 q p_1 + 54 q^2,\\
a_4&=&-19 p_3 - 24 q_2 - 11 p p_1 - 18 p q,\\
a_5&=&-5 p_3 - 6 q_2 - 25 p p_1 - 36 p q,\\
a_6&=&-33 p_2 - 42 q_1 + \frac{3}{2} p^2,\\
a_7&=&9 p_2 + 12 q_1 + \frac{3}{2} p^2,\\
a_8&=&15 p_2 + 18 q_1 + \frac{75}{2} p^2,\\
a_9&=&-42 p_1 - 54 q,\qquad
a_{10}=-6 p_1 - 6 q,\qquad
a_{11}=-36 p,\qquad
a_{12}=6 p
\end{eqnarray*}
yields the canonical  system 
\begin{equation}\label{DSD4_3_c2}\tag{$D_4^{(3)};c_2$}
    \left\{\begin{array}{l}
    p_t=24 p_5 + 30 q_4 - 30 p p_3 + 75 p_1 p_2 - 30 p q_2 + 270 q q_1 + 180 q p_2 + 
 150 p_1 q_1 + \frac{15}{2} p^2 p_1,\\ [4pt]
 q_t=-5 p_6 - 6 q_5 - 20 p p_4 - 100 p_1 p_3 - 75 p_2^2 - 30 p q_3 - 135 q p_3 -150 p_1 q_2 - 225 q_1 p_2 - 180 q q_2\\ [3pt]
 \quad - 150 q_1^2 + 25 p^2 p_2+ 
 25 p p_1^2 + 45 p q p_1 + \frac{75}{2} p^2 q_1.
    \end{array}
    \right.
\end{equation}
\newpage
\underline{Lie algebra $D_4^{(3)}$, vertex $c_0$}.

The Lax pair
\begin{eqnarray*}
\cL&=&L=D_x+p\,f_1+q\, [f_2,[f_1,[f_1,[f_1,f_2]]]]+e_0+e_1+e_2,\\ 
\cA&=&D_t+a_1[f_2, [f_1, [f_1, [f_1, f_2]]]]+a_2[f_1, [f_1, [f_1, f_2]]]+a_3[f_1,[f_1,f_2]]+a_4[f_1,f_2]+a_5f_1+a_6f_2\\ 
&&+a_7h_0+a_8h_1+a_9e_0+a_{10}e_1+a_{11}e_2+a_{12}[e_0,e_1]+a_{13}[e_0,[e_1,e_2]]\\
&&+6  [e_0,  [e_0,  [e_1,  [e_1, e_2]]]] - 
 2  [e_0,  [e_1,  [e_1,  [e_1, e_2]]]] + 
 2  [e_2,  [e_1,  [e_1, [e_1, e_2]]]],
\end{eqnarray*}
where
\begin{eqnarray*}
a_1&=&\frac{5}{12} p_8 + 4 q_4 - \frac{55}{12} p p_6 - \frac{20}{3} p_1 p_5 - \frac{15}{2} p_2 p_4 - \frac{25}{6} p_3^2 - 
 40 p q_2 - 40 p_1 q_1 - 15 q p_2 + \frac{95}{12} p^2 p_4 + \frac{125}{6} p p_1 p_3\\
 &&+ 
 \frac{55}{4} p p_2^2 + \frac{55}{4} p_1^2 p_2 + \frac{75}{2} p^2 q - \frac{15}{4} p^2 p_1^2 - \frac{15}{4} p^3 p_2,\\
a_2&=&-\frac{5}{12} p_7 - 4 q_3 + \frac{10}{3} p p_5 + \frac{10}{3} p_1 p_4 + \frac{25}{6} p_2 p_3 + 28 p q_1 + 
 12 q p_1 - \frac{5}{4} p_1^3 - 10 p p_1 p_2 - \frac{35}{12} p^2 p_3,\\
a_3&=&\frac{5}{4} p_6 + 12 q_2 - 5 p p_4 - 5 p_1 p_3 - \frac{15}{4} p_2^2 - 36 p q + 
 \frac{15}{4} p^2 p_2 + \frac{15}{4} p p_1^2,\\
a_4&=&-5 p_5 - 48 q_1 + 5 p p_3 + 15 p_1 p_2,\\
a_5&=&9 p_4 + 72 q - 12 p p_2 - 9 p_1^2 + \frac{3}{2} p^3,\\
a_6&=&15 p_4 + 144 q - 15 p p_2 - 15 p_1^2,\\
a_7&=&5 p_3 - 5 p p_1,\\
a_8&=&p_3 - p p_1,\\
a_9&=&9 p_2 + \frac{3}{2} p^2,\\
a_{10}&=&-3 p_2 + \frac{3}{2} p^2,\\
a_{11}&=&-3 p_2 + \frac{3}{2} p^2,\\
a_{12}&=&12 p_1,\\
a_{13}&=&12 p
\end{eqnarray*}
yields the canonical  system 
\begin{equation}\label{DSD4_3_c0}\tag{$D_4^{(3)};c_0$}
    \left\{\begin{array}{l}
        p_t=14 p_5 + 120 q_1 - 20 p p_3 - 45 p_1 p_2 + \frac{15}{2} p^2 p_1 , \\ [4pt]
        q_t= \frac{5}{12} p_9 + 4 q_5 - 30 p_3 q - 40 p q_3 - 
 80 p_1 q_2 - 55 p_2 q_1 - \frac{45}{4} p_1 p_6 - 
 \frac{85}{6} p_2 p_5 - \frac{55}{12} p p_7 - \frac{95}{6} p_3 p_4\\ [3pt] \quad + 
 90 p p_1 q  + \frac{75}{2} p^2 q_1 + \frac{95}{12} p^2 p_5 + 
 \frac{110}{3} p p_1 p_4 + \frac{145}{3} p p_2 p_3 + 
 \frac{415}{12} p_1^2 p_3 + \frac{165}{4} p_1 p_2^2 - \frac{15}{4} p^3 p_3\\ [3pt] \quad - 
 \frac{75}{4} p^2 p_1 p_2 - \frac{15}{2} p p_1^3.
    \end{array}  \right.
\end{equation}

\underline{Lie algebra $G_2^{(1)}$, vertex $c_1$}.

The Lax pair
\begin{eqnarray*}
\cL&=&D_x+p\,f_0+q\,f_2+e_0+e_1+e_2,\\
\cA&=&D_t+a_1f_0+a_2f_2+a_3h_0+a_4h_1+a_5e_0+a_6e_1+a_7e_2+a_8[e_0,e_1]+a_9[e_1,e_2]+a_{10}[e_2,[e_0,e_1]]\\&&+a_{11}[e_2,[e_1,e_2]]-9 [e_1, [e_0, [e_2, [e_1, e_2]]]] + 
  6 [e_0, [e_2, [e_2, [e_1, e_2]]]] - 
  3 [e_1, [e_2, [e_2, [e_1, e_2]]]],
\end{eqnarray*}
where
\begin{eqnarray*}
a_1&=&-\frac{13}{2} p_4 + \frac{15}{2} q_4 + 12 p p_2 + 15 q p_2 - 45 q q_2 - p_1^2 - 45 q_1^2 + 
 30 p_1 q_1 + p^3 - 30 p^2 q + 45 p q^2,\\
a_2&=&\frac{5}{2} p_4 - \frac{3}{2} q_4 + 5 p p_2 - 20 q p_2 - 15 p q_2 + 12 q q_2 + 5 p_1^2 - 
 30 p_1 q_1 + 9 q_1^2 - 5 p^2 q + 30 p q^2 - 9 q^3,\\
a_3&=&9 p_3 - 9 q_3 + 6 p p_1 - 30 q p_1 - 30 p q_1 + 54 q q_1,\\
a_4&=&5 p_3 - 3 q_3 + 10 p p_1 - 30 q p_1 - 30 p q_1 + 18 q q_1,\\
a_5&=&13 p_2 - 15 q_2 + p^2 - 30 p q + 45 q^2,
\end{eqnarray*}
\begin{eqnarray*}
a_6&=&p_2 + 3 q_2 + p^2 - 12 p q - 9 q^2,\qquad
a_7=-5 p_2 + 3 q_2 - 5 p^2 + 30 p q - 9 q^2,\qquad
a_8=12 p_1 - 18 q_1,\\
a_9&=&6 p_1,\qquad\qquad\qquad\qquad\qquad\qquad
a_{10}=-6 p + 18 q,\qquad\qquad\qquad\qquad\qquad\,\,\,
a_{11}=-6 p
\end{eqnarray*}
 yields the canonical  system 
\begin{equation}\label{DSG2_1_c1}\tag{$G_2^{(1)};c_1$}
    \left\{\begin{array}{l}
        p_t=-\frac{13}{2} p_5 + \frac{15}{2} q_5 + 25 p p_3 + 15 q p_3 - 15 p q_3 - 45 q q_3 + 
 10 p_1 p_2 \\ [3pt] \quad + 45 q_1 p_2 + 30 p_1 q_2 - 135 q_1 q_2 + 5 p^2 p_1- 90 p q p_1 + 
 45 q^2 p_1 - 60 p^2 q_1 + 180 p q q_1, \\ [4pt]
        q_t=\frac{5}{2} p_5 - \frac{3}{2} q_5 + 5 p p_3 - 25 q p_3 - 15 p q_3 + 15 q q_3 + 15 p_1 p_2 - 
 50 q_1 p_2 \\ [3pt] \quad - 45 p_1 q_2 + 30 q_1 q_2 - 20 p q p_1 + 60 q^2 p_1 - 5 p^2 q_1 + 
 90 p q q_1 - 45 q^2 q_1.
    \end{array}  \right.
\end{equation}

\underline{Lie algebra $G_2^{(1)}$, vertex $c_0$}.

The Lax pair
\begin{eqnarray*}
\cL&=&D_x+p\,f_1+q\,[f_1,[f_2,[f_2,[f_1,f_2]]]]+e_0+e_1+e_2,\\
\cA&=&D_t+a_1[f_1,[f_2,[f_2,[f_1,f_2]]]]+a_2[f_2,[f_2,[f_1,f_2]]]+a_3[f_2,[f_1,f_2]]+a_5[f_1,f_2]+a_4f_1+a_6f_2+a_7h_1\\
&&+a_8e_0+a_9e_1+a_{10}e_2+a_{11}[e_0,e_1]+a_{12}[e_1,e_2]+a_{13}[e_2,[e_0,e_1]]+a_{14}[e_2,[e_1,e_2]]\\
&&-9[e_1, [e_0, [e_2, [e_1, e_2]]]] + 
  6 [e_0, [e_2, [e_2, [e_1, e_2]]]] -
  3 [e_1, [e_2, [e_2, [e_1, e_2]]]],
\end{eqnarray*}
where
\[
\begin{array}{llll}
     a_1=-9 q_4 + q p_2 + 6 p q_2 + 9 p_1 q_1 + p^2 q,\quad& a_5=108 q_1,\quad &
     a_9\,\,=-2 p_2 + p^2,\quad&a_{13}=-6 p\\
     a_2=9 q_3 - 3 q p_1 - 6 p q_1,& a_6=-108 q,&a_{10}=p_2 + p^2,&a_{14}=3p,\\
     a_3=-27 q_2 + 18 p q,&a_7=-p_3 + p p_1,\quad&a_{11}=3p_1,\\
     a_4=p_4 - 216 q - p_1^2 - 3 p p_2 + p^3,&a_8=p_2 + p^2,&a_{12}=3p_1
\end{array}
\]
 yields the canonical  system 
\begin{equation}\label{DSG2_1_c0}\tag{$G_2^{(1)};c_0$}
    \left\{\begin{array}{l}
p_t=p_5 - 540 q_1 - 5 p p_3 - 5 p_1 p_2 + 5 p^2 p_1,\\ [4pt]
q_t=-9 q_5 + 15 p q_3 + 15 p_1 q_2 + 10 q_1 p_2  - 
 5 p^2 q_1.
    \end{array}\right.
\end{equation}

\underline{Lie algebra $G_2^{(1)}$, vertex $c_2$}.

The Lax pair
\begin{eqnarray*}
\cL&=&D_x+pf_0+q\,[f_0,f_1]+e_0+e_1+e_2,\\
\cA&=&D_t+a_1[f_0,f_1]+a_2f_0+a_3f_1+a_4h_0+a_5h_1+a_6e_0+a_7e_1+a_8e_2\\
&&+a_{9}[e_0,e_1]+a_{10}[e_1,e_2]+a_{11}[e_2,[e_0,e_1]]+a_{12}[e_2,[e_1,e_2]]\\
&&-9 [e_1, [e_0, [e_2, [e_1, e_2]]]] + 
  6 [e_0, [e_2, [e_2, [e_1, e_2]]]] -
  3 [e_1, [e_2, [e_2, [e_1, e_2]]]] ,
\end{eqnarray*}
where
\begin{eqnarray*}
a_1&=&5 p_5 + 6 q_4 + 10 p p_3 + 30 p_1 p_2 + 31 q p_2 + 6 p q_2 + 24 p_1 q_1 + 
 30 q q_1 + q p^2,\\
a_2&=&-9 p_4 - 9 q_3 + 7 p p_2 - 6 p_1^2 - 15 q p_1 + 9 p q_1 + p^3 - 9 q^2,\\
a_3&=&-5 p_4 - 6 q_3 - 10 p p_2 - 10 p_1^2 - 18 q p_1 - 6 p q_1 - 9 q^2,\\
a_4&=&9 p_3 + 9 q_2 + 6 p p_1 + 3 p q,\\
a_5&=&5 p_3 + 6 q_2 + 10 p p_1 + 6 p q,\\
a_6&=&13 p_2 + 12 q_1 + p^2,\\
a_7&=&p_2 + 3 q_1 + p^2,\\
a_8&=&-5 p_2 - 6 q_1 - 5 p^2,\\
a_9&=&12 p_1 + 9 q,\\
a_{10}&=&6 p_1 + 9 q,\\
a_{11}&=&-6 p,\\
a_{12}&=&-6 p
\end{eqnarray*}
yields the canonical  system 
\begin{equation}\label{DSG2_1_c2}\tag{$G_2^{(1)};c_2$}
    \left\{\begin{array}{l}
p_t=-14 p_5 - 15 q_4 + 10 p p_3 - 35 p_1 p_2 - 45 q p_2 + 15 p q_2 - 30 p_1 q_1 - 
 45 q q_1 + 5 p^2 p_1,\\ [4pt]
q_t=5 p_6 + 6 q_5 + 5 p p_4 + 40 p_1 p_3 + 30 p_2^2 + 45 q p_3 + 55 q_1 p_2 + 
 30 p_1 q_2 + 45 q q_2 + 30 q_1^2\\ [3pt] \quad - 10 p^2 p_2 - 10 p p_1^2 - 5 p^2 q_1.
    \end{array}\right.
\end{equation}

\subsection{Connections between canonical systems and classification results}\label{Miura}

The integrable systems labelled by the letters $A$, $B$, $D$, $G$, and $_0A$ in Theorems~\ref{thmA}--\ref{thmApA} naturally split into families associated with the affine Lie algebras $A_4^{(2)}$, $B_2^{(1)}$, $D_4^{(3)}$, $G_2^{(1)}$, and $A_3^{(2)}$, respectively, together with a choice of vertex in the corresponding Dynkin diagram. Systems belonging to the same family are related either by linear changes of variables or by Miura-type transformations.

\begin{description}
\item[\underline{System (\ref{A_1}):}] 

System \eqref{A_1}  is related to \eqref{DSA4_2_c1}  by a linear transformation 
$$p=\frac{1}{10}\left((5-\sqrt{5})u+(5+\sqrt{5})v\right),\,\,q=\frac{\sqrt{5}}{10}(v-u).$$

\item[\underline{System (\ref{A_2}):}] 
 System~(\ref{A_1}), in variables $\hat{u},\hat{v}$, is related to system~(\ref{A_2}) in variables $u,v$ by the  transformation:
$$\hat{u}= 5u_1+\frac{5}{4}(5-3\sqrt{5})u^2+\frac{5}{2}(5+\sqrt{5})uv+\frac{5}{4}(5+\sqrt{5})v^2,\quad \hat{v}= 5v_1+\frac{5}{4}(5+3\sqrt{5})v^2+\frac{5}{2}(5-\sqrt{5})uv+\frac{5}{4}(5-\sqrt{5})u^2.$$
Throughout this section, the same convention will be used: hatted variables denote the dependent variables of the source system, and unhatted variables those of the target system.

After the rescaling $t\mapsto 2t$ and the linear change of variables
$$P= -\sqrt{5} (u-v),\quad Q=\frac{\sqrt{5}-5}{2} u-\frac{5+\sqrt{5}}{2} v$$ 
system~\eqref{A_2} takes the non-diagonal form 

\begin{equation}\label{A_21}
\left\{
\begin{array}{l}
P_t=4P_3+3Q_3-\left(3PQ_1+6QQ_1-3PQ^2+2Q^3\right)_x,\\[2mm]
Q_t=3P_3+Q_3+\left(3PP_1+6P_1Q+3P^2Q-2Q^3\right)_x.
\end{array}
\right.
\end{equation}
This system was studied in \cite{Talati3rd}, where its Hamiltonian and symplectic structures were obtained. System (\ref{A_2}), in its potential form, and the transformation  (\ref{A_1}) to (\ref{A_2})  were found in \cite{Mesh08, Balak23}.

\item[\underline{System (\ref{A_3}):}] System (\ref{A_1}) is related to (\ref{A_3}) by the   transformation: 
$$\hat{u}=\frac{1}{2}(1+\sqrt{5})u_1+\frac{1}{20}(5+\sqrt{5})(u-v)^2,\quad \hat{v}= \frac{1}{2}(1-\sqrt{5})v_1+\frac{1}{20}(5-\sqrt{5})(u-v)^2.$$

\item[\underline{System (\ref{A_4}):}] System (\ref{A_1}) is related to (\ref{A_4}) by the transformation
$$
\hat{u}= u_1-\frac{1}{20}(5+\sqrt{5})(u-v)^2,\quad \hat{v}= v_1-\frac{1}{20}(5-\sqrt{5})(u-v)^2.
$$

\item[\underline{System (\ref{A_5}):}] System \eqref{A_5}  is related to system \eqref{DSA4_2_c2} by the transformation
\begin{eqnarray*}
p&=&-2u_1-2v_1,\\
q&=&-\frac{1}{10} (5 - \sqrt{5}) u_3 - \frac{1}{10} (5 + \sqrt{5}) v_3 + 
 \frac{1}{5} (-3 + \sqrt{5}) (u - v) u_2 + 
 \frac{1}{5} (3 + \sqrt{5}) (u - v) v_2 + \frac{2}{5} (-3 + \sqrt{5}) u_1^2\\
 &&+ 
 \frac{2}{5} u_1 v_1 - \frac{2}{5} (3 + \sqrt{5}) v_1^2 + 
 \frac{2}{25} (-5 + 4 \sqrt{5}) (u - v)^2 u_1 - 
 \frac{2}{25} (5 + 4 \sqrt{5}) (u - v)^2 v_1 - \frac{2}{25} (u - v)^4.
\end{eqnarray*}

System \eqref{A_5}  is related to system \eqref{DSA4_2_c0}  by the transformation
\begin{eqnarray*}
p&=&(-1 + \sqrt{5}) u_1 - (1 + \sqrt{5}) v_1,\\
q&=&\frac{1}{2 \sqrt{5}} u_3 - \frac{1}{2 \sqrt{5}} v_3 + 
 \frac{1}{10} (-1 + \sqrt{5}) (u - v) u_2 + 
 \frac{1}{10} (1 + \sqrt{5}) (u - v) v_2 + \frac{1}{10} (-2 + \sqrt{5}) u_1^2\\
 && + 
 \frac{1}{10} (-2 - \sqrt{5}) v_1^2+ \frac{2}{5} u_1 v_1 + 
 \frac{1}{50} (5 + \sqrt{5}) (u - v)^2 u_1 - 
 \frac{1}{50} (-5 + \sqrt{5}) (u - v)^2 v_1 + \frac{1}{50} (u - v)^4.
\end{eqnarray*}

The rescaling $t\mapsto \frac{1}{10}t$ and the  transformation
\begin{eqnarray*}
P&=&\frac{6}{5}(u_1+v_1),\\
Q&=&\frac{3}{25} (5 + 3 \sqrt{5}) u_3 + \frac{3}{25} (5 - 3 \sqrt{5}) v_3 - 
 \frac{18}{25} (3 - \sqrt{5}) (u - v) u_2 + 
 \frac{18}{25} (3 + \sqrt{5}) (u - v) v_2  -\frac{36}{25} (2 - \sqrt{5}) u_1^2\\
 &&- \frac{36}{25} (2 + \sqrt{5}) v_1^2 + 
 \frac{108}{25} u_1 v_1 - \frac{36}{125} (5 - 4 \sqrt{5}) (u - v)^2 u_1 - 
 \frac{36}{125} (5 + 4 \sqrt{5}) (u - v)^2 v_1 - \frac{36}{125} (u - v)^4
\end{eqnarray*}
link system~\eqref{A_5} with system~(80) from \cite{mnw07}:
\begin{eqnarray*}
  \left\{
\begin{array}{l}
P_t=Q_1,\\[2mm]
Q_t=\frac{1}{5}P_5+Q_3+D_x\left(PP_2+PQ+\frac{1}{3}P^3\right).
\end{array}
\right.
\end{eqnarray*}

While, the rescaling $t\mapsto \frac{1}{10}t$ and the transformation
\begin{eqnarray*}
P&=&\frac{3}{5} (1 + \sqrt{5}) v_1 + \frac{3}{5} (1 - \sqrt{5}) u_1,\\
Q&=&-\frac{3}{25} (5 + \sqrt{5}) u_3 - \frac{3}{25} (5 - \sqrt{5}) v_3 + 
 \frac{18}{25} (1 - \sqrt{5}) (u - v) u_2 - 
 \frac{18}{25} (1 + \sqrt{5}) (u - v) v_2 + \frac{9}{25} (1 - \sqrt{5}) u_1^2\\
 &&+ 
 \frac{9}{25} (1 + \sqrt{5}) v_1^2 - 36/25 u_1 v_1 - 
 \frac{18}{125} (5 + \sqrt{5}) (u - v)^2 u_1 - 
 \frac{18}{125} (5 - \sqrt{5}) (u - v)^2 v_1 - \frac{18}{125} (u - v)^4
\end{eqnarray*}
link system~\eqref{A_5} with system~(81) from \cite{mnw07}:
\begin{eqnarray*}
\left\{
\begin{array}{l}
P_t=Q_1,\\[2mm]
Q_t=\frac{1}{5}P_5+Q_3+\left(2PP_2+\frac{3}{2}P_1^2+2QP+\frac{4}{3}P^3\right)_x.
\end{array}
\right.
\end{eqnarray*} 

\item[\underline{System (\ref{B_1}):}] System \eqref{B_1}, known as the Hirota-Satsuma equation \cite{his81}, is related to system \eqref{DSB2_1_c2} by the linear change of variables
$$p=\frac{1}{12}\left(-2u-\sqrt{2}v\right),\,\,q=\frac{1}{12}(-2u+\sqrt{2}v).$$




\item[\underline{System (\ref{B_2}):}] 

System \eqref{B_2}  is related to system \eqref{DSB2_1_c0} by the transformation
\begin{eqnarray*}
p=-\frac{2}{\sqrt{3}} v_1 - \frac{1}{3} u + \frac{1}{3} v^2,\quad
q=-\frac{1}{4\sqrt{3}} v_3 - \frac{1}{12} u_2 + \frac{1}{12} v v_2 + 
 \frac{1}{12\sqrt{3}} v u_1  + \frac{1}{24} v_1^2 - \frac{1}{72} u^2.
\end{eqnarray*}

The transformations $t\to -t$ and
\begin{eqnarray*}
P&=&3\sqrt{3}v_1+\frac{3}{2}(u-v^2),\\
Q&=&6 \sqrt{3} v_3 - \frac{3}{2} u_2 - 6 v v_2 - \frac{3}{2} v_1^2 + 
 3 \sqrt{3} (v u_1 + u v_1) - 3 \sqrt{3} v^2 v_1 - 
 \frac{3}{4} u^2 - \frac{3}{2} u v^2 + \frac{3}{4} v^4
\end{eqnarray*}
link system~\eqref{B_2} with equation~(79) from \cite{mnw07}:
\begin{eqnarray*}
 \left\{
\begin{array}{l}
P_t=Q_1,\\[2mm]
Q_t=2P_5+Q_3+\left(2PP_2+P_1^2+\frac{4}{27}P^3\right)_x.
\end{array}
\right.
\end{eqnarray*}

\item[\underline{System (\ref{B_3}):}] System (\ref{B_1}) is related to   system (\ref{B_3}) by the   transformation 
$$\hat{u}= 2u_1,\quad \hat{v}= 2v.$$

\item[\underline{System (\ref{B_4}):}] System (\ref{B_2}) is related to system (\ref{B_4}) by the transformation $$\hat{u}= 2u_1+2v^2,\quad  \hat{v}= \sqrt{2}v.$$

\item[\underline{System (\ref{B_5}):}] System (\ref{B_1}) is related to  system (\ref{B_5}) by the transformation $$\hat{u}= 6u_1-6u^2-6v^2,\quad  \hat{v}= -6\sqrt{2}v_1+12\sqrt{2}uv.$$

\item[\underline{System (\ref{B_6}):}] System (\ref{B_1}) is related to   system (\ref{B_6}) by the transformation $$\hat{u}= -6u_1,\quad  \hat{v}= 6\sqrt{2}\left(v_1-(u-v)^2\right).$$

\item[\underline{System (\ref{D_1}):}] System \eqref{D_1} is related to system \eqref{DSD4_3_c1}  via linear transformation $p=\frac{\sqrt{3}}{2}(v-u),\,\,q=\frac{3}{2}\left((2+\sqrt{3})v+(2-\sqrt{3})u)\right)$.

System (\ref{D_1}) was first introduced by the authors in \cite{mnw09}. 
The symplectic-Hamiltonian formulation for the system and its recursion operator were given in \cite{Talati15}.

\item[\underline{System (\ref{D_2}):}] System \eqref{D_2} is related to system \eqref{DSD4_3_c2} by the transformation
\begin{eqnarray*}
p&=&2 (-1 + \sqrt{3}) u_1 - 2 (1 + \sqrt{3}) v_1,\\
q&=&\frac{2}{3} (3 - 2 \sqrt{3}) u_2 + \frac{2}{3} (3 + 2 \sqrt{3}) v_2 + 
 \frac{4 \sqrt{3}}{3} (u - v) (u_1 + v_1) + \frac{8 \sqrt{3}}{
  9} (u - v)^3.
\end{eqnarray*}

System \eqref{D_2} is related to system \eqref{DSD4_3_c0} via the transformation
\begin{eqnarray*}
p&=&-\frac{2}{3} (3 + \sqrt{3}) u_1 - \frac{2}{3} (3 - \sqrt{3}) v_1,\\
q&=&-\frac{\sqrt{3}}{18}  u_5 + \frac{\sqrt{3}}{18} v_5 - 
 \frac{1}{3} (-2 + \sqrt{3}) (u - v) u_4 - 
 \frac{1}{3} (2 + \sqrt{3}) (u - v) v_4 + 
 \frac{1}{9} (17 - 23 \sqrt{3}) u_1 u_3\\&& + \frac{1}{9} (-17 + 3 \sqrt{3}) v_1 u_3 - 
 \frac{2}{3} (-4 + 3 \sqrt{3}) (u - v)^2 u_3 + 
 \frac{1}{9} (-17 - 3 \sqrt{3}) u_1 v_3 + \frac{1}{9} (17 + 23 \sqrt{3}) v_1 v_3\\ && + 
 \frac{2}{3} (4 + 3 \sqrt{3}) (u - v)^2 v_3 - 
 \frac{5}{18} (-5 + 8 \sqrt{3}) u_2^2 + \frac{5}{18} (5 + 8 \sqrt{3}) v_2^2 - 
 \frac{25}{9} u_2 v_2 - \frac{22}{3} (-3 + 2 \sqrt{3}) (u - v) u_1 u_2\\ && + 
 \frac{10}{3} (-1 + 2 \sqrt{3}) (u - v) v_1 u_2 - 
 \frac{4}{3} (-8 + 3 \sqrt{3}) (u - v)^3 u_2 + 
 \frac{10}{3} (1 + 2 \sqrt{3}) (u - v) u_1 v_2\\ && - 
 \frac{22}{3} (3 + 2 \sqrt{3}) (u - v) v_1 v_2 - 
 \frac{4}{3} (8 + 3 \sqrt{3}) (u - v)^3 v_2 - 
 \frac{2}{27} (-81 + 65 \sqrt{3}) u_1^3 + \frac{2}{27} (81 + 65 \sqrt{3}) v_1^3\\ && + 
 \frac{2}{9} (-27 + 29 \sqrt{3}) u_1^2 v_1 - 
 \frac{2}{9} (27 + 29 \sqrt{3}) u_1 v_1^2 - 
 \frac{2}{3} (-51 + 22 \sqrt{3}) (u - v)^2 u_1^2 + 
 \frac{2}{3} (51 + 22 \sqrt{3}) (u - v)^2 v_1^2\\ && - 
 44 (u - v)^2 u_1 v_1 - 
 \frac{8}{3} (-3 + 4 \sqrt{3}) (u - v)^4 u_1 + 
 \frac{8}{3} (3 + 4 \sqrt{3}) (u - v)^4 v_1 + \frac{8}{3} (u - v)^6.
\end{eqnarray*}

\item[\underline{System (\ref{D_3}):}] System (\ref{D_1}) relates to (\ref{D_3}) by the  transformtation 
$$\hat{u}= -\frac{1}{3}u_1 - \frac{1}{36} (3 + \sqrt{3}) (u - v)^2,\,\,\hat{v}= -\frac{1}{3}v_1-\frac{1}{36} (3 - \sqrt{3})(u - v)^2.$$

\item[\underline{System (\ref{D_4}):}] System (\ref{D_1}) relates to (\ref{D_4}) by the  transformtation 
$$\hat{u}= \frac{1}{3}(3 + 2 \sqrt{3}) u_1+\frac{1}{12}(3 + \sqrt{3}) (u - v)^2,\,\,\hat{v}= \frac{1}{3}(3 - 2\sqrt{3})v_1+\frac{1}{12} (3 - \sqrt{3})(u - v)^2.$$

\item[\underline{System (\ref{D_5}):}] System (\ref{D_1}) relates to (\ref{D_5}) by the  transformtation 
\begin{eqnarray*}
\hat{u}&=& -(1+\sqrt{3})u_1-\frac{1}{2}(3+5\sqrt{3})u^2-(3+\sqrt{3})uv+\frac{1}{2}(3+\sqrt{3})v^2,\\
\hat{v}&=& -(1-\sqrt{3})v_1+\frac{1}{2}(3-\sqrt{3})u^2-(3-\sqrt{3})uv-\frac{1}{2}(3-5\sqrt{3})v^2.
\end{eqnarray*}
The potential version of this system under a linear transformation is equation (9) in \cite{TW17}, where the authors only provided one Hamiltonian operator.
Under the scaling transformation, this system is equations (70) and (71)  in \cite{Gerdjikov20}.

\item[\underline{System (\ref{G_1}):}] System \eqref{G_1} is related to system \eqref{DSG2_1_c1} by linear change of variables
$$p=-\frac{1}{2}(u+v),\,\,q=-\frac{1}{2}u+\frac{1}{6}v.$$

This system was  introduced by the authors in \cite{mnw09}. 
Its recursion operator and compatible Hamiltonian and symplectic structures were constructed in \cite{Voj11}. System \eqref{G_1} admits  reduction $v=0$ to the Kaup-Kuperschmidt equation.

\item[\underline{System (\ref{G_2}):}] 

System \eqref{G_2} is related to system \eqref{DSG2_1_c2} by the transformation
\begin{eqnarray*}
p&=&-u - v_1 + \frac{1}{2} v^2,\quad
q=u_1 + \frac{1}{3} v_2 - \frac{1}{3} v v_1 - \frac{2}{3} u v + \frac{1}{27} v^3.
\end{eqnarray*}

System \eqref{G_2} is related to system \eqref{DSG2_1_c0} by the transformation
\begin{eqnarray*}
p&=&3 v_1 - u + \frac{1}{2} v^2,\\
q&=&\frac{1}{18} v_5 + \frac{1}{54} v v_4 + \frac{1}{9} u v_3 + \frac{1}{6} u_1 v_2 + 
 \frac{1}{6} v_1 u_2 - \frac{13}{54} v_1 v_3 + \frac{1}{18} v u_3 - \frac{4}{27} v_2^2 - 
 \frac{5}{81} v^2 v_3 - \frac{8}{27} v v_1 v_2\\
 &&+ \frac{2}{27} v^2 u_2 + 
 \frac{1}{18} u v u_1 + \frac{1}{18} u^2 v_1 + \frac{7}{54} v u_1 v_1 + 
 \frac{1}{18} u v v_2 - \frac{2}{27} u v_1^2 - \frac{35}{972} v^3 v_2 - 
 \frac{2}{81} v^2 v_1^2\\
 &&+ \frac{7}{324} v^3 u_1 - \frac{2}{81} u v^2 v_1 - 
 \frac{1}{648} v^4 v_1 + \frac{1}{17496} v^2 (18 u - v^2)^2.
\end{eqnarray*}

The reduction $v=0$ in system \eqref{G_2} is the Sawada-Kotera equation. 

\item[\underline{System (\ref{G_3}):}] System (\ref{G_1}) relates to (\ref{G_3}) by the  transformation 
$$\hat{u}= \frac{1}{2}u_1,\quad \hat{v}= v.$$

\item[\underline{System (\ref{G_4}):}] System (\ref{G_1}) relates to (\ref{G_4}) by the  transformation 
$$\hat{u}= u_1-\frac{1}{2}u^2-\frac{3}{2}v^2,\,\,\hat{v}= 3v_1-3v^2-3uv.$$
By setting $v=0$, system \eqref{G_4} reduces to the Kuperschmidt equation 
$$ u_t=u_5+ (5 u_1 u_2 -5 u^2 u_2-5 u u_1^2+u^5)_x.$$
System \eqref{G_4} was first introduced in \cite{Talati13}, where the author constructed its bi-Hamiltonian structure.

\item[\underline{System (\ref{G_5}):}] System (\ref{G_2}) relates to (\ref{G_5}) by the transformation $$\hat{u}= u_1+\frac{1}{2}v^2,\quad \hat{v}= v.$$

\item[\underline{System (\ref{G_6}):}] System (\ref{G_1}) relates to (\ref{G_6}) by the transformation 
$$\hat{u}= -u_1,\,\,\hat{v}= v_1-\frac{1}{2}(u-v)^2.$$

\item[\underline{System (\ref{G_7}):}] System (\ref{G_1}) relates to (\ref{G_7}) by the transformation 
$$\hat{u}= -u_1,\,\,\hat{v}= -3v_1+\frac{3}{2}(u-v)^2.$$

\item[\underline{System (\ref{A_1p}):}] System \eqref{A_1p} relates to system \eqref{DSA3_2_c2} via invertible transformation
\begin{eqnarray*}
p&=&-\frac{1}{6}(3u+\sqrt{3}v),\quad q=\frac{1}{6}(\sqrt{3}v-3u).
\end{eqnarray*}

\item[\underline{System (\ref{A_2p}):}] System \eqref{A_2p} relates to system \eqref{DSA3_2_c0} by the  transformation
\begin{eqnarray*}
p&=&\frac{3   }{2}  (\sqrt{2} v_1 -  u +  v^2),\\
q&=&\frac{\sqrt{2} }{4}   v_3 - \frac{3}{4} u_2 - \frac{9}{8} u^2 + \frac{1}{2} v v_2 - 
 \frac{3}{4} v_1^2 - \frac{3 \sqrt{2}   }{2} v u_1 + \frac{3 \sqrt{2}   }{4} u v_1 - 
 \frac{3}{4} u v^2 - \frac{3 \sqrt{2} }{4} v^2 v_1 - \frac{1}{8} v^4.
\end{eqnarray*}

\item[\underline{System (\ref{A_3p}):}] System \eqref{A_1p} relates to system \eqref{A_3p} by the  transformation
\begin{eqnarray*}
\hat{u}= u_1,\quad \hat{v}=\sqrt{2}v.
\end{eqnarray*}

\item[\underline{System (\ref{A_4p}):}] System \eqref{A_1p} relates to system \eqref{A_4p} by the  transformation
\begin{eqnarray*}
\hat{u}= u_1 - \frac{1}{2}( u^2 + v^2),\quad \hat{v}= \sqrt{3} (v_1 -  uv).
\end{eqnarray*}

\item[\underline{System (\ref{A_5p}):}] System \eqref{A_1p} relates to system \eqref{A_5p} by the  transformation
\begin{eqnarray*}
\hat{u}= 2 u_1,\quad \hat{v}= 2 \sqrt{3} v_1 + 2 \sqrt{3} (u - v)^2.
\end{eqnarray*}

\item[\underline{System (\ref{A_7p}):}] System \eqref{A_2p} relates to system \eqref{A_7p} by the  transformation
\begin{eqnarray*}
\hat{u}= \frac{1}{3}u_1+\frac{1}{2}v^2,\quad \hat{v}= \frac{1}{\sqrt{2}}v.
\end{eqnarray*}

\end{description}

\begin{description}

\item[\underline{System (\ref{K_1}):}] Transformation
\[
p = u_1 + v^2, \quad q = v
\]
relates system (\ref{K_1}) to the triangular system
\begin{eqnarray}\label{sysw10n}
 \left\{\begin{array}{l}
 p_t = p_3 - 6 p p_1,\\
 q_t = 4 q_3 - 6 p q_1 - 3 p_1 q
 \end{array}\right.
\end{eqnarray}
in which the first equation is the \emph{Korteweg--De Vries (KdV) equation}. The second equation can be identified with the second equation of the overdetermined linear problem
\[
\begin{array}{ll}
L\psi = \lambda \psi, & L = D_x^2 - p, \\
\psi_t = A\psi, \qquad & A = 4D_x^3 - 6p D_x - 3p_1,
\end{array}
\]
which is equivalent to the Lax representation $L_t=[A,L]$ of the KdV equation.

\item[\underline{System (\ref{K_2}):}] The Miura transformation
\[
p = u_1 + u^2 - v^2, \quad q^{-1} q_1 = u + v
\]
relates system (\ref{K_2}) to (\ref{sysw10n}).
\end{description}
 \subsection{Linearisable systems}\label{diffsub}

\begin{description}
    
\item[\underline{System (\ref{L_1}):}] The transformation $$u=\frac{p_1}{2 p},\quad v=\frac{q}{\sqrt{p}},$$ 
maps system \eqref{L_1}  into the triangular form
\begin{equation*}
    \left\{\begin{array}{l}
        p_t=p_3+2 q^2 , \\
        q_t=\kappa q_3 ,
    \end{array}  \right.
\end{equation*}
whose symmetries were studied in   \cite{mr1829636}.

\item[\underline{System (\ref{L_2}):}] Miura transformation $$u=\frac{p_1}{2q}-\frac{\delta q^2}{p},\quad v=\frac{q}{\sqrt{p}}$$ 
maps system (\ref{L_2}) into the triangular one:
\begin{eqnarray*}
 \left\{\begin{array}{l} p_t=p_3+4\delta(\kappa-1)qq_2+2(3\kappa-2\delta-\kappa\delta)q_1^2,\\
q_t=\kappa q_3.  \end{array} \right.
\end{eqnarray*}

\item[\underline{System (\ref{L_1pp}):}]
System~(\ref{L_1pp}) is obtained from system~\eqref{L_1} by the scaling transformation $t\mapsto \kappa t$ followed by the limit $\kappa\to\infty$. Consequently, the transformation
\[
u=\frac{p_1}{2p},\qquad v=\frac{q}{\sqrt{p}},
\]
maps system~(\ref{L_1pp}) into the triangular system
\[
p_t=2q^2,\qquad
q_t=q_3.
\]

\item[\underline{System (\ref{L_3pp}):}]
System~(\ref{L_3pp}) is related to system~\eqref{L_2} in the same way, namely by the scaling transformation $t\mapsto \kappa t$ followed by the limit $\kappa\to\infty$. Therefore, the transformation
\[
u=\frac{p_1}{2p}-\frac{\delta q^2}{p},\qquad v=\frac{q}{\sqrt{p}},
\]
maps system~(\ref{L_3pp}) into the triangular system
\[ 
p_t=4\delta qq_2+2(1-\delta)q_1^2,\qquad
q_t=q_3.
\]

\item[\underline{System (\ref{A_8p}):}] The transformation
\begin{eqnarray*}
p=u_1+v^2,\quad q=v    
\end{eqnarray*}
brings system \eqref{A_8p} to
\[
p_t=0,\qquad
q_t=q_3+qp_1+2pq_1,
\]

\item[\underline{System (\ref{A_6p}):}] The transformation
\begin{eqnarray*}
p=u_1+v^2-u^2,\quad q^{-1} q_1=v-u    
\end{eqnarray*}
brings system \eqref{A_6p} to
\[
p_t=0,\qquad
q_t= q_3 + p q_1 + q p_1.
 \]

\item[\underline{System (\ref{M_1}):}] System (\ref{M_1}) possesses a hierarchy of symmetries of all odd orders, which can be generated by a local master symmetry 
\begin{eqnarray*}
\begin{pmatrix}
x u_t+\frac{3}{2} u_2+10 u u_1 -6 u_1 v+(\k-3) (u v_1 + v v_1) +\left((\k+3)u^2 +4(\k-2) u v +3(\k-1) v^2\right)(u-v)
\\ \\
x v_t+\frac{3}{2} \k v_2+ (3\k-1) (u + v) u_1 + 2 \k ( 3 u-5 v) v_1+ \left(3 (\k-1) u^2 +4 (2\k -1) u v -(3\k +1) v^2\right) (u-v) 
\end{pmatrix}    
\end{eqnarray*}
given in \cite{serwa24} (with typos in the second component). The only conserved density for this
system is $u-v$.

\item[\underline{System (\ref{M_2}):}] System (\ref{M_2}) possesses a hierarchy of symmetries of  odd orders, which can be generated by a local master symmetry
\begin{eqnarray*}
\begin{pmatrix}
x u_t+\frac{3}{2} u_2+5 u^2 u_1 +3 u_1 v^2 +(3-\k) u v v_1 +\frac{1}{2} u^5  + (2-\k) u^3 v^2 +(\frac{3}{2}-\k) u v^4
\\ \\
x v_t+\frac{3}{2} \k v_2 +(3\k-1) u u_1 v +3\k u^2 v_1 +5 \k v^2 v_1 +(\frac{3}{2} \k-1) u^4 v +(2\k-1) u^2 v^3 +\frac{\k}{2} v^5
\end{pmatrix}    
\end{eqnarray*}
given in \cite{serwa24}. When $v=0$, it reduces to the Ibragimov-Shabat equation  \cite{mr82b:58007}:
$$
u_t = u_3 + 3u^2 u_2 + 9 u u_1^2 + 3u^4 u_1 .
$$
Introducing new dependent variables $p$ and $q$ by
\[
\frac{p_1}{p}-\frac{u_1}{u}=u^2+v^2,
\qquad
\frac{q_1}{q}-\frac{v_1}{v}=u^2+v^2,
\]
one obtains a differential substitution linking the variables $(u,v)$ and $(p,q)$. 
These differential equations are linear in $p$ and $q$, and of Bernoulli type in $u$ and $v$, and hence can be integrated explicitly.

In the variables $p$ and $q$, system~\eqref{M_2} reduces to the linear system
\begin{equation*}
    \left\{\begin{array}{l}
        p_t=p_3 , \\
        q_t=\kappa q_3 .
    \end{array}  \right.
\end{equation*}

\item[\underline{System (\ref{0M1}):}] System~(\ref{0M1}) and its master symmetry are obtained from system~(\ref{M_1}) by the transformation $u\mapsto -v, v\mapsto -u$ followed by setting $\k=0$. Consequently, it admits a hierarchy of odd-order symmetries generated by a local master symmetry.

\item[\underline{System (\ref{L_4pp}):}] Similarly, after interchanging $u$ and $v$, system~(\ref{L_4pp}) coincides with system~(\ref{M_2}) for $\k=0$. Therefore, it possesses a hierarchy of odd-order symmetries generated by a local master symmetry.

\end{description}

\subsection{Integrable systems corresponding to Cases $16, 18, 22$ and $23$ in Table \ref{table2}}\label{sec64}

The classification of integrable systems~\eqref{sysg} of order $n=5$ corresponding to Cases~$16, 18, 22$ and~$23$ in Table \ref{table2} leads to rather cumbersome equations. However, these systems can be reduced, by differential substitutions, to a simple linear--triangular form.

For example, the system 
\begin{equation}\label{t1}
\left\{
\begin{array}{l}
u_t=u_5 + \left(
  5 u u_3 + 10 u_1 u_2 + v^2 + 10 u^2 u_2 + 
   15 u u_1^2 + 10 u^3 u_1 + u^5\right)_x, \\ [4pt]
v_t=\lambda v_5 + \frac{1}{2} (\lambda - 1) v u_4 + 
 \frac{5}{2} \lambda u v_4 + \frac{5}{2} \lambda v_1 u_3 + 
 5 \lambda u_1 v_3 + 5 \lambda u_2 v_2 + 
 \frac{5}{4} (\lambda - 2) u v u_3 + \frac{5}{2} \lambda u^2 v_3\\ [3pt]
 \qquad + 
 \frac{5}{2} (\lambda - 2) v u_1 u_2 + 5 \lambda u v_1 u_2 + 
 \frac{15}{2} \lambda u u_1 v_2 + \frac{15}{4} \lambda u_1^2 v_1 - 
 \frac{1}{2} v^3 + \frac{5}{4} (\lambda - 4) u^2 v u_2 + 
 \frac{5}{4} \lambda u^3 v_2\\ [3pt] \qquad + \frac{15}{4} \lambda u^2 u_1 v_1 + 
 \frac{5}{16} \lambda u^4 v_1 + \frac{15}{8} (\lambda - 4) u v u_1^2 + 
 \frac{5}{8} (\lambda - 8) u^3 v u_1 + 
 \frac{1}{32} (\lambda - 16) u^5 v
\end{array}  \right.
\end{equation}
with $\lambda\ne 1$ (i.e. satisfying condition~(V.)) admits an infinite hierarchy of symmetries if and only if the following conditions holds:
\begin{description}
    \item[Case 16 and 18:] $l=4,\quad q={\i},\quad \lambda=-4$;
    \end{description} 
    \begin{description}
    \item[Case 22 and 23:] $l=5 ,\quad  q=\alpha\frac{\beta-1}{\alpha-1},\quad\alpha^5=\beta^5=1,\, \beta\ne1,\,\alpha\notin \{1, \beta,\beta^{-1}\}$,
     where the admissible values are
    \[
    \begin{array}{llll}
        \alpha =e^{\frac{2\pi {\i}}{5}},\quad& \beta=e^{\frac{4\pi {\i}}{5}},\quad& q=e^{\frac{2\pi {\i}}{5}}+e^{\frac{4\pi {\i}}{5}},\quad &\lambda=-\frac{13+ 5\sqrt{5}}{22};\\
     \alpha =e^{\frac{4\pi {\i}}{5}},& \beta=e^{\frac{2\pi {\i}}{5}},& q=-1-e^{\frac{6\pi {\i}}{5}},&\lambda=-\frac{13+ 5\sqrt{5}}{22};\\
     \alpha =e^{\frac{2\pi {\i}}{5}},& \beta=e^{\frac{6\pi {\i}}{5}},& q=-1-e^{\frac{8\pi {\i}}{5}},\quad & \lambda=-\frac{13- 5\sqrt{5}}{22};\\
    \alpha =e^{\frac{6\pi {\i}}{5}},& \beta=e^{\frac{2\pi {\i}}{5}},& q=e^{\frac{6\pi {\i}}{5}}+e^{\frac{2\pi {\i}}{5}},& \lambda=-\frac{13- 5\sqrt{5}}{22}.
    \end{array}
    \]   
\end{description}

The differential substitution to the new dependent variables
\[
u=\frac{P_1}{P},\qquad v=\frac{Q}{\sqrt{P}},
\]
transforms system~(\ref{t1}) into the linear--triangular system
\[
P_t=P_5+Q^2,\qquad
Q_t=\lambda Q_5.
\]
Conversely, system~(\ref{t1}) can be recovered from this linear--triangular system by applying the same differential substitution.

According to Table \ref{table2}, in the Case 16 the system (\ref{t1}) admits symmetries of orders $S_1=\{5+4k\,;\, k\in\N\}$, whereas in Case~18 the system is required to admit symmetries of orders
 $S_2=\{ 1+12k\,;\, k\in\N\}\cup \{ 5+12k\,;\, k\in\N\}$. Note that $S_2\subset S_1$. 
Surprisingly, despite this difference in the required symmetry orders, the classification results for Cases~16 and~18 coincide. A similar remarkable phenomenon occurs for Cases~22 and~23.

Our classification results for the integrable systems \eqref{sysg} with (n=5) corresponding to Cases~$16, 18, 22$ and~$23$ are summarised in the following theorem.

\begin{The}\label{thmT}
Every system \eqref{sysg} with (n=5) satisfying conditions (I)--(V) and corresponding to Cases~$16, 18, 22$ or~$23$ can be reduced, by almost invertible transformations, to the following linear triangular systems:
\underline{$W(u)=1,\ W(v)=\frac{5}{2}$}: 
\begin{align}
&\left\{ \begin{array}{l}
p_t=p_5+q^2,\\
q_t=\lambda q_5,
\end{array}\right.\quad u=\frac{p_1}{p},\,\,v=\frac{q}{\sqrt{p}}; \tag{$\rm T_1$} \label{K1_1_5d2}
\end{align}

\underline{$W(u)=1,\ W(v)=2$}:
\begin{align}
&\left\{ \begin{array}{l}
p_t=p_5+qq_1,\\
q_t=\lambda q_5,
\end{array}\right.\quad u=\frac{p_1}{p},\,\,v=\frac{q}{\sqrt{p}}; \tag{$\rm T_2$} \label{K1_1_2}
\end{align}

\underline{$W(u)=1,\ W(v)=\frac{3}{2}$}:
\begin{align}
&\left\{ \begin{array}{l}
p_t=p_5+\alpha qq_2+\beta q_1^2,\\
q_t=\lambda q_5,
\end{array}\right.\quad u=\frac{p_1}{p},\,\,v=\frac{q}{\sqrt{p}}; \tag{$\rm T_3$} \label{K1_1_3d2}
\end{align}

\underline{$W(u)=1,\ W(v)=1$}:
\begin{align}
&\left\{ \begin{array}{l}
p_t=p_5+\alpha qq_3+\beta q_1q_2,\\
q_t=\lambda q_5,
\end{array}\right.\quad u=\frac{p_1}{p},\,\,v=\frac{q}{\sqrt{p}}; \tag{$\rm T_4$} \label{K1_1_1}
\end{align}

\underline{$W(u)=1,\ W(v)=\frac{1}{2}$}:
\begin{align}
&\left\{ \begin{array}{l}
p_t=p_5+\alpha q_2^2+\beta q_1q_3+\delta qq_4,\\
q_t=\lambda q_5,
\end{array}\right.\quad u=\frac{p_1}{p}+\frac{\delta}{2(1-\lambda)}\frac{q^2}{p},\,\,v=\frac{q}{\sqrt{p}}, \tag{$\rm T_5$} \label{K1_1_1d2}
\end{align}
where $\lambda=-4$ in the Cases $16, 18$, and $ \lambda=-\frac{13\pm 5\sqrt{5}}{22}$ in the Cases $22, 23$. 
Here $\alpha, \beta, \delta\in\C$ are arbitrary constants and $\delta\ne\alpha$.
\end{The}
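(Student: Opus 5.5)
Our plan is to follow the classification procedure of Example~\ref{example22}, now for order $n=5$ and the four cases, and then to linearise each surviving family explicitly.

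\emph{Weights.} We first fix $\lambda_1=1$ and $\lambda_2=\lambda$, using the values forced by Table~\ref{table2} ($\lambda=-4$ for $q=\i$, $\lambda=-\frac{13\pm5\sqrt5}{22}$ for Cases~22,~23). We then determine the admissible positive weights. The divisibility conditions \eqref{Gfactor} in these cases give $g_2=1$ and $g$ trivial or $x^2+xy+y^2$. Hence $G^{\omega_2}_{2,0}$ and $G^{\omega_1}_{2,0},G^{\omega_2}_{0,2}$ must essentially divide the corresponding quadratic coefficients. By the degree argument from the proof of Theorem~\ref{noqct}, this forces $a^2_{2,0}=a^1_{1,1}=0$ and severely restricts $a^1_{2,0}$ and $a^2_{0,2}$. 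Balancing the surviving quadratic terms $a^1_{0,2}$ (divisible only through $g_1$) and $a^2_{1,1}$ against the linear part then leaves, up to $u\leftrightarrow v$, $W(u)=1$ and $W(v)\in\{\tfrac52,2,\tfrac32,1,\tfrac12\}$, as in \eqref{t1}. Non-triangularity requires $a^1_{0,2}\ne0$ and $a^2_{1,1}\ne0$, which pins $W(v)$ to these values.

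\emph{Imposing the symmetry.} For each weight we write the most general homogeneous system and, by Theorem~\ref{symsys}, impose polynomiality of the coefficients $A^i_{j,k}$ of a symmetry of order $m_2$. This order is $9$ in Case~16, $13$ in Case~18, $15$ in Case~22 and $25$ in Case~23, with $\mu_2/\mu_1$ given by \eqref{symspec}. By Theorem~\ref{bbt} this suffices for integrability. The quadratic level fixes ratios of coefficients. The cubic and quartic levels, especially the $\hu\hv^2$ and $\hu^3$ components, eliminate the remaining freedom except for the $\alpha,\beta,\delta$ families.

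\emph{Linearisation.} For each solution family we verify directly that the substitution $u=p_1/p$, $v=q/\sqrt p$ (with the $\delta$-correction in $W(v)=\frac12$) maps the corresponding linear--triangular system $(\mathrm T_i)$ onto it. Since $p_t$ has linear part $p_5$, the Cole--Hopf part reproduces the Burgers-type $u$-equation. The term $\frac{\delta}{2(1-\lambda)}q^2/p$ is chosen to cancel the $qq_4$-contribution that is not representable otherwise, and it is well defined because $\lambda\ne1$. Counting parameters, the free constants in $(\mathrm T_i)$ match those surviving the polynomiality computation, so the image is the whole solution set. The constraint $\delta\neq\alpha$ excludes a degeneration to a lower-order symmetry, as in the Remark after Theorem~\ref{thmA}, and hence enforces Condition~(IV).

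\emph{Main obstacle.} The hardest step is the second one. The symmetries have very high order ($13$, $25$), so the recursive computation of the $A^i_{j,k}$ becomes enormous. We would first try to reduce it by observing that Cases~18 and~23 are subsets of Cases~16 and~22. One can classify with the lower-order symmetry (orders $9$ and $15$) and then only check that the resulting systems also satisfy Case~18/23 conditions, instead of solving the order-$25$ problem independently. Conversely, to prove that Case~18 produces nothing new, we would argue that any Case~18 system, through its $2$-approximate symmetries with the same $q=\i$ and Theorem~\ref{bbt}, automatically admits the order-$9$ symmetry. The needed factor $g_1=(x-qy)^2(y-qx)^2$ divides $G^{\Omega}_{j,k}$ for $m=9$ as well, so the order-$9$ $2$-approximate symmetry exists and extends.
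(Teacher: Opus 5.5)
Your overall route is the paper's: enumerate the homogeneous families, impose via Theorem~\ref{symsys} a symmetry of order $m_2$ with ratio \eqref{symspec}, invoke Theorem~\ref{bbt}, then verify the substitutions. However, the shortcut you propose for Cases~18 and~23 does not work.

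In those cases $g=x^2+xy+y^2$, so the last line of \eqref{Gfactor} imposes nothing: $G^{\omega_1}_{2,0}=5\lambda_1xy(x+y)(x^2+xy+y^2)$ divides $xy(x+y)g\,a^1_{2,0}$ for every $a^1_{2,0}$, and likewise for $a^2_{0,2}$. For a symmetry of order $9$ (resp.\ $15$), however, $A^1_{2,0}=\mu_1 g_m\,a^1_{2,0}/G^{\omega_1}_{2,0}$ with $m\equiv 3\pmod 6$. By Lemma~\ref{gdiag}, such $g_m$ is not divisible by $x^2+xy+y^2$; indeed $g_m(\omega,1)=-3$ for a primitive cube root of unity $\omega$. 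Hence an order-$9$ $2$-approximate symmetry exists only if $x^2+xy+y^2$ divides both $a^1_{2,0}$ and $a^2_{0,2}$. That is exactly the extra condition separating Case~16 from Case~18 (and Case~22 from Case~23). Checking the factor $g_1$ alone misses it, so Theorem~\ref{bbt} has nothing to extend.

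Your other suggestion — classify with orders $9$ and $15$, then test the Case~18/23 conditions — only returns systems already in Cases~16/22. Those trivially have symmetries of orders $13$ and $25$, so the suggestion presupposes the inclusion you want to prove. In the paper the coincidence is an outcome of actually imposing the order-$13$ and order-$25$ symmetries, and the authors call it surprising. To bypass that computation you would need a separate argument that the higher-degree conditions force $x^2+xy+y^2$ to divide $a^1_{2,0}$ and $a^2_{0,2}$.

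The weight step is also not justified as stated. The degree argument of Theorem~\ref{noqct} does not kill $a^1_{1,1}$ and $a^2_{2,0}$ when $W(u)=W(v)=1$. Both then have degree $4=n-1$, and $a^1_{1,1}=c\,G^{\omega_1}_{1,1}/y$ and $a^2_{2,0}=c'\,G^{\omega_2}_{2,0}/(x+y)$ satisfy \eqref{Gfactor}; the divisor is reducible, so the irreducibility argument does not apply.

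Nor does anything at the quadratic level force $W(u)=1$. Take $W(u)=W(v)=2$ in Case~16. There $G^{\omega_1}_{0,2}=5(x+y)(x^2+y^2)^2=-5(x+y)g_1$, so $a^1_{0,2}$ and, by Lemma~\ref{gsym}, $a^2_{1,1}$ are unconstrained cubics, and non-triangular quadratic parts exist. Coupling can also enter only through cubic terms (Corollary~\ref{triang}), so $a^1_{0,2}\,a^2_{1,1}\neq 0$ is not forced a priori. You therefore have to run the computation over all admissible weight pairs; the list of weights in the theorem is the output of that computation, not an input.

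Two smaller points:
\begin{itemize}
\item Equal parameter counts do not show that the images of the $(\mathrm T_i)$ exhaust the solution set. Compare the explicit general solutions instead.
\item The reason for $\delta\neq\alpha$ is the Remark following this theorem, not the one after Theorem~\ref{thmA}. At $\lambda=-4$ one has $\kappa=-2$, and that Remark's relation reduces to $18(\alpha-\delta)=0$, which gives an order-$3$ symmetry violating~(IV).
\end{itemize}
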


\begin{Rem} We note that system~\eqref{K1_1_1d2} admits a symmetry of order $3$, namely system~\eqref{L_2} with $\kappa=\frac{(1+q)^3}{1+q^3}$, provided that the parameters $\alpha, \beta, \delta$ satisfy
\[
2 \alpha - 2 \beta + 2 \delta + 8 \alpha \kappa - 
 5 \beta \kappa - 4 \delta \kappa + 8 \alpha \kappa^2 - 
 2 \beta \kappa^2 - 7 \delta \kappa^2=0.
\]
\end{Rem}

The linear--triangular integrable systems \eqref{K1_1_5d2}-\eqref{K1_1_1d2}, written in the variables $p,q$, have been studied in \cite{mr99i:35005, mr1829636, ba91, ph02}.

\section{Discussion and Remarks}\label{dissrem}
The classification results presented in this paper are obtained under a set of technical assumptions. 
The evidence emerging from the classification indicates that the resulting lists 
provide a coherent description of the  integrable hierarchies globally.

Restricting attention to S-integrable hierarchies, the classification of odd-order integrable systems of the form
\begin{equation*}\label{sys1homc}
\left\{\begin{array}{l} 
u_t=\lambda_1 u_n+F_1(u_{n-1},v_{n-1},\ldots,u,v),\\
v_t=\lambda_2 v_n+F_2(u_{n-1},v_{n-1},\ldots,u,v),
\end{array}\right.\quad n>1,  \quad \lambda_1\ne\lambda_2,\quad \lambda_1,\lambda_2\in\C.
\end{equation*}
obtained under assumptions~(I)--(V) or~(I)--(V$^\prime$), appears to be essentially complete. More precisely, every S-integrable odd-order system of this form is expected either to occur in the lists of Theorems~\ref{thmA}--\ref{thmApA}, to belong to one of the hierarchies generated by those systems, or to arise from a hierarchy generated by one of the second-order systems classified in \cite{mr1829636,mr89g:58092}. The remaining integrable systems satisfying these assumptions are expected to be C-integrable.

A complete justification of this picture would follow from the verification of the following two statements:
\begin{enumerate}
\item Every S-integrable system of order $7$, $11$, $13$, or $31$, corresponding to Cases~$2, 3, 5, 6, 7, 8, 9, 12$, and~$15$, admits a symmetry of order $3$ or $5$. Such systems would therefore be covered by our present classification. 
\item All integrable hierarchies corresponding to Cases~$16-24$ are C-integrable and can be reduced, by means of differential substitutions, to linear--triangular systems analogous to those presented in Theorem~\ref{thmT}.
\end{enumerate}

There are, however, S-integrable hierarchies that lie outside the assumptions imposed in this paper. For example, the vector modified KdV equation has spectral invariant $\mu_2(m)/\mu_1(m)=1$ (see, e.g., \cite{mr2002j:35275}). Another example is the Ito system \cite{ito82}
\[
\left\{\begin{array}{l} 
u_t=u_3+6uu_1+2vv_1,\\
v_t=2uv_1+2vu_1,
\end{array}\right. .
\]
whose hierarchy has spectral invariant $\mu_2(m)/\mu_1(m)=0$. These hierarchies violate assumptions~(V) and~(V$^\prime$), respectively.
The extension of the global approach developed here to such classes of hierarchies remains an open problem.

The present work is devoted to hierarchies whose seeds are of odd order. Hierarchies beginning with even-order members, or consisting entirely of even-order members, require a more delicate analysis of factorisation properties \cite{Peter09}. The classification of such even-order hierarchies remains open.

We have shown that all S--integrable hierarchies within the class considered here can be related to canonical Drinfeld--Sokolov hierarchies by linear changes of variables or differential substitutions. It would be desirable to construct \emph{native} Lax representations for these systems directly, thereby avoiding the need to invert the differential substitutions relating them to the canonical Drinfeld--Sokolov systems. Such native Lax representations should be related to the canonical ones by gauge transformations, inducing the corresponding Miura-type transformations realised by these differential substitutions. A detailed investigation of this problem is beyond the scope of this paper and will be undertaken elsewhere.

Another noteworthy observation is that the spectral invariant of an integrable hierarchy appears to encode information about the affine Lie algebra underlying its Lax representation. Specifically, suppose that 
$$\frac{\mu_2(m)}{\mu_1(m)}=\frac{(1+q)^m}{1+q^m},$$
where $q$ is a root of unity. In all examples arising from our classification, the order of $q$ divides the Coxeter number of the corresponding affine Lie algebra, while the seeds of the hierarchy are linked to its exponents. Moreover, the order of the recursion operator is divisible by the order of $q$.  One direction of this correspondence follows directly from the general Drinfeld--Sokolov reduction framework and the construction of integrable hierarchies associated with classical affine $W$-algebras \cite{DSKV18}. It is widely expected that the converse also holds, namely that these spectral properties uniquely determine the underlying affine Lie algebra.

Finally, the methodology developed in this paper extends naturally to a broader class of systems, including those with a non-diagonal linear part. Indeed, if the symbol of the linear part has distinct eigenvalues, the system can be formally diagonalised, and the same approach remains applicable.  This technique has been used successfully, for example, in \cite{nw07}.

\section*{Acknowledgements}
The authors thank Peter van der Kamp for useful discussions.
AVM appreciates the hospitality and support received from the
School of Mathematics and Statistics, Ningbo University. 
JPW is supported by the National Natural Science Foundation of China (Grant No. 12571265) and the Ningbo University Research Start-Up Fund.
\medskip

  \begin{center}
    {\bf APPENDIX}
  \end{center}
\appendix

\renewcommand{\theequation}{\Alph{section}.\arabic{equation}}
\setcounter{equation}{0}
\newcommand{\thelemma}{\Alph{section}\arabic{lemma}}
\setcounter{The}{0}

\section{Some results by F. Beukers}\label{FBlemmas}

The results presented in this appendix are due to Frits Beukers (Department of Mathematics, University of Utrecht, Email:f.beukers@uu.nl). 
The statements of Lemmas \ref{FB1} and \ref{Gcubdiag}, together with their proofs, can be found, for example, in\cite{mr99g:35058, wang98, kamp02a}.
Lemmas \ref{Gcubirred} and \ref{Gcubred}  were proved using a method similar to that used for Lemma \ref{Gcubdiag}.

\begin{Lem}\label{FB1} For any integers $m>3$ and $k>1$, and for any choice of nonzero complex parameters $\alpha_i\in\C^*$, $i=0,\ldots,m$, the polynomial
$$
h_{m,k}(x_1,\ldots,x_m)=\alpha_0(x_1+\ldots+x_m)^k-\sum_{i=1}^m\alpha_i x_i^k,\quad \alpha_i\in\C^*,\,\,i=0,\ldots,m
$$
is irreducible over $\C$ .
\end{Lem}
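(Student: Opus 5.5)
The plan is to use the standard singular-locus criterion for homogeneous polynomials. Write $F=h_{m,k}$. It is a nonzero form of degree $k\ge 2$ in $m\ge 4$ variables: even if $\alpha_0=\alpha_i$ kills a pure power $x_i^k$, the mixed monomials coming from $\alpha_0(x_1+\cdots+x_m)^k$ survive because $k>1$. Consider the projective hypersurface $V(F)\subset\mathbb{P}^{m-1}$.

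Suppose $F=GH$ with $G,H$ nonconstant. Since $G,H$ are factors of a homogeneous polynomial, they are homogeneous. The two hypersurfaces $V(G)$ and $V(H)$ in $\mathbb{P}^{m-1}$ then meet in a set of dimension at least $m-3$ (projective dimension theorem), and $m-3\ge 1$ because $m>3$. At every point of $V(G)\cap V(H)$ all partial derivatives $\partial F/\partial x_i=G\,\partial_iH+H\,\partial_iG$ vanish. Hence reducibility, including the case of a repeated factor, would force the singular locus of $V(F)$ to have positive dimension. It therefore suffices to show that this singular locus is finite.

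Next I compute the singular points. With $s=x_1+\cdots+x_m$,
\[
\frac{\partial F}{\partial x_i}=k\bigl(\alpha_0 s^{k-1}-\alpha_i x_i^{k-1}\bigr),\qquad i=1,\ldots,m .
\]
By Euler's identity, the vanishing of all partials already gives $F=0$ (here $k\neq 0$ in $\mathbb{C}$). If $s=0$, then $\alpha_i x_i^{k-1}=0$ with $\alpha_i\ne 0$ forces every $x_i=0$, which is not a projective point. If $s\neq0$, normalise $s=1$; then $x_i^{k-1}=\alpha_0/\alpha_i$, so each $x_i$ is one of the $k-1$ roots of this equation. This leaves at most $(k-1)^m$ candidate points, of which only those with $\sum x_i=1$ survive. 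So the singular locus is finite, contradicting the positive-dimensional intersection above, and $F$ is irreducible.

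The only step needing care is the dimension claim: the intersection of two hypersurfaces in $\mathbb{P}^{m-1}$ is nonempty and of dimension at least $m-3$. This is exactly where the hypothesis $m>3$ enters; for $m=3$ the intersection may be a finite set and the argument fails, consistent with the cubic factorisations in Lemma~\ref{gdiag}. Nonvanishing of all $\alpha_i$ is used precisely to exclude singular points on the hyperplane $s=0$. Everything else is routine.
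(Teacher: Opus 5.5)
Your proof is correct. It is the standard singular-locus argument attributed to Beukers; the paper does not reproduce a proof of Lemma~\ref{FB1} but points to its references for it. A factorisation $h_{m,k}=GH$ would force $V(G)\cap V(H)\subset\mathbb{P}^{m-1}$, which has dimension at least $m-3\ge 1$, to consist of singular points, whereas the equations $\alpha_0 s^{k-1}=\alpha_i x_i^{k-1}$ have only finitely many projective solutions because every $\alpha_i\neq 0$.

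One small slip in your closing remark: the three-variable factorisations that explain the failure at $m=3$ are those of Lemmas~\ref{Gcubdiag} and~\ref{Gcubred}, not Lemma~\ref{gdiag}, which concerns two variables.
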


\begin{Lem}\label{Gcubdiag}
Consider the polynomial $g_m(x,y,z)=(x+y+z)^m-x^m-y^m-z^m$. Then 
\begin{itemize}
\item If $m$ is even,  $g_m$ is irreducible over $\C$.
\item If $m$ is odd, $g_m$ admits the factorisation
$$
g_m(x,y,z)=(x+y)(x+z)(y+z)h_m(x,y,z),
$$
where $h_m(x,y,z)$  is irreducible over $\C$..
\end{itemize}
\end{Lem}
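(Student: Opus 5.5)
The plan is to treat $g_m(x,y,z)=(x+y+z)^m-x^m-y^m-z^m$ as the equation of a projective plane curve $C_m\subset\mathbb{P}^2$ of degree $m$. I would use the classical fact that a reducible plane curve, or one with a multiple component, is singular at every point where two components meet, and such points exist by B\'ezout. So a nonsingular plane curve is irreducible, and more generally a curve all of whose singular points can be accounted for by known components has an irreducible residual factor.

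\emph{Step 1 (singular points of $C_m$).} Put $s=x+y+z$. The gradient is $\nabla g_m=m\,(s^{m-1}-x^{m-1},\,s^{m-1}-y^{m-1},\,s^{m-1}-z^{m-1})$. If $s=0$ at a singular point, then $x=y=z=0$, which is not a projective point. So at a singular point $s\ne0$, and $x=\zeta_1 s$, $y=\zeta_2 s$, $z=\zeta_3 s$ with $\zeta_i^{m-1}=1$. Summing gives $\zeta_1+\zeta_2+\zeta_3=1$. The condition $g_m=0$ is then automatic, since $\zeta_i^m=\zeta_i$.

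The relation $\zeta_1+\zeta_2+\zeta_3-1=0$ is a vanishing sum of four roots of unity. By the elementary classification of short vanishing sums (Lam--Leung, or a direct argument), such a sum must split into two cancelling pairs. Hence, up to permutation, $\zeta_1=1$ and $\zeta_3=-\zeta_2$, so $-1$ must be an $(m-1)$-th root of unity.

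\emph{Step 2 (even $m$).} If $m$ is even, then $m-1$ is odd and $-1$ is not an $(m-1)$-th root of unity. So $C_m$ is nonsingular and $g_m$ is irreducible.

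\emph{Step 3 (odd $m$).} Setting $z=-y$ gives $x^m-x^m-y^m+y^m=0$, so $y+z$ divides $g_m$. By symmetry, $g_m=(x+y)(x+z)(y+z)h_m$ with $\deg h_m=m-3$. By Step 1, the singular points of $C_m$ are:
\begin{itemize}
\item the points $(1:\zeta:-\zeta)$ with $\zeta^{m-1}=1$, and their permutations;
\item among these, the three points such as $(1:-1:0)$ where two of the lines meet.
\end{itemize}
All of them lie on the three lines $x+y=0$, $x+z=0$, $y+z=0$.

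To show $h_m$ is irreducible, it suffices to show that $h_m$ has no singular points and no multiple components. Any singular point of $h_m$, or any intersection of two components of $h_m$, would be a singular point of $g_m$. So I only need to check that $h_m$ is smooth, or nonvanishing, at the finitely many points above. I would do this by a local computation of multiplicities.

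At a point $P=(1:\zeta:-\zeta)$ lying on only the line $y+z=0$, I would restrict to the affine chart $x=1$ and write $z=-y+w$. Then
\[
g_m=(1+w)^m-1-y^m-(w-y)^m=w\bigl(m-my^{m-1}\bigr)+O(w^2),
\]
so $g_m=w\cdot\phi$ with $\phi=m(1-y^{m-1})+O(w)$. Up to the unit factors $(1+y)(1-y+w)$ coming from the other two lines, $\phi$ equals $h_m$. It remains to show $\phi$ vanishes simply at $(y,w)=(\zeta,0)$. The derivative $\partial_y\phi=-m(m-1)y^{m-2}\ne0$ at $y=\zeta$, so $h_m$ is smooth at $P$.

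At the meeting points such as $(1:-1:0)$, I would check that $h_m$ does not vanish at all. One way is to compare multiplicities: show $g_m$ has multiplicity exactly $2$ there, which is fully accounted for by the two lines. The other way is to evaluate $h_m(1,-1,0)$ directly as a limit, using
\[
g_m(1,-1+\epsilon,\delta)\approx \epsilon\,\delta\,(\text{const})
\]
and identifying the constant.

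Given these local computations, $h_m$ defines a nonsingular curve, and hence $h_m$ is irreducible. I expect this local analysis to be the main obstacle, especially at the double points where two lines meet. There one must make sure the second-order term of $g_m$ is exactly the product of the two linear forms times a nonzero constant. I would attempt this first via the explicit expansion around $(1:-1:0)$.
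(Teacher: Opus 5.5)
The paper does not prove this lemma itself (it attributes it to Beukers and cites the literature), so your argument has to stand on its own. Most of it does. The singular points of $C_m$ are exactly the points $(1:\zeta:-\zeta)$ with $\zeta^{m-1}=1$, together with their permutations. The even case is complete. Your chart computation correctly shows that $h_m=0$ is smooth at those points with $\zeta\neq\pm1$. (The vanishing-sum step does not need Lam--Leung: if $a+b+c+d=0$ with all four of modulus $1$ and $a+b\neq0$, then $c+d=-(a+b)$ forces $\{c,d\}=\{-a,-b\}$.)

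The step that fails as written is the one at the points where two of the lines meet.

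\begin{itemize}
\item These points are not $(1:-1:0)$. That point lies only on $x+y=0$ and has $x+y+z=0$, so it is not a singular point of $C_m$ at all. There $g_m(1,-1+\epsilon,\delta)=-m\epsilon+\cdots$, so the expected $\epsilon\delta\cdot(\text{const})$ expansion never appears.
\item The actual pairwise intersections of the lines are $(1:-1:1)$, $(1:1:-1)$ and $(-1:1:1)$. These are the members $\zeta=\pm1$ of your singular family. They are exactly the points your chart computation excludes, since there $1+y$ or $1-y+w$ vanishes.
\end{itemize}

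So, as proposed, $h_m$ is never checked at the only three singular points of $g_m$ that your argument does not already cover.

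The repair is one short expansion. At $(1,-1,1)$ put $y=-1+a$ and $z=1+b$. For odd $m$,
\[
g_m=(1+a+b)^m-1+(1-a)^m-(1+b)^m=m(m-1)\,a(a+b)+(\text{terms of degree}\ge3).
\]
Since $x+y=a$, $y+z=a+b$ and $x+z=2+b$, this gives $h_m(1,-1,1)=\tfrac{m(m-1)}{2}\neq0$. For instance, $h_5=5(x^2+y^2+z^2+xy+yz+zx)$ takes the value $10$ there. By the $S_3$-symmetry of $h_m$, the same holds at the other two points.

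With this correction, every singular point of $g_m$ is either off the curve $h_m=0$ or a smooth point of it. Hence $h_m=0$ is nonsingular, and your B\'ezout argument gives irreducibility.
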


\begin{Lem}\label{Gcubirred} For any odd integer $m\ge 3$ and any nonzero parameters $\alpha_1,\alpha_2\in\C^*$ with $\alpha_1\ne\alpha_2$, the polynomials
$$
g_m=\alpha_1(x+y+z)^m-\alpha_1(x^m+y^m)-\alpha_2z^m
$$
and
$$
\hat{g}_m=\alpha_1(x+y+z)^m-\alpha_2(x^m+y^m+z^m)
$$
are irreducible over $\C$.    
\end{Lem}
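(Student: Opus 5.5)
The plan is to argue geometrically with the projective plane curves $C=\{g_m=0\}$ and $\hat C=\{\hat g_m=0\}$ in $\mathbb{P}^2$. Both polynomials are homogeneous of degree $m$. For $g_m$ the monomials $x^m,y^m$ cancel, but terms such as $x^{m-1}y$ survive, so the degree is still $m$; for $\hat g_m$ the coefficient of each pure power is $\alpha_1-\alpha_2\neq 0$. Suppose $g_m=AB$ with $\deg A=a\ge 1$, $\deg B=b\ge 1$ and $a+b=m$. By Bézout, $A$ and $B$ meet in $ab$ points counted with intersection multiplicity, and every such point is a singular point of $C$. So it suffices to show that $C$ has too few singularities, each too mild, to absorb $ab$ intersections. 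The case $a=1$ (a linear factor) will be excluded directly, and for $a,b\ge2$ we only need
\[
\sum_{P\in \mathrm{Sing}(C)} (\text{local contribution at }P) < ab,\qquad ab\ge 2(m-2).
\]

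First, locate the singular points. Put $s=x+y+z$. For $g_m$ the gradient equations are $s^{m-1}=x^{m-1}=y^{m-1}$ and $\alpha_1 s^{m-1}=\alpha_2 z^{m-1}$. By the Euler relation these already force $g_m=0$. If $s=0$ then $x=y=z=0$, so we may normalise $s=1$. Then $x,y\in\mu_{m-1}$, $z=c\,\zeta$ with $c^{m-1}=\alpha_1/\alpha_2$ and $\zeta\in\mu_{m-1}$, and $x+y+z=1$. For $\hat g_m$ the conditions are the same with all three of $x,y,z$ of the form $c\,\zeta$. Each solution gives a linear relation $x+y+z-1=0$ among (rescaled) roots of unity. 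The classification of short vanishing sums of roots of unity (Mann's theorem: a minimal vanishing sum of length $\le 4$ is either two cancelling pairs or a rotated $1+\omega+\omega^2$) should show that the only possibilities are "pair cancellations", for instance $x=1$, $y=-z$, together with a bounded number of sporadic $6$th-root configurations. Note that $x=-y$, $z=1$ would require $\alpha_1=\alpha_2$, which is excluded. I expect this to bound $\#\mathrm{Sing}(C)$ by $O(m)$ with an explicit small constant, and in most cases to give at most about $m-1$ points. Here the oddness of $m$ enters, through $(-1)^{m-1}=1$ and the parity of $\mu_{m-1}$.

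Second, check that every singular point is an ordinary node. I will compute the Hessian at each singular point: the second derivatives are $m(m-1)\alpha_1(s^{m-2}-x^{m-2})$, etc. I expect it to be a nondegenerate quadratic form on the tangent plane whenever $\alpha_1\neq\alpha_2$. Two distinct components through a node meet there with multiplicity exactly $1$, so the total contribution is at most the number of nodes. Linear factors are excluded by restricting to the line $z=0$: there $g_m(x,y,0)=\alpha_1\,xy(x+y)\,\tilde g(x,y)$, where $\tilde g$ is given by the Cauchy--Liouville--Mirimanoff factorisation. We then check that none of the candidate lines $x=0$, $y=0$, $x+y=0$, or lines through these points, divides $g_m$; for example $g_m(0,y,z)=\alpha_1(y+z)^m-\alpha_1 y^m-\alpha_2 z^m\not\equiv0$. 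The same check applies to $\hat g_m$.

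The main obstacle is the first step: obtaining a node count strictly below $2(m-2)$ uniformly in $m$ and in the ratio $\alpha_1/\alpha_2$. The count is hardest when $c$ itself is a root of unity, since extra vanishing-sum configurations can then appear. If the crude count fails for small $m$ or special ratios, I would fall back on Beukers' local method used for Lemma~\ref{Gcubdiag}. One takes a smooth point of $C$ with high-order tangency, such as a point where $C$ meets $z=0$ near $(1:-1:0)$. There one computes the local branch and shows that any factor through that point must have degree $\ge m-1$, which leaves only the already excluded linear cofactor.
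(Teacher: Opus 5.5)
The paper gives no argument for this lemma. It only says it follows Beukers' method for Lemma~\ref{Gcubdiag}, so your main route has to stand on its own, and its key step fails.

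In Step~1 you invoke Mann's theorem for $x+y+z-1=0$. But $z=c\zeta$ with $c^{m-1}=\alpha_1/\alpha_2$ is in general not a root of unity, so this is not a vanishing sum of roots of unity. Moreover, every pair cancellation is excluded by hypothesis. The choices $x=1,\ y=-z$, or $y=1,\ x=-z$, or $z=1,\ x=-y$ all give $z^{m-1}=1$ (because $m-1$ is even), i.e.\ $\alpha_1=\alpha_2$. The singular points that actually occur are $(\zeta_1:\zeta_2:1-\zeta_1-\zeta_2)$ with $\zeta_1^{m-1}=\zeta_2^{m-1}=1$. They exist exactly when $\alpha_1/\alpha_2=(1-\zeta_1-\zeta_2)^{m-1}$. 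For example, $m=3$, $\alpha_1=9\alpha_2$ gives a node at $(-1:-1:3)$, and $m=7$, $\alpha_1=-27\alpha_2$ gives six nodes. So your expected ``about $m-1$ points'' rests on the wrong picture. A correct uniform bound would need an analysis of the coincidences $1-\zeta_1'-\zeta_2'=\eta(1-\zeta_1-\zeta_2)$, which are six-term vanishing sums, plus separate small-$m$ checks; none of this is supplied.

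Your exclusion of linear factors is not an argument either. Lines through finitely many points of $z=0$ still form an infinite family, and $g_m(0,y,z)\not\equiv0$ rules out only $x=0$. Your node claim is fine: with $s=1$, the Hessian at a singular point is $m(m-1)\alpha_1\bigl(J-\mathrm{diag}(x^{-1},y^{-1},z^{-1})\bigr)$, where $J$ is the all-ones matrix, and its leading $2\times2$ minor is $z/(xy)\neq0$.

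Your fallback is the right idea, and stated precisely it proves the lemma outright, with no singularity count and no separate treatment of lines. The relevant point is exactly $P=(1:-1:0)$, and the contact order there is $m$, not $m-1$.

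\begin{enumerate}
\item Since $m$ is odd, $g_m(x,-x,z)=(\alpha_1-\alpha_2)z^m$. So the line $L=\{x+y=0\}$ is not a component, and it meets the curve only at $P$, to order $m$.
\item At $P$ one has $s=0$, hence $\nabla g_m(P)=-m\alpha_1(1,1,0)\neq0$, and $P$ is a smooth point.
\item Suppose $g_m=AB$ with $A,B$ nonconstant. Then $A(x,-x,z)\,B(x,-x,z)=(\alpha_1-\alpha_2)z^m$ forces $A(x,-x,z)=c_Az^{a}$ and $B(x,-x,z)=c_Bz^{b}$ with $a=\deg A\ge1$ and $b=\deg B\ge1$.
\item Hence $A(P)=B(P)=0$, and $\nabla g_m(P)=A(P)\nabla B(P)+B(P)\nabla A(P)=0$, a contradiction.
\end{enumerate}

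The same argument works verbatim for $\hat g_m$: $\hat g_m(x,-x,z)=(\alpha_1-\alpha_2)z^m$ and $\nabla\hat g_m(P)=-m\alpha_2(1,1,0)\neq0$. Alternatively, $\hat g_m(x,y,-x-y-z)$ is exactly $g_m$ with $\alpha_1$ and $\alpha_2$ interchanged. This restriction to $L$ is where the oddness of $m$ genuinely enters.
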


\begin{Lem}\label{Gcubred} For any integer $m\ge 3$ and  nonzero parameters $\alpha_1,\alpha_2\in\C^*$ with $\alpha_1\ne\alpha_2$,
 consider the polynomial
$$
g_m=\alpha_1(x+y+z)^m-\alpha_1x^m-\alpha_2(y^m+z^m).
$$
\begin{itemize}
\item If $m$ is odd, then $g_m=(y+z) h_m$, where $h_m$ is irreducible over $\C$.
\item If $m$ is even, then $g_m$ is irreducible over $\C$.
\end{itemize}
\end{Lem}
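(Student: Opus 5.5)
This is Lemma~\ref{Gcubred}: for $g_m=\alpha_1(x+y+z)^m-\alpha_1x^m-\alpha_2(y^m+z^m)$ we must show that for odd $m$ the factor $y+z$ splits off with an irreducible cofactor, and that for even $m$ the polynomial is irreducible. The plan follows the Beukers-type argument used for Lemma~\ref{Gcubdiag}.

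\textbf{Linear factor.} Set $z=-y$. Then $g_m(x,y,-y)=\alpha_1x^m-\alpha_1x^m-\alpha_2(y^m+(-y)^m)$. This vanishes identically exactly when $m$ is odd, so $(y+z)\mid g_m$ for odd $m$. For even $m$ it equals $-2\alpha_2y^m\neq0$, so $y+z$ is not a factor. Hence for odd $m$ we can write $g_m=(y+z)h_m$ with $\deg h_m=m-1$.

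\textbf{Reduction to a plane curve.} Next pass to the projective curve $C:\,h_m=0$ (resp.\ $g_m=0$) in $\mathbb{P}^2$ and aim to show it is irreducible. Two tools are available.
\begin{enumerate}
\item \emph{Bound on singular points.} Compute the singular locus of $g_m$ from the gradient equations
\[
\alpha_1 s^{m-1}=\alpha_1 x^{m-1}=\alpha_2 y^{m-1}=\alpha_2 z^{m-1},\qquad s=x+y+z .
\]
These force $y^{m-1}=z^{m-1}$ and $x^{m-1}=s^{m-1}$, so $y=\epsilon z$ and $s=\eta x$ with $\epsilon,\eta$ roots of unity. Substituting back gives finitely many candidate points, which can be listed explicitly, and one can check they are nodes or ordinary singularities of small multiplicity.
\item \emph{Intersection count.} If $C=C_1\cup C_2$ with $\deg C_i=d_i\ge1$, then $C_1\cdot C_2=d_1d_2\ge m-2$ (since $d_1+d_2=m-1$ in the odd case), and every intersection point is a singular point of $C$.
\end{enumerate}
So it suffices to show the singular points on $C$, counted with local intersection contributions, number fewer than $d_1d_2$.

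\textbf{Main obstacle: the singularity count is too weak.} Counting with (ii) alone likely fails, because the number of singular points grows roughly like $m^2$. Beukers' argument instead uses the structure near special points. For example, consider the points at infinity of the affine curve $z=1$, i.e.\ the intersection of $C$ with the line $x+y+z=0$. There $\alpha_1x^m+\alpha_2(y^m+z^m)=0$ is a squarefree binary form, except in degenerate cases excluded by $\alpha_1\neq\alpha_2$. This gives smooth, transversal crossings, so each branch must contain a prescribed share of these points.

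I would carry this out as follows. Parametrise local branches of any putative component at the singular points, which are of the form $y=\epsilon z$, $x=\eta s$. Show each such singularity is an ordinary node whose two branches have tangents that are not interchanged, so the branches there must belong to the same component. Then use a connectivity or monodromy argument: the Galois action permuting the $(m-1)$-th roots of unity connects all the nodes, which forces a single component.

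The degenerate cases $\alpha_2/\alpha_1$ giving extra tangencies must be checked separately; this is the step I expect to need the most care. The hypothesis $\alpha_1\neq\alpha_2$ should rule out the point $(1:0:0)$-type coincidences that would otherwise produce the factor $y+z$ in the even case, or extra factors beyond it in the odd case.
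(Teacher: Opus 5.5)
\textbf{There is a genuine gap.} Your linear-factor step is correct. Your framework is also the right one: singular points of the projective curve, together with the bound $d_1d_2\ge \deg-1$ on intersection points of two components. This is the singular-point method behind Lemma~\ref{Gcubdiag}, which the paper says is used here. But you abandon the framework exactly where it succeeds, on the basis of a false estimate. The singular points do not grow like $m^2$; you never used the linear relation $s=x+y+z$.

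Normalise $s=1$. A singular point then has $x^{m-1}=1$, $y=\gamma\epsilon_1$, $z=\gamma\epsilon_2$, with $\epsilon_i^{m-1}=1$ and $\gamma^{m-1}=\alpha_1/\alpha_2$, subject to $1-x=\gamma(\epsilon_1+\epsilon_2)$.
\begin{itemize}
\item If $x=1$, this forces $\epsilon_2=-\epsilon_1$. That is possible only for odd $m$, and gives exactly the $m-1$ points $(1:\gamma\epsilon:-\gamma\epsilon)$ on $y+z=0$.
\item Solutions with $x\neq1$ occur only for finitely many values of $\alpha_1/\alpha_2$.
\item At any singular point the Hessian is proportional to $J-\mathrm{diag}(1/x,1/y,1/z)$, where $J$ is the all-ones matrix. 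Its principal minor $(1-1/x)(1-1/y)-1=z/(xy)$ is nonzero, so every singularity is an ordinary node.
\end{itemize}
Hence the points on $y+z=0$ are transversal crossings of the line with smooth branches of $h_m$. For generic $\alpha_1/\alpha_2$ the curve $h_m=0$ (odd $m$), resp.\ $g_m=0$ (even $m$), is therefore nonsingular and so irreducible. This is also where $\alpha_1\neq\alpha_2$ enters, not through ``$(1:0:0)$-type coincidences''. When $\alpha_1=\alpha_2$ one may take $\gamma=1$, and the equation acquires the families $\epsilon_1=1,\ \epsilon_2=-x$ and $\epsilon_2=1,\ \epsilon_1=-x$. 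These are nodes filling $x+z=0$ and $x+y=0$, which for odd $m$ produce the factors $(x+y)(x+z)$ of Lemma~\ref{Gcubdiag}.

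The argument you propose instead does not work, for three reasons.
\begin{itemize}
\item An ordinary node whose two branches lie on different components is exactly what a transversal intersection of two components looks like; it happens at every node on $y+z=0$. So no property of the tangents can force the branches into one component.
\item There is no Galois action on a curve whose coefficients $\alpha_1,\alpha_2$ are arbitrary complex numbers. In any case, a symmetry acting transitively on the nodes only permutes components; it does not rule them out.
\item Your claim that the restriction to $x+y+z=0$ is squarefree when $\alpha_1\neq\alpha_2$ is false. For odd $m$ that restriction is $\alpha_1(y+z)^m-\alpha_2(y^m+z^m)$, and for $m=5$, $\alpha_2=-4\alpha_1$ it is divisible by $(y-\mathrm{i}z)^2$, cf.\ Lemma~\ref{gsym2}(ii).
\end{itemize}

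The step that genuinely remains is the finitely many exceptional ratios, where nodes off $y+z=0$ do appear. For example, take $m=4$, $\alpha_1=3\sqrt3\,\mathrm{i}\,\alpha_2$, $\omega=e^{2\pi\mathrm{i}/3}$ and $\gamma=\omega-1$; then there are nodes at $(\omega:\gamma\omega:\gamma\omega^2)$ and at its $y\leftrightarrow z$ image, so the curve is not smooth. There you must show that the number of solutions of $1-x=\gamma(\epsilon_1+\epsilon_2)$ with $x\neq1$ stays below $\deg-1$. There are at most two ordered solutions per value of $x$, since two unit complex numbers with a prescribed nonzero sum are determined up to order. So what has to be controlled is how many $x$ can occur for a fixed $\gamma$, and your proposal contains nothing that does this.
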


\section{The existence theorem}\label{appc}

In this appendix, we state, without proof, the existence theorem that appears as Theorem 2.3 in \cite{mr99g:35058}, where it is referred to as an implicit function theorem (see also Theorem 2.76, p.27 in \cite{wang98}).
Roughly speaking, the theorem asserts that solving a problem up to a sufficiently high order guarantees the existence of an exact solution. It was formulated in the context of filtered Lie modules and played a key role in the classification of scalar homogeneous equations, where it was used to show that the existence of one symmetry implies the existence of infinitely many symmetries.

Consider a filtered Lie algebra \[\cF = \cF^0 \supseteq \cF^1 \supseteq \cdots \supseteq \cF^l \supseteq \cdots\] and let \(\cV\) be a filtered \(\cF\)-module 
\[\cV = \cV^0 \supseteq \cV^1 \supseteq \cdots \supseteq \cV^l \supseteq \cdots,\qquad \bigcap_{j=0}^{\infty} \cV^j = 0\]
such that the action of \(\cF\) on \(\cV\) satisfies \(X^i \cdot v^j \in \cV^{i+j}\) whenever \(X^i \in \cF^i\) 
and \(v^j \in \cV^j\).

\begin{Def}
We call \(S^0 \in \cF^0/\cF^1\) \emph{relatively \(l\)-prime with respect to \(K^0 \in \cF^0/\cF^1\)} if for all \(j \ge l>0\) and \(X^j \in \cV^j\),
\[
S^0 \cdot X^j \in \operatorname{Im} K^0 \pmod{\cV^{j+1}} \;\Longrightarrow\; X^j \in \operatorname{Im} K^0|_{\cV^j} \pmod{\cV^{j+1}} .
\]
\end{Def}

\begin{Def}
We call \(K^0 \in \cF^0/\cF^1\) \emph{nonlinear injective} if for all \(X^l \in \cV^l\) with \(l > 0\),
\[
K^0 \cdot X^l \in \cV^{l+1} \;\Longrightarrow\; X^l \in \cV^{l+1}.
\]
\end{Def}

\begin{The}\label{themapp}
Let  \(K = K^0 + K^1\) and \(S = S^0 + S^1\), where \(K^0\), \(S^0\in\cF^0/\cF^1\), and \(K^1, S^1 \in \cF^1\). 
Suppose that
\begin{itemize}
\item \([K, S] = 0\),
\item \(K^0\) is nonlinear injective,
\item \(S^0\) is relatively \(l\)-prime with respect to \(K^0\),
\item and there exists some \(\hat{Q} \in \cV^0\) such that
\[
K \cdot \hat{Q} \in \cV^{l} \quad\text{and}\quad S^0 \cdot \hat{Q} \in \cV^1.
\]
\end{itemize}
Then there exists a unique \(Q = \hat{Q} + Q^{l}\) with \(Q^{l} \in \cV^{l}\) such that
\[
K \cdot Q = S \cdot Q = 0.
\]
\end{The}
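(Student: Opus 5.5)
The plan is to build $Q^l$ one filtration degree at a time, correcting $\hat Q$ so that $K\cdot Q$ is pushed into ever deeper pieces $\cV^j$. The commutation $[K,S]=0$, together with the two hypotheses on $K^0$, is what allows each correction to be made inside $\operatorname{Im}K^0$. I will work with the associated graded pieces $\cV^j/\cV^{j+1}$ and write $X_j$ for the class of an element $X\in\cV^j$ there.

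\emph{Inductive step.} Set $Q_l=\hat Q$. Suppose $Q_j=\hat Q+\sum_{i=l}^{j-1}X^i$ has been constructed with $X^i\in\cV^i$ and $R:=K\cdot Q_j\in\cV^j$ for some $j\ge l$; the base case $j=l$ is the hypothesis $K\cdot\hat Q\in\cV^l$.

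First I show that $T:=S\cdot Q_j$ lies in $\cV^j$. We have $T\in\cV^1$: indeed $S^0\cdot\hat Q\in\cV^1$ by hypothesis, $S^1\cdot\hat Q\in\cV^1$ by the filtration property, and $S\cdot X^i\in\cV^i\subset\cV^1$ since $l>0$. By $[K,S]=0$ we get $K\cdot T=S\cdot R\in\cV^j$. If the lowest nonzero degree $i$ of $T$ satisfied $i<j$, then $K^0\cdot T_i\equiv K\cdot T\equiv 0 \pmod{\cV^{i+1}}$. Nonlinear injectivity of $K^0$ (applicable because $i\ge1$) would give $T_i=0$, a contradiction. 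Hence $T\in\cV^j$.

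Reading the identity $S\cdot R=K\cdot T$ modulo $\cV^{j+1}$ now gives $S^0\cdot R_j\equiv K^0\cdot T_j$, so $S^0\cdot R_j\in\operatorname{Im}K^0 \pmod{\cV^{j+1}}$. Since $j\ge l$, relative $l$-primeness of $S^0$ with respect to $K^0$ yields $X^j\in\cV^j$ with $R_j\equiv K^0\cdot X^j \pmod{\cV^{j+1}}$. Put $Q_{j+1}=Q_j-X^j$. Then
\[
K\cdot Q_{j+1}=(R-K^0\cdot X^j)-K^1\cdot X^j\in\cV^{j+1},
\]
because $K^1\cdot X^j\in\cV^{j+1}$.

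\emph{Passing to the limit.} Define $Q=\hat Q-\sum_{j\ge l}X^j$. This requires $\cV$ to be complete with respect to the filtration; in the application to $\cL$ this holds after passing to formal sums, and for homogeneous data only finitely many terms contribute in each weight, so the sum is finite in each weight. Then $K\cdot Q\in\bigcap_j\cV^j=0$. The argument above applied at every stage shows $S\cdot Q\in\cV^j$ for all $j$, hence $S\cdot Q=0$.

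\emph{Uniqueness.} If $Q$ and $Q'$ are two such solutions, then $D=Q-Q'\in\cV^l$ satisfies $K\cdot D=0$. If $D\ne0$, its lowest component $D_i$ with $i\ge l>0$ satisfies $K^0\cdot D_i\in\cV^{i+1}$, so $D_i=0$ by nonlinear injectivity, a contradiction. Hence $D=0$.

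I expect the main obstacle to be the step that upgrades $S\cdot Q_j$ from $\cV^1$ to $\cV^j$. It is the only place where $[K,S]=0$ and nonlinear injectivity must be combined, and it needs the strict inequality $l>0$. The second delicate point is the completeness/convergence issue when passing to the limit, which I would settle via the weight grading as indicated above.
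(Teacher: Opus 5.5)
Your proof is correct. The paper gives no proof of this theorem; it quotes it without proof from the cited scalar-classification paper. Your order-by-order construction is the standard argument for that result. In it, $[K,S]=0$ together with nonlinear injectivity lifts $S\cdot Q_j$ from $\cV^1$ into $\cV^j$, so that relative $l$-primeness can be applied to the leading part of $K\cdot Q_j$. Uniqueness follows from nonlinear injectivity, as you say. So does $S\cdot Q=0$: it is cleanest to argue it directly, since $K\cdot(S\cdot Q)=S\cdot(K\cdot Q)=0$ and $S\cdot Q\in\cV^1$.

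Your completeness caveat is genuinely needed: with only $\bigcap_j\cV^j=0$, the statement as printed is false. Take $\cV=\C[x]$ with $\cV^j=x^j\C[x]$. Let $\cF$ be spanned by $x\frac{d}{dx}$ and the multiplications by $x^k$, $k\ge 1$, with $\cF^i$ spanned by the $x^k$, $k\ge i$. Put $K=S=x\frac{d}{dx}-x$, $\hat Q=1$ and $l=1$. Then $K^0=x\frac{d}{dx}$ is nonlinear injective, and since it acts on $x^j$ by $j\neq 0$, $S^0$ is relatively $1$-prime to it. Also $K\cdot\hat Q=-x\in\cV^1$ and $S^0\cdot\hat Q=0$. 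Yet $K\cdot Q=0$ forces $Q'=Q$, whose only solution with $Q(0)=1$ is $e^x\notin\C[x]$.

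So some completeness or finiteness input is essential. In the application to Theorem \ref{bbt}, your weight argument supplies it, but make it precise as follows. Choose each $X^j$ in the weight space of $\hat Q$; this is possible because $K^0$ and $S^0$ respect the weight grading, which is compatible with the degree grading. Then $R=K\cdot Q_j$ lies in $\cV^j$ intersected with one fixed finite-dimensional weight space, so it vanishes once $j$ exceeds the maximal degree there. The construction therefore terminates after finitely many steps, and $Q$ is an honest element of $\cL$ rather than a formal series.
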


The following computation is used to prove that $\bg^{0,0}$ is relatively 2-prime with respect to $\bff^{0,0}$ in the proof of Theorem \ref{bbt}.

Let \eqref{sys1sysh} and \eqref{sys1symh} be symbolic representations of system \eqref{sys1hom} and its symmetry \eqref{sys1hom}, respectively. Let
$$
\omega_i(\xi_1)=\lambda_i \xi_1^n,\quad \Omega_i(\xi_1)=\mu_i \xi_1^m
$$
with $n,m\in 2\bbbn+1$. Then, by Lemmas \ref{gdiag}, \ref{gsym} and \ref{gsym2} for every odd $n,m$ we obtain
\begin{eqnarray*}
&&xy(x+y)\,|\,G^{\bullet_1}_{2,0}(x,y),\quad xy(x+y)\,|\,G^{\bullet_2}_{0,2}(x,y),\\
&&y\,|\, G^{\bullet_1}_{1,1}(x,y),\quad x\,|\,G^{\bullet_2}_{1,1}(x,y),\\
&&x+y\,|\, G^{\bullet_1}_{0,2}(x,y),\quad x+y\,|\,  G^{\bullet_2}_{2,0}(x,y),
\end{eqnarray*}
where $\bullet$ stands for either $\omega$ or $\Omega$ (so that, e.g., $G^{\bullet_1}_{2,0}$ denotes either $G^{\omega_1}_{2,0}$ or $G^{\Omega_1}_{2,0}$), and $G^{\omega_1}_{j,k}(x,y)$, $G^{\Omega_1}_{j,k}(x,y)$ are defined by \eqref{Go} and \eqref{GO}.

Assume that \eqref{sys1symh} is the symbolic representation of a $2$-approximate symmetry for system \eqref{sys1sysh}. From relations \eqref{Aijkg} for $i=1,2$ with $j+k=2$ and the above divisibility properties, it  follows that the coefficients $a^i_{j,k}$ are of the form:
\begin{eqnarray*}
&&a_{2,0}^1(x,y)=f^1_{2,0}(x,y)\frac{G^{\omega_1}_{2,0}(x,y)}{xy(x+y)},\quad a_{1,1}^1(x,y)=f^1_{1,1}(x,y)\frac{G^{\omega_1}_{1,1}(x,y)}{y},\\
&&a_{0,2}^1(x,y)=f^1_{0,2}(x,y)\frac{G^{\omega_1}_{0,2}(x,y)}{x+y},\quad a_{2,0}^2(x,y)=f^2_{2,0}(x,y)\frac{G^{\omega_2}_{2,0}(x,y)}{x+y},\\
&&a_{1,1}^2(x,y)=f^2_{1,1}(x,y)\frac{G^{\omega_2}_{1,1}(x,y)}{x},\quad a_{0,2}^2(x,y)=f^2_{0,2}(x,y)\frac{G^{\omega_2}_{0,2}(x,y)}{xy(x+y)},
\end{eqnarray*}
where $f^i_{j,k}$ are polynomials. The corresponding coefficients $A^i_{j,k}$ have analogous expressions, obtained by replacing $G^{\omega_i}_{j,k}$ with $G^{\Omega_i}_{j,k}$.

According to formula \eqref{Aijkg} for $i=1,2$ and $j+k=3$, we have 
\begin{eqnarray*}
&&G^{\omega_1}_{3,0}(\xi_1,\xi_2,\xi_3)A^1_{3,0}(\xi_1,\xi_2,\xi_3)=G^{\Omega_1}_{3,0}(\xi_1,\xi_2,\xi_3)a^1_{3,0}(\xi_1,\xi_2,\xi_3)+R^1_{3,0}(\xi_1,\xi_2,\xi_3),\\
&&G^{\omega_1}_{1,2}(\xi_1,\zeta_1,\zeta_2)A^1_{1,2}(\xi_1,\zeta_1,\zeta_2)=G^{\Omega_1}_{1,2}(\xi_1,\zeta_1,\zeta_2)a^1_{1,2}(\xi_1,\zeta_1,\zeta_2)+R^1_{1,2}(\xi_1,\zeta_1,\zeta_2),
\end{eqnarray*}
where, after substituting the expressions for $a^i_{j,k}$ and $A^i_{j,k}$ with $i=1,2$ and $j+k=2$, the remainder terms $R^1_{3,0}$ and $R^1_{1,2}$ are given by
\begin{eqnarray*}
&&R^1_{3,0}(\xi_1,\xi_2,\xi_3)=\frac{2}{\xi_1\xi_2\xi_3(\xi_1+\xi_2+\xi_3)}\\
&&\qquad \left\langle \frac{f^1_{2,0}(\xi_1 + \xi_2, \xi_3) f^1_{2,0}(\xi_1, \xi_2)(G^{\omega_1}_{2,0}(\xi_1, \xi_2) G^{\Omega_1}_{2,0}(\xi_1 + \xi_2, \xi_3)-G^{\omega_1}_{2,0}(\xi_1 + \xi_2, \xi_3) G^{\Omega_1}_{2,0}(\xi_1, \xi_2))}{(\xi_1+\xi_2)^2}\right\rangle_{\xi}\\
&&\qquad +\left\langle\frac{f^2_{2,0}(\xi_1, \xi_2) f^1_{1,1}(\xi_3, 
\xi_1 + \xi_2)(G^{\omega_2}_{2,0}(\xi_1, \xi_2) G^{\Omega_1}_{1,1}(\xi_3, \xi_1 + \xi_2)-G^{\Omega_2}_{2,0}(\xi_1, \xi_2) G^{\omega_1}_{1,1}(\xi_3, \xi_1 + \xi_2))}{(\xi_1+\xi_2)^2}\right\rangle_{\xi}
\end{eqnarray*}
and
\begin{eqnarray*}
&&\!\!\!
R^1_{1,2}(\xi_1,\zeta_1,\zeta_2)=\left\langle\frac{f^1_{1,1}(\xi_1, \zeta_1 + \zeta_2) f^2_{0,2}(\zeta_1, 
  \zeta_2)(G^{\omega_2}_{0,2}(\zeta_1, \zeta_2) G^{\Omega_1}_{1,1}(\xi_1, \zeta_1 + \zeta_2) - G^{\Omega_2}_{0,2}(\zeta_1, \zeta_2) G^{\omega_1}_{1,1}(\xi_1, \zeta_1 + \zeta_2))}{\zeta_1\zeta_2(\zeta_1+\zeta_2)^2}\right\rangle_\zeta\\
&&\qquad-\left\langle\frac{f^1_{1,1}(\xi_1,\zeta_1) f^1_{1,1}(\xi_1 + \zeta_1,\zeta_2) (G^{\Omega_1}_{1,1}(\xi_1, \zeta_1) G^{\omega_1}_{1,1}(\xi_1 + \zeta_1, \zeta_2) - 
   G^{\omega_1}_{1,1}(\xi_1, \zeta_1) G^{\Omega_1}_{1,1}(\xi_1 + \zeta_1, \zeta_2))}{\zeta_1\zeta_2}\right\rangle_\zeta\\
&&\qquad+\frac{2 f^1_{0,2}(\zeta_1, \zeta_2) f^1_{2,0}(\xi_1, \zeta_1 + \zeta_2) (G^{\omega_1}_{0,2}(\zeta_1, \zeta_2) G^{\Omega_1}_{2,0}(\xi_1,\zeta_1 + \zeta_2) - 
   G^{\Omega_1}_{0,2}(\zeta_1, \zeta_2) G^{\omega_1}_{2,0}(\xi_1,\zeta_1 + \zeta_2))}{\xi_1 (\zeta_1 + \zeta_2)^2 (\xi_1 + \zeta_1 + \zeta_2)}\\
&&\qquad+\left\langle\frac{2 f^1_{0,2}(\xi_1 + \zeta_1, \zeta_2) f^2_{1,1}(\xi_1, \zeta_1) (G^{\omega_2}_{1,1}(\xi_1, \zeta_1) G^{\Omega_1}_{0,2}(\xi_1 + \zeta_1, \zeta_2) - 
   G^{\Omega_2}_{1,1}(\xi_1, \zeta_1) G^{\omega_1}_{0,2}(\xi_1 + \zeta_1, \zeta_2))}{\xi_1 (\xi_1 + \zeta_1 + \zeta_2)}\right\rangle_\zeta
\end{eqnarray*}
By direct computation, we obtain the following statement:
\begin{Lem}\label{quadraTh} Let
$
\omega_i(\xi_1)=\lambda_i \xi_1^n$ and $ \Omega_i(\xi_1)=\mu_i \xi_1^m
$
with $n,m\in 2\bbbn+1$. Then
\begin{itemize}
    \item[{\rm (i)}]  $(\xi_1+\xi_2)(\xi_1+\xi_3)(\xi_2+\xi_3)\mid R^1_{3,0}(\xi_1,\xi_2,\xi_3)$ if and only if
\begin{equation}\label{quadrac1}
2 f^1_{2,0}(x,0) f^1_{2,0}(y, -y) - x^2 y^2 f^1_{1,1}(x, 0) f^2_{2,0}(y, -y)=0 .
\end{equation}
\item[{\rm (ii)}] $(\zeta_1+\zeta_2)\mid R^1_{1,2}(\xi_1,\zeta_1,\zeta_2)$ if and only if
\begin{equation}\label{quadrac2}
2 x^2 f^1_{0,2}(x, -x) f^1_{2,0}(0, y) - y^2 f^1_{1,1}(y, 0) f^2_{0,2}(x, -x)=0.
\end{equation}
\end{itemize}
\end{Lem}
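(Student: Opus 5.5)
The plan is a direct computation on the two factors in question. We restrict $R^1_{3,0}$ and $R^1_{1,2}$ to the relevant hyperplanes and identify the surviving terms.

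\emph{Part (i).} Since $R^1_{3,0}$ is symmetric in $\xi_1,\xi_2,\xi_3$ (it arises as a symmetrisation), divisibility by $(\xi_1+\xi_2)(\xi_1+\xi_3)(\xi_2+\xi_3)$ is equivalent to divisibility by the single factor $\xi_1+\xi_2$. This is further equivalent to the vanishing of the numerator polynomial of $R^1_{3,0}$ restricted to $\xi_2=-\xi_1$. I therefore set $\xi_1=y$, $\xi_2=-y$, $\xi_3=x$, and I clear the common denominator $\xi_1\xi_2\xi_3(\xi_1+\xi_2+\xi_3)$, which is nonzero generically on this hyperplane.

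In the symmetrised sums, I then classify the six permutations according to whether the apparent pole $(\xi_{\sigma(1)}+\xi_{\sigma(2)})^{-2}$ sits on the hyperplane. I must also use that for odd $n,m$ each $G^{\bullet}_{2,0}(a,b)$ and $G^{\bullet_1}_{1,1}(\cdot,a+b)$ vanishes to appropriate order when $a+b=0$. By Lemma~\ref{gsym} and the divisibility list preceding the lemma, for instance $G^{\bullet_1}_{1,1}(x,y)=y\,(\cdots)$ and $G^{\bullet}_{2,0}(x,y)=xy(x+y)(\cdots)$. This makes the singular permutations produce finite limits, obtained by l'H\^opital in $\xi_1+\xi_2$. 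Those limits involve $f^1_{2,0}(y,-y)$, $f^2_{2,0}(y,-y)$, $f^1_{2,0}(x,0)$ and $f^1_{1,1}(x,0)$, multiplied by explicit leading Taylor coefficients of the $G$'s, namely $n\lambda_1 x^{n-1}$-type and $m\mu_1 x^{m-1}$-type terms.

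The regular permutations with $\xi_3$ paired with $\pm y$ must cancel in pairs by the oddness symmetry $(\xi_1,\xi_2)\mapsto(\xi_2,\xi_1)$ combined with $y\mapsto -y$. After collecting terms, the common $G$-dependent factor should be a nonzero multiple of $\lambda_1\mu_1(\ldots)$ times a Wronskian-type combination $G^{\omega}\,\partial G^{\Omega}-G^{\Omega}\,\partial G^{\omega}$ evaluated at $(x,0)$. This factor is not identically zero because $n\neq m$. Dividing it out should leave precisely $2f^1_{2,0}(x,0)f^1_{2,0}(y,-y)-x^2y^2f^1_{1,1}(x,0)f^2_{2,0}(y,-y)$, which gives \eqref{quadrac1}. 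The ``if'' direction follows because the same computation shows the restriction is this expression times the nonzero factor.

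\emph{Part (ii).} The argument is analogous with $\zeta_2=-\zeta_1$. Set $\zeta_1=x$, $\zeta_2=-x$, $\xi_1=y$. The term carrying $(\zeta_1+\zeta_2)^{-2}$ is regularised using $x+y\mid G^{\bullet_1}_{0,2}$ and $G^{\bullet_2}_{0,2}=xy(x+y)(\cdots)$. It yields the products $f^1_{0,2}(x,-x)f^1_{2,0}(0,y)$ and $f^1_{1,1}(y,0)f^2_{0,2}(x,-x)$. The second and fourth terms are regular on the hyperplane and should cancel or vanish there: the factor $G^{\bullet_1}_{1,1}(\xi_1,\zeta_1+\zeta_2)$ vanishes since its second argument is $0$, and the $\zeta$-symmetrisation pairs cancel by oddness.

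\emph{Main obstacle.} The delicate step is the bookkeeping of the limits on the hyperplane. It requires checking that the extracted $G$-factors common to both surviving products are identical and nonvanishing, so that the condition reduces exactly to the stated bilinear identity with the coefficients $2$, $x^2y^2$ and $2x^2$, $y^2$. I would carry this out using computer algebra for generic odd $n,m$, writing $G^{\bullet}$ via the divisibility factorisations. The relative weights $x^2y^2$ and $x^2$ come from the removed factors $xy$ and $x$ in those factorisations.
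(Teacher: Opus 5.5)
Your route is the paper's route: restrict to $\xi_2=-\xi_1$ (the paper writes $\xi_2=-\xi_1+\epsilon$ and reads off the $\epsilon^0$ term), keep only the pairing $\{\xi_1,\xi_2\}$, and divide out a nonzero polynomial. However, the two facts that make up the content of the lemma are described incorrectly in your sketch, and the arguments you give for them would not establish them.

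\emph{Vanishing of the other pairings.} The pairings $\{\xi_1,\xi_3\}$ and $\{\xi_2,\xi_3\}$ do not cancel against each other via $\xi_1\leftrightarrow\xi_2$ combined with $y\mapsto-y$. That symmetry only shows their contributions are $P(y,x)$ and $P(-y,x)$ for a single function $P$, and nothing forces $P$ to be odd in $y$. What actually happens is that each of them vanishes identically on the hyperplane. There the third variable is minus one member of the pair, and for odd exponents $G^{\bullet_1}_{2,0}(a+b,-a)=-G^{\bullet_1}_{2,0}(a,b)$, while by Lemma~\ref{gsym} $G^{\bullet_1}_{1,1}(-a,a+b)=-G^{\bullet_2}_{2,0}(a,b)$. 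Hence every bracket of the form $G^{\omega}G^{\Omega}-G^{\Omega}G^{\omega}$ is zero.

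In (ii) the second and fourth terms of $R^1_{1,2}$ vanish by the same mechanism, through $G^{\bullet_1}_{1,1}(\xi_1+\zeta_1,-\zeta_1)=-G^{\bullet_1}_{1,1}(\xi_1,\zeta_1)$ and $G^{\bullet_1}_{0,2}(\xi_1+\zeta_1,-\zeta_1)=-G^{\bullet_2}_{1,1}(\xi_1,\zeta_1)$. The factor $G^{\bullet_1}_{1,1}(\xi_1,\zeta_1+\zeta_2)$ you cite actually sits in the first term, which is singular (as is the third). Its simple zero is what removes one power of $(\zeta_1+\zeta_2)^{-2}$ there.

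\emph{The common factor.} It is not a Wronskian evaluated at $(x,0)$; taken literally that is zero, because $G^{\bullet}_{2,0}(x,0)=0$. It is the two-point determinant $W=\omega_1'(\xi_1)\Omega_1'(\xi_3)-\omega_1'(\xi_3)\Omega_1'(\xi_1)$ in (i), and $\omega_1'(\xi_1)\Omega_2'(\zeta_1)-\Omega_1'(\xi_1)\omega_2'(\zeta_1)$ in (ii).

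It is shared by both surviving products for the following reason. Along $\xi_1+\xi_2=\epsilon$, both $G^{\omega_1}_{2,0}(\xi_1,\xi_2)$ and $G^{\omega_2}_{2,0}(\xi_1,\xi_2)$ equal $-\omega_1'(\xi_1)\epsilon+O(\epsilon^2)$, since the $\lambda_2\epsilon^n$ term is of higher order. Likewise $G^{\Omega_1}_{2,0}(\epsilon,\xi_3)$ and $G^{\Omega_1}_{1,1}(\xi_3,\epsilon)$ both equal $\Omega_1'(\xi_3)\epsilon+O(\epsilon^2)$, and the same holds with $\omega$ and $\Omega$ interchanged.

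Now combine this with the prefactor $2/(\xi_1\xi_2\xi_3(\xi_1+\xi_2+\xi_3))\to-2/(\xi_1^2\xi_3^2)$ and the symmetrisation weight $1/3$. This gives the leading term $\frac{W}{3\xi_1^2\xi_3^2}\left(2f^1_{2,0}(0,\xi_3)f^1_{2,0}(\xi_1,-\xi_1)-\xi_1^2\xi_3^2f^1_{1,1}(\xi_3,0)f^2_{2,0}(\xi_1,-\xi_1)\right)$. Here $W\neq0$ precisely because $n\neq m$, and the analogous computation gives (ii). All of this is a short hand computation of linear Taylor coefficients in $\epsilon$; no computer algebra for generic $n,m$ is needed.
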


\begin{proof} Setting $\xi_2=-\xi_1+\epsilon$ in $R^1_{3,0}(\xi_1,\xi_2,\xi_3)$ and expanding in a Laurent series at $\epsilon=0$, we obtain
\begin{eqnarray*}
    R^1_{3,0}(\xi_1,-\xi_1+\epsilon,\xi_3)&=&\frac{1}{3\xi_1^2\xi_3^2}\left(2 f^1_{2,0}(0, \xi_3) f^1_{2,0}(\xi_1, -\xi_1)-\xi_1^2 \xi_3^2 f^1_{1,1}(\xi_3, 0) f^2_{2,0}(\xi_1, -\xi_1)\right)\times\\&&\times\left(\omega_1^\prime(\xi_1)\Omega_1^\prime(\xi_3)-\omega_1^\prime(\xi_3)\Omega_1^\prime(\xi_1)\right)+O(\epsilon).
\end{eqnarray*}
Thus, the divisibility of $\xi_1+\xi_2$ (and, by symmetry, by $(\xi_1+\xi_3)(\xi_2+\xi_3)$) holds if and only if the identity \eqref{quadrac1} holds.

Similarly, the Laurent expansion at $\epsilon=0$ of $R^1_{1,2}(\xi_1,\zeta_1,-\zeta_1+\epsilon)$ is of the form
\begin{eqnarray*}
R^1_{1,2}(\xi_1,\zeta_1,-\zeta_1+\epsilon)&=&\frac{1}{\xi_1^2\zeta_1^2}\left(2 \zeta_1^2 f^1_{0,2}(\zeta_1, -\zeta_1) f^1_{2,0}(\xi_1, 0)-\xi_1^2 f^1_{1,1}(\xi_1, 0) f^2_{0,2}(\zeta_1, -\zeta_1)\right)\times\\
&&\times \left(\omega_1^\prime(\xi_1)\Omega_2^\prime(\zeta_1)-\Omega_1^\prime(\xi_1)\omega_2^\prime(\zeta_1)\right)+O(\epsilon),
\end{eqnarray*}
which implies that $(\zeta_1+\zeta_2)\mid R^1_{1,2}(\xi_1,\zeta_1,\zeta_2)$ if and only if the condition \eqref{quadrac2} is satisfied.  
\end{proof}

\section {  Representations of the affine Lie algebras $A_3^{(2)},$ $ A_4^{(2)},$ $ B_2^{(1)},$ $ G_2^{(1)},$ $ D_4^{(3)}$}\label{RepAlg}

In this section, for the convenience of the reader, we provide explicit representations of the affine Lie algebras $A_3^{(2)}, A_4^{(2)}, B_2^{(1)}, G_2^{(1)}$, and $D_4^{(3)}$. For each Lie algebra, we present the system of affine generators $\{e_i,f_i,h_i\}$, $i=0,1,2$. The representations of the Lie algebras $A_3^{(2)}, A_4^{(2)}$, 
and $B_2^{(1)}$ are taken from \cite{DS}.

\begin{itemize}

\item $A_3^{(2)}$: $\qquad  \node{ }{c_0}\!\!\!\Longleftarrow\!\!\!\node{ }{c_2}\!\!\!\Longrightarrow\!\!\!\node{ }{c_1}$
\begin{eqnarray*}
&&e_0=\frac12(e_{1,3}+e_{2,4})\zeta,\quad f_0=2(e_{3,1}+e_{4,2})\zeta^{-1},\quad\,\,\, h_0=e_{1, 1} + e_{2, 2} - e_{3,3} - e_{4,4},\\
&&e_1=(e_{2,1} + e_{4, 3})\zeta,\quad \,\,\,\,f_1=(e_{1,2} + e_{3,4})\zeta^{-1},\quad\,\,\,\,\,\, h_1=-e_{1, 1} + e_{2, 2} - e_{3,3} + e_{4,4},\\
&&e_2=e_{3,2}\zeta,\quad\quad\quad\quad \,\,\,\,\,\,\,\,f_2=e_{2,3}\zeta^{-1},\quad\quad\quad\quad\quad\,\,\, h_2=-e_{2,2}+e_{3,3}.
\end{eqnarray*}
\item $A_4^{(2)}$: $\qquad  \node{ }{c_0}\!\!\!\Longrightarrow\!\!\!\node{ }{c_1}\!\!\!\Longrightarrow\!\!\!\node{ }{c_2}$
\begin{eqnarray*}
&&e_0=e_{1,5}\zeta,\quad\quad\quad\quad \,\,\,f_0=e_{5,1}\zeta^{-1},\quad\quad\quad\quad\quad h_0=e_{1,1}-e_{5,5},\\
&&e_1=(e_{2,1}+e_{5,4})\zeta,\quad f_1=(e_{1,2}+e_{4,5})\zeta^{-1},\quad\,\,\, h_1=-e_{1, 1} + e_{2, 2} - e_{4, 4} + e_{5, 5},\\
&&e_2=(e_{3, 2} + e_{4, 3})\zeta,\quad f_2=2(e_{2, 3} + e_{3, 4})\zeta^{-1},\quad h_2=2 (-e_{2, 2} + e_{4, 4}).
\end{eqnarray*}
\item $B_2^{(1)}$:  
 $\qquad \node{ }{c_0}\!\!\!\Longrightarrow\!\!\!\node{ }{c_2}\!\!\!\Longleftarrow\!\!\!\node{ }{c_1}$
\begin{eqnarray*}
&&e_0=\frac{1}{2} (e_{1, 4} + e_{2, 5})\zeta,\quad f_0=2 (e_{4, 1} + e_{5, 2})\zeta^{-1},\quad h_0=e_{1, 1} + e_{2, 2} - e_{4, 4} - e_{5,5},\\
&&e_1=(e_{2, 1} + e_{5, 4}) \zeta,\quad\,\,\,\, f_1=(e_{1, 2} + e_{4, 5})\zeta^{-1},\quad\,\,\, h_1=-e_{1, 1} + e_{2, 2} - e_{4, 4} + e_{5, 5},\\
&&e_2=(e_{3, 2} + e_{4, 3})\zeta,\quad\,\,\,\, f_2=2 (e_{2, 3} + e_{3, 4})\zeta^{-1},\quad h_2=2 (e_{4, 4} - e_{2,2}).
\end{eqnarray*}
\item $G_2^{(1)}$:  
 $\qquad \node{ }{c_0}\!\!\!-\!\!\-\!\!-\!\!\!-\!\!\!\node{ }{c_1}\!\!\!\equiv\!\Rrightarrow\!\!\!\node{ }{c_2}$
\begin{eqnarray*}
&&e_0=\frac{1}{2}(e_{1,7}+e_{2,8})\zeta,\quad f_0=2(e_{7,1}+e_{8,2})\zeta^{-1},\quad h_0=e_{1, 1} + e_{2, 2} - e_{7, 7} - e_{8,8},\\
&&e_1=(e_{3,2}+e_{7,6})\zeta,\quad\,\,\,\, f_1=(e_{2,3}+e_{6,7})\zeta^{-1},\quad\,\,\, h_1=-e_{2, 2} + e_{3, 3} - e_{6, 6} + e_{7, 7},\\
&&e_2=\frac{1}{2}(2 e_{2, 1} + 2 e_{4, 3} + e_{5, 3} + e_{6, 4} + 2 e_{6, 5} + 2 e_{8, 7})\zeta,\\
&&f_2=(e_{1, 2} + e_{3, 4} + 2 e_{3, 5} + 
 2 e_{4, 6} + e_{5, 6} + e_{7, 8})\zeta^{-1},\\
&&h_2=-e_{1, 1} + e_{2, 2} - 2 e_{3, 3} + 
 2 e_{6, 6} - e_{7, 7} + e_{8, 8}.
\end{eqnarray*}
\item $D_4^{(3)}$: 
$\qquad \node{ }{c_0}\!\!\!-\!\!\-\!\!-\!\!\!-\!\!\!\node{ }{c_1}\!\!\!\Lleftarrow\!\equiv\!\!\!\node{ }{c_2}$  
\begin{eqnarray*}
&&e_0=\left(e_{1, 4} - (1 - {\i} \sqrt{3}) e_{1, 5} + (1 + {\i} \sqrt{3}) e_{2, 6} + (1 + {\i} \sqrt{3}) e_{3, 7} - (1 - {\i} \sqrt{3}) e_{4, 8}\right)\zeta,\\
&&f_0=\frac{1}{4} \left(4 e_{4, 1} - (1 + {\i} \sqrt{3}) e_{5, 1} + (1 - {\i} \sqrt{3}) e_{6, 2} + (1 - {\i} \sqrt{3}) e_{7, 3} - (1 + {\i} \sqrt{3}) e_{8, 4} + 4 e_{8, 5}\right)\zeta^{-1},\\
&&h_0=2 e_{1, 1} + e_{2, 2} + e_{3, 3} - e_{6, 6} - e_{7, 7} - 2 e_{8, 8},\\
&&e_1=\frac{1}{2} \left(2 e_{2, 1} + 
   2 e_{4, 3} + e_{5, 3} + e_{6, 4} + 
   2 e_{6, 5} + 2 e_{8, 7}\right)\zeta,\\
&&f_1=\left(e_{1, 2} + e_{3, 4} + 2 e_{3, 5} + 
 2 e_{4, 6} + e_{5, 6} + e_{7, 8}\right)\zeta^{-1},\\
&&h_1=-e_{1, 1} + e_{2, 2} - 2 e_{3, 3} + 
 2 e_{6, 6} - e_{7, 7} + e_{8, 8},\\
&&e_2=(e_{3, 2} + e_{7, 6})\zeta,\quad f_2=(e_{2, 3} + e_{6, 7})\zeta^{-1},\quad h_2=-e_{2, 2} + e_{3, 3} - e_{6, 6} + e_{7, 7}.
\end{eqnarray*}

\end{itemize}
\bibliographystyle{unsrt}
\bibliography{kdv}
\end{document}

%% file: mac.tex
\def\blue#1{\textcolor{blue}{#1}}
\def\red#1{\textcolor{red}{#1}}

\def \ring {{\cal R}}

\def \ord {\, \mbox{ord}}

\def \max {\, \mbox{max}}
\def \spc {\, \mbox{Span}_{\bbbc}}

\def \hq {{\hat{q}}}
\def \hs {{\hat{s}}}

\def \cR {{\mathcal{R}}}
\def \hR {{\hat{\mathcal{R}}}}
\def \hL {{\hat{\mathcal{L}}}}

\def\N {{\mathbb N}}
\def\bbbn {{\mathbb N}}
\def\bbbc{{\mathbb C}}
\def\C{{\mathbb C}}
\def\bbbz{{\mathbb Z}}

\def\bu{{\bf u}}
\def\bff{{\bf f}}
\def\bg{{\bf g}}
\def\bh{{\bf h}}
\def\zp{{{\mathbb Z}_{\ge 0}}}

\def\cL{{\mathcal{L}}}

\def\hu{{\hat{u}}}
\def\hv{{\hat{v}}}
\def\i{\rm i}

\def\raph{{\, \xmapsto{\ \phi\ }\,}}

\def\cA{{\cal A}}

\def\cL{{\cal L}}

\def\cR{{\cal R}}

\def\cF{{\cal F}}

\def\cV{{\cal V}}

\def\pr#1{{\Pi_{\cR}^{#1}}}
\def\pl#1{{\Pi_{\cL}^{#1}}}

\def\l{\lambda}
\def\k{\kappa}
\def\lb{\overline{\lambda}}

\newtheorem{The}{Theorem}[section]
\newtheorem{Def}[The]{Definition}
\newtheorem{Pro}[The]{Proposition}
\newtheorem{Lem}[The]{Lemma}
\newtheorem{Cor}[The]{Corollary}
\newtheorem{Ex}[The]{Example}
\newtheorem{Rem}[The]{Remark}

\DeclareMathOperator{\lcm}{lcm}

\newcommand\node[2]{\overset{#1}{\underset{#2}{\circ}}}